
\documentclass[12pt]{amsart}
\usepackage{amsfonts}
\usepackage{changepage}
\usepackage{natbib}
\usepackage{eurosym}
\usepackage{booktabs}
\usepackage{amssymb, amsmath, url,color,  amsthm, amssymb,graphicx,multirow, appendix}
\usepackage[margin=1in]{geometry}
\usepackage{bigints}
\usepackage{lscape,amssymb, amsmath,verbatim,graphicx}
\usepackage{epsfig,subfigure,float,subfigure}
\usepackage{graphicx}
\usepackage{epsfig}
\usepackage{algorithm}
\usepackage{algpseudocode}
\usepackage{threeparttable}
\usepackage{ragged2e}
\usepackage{epstopdf}
\usepackage{booktabs}
\usepackage{float}
\usepackage{multirow}
\usepackage{adjustbox}
\usepackage{lscape,lipsum}
\usepackage{url}
\usepackage{dsfont}
\usepackage{natbib}
\usepackage[T1]{fontenc}
\usepackage{a4wide, url}
\usepackage[english]{babel}
\usepackage{graphicx}
\usepackage{array}
\usepackage{multirow}
\usepackage{amsthm}
\usepackage{amsmath}
\usepackage[foot]{amsaddr}
\usepackage{amsfonts}
\usepackage{amssymb}
\usepackage{pgf}
\usepackage{paralist}
\usepackage{pdflscape}
\usepackage{lscape,amssymb, amsmath,verbatim,graphicx}
\usepackage{epsfig,subfigure,float,subfigure}
\usepackage{graphicx}
\usepackage{epsfig}
\usepackage{amssymb}
\usepackage{amsmath}
\usepackage{amsfonts}
\usepackage{geometry}
\usepackage{scalefnt}
\usepackage{graphicx}
\usepackage{caption}
\usepackage{setspace}
\usepackage{array}
\usepackage{multirow}
\usepackage{amsthm}
\usepackage{amsmath}
\usepackage{amsfonts}
\usepackage{amssymb}
\usepackage{pgf}
\usepackage{paralist}
\usepackage{pdflscape}
\usepackage{rotating}
\usepackage{scalefnt}
\usepackage{xr}

\setcounter{MaxMatrixCols}{10}

\captionsetup[figure]{font=footnotesize, labelfont=footnotesize}
\captionsetup[subfigure]{font=scriptsize, labelfont=scriptsize}
\externaldocument{HT2021App}
\setlength{\headsep}{0.5in}
\allowdisplaybreaks
\DeclareFontFamily{OT1}{pzc}{}
\DeclareFontShape{OT1}{pzc}{m}{it}{<-> s * [1.10] pzcmi7t}{}
\DeclareMathAlphabet{\mathpzc}{OT1}{pzc}{m}{it}

\topmargin-1.0cm
\textwidth16.5cm
\textheight22.89cm
\oddsidemargin-0.0cm
\evensidemargin.0cm
\thispagestyle{empty}
\parindent0cm
\newtheorem{remark}{Remark}

\numberwithin{table}{section}
\numberwithin{figure}{section}
\newtheorem{theorem}{Theorem}
\newtheorem{lemma}{Lemma}
\newtheorem{definition}{Definition}
\newtheorem{corollary}{Corollary}
\newtheorem{proposition}{Proposition}
\newtheorem{example}{Example}
\newtheorem{assumption}{Assumption}

\def\beq{\begin{equation}}
\def\eeq{\end{equation}}
\def\bals{\begin{align*}}
\def\eals{\end{align*}}
\def\bal{\begin{align}}
\def\eal{\end{align}}

\allowdisplaybreaks
\onehalfspacing
\linepenalty=2000

\typeout{TCILATEX Macros for Scientific Word and Scientific WorkPlace 5.5 <06 Oct 2005>.}
\typeout{NOTICE:  This macro file is NOT proprietary and may be 
freely copied and distributed.}
\makeatletter

\ifx\pdfoutput\relax\let\pdfoutput=\undefined\fi
\newcount\msipdfoutput
\ifx\pdfoutput\undefined
\else
 \ifcase\pdfoutput
 \else 
    \msipdfoutput=1
    \ifx\paperwidth\undefined
    \else
      \ifdim\paperheight=0pt\relax
      \else
        \pdfpageheight\paperheight
      \fi
      \ifdim\paperwidth=0pt\relax
      \else
        \pdfpagewidth\paperwidth
      \fi
    \fi
  \fi  
\fi

%

%
\newcount\@hour\newcount\@minute\chardef\@x10\chardef\@xv60
\def\tcitime{
\def\@time{%
  \@minute\time\@hour\@minute\divide\@hour\@xv
  \ifnum\@hour<\@x 0\fi\the\@hour:%
  \multiply\@hour\@xv\advance\@minute-\@hour
  \ifnum\@minute<\@x 0\fi\the\@minute
  }}%


\def\x@hyperref#1#2#3{%
   \catcode`\~ = 12
   \catcode`\$ = 12
   \catcode`\_ = 12
   \catcode`\# = 12
   \catcode`\& = 12
   \catcode`\% = 12
   \y@hyperref{#1}{#2}{#3}%
}

\def\y@hyperref#1#2#3#4{%
   #2\ref{#4}#3
   \catcode`\~ = 13
   \catcode`\$ = 3
   \catcode`\_ = 8
   \catcode`\# = 6
   \catcode`\& = 4
   \catcode`\% = 14
}

\@ifundefined{hyperref}{\let\hyperref\x@hyperref}{}
\@ifundefined{msihyperref}{\let\msihyperref\x@hyperref}{}

\@ifundefined{qExtProgCall}{\def\qExtProgCall#1#2#3#4#5#6{\relax}}{}
%
%
%
%
\def\QCTOpt[#1]#2{%
  \def\QCTOptB{#1}
  \def\QCTOptA{#2}
}
\def\QCTNOpt#1{%
  \def\QCTOptA{#1}
  \let\QCTOptB\empty
}
\def\Qct{%
  \@ifnextchar[{%
    \QCTOpt}{\QCTNOpt}
}
\def\QCBOpt[#1]#2{%
  \def\QCBOptB{#1}%
  \def\QCBOptA{#2}%
}
\def\QCBNOpt#1{%
  \def\QCBOptA{#1}%
  \let\QCBOptB\empty
}
\def\Qcb{%
  \@ifnextchar[{%
    \QCBOpt}{\QCBNOpt}%
}
\def\PrepCapArgs{%
  \ifx\QCBOptA\empty
    \ifx\QCTOptA\empty
      {}%
    \else
      \ifx\QCTOptB\empty
        {\QCTOptA}%
      \else
        [\QCTOptB]{\QCTOptA}%
      \fi
    \fi
  \else
    \ifx\QCBOptA\empty
      {}%
    \else
      \ifx\QCBOptB\empty
        {\QCBOptA}%
      \else
        [\QCBOptB]{\QCBOptA}%
      \fi
    \fi
  \fi
}
\newcount\GRAPHICSTYPE
\GRAPHICSTYPE=\z@
\def\GRAPHICSPS#1{%
 \ifcase\GRAPHICSTYPE
   \special{ps: #1}%
 \or
   \special{language "PS", include "#1"}%
 \fi
}%
%
%
%

\def\graffile#1#2#3#4{%
    \bgroup
	   \@inlabelfalse
       \leavevmode
       \@ifundefined{bbl@deactivate}{\def~{\string~}}{\activesoff}%
        \raise -#4 \BOXTHEFRAME{%
           \hbox to #2{\raise #3\hbox to #2{\null #1\hfil}}}%
    \egroup
}%
%
\def\draftbox#1#2#3#4{%
 \leavevmode\raise -#4 \hbox{%
  \frame{\rlap{\protect\tiny #1}\hbox to #2%
   {\vrule height#3 width\z@ depth\z@\hfil}%
  }%
 }%
}%
\newcount\@msidraft
\@msidraft=\z@
\let\nographics=\@msidraft
\newif\ifwasdraft
\wasdraftfalse

\def\GRAPHIC#1#2#3#4#5{%
   \ifnum\@msidraft=\@ne\draftbox{#2}{#3}{#4}{#5}%
   \else\graffile{#1}{#3}{#4}{#5}%
   \fi
}
\def\addtoLaTeXparams#1{%
    \edef\LaTeXparams{\LaTeXparams #1}}%
%

\newif\ifBoxFrame \BoxFramefalse
\newif\ifOverFrame \OverFramefalse
\newif\ifUnderFrame \UnderFramefalse

\def\BOXTHEFRAME#1{%
   \hbox{%
      \ifBoxFrame
         \frame{#1}%
      \else
         {#1}%
      \fi
   }%
}

\def\doFRAMEparams#1{\BoxFramefalse\OverFramefalse\UnderFramefalse\readFRAMEparams#1\end}%
\def\readFRAMEparams#1{%
 \ifx#1\end%
  \let\next=\relax
  \else
  \ifx#1i\dispkind=\z@\fi
  \ifx#1d\dispkind=\@ne\fi
  \ifx#1f\dispkind=\tw@\fi
  \ifx#1t\addtoLaTeXparams{t}\fi
  \ifx#1b\addtoLaTeXparams{b}\fi
  \ifx#1p\addtoLaTeXparams{p}\fi
  \ifx#1h\addtoLaTeXparams{h}\fi
  \ifx#1X\BoxFrametrue\fi
  \ifx#1O\OverFrametrue\fi
  \ifx#1U\UnderFrametrue\fi
  \ifx#1w
    \ifnum\@msidraft=1\wasdrafttrue\else\wasdraftfalse\fi
    \@msidraft=\@ne
  \fi
  \let\next=\readFRAMEparams
  \fi
 \next
 }%
%

\def\IFRAME#1#2#3#4#5#6{%
      \bgroup
      \let\QCTOptA\empty
      \let\QCTOptB\empty
      \let\QCBOptA\empty
      \let\QCBOptB\empty
      #6%
      \parindent=0pt
      \leftskip=0pt
      \rightskip=0pt
      \setbox0=\hbox{\QCBOptA}%
      \@tempdima=#1\relax
      \ifOverFrame
          \typeout{This is not implemented yet}%
          \show\HELP
      \else
         \ifdim\wd0>\@tempdima
            \advance\@tempdima by \@tempdima
            \ifdim\wd0 >\@tempdima
               \setbox1 =\vbox{%
                  \unskip\hbox to \@tempdima{\hfill\GRAPHIC{#5}{#4}{#1}{#2}{#3}\hfill}%
                  \unskip\hbox to \@tempdima{\parbox[b]{\@tempdima}{\QCBOptA}}%
               }%
               \wd1=\@tempdima
            \else
               \textwidth=\wd0
               \setbox1 =\vbox{%
                 \noindent\hbox to \wd0{\hfill\GRAPHIC{#5}{#4}{#1}{#2}{#3}\hfill}\\%
                 \noindent\hbox{\QCBOptA}%
               }%
               \wd1=\wd0
            \fi
         \else
            \ifdim\wd0>0pt
              \hsize=\@tempdima
              \setbox1=\vbox{%
                \unskip\GRAPHIC{#5}{#4}{#1}{#2}{0pt}%
                \break
                \unskip\hbox to \@tempdima{\hfill \QCBOptA\hfill}%
              }%
              \wd1=\@tempdima
           \else
              \hsize=\@tempdima
              \setbox1=\vbox{%
                \unskip\GRAPHIC{#5}{#4}{#1}{#2}{0pt}%
              }%
              \wd1=\@tempdima
           \fi
         \fi
         \@tempdimb=\ht1
         \advance\@tempdimb by -#2
         \advance\@tempdimb by #3
         \leavevmode
         \raise -\@tempdimb \hbox{\box1}%
      \fi
      \egroup%
}%
%
\def\DFRAME#1#2#3#4#5{%
  \vspace\topsep
  \hfil\break
  \bgroup
     \leftskip\@flushglue
	 \rightskip\@flushglue
	 \parindent\z@
	 \parfillskip\z@skip
     \let\QCTOptA\empty
     \let\QCTOptB\empty
     \let\QCBOptA\empty
     \let\QCBOptB\empty
	 \vbox\bgroup
        \ifOverFrame 
           #5\QCTOptA\par
        \fi
        \GRAPHIC{#4}{#3}{#1}{#2}{\z@}%
        \ifUnderFrame 
           \break#5\QCBOptA
        \fi
	 \egroup
  \egroup
  \vspace\topsep
  \break
}%
%
\def\FFRAME#1#2#3#4#5#6#7{%
  \@ifundefined{floatstyle}
    {
     \begin{figure}[#1]%
    }
    {
	 \ifx#1h
      \begin{figure}[H]%
	 \else
      \begin{figure}[#1]%
	 \fi
	}
  \let\QCTOptA\empty
  \let\QCTOptB\empty
  \let\QCBOptA\empty
  \let\QCBOptB\empty
  \ifOverFrame
    #4
    \ifx\QCTOptA\empty
    \else
      \ifx\QCTOptB\empty
        \caption{\QCTOptA}%
      \else
        \caption[\QCTOptB]{\QCTOptA}%
      \fi
    \fi
    \ifUnderFrame\else
      \label{#5}%
    \fi
  \else
    \UnderFrametrue%
  \fi
  \begin{center}\GRAPHIC{#7}{#6}{#2}{#3}{\z@}\end{center}%
  \ifUnderFrame
    #4
    \ifx\QCBOptA\empty
      \caption{}%
    \else
      \ifx\QCBOptB\empty
        \caption{\QCBOptA}%
      \else
        \caption[\QCBOptB]{\QCBOptA}%
      \fi
    \fi
    \label{#5}%
  \fi
  \end{figure}%
 }%
%
%
%
%
%
\newcount\dispkind%

\def\makeactives{
  \catcode`\"=\active
  \catcode`\;=\active
  \catcode`\:=\active
  \catcode`\'=\active
  \catcode`\~=\active
}
\bgroup
   \makeactives
   \gdef\activesoff{%
      \def"{\string"}%
      \def;{\string;}%
      \def:{\string:}%
      \def'{\string'}%
      \def~{\string~}%
    }
\egroup

\def\FRAME#1#2#3#4#5#6#7#8{%
 \bgroup
 \ifnum\@msidraft=\@ne
   \wasdrafttrue
 \else
   \wasdraftfalse%
 \fi
 \def\LaTeXparams{}%
 \dispkind=\z@
 \def\LaTeXparams{}%
 \doFRAMEparams{#1}%
 \ifnum\dispkind=\z@\IFRAME{#2}{#3}{#4}{#7}{#8}{#5}\else
  \ifnum\dispkind=\@ne\DFRAME{#2}{#3}{#7}{#8}{#5}\else
   \ifnum\dispkind=\tw@
    \edef\@tempa{\noexpand\FFRAME{\LaTeXparams}}%
    \@tempa{#2}{#3}{#5}{#6}{#7}{#8}%
    \fi
   \fi
  \fi
  \ifwasdraft\@msidraft=1\else\@msidraft=0\fi{}%
  \egroup
 }%
%

\def\TEXUX#1{"texux"}

%
%
%
%
%
%
%
%
%
%

%
\long\def\QQQ#1#2{%
     \long\expandafter\def\csname#1\endcsname{#2}}%
\@ifundefined{QTP}{\def\QTP#1{}}{}
\@ifundefined{QEXCLUDE}{\def\QEXCLUDE#1{}}{}
\@ifundefined{Qlb}{}{}
\@ifundefined{Qlt}{}{}
\long\def\QQA#1#2{}%
\def\QTR#1#2{{\csname#1\endcsname {#2}}}%

%
%
\def\EXPAND#1[#2]#3{}%
\def\NOEXPAND#1[#2]#3{}%
\def\LaTeXparent#1{}%
\def\ChildStyles#1{}%
\def\ChildDefaults#1{}%
\def\QTagDef#1#2#3{}%

\@ifundefined{correctchoice}{}{}
\@ifundefined{HTML}{\def\HTML#1{\relax}}{}
\@ifundefined{TCIIcon}{\def\TCIIcon#1#2#3#4{\relax}}{}
\if@compatibility
  \typeout{Not defining UNICODE  U or CustomNote commands for LaTeX 2.09.}
\else
  \providecommand{\UNICODE}[2][]{\protect\rule{.1in}{.1in}}
  \providecommand{\U}[1]{\protect\rule{.1in}{.1in}}
  
\fi

\@ifundefined{lambdabar}{
      
   }{}

%
\@ifundefined{StyleEditBeginDoc}{}{}
%
\def\QQfnmark#1{\footnotemark}

%
%
\@ifundefined{TCIMAKEINDEX}{}{\makeindex}%
%
\@ifundefined{abstract}{%
 \def\abstract{%
  \if@twocolumn
   \section*{Abstract (Not appropriate in this style!)}%
   \else \small 
   \begin{center}{\bf Abstract\vspace{-.5em}\vspace{\z@}}\end{center}%
   \quotation 
   \fi
  }%
 }{%
 }%
\@ifundefined{endabstract}{\def\endabstract
  {\if@twocolumn\else\endquotation\fi}}{}%
\@ifundefined{maketitle}{\def\maketitle#1{}}{}%
\@ifundefined{affiliation}{\def\affiliation#1{}}{}%
\@ifundefined{proof}{}{}%
\@ifundefined{endproof}{}{}%
\@ifundefined{newfield}{\def\newfield#1#2{}}{}%
\@ifundefined{chapter}{\def\chapter#1{\par(Chapter head:)#1\par }%
 \newcount\c@chapter}{}%
\@ifundefined{part}{\def\part#1{\par(Part head:)#1\par }}{}%
\@ifundefined{section}{\def\section#1{\par(Section head:)#1\par }}{}%
\@ifundefined{subsection}{\def\subsection#1%
 {\par(Subsection head:)#1\par }}{}%
\@ifundefined{subsubsection}{\def\subsubsection#1%
 {\par(Subsubsection head:)#1\par }}{}%
\@ifundefined{paragraph}{\def\paragraph#1%
 {\par(Subsubsubsection head:)#1\par }}{}%
\@ifundefined{subparagraph}{\def\subparagraph#1%
 {\par(Subsubsubsubsection head:)#1\par }}{}%
\@ifundefined{therefore}{}{}%
\@ifundefined{backepsilon}{}{}%
\@ifundefined{yen}{}{}%
\@ifundefined{registered}{%
   \def\registered{\relax\ifmmode{}\r@gistered
                    \else$\m@th\r@gistered$\fi}%
 \def\r@gistered{^{\ooalign
  {\hfil\raise.07ex\hbox{$\scriptstyle\rm\text{R}$}\hfil\crcr
  \mathhexbox20D}}}}{}%
\@ifundefined{Eth}{}{}%
\@ifundefined{eth}{}{}%
\@ifundefined{Thorn}{}{}%
\@ifundefined{thorn}{}{}%
%
\@ifundefined{degree}{}{}%
%
\newdimen\theight
\@ifundefined{Column}{\def\Column{%
 \vadjust{\setbox\z@=\hbox{\scriptsize\quad\quad tcol}%
  \theight=\ht\z@\advance\theight by \dp\z@\advance\theight by \lineskip
  \kern -\theight \vbox to \theight{%
   \rightline{\rlap{\box\z@}}%
   \vss
   }%
  }%
 }}{}%
\@ifundefined{qed}{\def\qed{%
 \ifhmode\unskip\nobreak\fi\ifmmode\ifinner\else\hskip5\p@\fi\fi
 \hbox{\hskip5\p@\vrule width4\p@ height6\p@ depth1.5\p@\hskip\p@}%
 }}{}%
\@ifundefined{cents}{}{}%
\@ifundefined{tciLaplace}{}{}%
\@ifundefined{tciFourier}{}{}%
\@ifundefined{textcurrency}{}{}%
\@ifundefined{texteuro}{}{}%
\@ifundefined{euro}{}{}%
\@ifundefined{textfranc}{}{}%
\@ifundefined{textlira}{}{}%
\@ifundefined{textpeseta}{}{}%
\@ifundefined{miss}{\def\miss{\hbox{\vrule height2\p@ width 2\p@ depth\z@}}}{}%
\@ifundefined{vvert}{}{}
\@ifundefined{tcol}{\def\tcol#1{{\baselineskip=6\p@ \vcenter{#1}} \Column}}{}%
\@ifundefined{dB}{}{}
\@ifundefined{mB}{}{}
\@ifundefined{nB}{}{}
\@ifundefined{note}{}{}%
\def\newfmtname{LaTeX2e}
%
\ifx\fmtname\newfmtname
  \DeclareOldFontCommand{\rm}{\normalfont\rmfamily}{\mathrm}
  \DeclareOldFontCommand{\sf}{\normalfont\sffamily}{\mathsf}
  \DeclareOldFontCommand{\tt}{\normalfont\ttfamily}{\mathtt}
  \DeclareOldFontCommand{\bf}{\normalfont\bfseries}{\mathbf}
  \DeclareOldFontCommand{\it}{\normalfont\itshape}{\mathit}
  \DeclareOldFontCommand{\sl}{\normalfont\slshape}{\@nomath\sl}
  \DeclareOldFontCommand{\sc}{\normalfont\scshape}{\@nomath\sc}
\fi

%

\def\alpha{{\Greekmath 010B}}%
\def\beta{{\Greekmath 010C}}%
\def\gamma{{\Greekmath 010D}}%
\def\delta{{\Greekmath 010E}}%
\def\epsilon{{\Greekmath 010F}}%
\def\zeta{{\Greekmath 0110}}%
\def\eta{{\Greekmath 0111}}%
\def\theta{{\Greekmath 0112}}%
\def\iota{{\Greekmath 0113}}%
\def\kappa{{\Greekmath 0114}}%
\def\lambda{{\Greekmath 0115}}%
\def\mu{{\Greekmath 0116}}%
\def\nu{{\Greekmath 0117}}%
\def\xi{{\Greekmath 0118}}%
\def\pi{{\Greekmath 0119}}%
\def\rho{{\Greekmath 011A}}%
\def\sigma{{\Greekmath 011B}}%
\def\tau{{\Greekmath 011C}}%
\def\upsilon{{\Greekmath 011D}}%
\def\phi{{\Greekmath 011E}}%
\def\chi{{\Greekmath 011F}}%
\def\psi{{\Greekmath 0120}}%
\def\omega{{\Greekmath 0121}}%
\def\varepsilon{{\Greekmath 0122}}%
\def\vartheta{{\Greekmath 0123}}%
\def\varpi{{\Greekmath 0124}}%
\def\varrho{{\Greekmath 0125}}%
\def\varsigma{{\Greekmath 0126}}%
\def\varphi{{\Greekmath 0127}}%

\def\nabla{{\Greekmath 0272}}
\def\FindBoldGroup{%
   {\setbox0=\hbox{$\mathbf{x\global\edef\theboldgroup{\the\mathgroup}}$}}%
}

\def\Greekmath#1#2#3#4{%
    \if@compatibility
        \ifnum\mathgroup=\symbold
           \mathchoice{\mbox{\boldmath$\displaystyle\mathchar"#1#2#3#4$}}%
                      {\mbox{\boldmath$\textstyle\mathchar"#1#2#3#4$}}%
                      {\mbox{\boldmath$\scriptstyle\mathchar"#1#2#3#4$}}%
                      {\mbox{\boldmath$\scriptscriptstyle\mathchar"#1#2#3#4$}}%
        \else
           \mathchar"#1#2#3#4%
        \fi 
    \else 
        \FindBoldGroup
        \ifnum\mathgroup=\theboldgroup 
           \mathchoice{\mbox{\boldmath$\displaystyle\mathchar"#1#2#3#4$}}%
                      {\mbox{\boldmath$\textstyle\mathchar"#1#2#3#4$}}%
                      {\mbox{\boldmath$\scriptstyle\mathchar"#1#2#3#4$}}%
                      {\mbox{\boldmath$\scriptscriptstyle\mathchar"#1#2#3#4$}}%
        \else
           \mathchar"#1#2#3#4%
        \fi     	    
	  \fi}

\newif\ifGreekBold  \GreekBoldfalse
\let\SAVEPBF=\pbf
\def\pbf{\GreekBoldtrue\SAVEPBF}%

\@ifundefined{theorem}{\newtheorem{theorem}{Theorem}}{}
\@ifundefined{lemma}{\newtheorem{lemma}[theorem]{Lemma}}{}
\@ifundefined{corollary}{}{}
\@ifundefined{conjecture}{}{}
\@ifundefined{proposition}{}{}
\@ifundefined{axiom}{}{}
\@ifundefined{remark}{}{}
\@ifundefined{example}{}{}
\@ifundefined{exercise}{}{}
\@ifundefined{definition}{\newtheorem{definition}{Definition}}{}

\@ifundefined{mathletters}{%
  \newcounter{equationnumber}  
  \def\mathletters{%
     \addtocounter{equation}{1}
     \edef\@currentlabel{\theequation}%
     \setcounter{equationnumber}{\c@equation}
     \setcounter{equation}{0}%
     \edef\theequation{\@currentlabel\noexpand\alph{equation}}%
  }
  
}{}

\@ifundefined{BibTeX}{%
    \def\BibTeX{{\rm B\kern-.05em{\sc i\kern-.025em b}\kern-.08em
                 T\kern-.1667em\lower.7ex\hbox{E}\kern-.125emX}}}{}%
\@ifundefined{AmS}%
    {\def\AmS{{\protect\usefont{OMS}{cmsy}{m}{n}%
                A\kern-.1667em\lower.5ex\hbox{M}\kern-.125emS}}}{}%
\@ifundefined{AmSTeX}{}{}%
%

\def\@@eqncr{\let\@tempa\relax
    \ifcase\@eqcnt \def\@tempa{& & &}\or \def\@tempa{& &}%
      \else \def\@tempa{&}\fi
     \@tempa
     \if@eqnsw
        \iftag@
           \@taggnum
        \else
           \@eqnnum\stepcounter{equation}%
        \fi
     \fi
     \global\tag@false
     \global\@eqnswtrue
     \global\@eqcnt\z@\cr}

\def\TCItag{\@ifnextchar*{\@TCItagstar}{\@TCItag}}
\def\@TCItag#1{%
    \global\tag@true
    \global\def\@taggnum{(#1)}%
    \global\def\@currentlabel{#1}}
\def\@TCItagstar*#1{%
    \global\tag@true
    \global\def\@taggnum{#1}%
    \global\def\@currentlabel{#1}}
%
%
%
%
%
%
%
%
%
%
%
%
%
%
%
%
%
%
%

\def\tint{\msi@int\textstyle\int}%
\def\tiint{\msi@int\textstyle\iint}%
\def\tiiint{\msi@int\textstyle\iiint}%
\def\tiiiint{\msi@int\textstyle\iiiint}%
\def\tidotsint{\msi@int\textstyle\idotsint}%
\def\toint{\msi@int\textstyle\oint}%

%
%
%
%
%
%
%
%
%
%
%
%
%
%
%

\newtoks\temptoksa
\newtoks\temptoksb
\newtoks\temptoksc

\def\msi@int#1#2{%
 \def\@temp{{#1#2\the\temptoksc_{\the\temptoksa}^{\the\temptoksb}}}%
 \futurelet\@nextcs
 \@int
}

\def\@int{%
   \ifx\@nextcs\limits
      \typeout{Found limits}%
      \temptoksc={\limits}%
	  \let\@next\@intgobble%
   \else\ifx\@nextcs\nolimits
      \typeout{Found nolimits}%
      \temptoksc={\nolimits}%
	  \let\@next\@intgobble%
   \else
      \typeout{Did not find limits or no limits}%
      \temptoksc={}%
      \let\@next\msi@limits%
   \fi\fi
   \@next   
}%

\def\@intgobble#1{%
   \typeout{arg is #1}%
   \msi@limits
}

\def\msi@limits{%
   \temptoksa={}%
   \temptoksb={}%
   \@ifnextchar_{\@limitsa}{\@limitsb}%
}

\def\@limitsa_#1{%
   \temptoksa={#1}%
   \@ifnextchar^{\@limitsc}{\@temp}%
}

\def\@limitsb{%
   \@ifnextchar^{\@limitsc}{\@temp}%
}

\def\@limitsc^#1{%
   \temptoksb={#1}%
   \@ifnextchar_{\@limitsd}{\@temp}%
}

\def\@limitsd_#1{%
   \temptoksa={#1}%
   \@temp
}

\def\dint{\msi@int\displaystyle\int}%
\def\diint{\msi@int\displaystyle\iint}%
\def\diiint{\msi@int\displaystyle\iiint}%
\def\diiiint{\msi@int\displaystyle\iiiint}%
\def\didotsint{\msi@int\displaystyle\idotsint}%
\def\doint{\msi@int\displaystyle\oint}%

\if@compatibility\else
  \RequirePackage{amsmath}
\fi

\def\ExitTCILatex{\makeatother }

\bgroup
\ifx\ds@amstex\relax
   \message{amstex already loaded}\aftergroup\ExitTCILatex
\else
   \@ifpackageloaded{amsmath}%
      {\if@compatibility\message{amsmath already loaded}\fi\aftergroup\ExitTCILatex}
      {}
   \@ifpackageloaded{amstex}%
      {\if@compatibility\message{amstex already loaded}\fi\aftergroup\ExitTCILatex}
      {}
   \@ifpackageloaded{amsgen}%
      {\if@compatibility\message{amsgen already loaded}\fi\aftergroup\ExitTCILatex}
      {}
\fi
\egroup


\typeout{TCILATEX defining AMS-like constructs in LaTeX 2.09 COMPATIBILITY MODE}
%
%
\let\DOTSI\relax
\def\RIfM@{\relax\ifmmode}%
\def\FN@{\futurelet\next}%
\newcount\intno@
\def\iint{\DOTSI\intno@\tw@\FN@\ints@}%
\def\iiint{\DOTSI\intno@\thr@@\FN@\ints@}%
\def\iiiint{\DOTSI\intno@4 \FN@\ints@}%
\def\idotsint{\DOTSI\intno@\z@\FN@\ints@}%
\def\ints@{\findlimits@\ints@@}%
\newif\iflimtoken@
\newif\iflimits@
\def\findlimits@{\limtoken@true\ifx\next\limits\limits@true
 \else\ifx\next\nolimits\limits@false\else
 \limtoken@false\ifx\ilimits@\nolimits\limits@false\else
 \ifinner\limits@false\else\limits@true\fi\fi\fi\fi}%
\def\multint@{\int\ifnum\intno@=\z@\intdots@                          
 \else\intkern@\fi                                                    
 \ifnum\intno@>\tw@\int\intkern@\fi                                   
 \ifnum\intno@>\thr@@\int\intkern@\fi                                 
 \int}
\def\multintlimits@{\intop\ifnum\intno@=\z@\intdots@\else\intkern@\fi
 \ifnum\intno@>\tw@\intop\intkern@\fi
 \ifnum\intno@>\thr@@\intop\intkern@\fi\intop}%
\def\intic@{%
    \mathchoice{\hskip.5em}{\hskip.4em}{\hskip.4em}{\hskip.4em}}%
\def\negintic@{\mathchoice
 {\hskip-.5em}{\hskip-.4em}{\hskip-.4em}{\hskip-.4em}}%
\def\ints@@{\iflimtoken@                                              
 \def\ints@@@{\iflimits@\negintic@
   \mathop{\intic@\multintlimits@}\limits                             
  \else\multint@\nolimits\fi                                          
  \eat@}
 \else                                                                
 \def\ints@@@{\iflimits@\negintic@
  \mathop{\intic@\multintlimits@}\limits\else
  \multint@\nolimits\fi}\fi\ints@@@}%
\def\intkern@{\mathchoice{\!\!\!}{\!\!}{\!\!}{\!\!}}%
\def\plaincdots@{\mathinner{\cdotp\cdotp\cdotp}}%
\def\intdots@{\mathchoice{\plaincdots@}%
 {{\cdotp}\mkern1.5mu{\cdotp}\mkern1.5mu{\cdotp}}%
 {{\cdotp}\mkern1mu{\cdotp}\mkern1mu{\cdotp}}%
 {{\cdotp}\mkern1mu{\cdotp}\mkern1mu{\cdotp}}}%
%
%
%
\def\RIfM@{\relax\protect\ifmmode}
\def\text{\RIfM@\expandafter\text@\else\expandafter\mbox\fi}
\let\nfss@text\text
\def\text@#1{\mathchoice
   {\textdef@\displaystyle\f@size{#1}}%
   {\textdef@\textstyle\tf@size{\firstchoice@false #1}}%
   {\textdef@\textstyle\sf@size{\firstchoice@false #1}}%
   {\textdef@\textstyle \ssf@size{\firstchoice@false #1}}%
   \glb@settings}

\def\textdef@#1#2#3{\hbox{{%
                    \everymath{#1}%
                    \let\f@size#2\selectfont
                    #3}}}
\newif\iffirstchoice@
\firstchoice@true
%
%
\def\Let@{\relax\iffalse{\fi\let\\=\cr\iffalse}\fi}%
\def\vspace@{\def\vspace##1{\crcr\noalign{\vskip##1\relax}}}%
\def\multilimits@{\bgroup\vspace@\Let@
 \baselineskip\fontdimen10 \scriptfont\tw@
 \advance\baselineskip\fontdimen12 \scriptfont\tw@
 \lineskip\thr@@\fontdimen8 \scriptfont\thr@@
 \lineskiplimit\lineskip
 \vbox\bgroup\ialign\bgroup\hfil$\m@th\scriptstyle{##}$\hfil\crcr}%
\def\Sb{_\multilimits@}%
\def\endSb{\crcr\egroup\egroup\egroup}%
\def\Sp{^\multilimits@}%

%
%
%
\newdimen\ex@
\ex@.2326ex
\def\rightarrowfill@#1{$#1\m@th\mathord-\mkern-6mu\cleaders
 \hbox{$#1\mkern-2mu\mathord-\mkern-2mu$}\hfill
 \mkern-6mu\mathord\rightarrow$}%
\def\leftarrowfill@#1{$#1\m@th\mathord\leftarrow\mkern-6mu\cleaders
 \hbox{$#1\mkern-2mu\mathord-\mkern-2mu$}\hfill\mkern-6mu\mathord-$}%
\def\leftrightarrowfill@#1{$#1\m@th\mathord\leftarrow
\mkern-6mu\cleaders
 \hbox{$#1\mkern-2mu\mathord-\mkern-2mu$}\hfill
 \mkern-6mu\mathord\rightarrow$}%
\def\overrightarrow{\mathpalette\overrightarrow@}%
\def\overrightarrow@#1#2{\vbox{\ialign{##\crcr\rightarrowfill@#1\crcr
 \noalign{\kern-\ex@\nointerlineskip}$\m@th\hfil#1#2\hfil$\crcr}}}%

\def\overleftarrow{\mathpalette\overleftarrow@}%
\def\overleftarrow@#1#2{\vbox{\ialign{##\crcr\leftarrowfill@#1\crcr
 \noalign{\kern-\ex@\nointerlineskip}$\m@th\hfil#1#2\hfil$\crcr}}}%
\def\overleftrightarrow{\mathpalette\overleftrightarrow@}%
\def\overleftrightarrow@#1#2{\vbox{\ialign{##\crcr
   \leftrightarrowfill@#1\crcr
 \noalign{\kern-\ex@\nointerlineskip}$\m@th\hfil#1#2\hfil$\crcr}}}%
\def\underrightarrow{\mathpalette\underrightarrow@}%
\def\underrightarrow@#1#2{\vtop{\ialign{##\crcr$\m@th\hfil#1#2\hfil
  $\crcr\noalign{\nointerlineskip}\rightarrowfill@#1\crcr}}}%

\def\underleftarrow{\mathpalette\underleftarrow@}%
\def\underleftarrow@#1#2{\vtop{\ialign{##\crcr$\m@th\hfil#1#2\hfil
  $\crcr\noalign{\nointerlineskip}\leftarrowfill@#1\crcr}}}%
\def\underleftrightarrow{\mathpalette\underleftrightarrow@}%
\def\underleftrightarrow@#1#2{\vtop{\ialign{##\crcr$\m@th
  \hfil#1#2\hfil$\crcr
 \noalign{\nointerlineskip}\leftrightarrowfill@#1\crcr}}}%

\def\qopnamewl@#1{\mathop{\operator@font#1}\nlimits@}
\let\nlimits@\displaylimits
\def\setboxz@h{\setbox\z@\hbox}

\def\varlim@#1#2{\mathop{\vtop{\ialign{##\crcr
 \hfil$#1\m@th\operator@font lim$\hfil\crcr
 \noalign{\nointerlineskip}#2#1\crcr
 \noalign{\nointerlineskip\kern-\ex@}\crcr}}}}

 \def\rightarrowfill@#1{\m@th\setboxz@h{$#1-$}\ht\z@\z@
  $#1\copy\z@\mkern-6mu\cleaders
  \hbox{$#1\mkern-2mu\box\z@\mkern-2mu$}\hfill
  \mkern-6mu\mathord\rightarrow$}
\def\leftarrowfill@#1{\m@th\setboxz@h{$#1-$}\ht\z@\z@
  $#1\mathord\leftarrow\mkern-6mu\cleaders
  \hbox{$#1\mkern-2mu\copy\z@\mkern-2mu$}\hfill
  \mkern-6mu\box\z@$}

\def\projlim{\qopnamewl@{proj\,lim}}
\def\injlim{\qopnamewl@{inj\,lim}}
\def\varinjlim{\mathpalette\varlim@\rightarrowfill@}
\def\varprojlim{\mathpalette\varlim@\leftarrowfill@}
\def\varliminf{\mathpalette\varliminf@{}}
\def\varliminf@#1{\mathop{\underline{\vrule\@depth.2\ex@\@width\z@
   \hbox{$#1\m@th\operator@font lim$}}}}
\def\varlimsup{\mathpalette\varlimsup@{}}
\def\varlimsup@#1{\mathop{\overline
  {\hbox{$#1\m@th\operator@font lim$}}}}

%
%
%
%
%
%
\begingroup \catcode `|=0 \catcode `[= 1
\catcode`]=2 \catcode `\{=12 \catcode `\}=12
\catcode`\\=12 
|gdef|@alignverbatim#1\end{align}[#1|end[align]]
|gdef|@salignverbatim#1\end{align*}[#1|end[align*]]

|gdef|@alignatverbatim#1\end{alignat}[#1|end[alignat]]
|gdef|@salignatverbatim#1\end{alignat*}[#1|end[alignat*]]

|gdef|@xalignatverbatim#1\end{xalignat}[#1|end[xalignat]]
|gdef|@sxalignatverbatim#1\end{xalignat*}[#1|end[xalignat*]]

|gdef|@gatherverbatim#1\end{gather}[#1|end[gather]]
|gdef|@sgatherverbatim#1\end{gather*}[#1|end[gather*]]

|gdef|@gatherverbatim#1\end{gather}[#1|end[gather]]
|gdef|@sgatherverbatim#1\end{gather*}[#1|end[gather*]]

|gdef|@multilineverbatim#1\end{multiline}[#1|end[multiline]]
|gdef|@smultilineverbatim#1\end{multiline*}[#1|end[multiline*]]

|gdef|@arraxverbatim#1\end{arrax}[#1|end[arrax]]
|gdef|@sarraxverbatim#1\end{arrax*}[#1|end[arrax*]]

|gdef|@tabulaxverbatim#1\end{tabulax}[#1|end[tabulax]]
|gdef|@stabulaxverbatim#1\end{tabulax*}[#1|end[tabulax*]]

|endgroup

\def\align{\@verbatim \frenchspacing\@vobeyspaces \@alignverbatim
You are using the "align" environment in a style in which it is not defined.}

\@namedef{align*}{\@verbatim\@salignverbatim
You are using the "align*" environment in a style in which it is not defined.}
\expandafter\let\csname endalign*\endcsname =\endtrivlist

\def\alignat{\@verbatim \frenchspacing\@vobeyspaces \@alignatverbatim
You are using the "alignat" environment in a style in which it is not defined.}

\@namedef{alignat*}{\@verbatim\@salignatverbatim
You are using the "alignat*" environment in a style in which it is not defined.}
\expandafter\let\csname endalignat*\endcsname =\endtrivlist

\def\xalignat{\@verbatim \frenchspacing\@vobeyspaces \@xalignatverbatim
You are using the "xalignat" environment in a style in which it is not defined.}

\@namedef{xalignat*}{\@verbatim\@sxalignatverbatim
You are using the "xalignat*" environment in a style in which it is not defined.}
\expandafter\let\csname endxalignat*\endcsname =\endtrivlist

\def\gather{\@verbatim \frenchspacing\@vobeyspaces \@gatherverbatim
You are using the "gather" environment in a style in which it is not defined.}

\@namedef{gather*}{\@verbatim\@sgatherverbatim
You are using the "gather*" environment in a style in which it is not defined.}
\expandafter\let\csname endgather*\endcsname =\endtrivlist

\def\multiline{\@verbatim \frenchspacing\@vobeyspaces \@multilineverbatim
You are using the "multiline" environment in a style in which it is not defined.}

\@namedef{multiline*}{\@verbatim\@smultilineverbatim
You are using the "multiline*" environment in a style in which it is not defined.}
\expandafter\let\csname endmultiline*\endcsname =\endtrivlist

\def\arrax{\@verbatim \frenchspacing\@vobeyspaces \@arraxverbatim
You are using a type of "array" construct that is only allowed in AmS-LaTeX.}

\def\tabulax{\@verbatim \frenchspacing\@vobeyspaces \@tabulaxverbatim
You are using a type of "tabular" construct that is only allowed in AmS-LaTeX.}

\@namedef{arrax*}{\@verbatim\@sarraxverbatim
You are using a type of "array*" construct that is only allowed in AmS-LaTeX.}
\expandafter\let\csname endarrax*\endcsname =\endtrivlist

\@namedef{tabulax*}{\@verbatim\@stabulaxverbatim
You are using a type of "tabular*" construct that is only allowed in AmS-LaTeX.}
\expandafter\let\csname endtabulax*\endcsname =\endtrivlist


 \def\endequation{%
     \ifmmode\ifinner 
      \iftag@
        \addtocounter{equation}{-1} 
        $\hfil
           \displaywidth\linewidth\@taggnum\egroup \endtrivlist
        \global\tag@false
        \global\@ignoretrue   
      \else
        $\hfil
           \displaywidth\linewidth\@eqnnum\egroup \endtrivlist
        \global\tag@false
        \global\@ignoretrue 
      \fi
     \else   
      \iftag@
        \addtocounter{equation}{-1} 
        \eqno \hbox{\@taggnum}
        \global\tag@false%
        $$\global\@ignoretrue
      \else
        \eqno \hbox{\@eqnnum}
        $$\global\@ignoretrue
      \fi
     \fi\fi
 } 

 \newif\iftag@ \tag@false
 
 \def\TCItag{\@ifnextchar*{\@TCItagstar}{\@TCItag}}
 \def\@TCItag#1{%
     \global\tag@true
     \global\def\@taggnum{(#1)}%
     \global\def\@currentlabel{#1}}
 \def\@TCItagstar*#1{%
     \global\tag@true
     \global\def\@taggnum{#1}%
     \global\def\@currentlabel{#1}}

  \@ifundefined{tag}{
     \def\tag{\@ifnextchar*{\@tagstar}{\@tag}}
     \def\@tag#1{%
         \global\tag@true
         \global\def\@taggnum{(#1)}}
     \def\@tagstar*#1{%
         \global\tag@true
         \global\def\@taggnum{#1}}
  }{}

%
%
%
%
%

\makeatother

\begin{document}
\title[Fast Changepoint Detection]{Fast Online Changepoint Detection}
\author{Fabrizio Ghezzi}
\address{Department of Economics and Management, University of Pavia, 27100
Pavia, Italy, email:fabrizio.ghezzi01@universitadipavia.it}
\author{Eduardo Rossi}
\address{Department of Economics and Management, University of Pavia, 27100
Pavia, Italy, email:eduardo.rossi@unipv.it}
\author{Lorenzo Trapani}
\address{University of Leicester Business School, University Road, Leicester
LE1 7RH, UK, and Department of Economics and Management, University of
Pavia, 27100 Pavia, Italy, email: lorenzo.l.trapani@gmail.com}
\thanks{\textbf{Acknowledgement. }We are grateful to the participants to:
the VTSS workshop for junior researcher (virtual, April 20th, 2023); the
UCSD Econometrics Lunch Seminars (San Diego, May 17th, 2023), in particular
Graham Elliott and Allan Timmermann; the 10th Italian Congress of
Econometrics and Empirical Economics (Cagliari, May 26-28, 2023), in
particular Alessandro Casini and Pierluigi Vallarino. The usual disclaimer
applies.}

\begin{abstract}
We study \textit{online} changepoint detection in the context of a linear
regression model. We propose a class of heavily weighted statistics based on
the CUSUM process of the regression residuals, which are specifically
designed to ensure timely detection of breaks occurring early on during the
monitoring horizon. We subsequently propose a class of composite statistics,
constructed using different weighing schemes; the decision rule to mark a
changepoint is based on the \textit{largest} statistic across the various
weights, thus effectively working like a veto-based voting mechanism, which
ensures fast detection irrespective of the location of the changepoint. Our
theory is derived under a very general form of weak dependence, thus being
able to apply our tests to virtually all time series encountered in
economics, medicine, and other applied sciences. Monte Carlo simulations
show that our methodologies are able to control the procedure-wise Type I
Error, and have short detection delays in the presence of breaks.
\end{abstract}

\subjclass[2020]{60F17}
\keywords{Sequential change-point detection; heavily weighted CUSUM process;
veto-based decision rules.}
\maketitle

\doublespacing

\section{Introduction\label{intro}}

In this paper, we consider the linear regression model%
\begin{equation}
y_{t}=\mathbf{x}_{t}^{\prime }\beta _{t}+\epsilon _{t},  \label{regression}
\end{equation}%
and test whether the slopes $\beta _{t}$ are constant over time - i.e. $%
\beta _{t}=\beta _{0}$ for all $t\geq 1$ - or not. In particular, we study 
\textit{sequential/online }changepoint detection: after observing (\ref%
{regression}) over a training period $1\leq t\leq m$ with constant slopes $%
\beta _{0}$, we check whether, as new data come in, $\beta _{t}$ differs
from the previous $\beta _{0}$ for some $t>m$. Tests are carried out in real
time, at each $t>m$, as soon as the new datapoint $\left( y_{t},\mathbf{x}%
_{t}^{\prime }\right) ^{\prime }$ has been recorded.

Since the seminal contribution by \citet{page1954continuous}, online
changepoint detection has proven highly relevant to all applied sciences,
and we refer to a recent review by \citet{aue2023state} for the current
state of the art. In manufacturing, timely detection of malfunctioning in a
production line is important for quality control (see e.g. %
\citealp{page1955control}, and \citealp{page1955test}); in medicine,
changepoint analysis is relevant in the context of sequential medical trials
(\citealp{armitage1960sequential}), or to detect the onset of an epileptic
seizure using EEG trajectories (see e.g. \citealp{ombao}); and, in
epidemiology, detecting the onset of a pandemic by monitoring whether the
number of (daily) cases starts exhibiting an explosive dynamics (see e.g. %
\citealp{HT2023}) is crucial to public health decisions. In economics and
finance, the instability of model parameters is an issue of pivotal
importance due to its consequences on forecasting and policy decision
making; see, \textit{inter alia}, the contributions by \cite%
{pastor2001equity}, \cite{pettenuzzo2011predictability}, and \cite%
{smith2021break}. We also refer to \citet{romano2023fast}, and the
references therein, for further examples illustrating the importance of
quick online changepoint detection. Finding instability in $\beta _{t}$ has
obvious implications on the reliability of a model like (\ref{regression}),
and on decisions based on it: as \citet{hansen2001new} puts it, in the
presence of a changepoint \textquotedblleft inferences about economic
relationships can go astray, forecasts can be inaccurate, and policy
recommendations can be misleading or worse\textquotedblright\ (p. 127).

The literature has developed numerous tests for the ex-post detection of
changepoints, and we refer to the reviews by \citet{horvath2014extensions}
and \citet{casini2019structural}, for an insightful analysis of the state of
the art and of future directions. Despite its practical importance, the
issue of real-time detection is somewhat underdeveloped compared to ex-post
testing. In a seminal contribution, \citet{chu1996monitoring} develop a
methodology to monitor (\ref{regression}), based on the CUSUM process of the
residuals. The intuition is that, given an estimate of $\beta _{0}$ (say $%
\widetilde{\beta }$) computed during a training sample in which no breaks
were detected, the \textquotedblleft prediction errors\textquotedblright\ $%
y_{t}-\mathbf{x}_{t}^{\prime }\widetilde{\beta }$ calculated for $t>m$ will
fluctuate around zero under the null of no changepoint, whereas they will
have a bias in the presence of a changepoint. Thus, the corresponding CUSUM
process will either fluctuate around zero with increasing variance as $t$
increases, or it will have a drift after a changepoint. Hence, if, at each $%
t $, the CUSUM\ process stays within a boundary, the null of no break will
not be rejected; conversely, a break will be flagged at the first crossing
of such a boundary. \citet{chu1996monitoring} derive the weak limit of the
CUSUM process. This allows to compute critical values, and ultimately to
control the procedure-wise Type I Error. In addition to ensuring size
control, the timely detection of a changepoint is also very important; %
\citet{horvath2004monitoring} and \citet{horvath2007sequential} propose a
family of boundary functions which afford a faster break detection when a
changepoint is present. These boundary functions, which represent a major
simplification of the ones studied in \citet{chu1996monitoring}, are based
on a parameterisation which depends on a user-chosen quantity (say, $0\leq
\eta \leq 1/2$) which determines the weights assigned to the CUSUM
fluctuations: as $\eta $ increases, the weight also increases, and therefore
higher power/faster detection may be expected. Several recent contributions
also propose statistics based on the CUSUM process - as a leading example,
which is drawing more and more attention in the literature, %
\citet{kirch2018modified}, \citet{yu2020note}, \citet{romano2023fast}, and %
\citet{HT2023}, study a standardised version of the CUSUM process (called
the \textquotedblleft Page-CUSUM\textquotedblright ) based on comparing the
average during the training period with the average computed during the
monitoring period using a moving window, and choosing the length of such a
window in order to maximise the discrepancy between the two averages in
order to ensure an \textquotedblleft optimal\textquotedblright\ detection
delay when a changepoint is present. Whilst different from our approach,
even in this case the CUSUM process is suitably weighted in order to enance
fast changepoint detection.

\medskip

\textit{Hypotheses of interest and the main contributions of this paper}

\medskip

As mentioned above, we study sequential detection of changepoints in (\ref%
{regression}), allowing for the presence of exogenous and dynamic
regressors, and for a very general form of serial dependence in both $%
\mathbf{x}_{t}$ and $\epsilon _{t}$. After observing the model over a
training sample of size $m$ where no changes in the regression coefficients
have occurred, we test for the null hypothesis of no change occurring at
each point in time, i.e. 
\begin{equation}
H_{0}:\beta _{t}=\beta _{0}\text{ for all }t\geq m+1.  \label{null}
\end{equation}

We make two main contributions. Firstly, we design a class of heavily
weighted CUSUM-type statistics which ensure fast detection for early
occurring changepoints. We call these \textit{R\'{e}nyi statistics},
inspired by \citet{horvath2020new} who, building on an idea by %
\citet{renyi1953theory}, propose a similar idea for the purpose of offline
changepoint detection. We develop the full-blown asymptotics under the null,
and study the consistency, and the limiting behaviour of the detection
delay, under the alternative. We show that R\'{e}nyi statistics deliver an
excellent performance in the presence of early occurring breaks. On the
other hand, when breaks occur at a later stage, heavily weighted statistics
offer a worse performance compared to CUSUM-type statistics using lighter
weights. Indeed, no weighing scheme can ensure uniformly fast changepoint
detection, and different values of $\eta $ yield faster detection for
different break locations (see also \citealp{kirch2022asymptotic}, and %
\citealp{kirch2022sequential}). Thus, as a second contribution, we build a
composite online detection scheme. Our method is inspired by the popular
\textquotedblleft majority vote\textquotedblright\ classifiers (see the
review by \citealp{dietterich2000ensemble}), where different classification
rules are combined into one rule based e.g. on the mode (or
\textquotedblleft majority vote\textquotedblright ) thereof. We propose a
\textquotedblleft veto-based\textquotedblright\ changepoint learning
procedure, where, in essence, at each point in time $t\geq m+1$ the CUSUM\
process is simultaneously weighted using several weighing schemes, with a
changepoint being flagged up as long as the null is rejected for at least
one set of weights. Heuristically, this procedure should yield the fastest
changepoint detection, irrespective of the location of the changepoint,
whilst still offering size control. Referring to the papers by %
\citet{kirch2018modified}, \citet{yu2020note}, and \citet{romano2023fast}
mentioned above, in essence the sequential detection methodology they
investigate is based on using a combination of CUSUM functionals with
different windows, which is highly effective but comes at a relatively high
computational cost. In our case, we similarly propose the combination of
several statistics constructed with different specifications, but at a much
lower computational cost.

\medskip

The remainder of the paper is organised as follows. We present our model and
the main assumptions, in Section \ref{model}. R\'{e}nyi statistics are
studied in Section \ref{SecLRM}; veto-based statistics are in Section \ref%
{veto}. We report a comprehensive Monte Carlo exercise in Section \ref{Sim}.
Section \ref{conclusions} concludes.

NOTATION. Throughout the paper, positive, finite constants are denoted as $%
c_{0}$, $c_{1}$, ... and their value can change from line to line. We denote
vectors using lower case bold face (e.g. $\mathbf{x}$) and their $j$-th
coordinate as $x_{j}$; matrices are denoted using upper case bold face (e.g. 
$\mathbf{A}$) and their element in position $\left( j,h\right) $ is denoted
as $\mathbf{A}_{j,h}$. We use $\left\Vert \mathbf{x}\right\Vert $ to
indicate the Euclidean norm of $\mathbf{x}$; $\left\Vert \mathbf{A}%
\right\Vert _{F}$ to indicate the Frobenius norm of a matrix $\mathbf{A}$;
and $s_{i}\left( \mathbf{A}\right) $ is the $i$-th largest singular value of 
$\mathbf{A}$, with the largest and smallest singular values denoted as $%
s_{\max }\left( \mathbf{A}\right) $\ and $s_{\min }\left( \mathbf{A}\right) $%
\ respectively. We use: \textquotedblleft $\rightarrow $\textquotedblright\
to denote the ordinary limit; \textquotedblleft $\overset{a.s.}{\rightarrow }
$\textquotedblright\ for almost sure convergence; \textquotedblleft $\overset%
{\mathcal{P}}{\rightarrow }$\textquotedblright\ for convergence in
probability; \textquotedblleft $\overset{\mathcal{D}}{\rightarrow }$%
\textquotedblright\ for weak convergence; \textquotedblleft $\overset{%
\mathcal{D}}{=}$\textquotedblright\ to indicate equality in distribution.
Given a random variable $X$, we denote its $L_{\nu }$-norm as $\left\vert
X\right\vert _{\nu }=\left( E\left\vert X\right\vert ^{\nu }\right) ^{1/\nu
} $. We use the symbol $\Omega \left( \cdot \right) $ to indicate the exact
order of magnitude of a sequence, i.e. $s_{m}=\Omega \left( m^{\lambda
}\right) $ means that $\lim_{m\rightarrow \infty }m^{-\lambda }s_{m}=c_{0}$
with $0<c_{0}<\infty $. Finally, $\left\{ W\left( u\right) ,u\geq 0\right\} $%
\ denotes a standard Wiener process. Other, relevant notation is introduced
later on in the paper.

\section{Model and assumptions\label{model}}

In this section, we introduce our main model and assumptions. Our workhorse
model is the linear regression model of equation (\ref{regression}), viz.%
\begin{equation*}
y_{t}=\mathbf{x}_{t}^{\prime }\beta _{t}+\epsilon _{t},
\end{equation*}%
where $\mathbf{x}_{t}$ is a $d$-dimensional vector of covariates with
coordinates denoted as $x_{j,t}$, $1\leq j\leq d$, including a constant -
viz. $x_{1,t}=1$, $t\geq 1$. In (\ref{regression}), the regression
coefficients $\beta _{t}$ may be time-varying. We begin with considering a
static model, where no lagged values of $y_{t}$ are used; in Section \ref%
{dynamic}, we extend our results to the dynamic model. As a final remark, (%
\ref{regression}) naturally nests the standard location model, where $d=1$,
and sequential monitoring boils down to monitoring for shifts in the mean of 
$y_{t}$.

\medskip

We now list the main assumptions. We begin by providing a definition of $%
L_{\nu }$\textit{-decomposable Bernoulli shifts}.

\begin{definition}
\label{bernoulli}The sequence $\left\{ m_{t},-\infty <t<\infty \right\} $
forms an $L_{\nu }$-decomposable Bernoulli shift if and only if $%
m_{t}=h\left( \eta _{t},\eta _{t-1},...\right) $, where: $\left\{ \eta
_{t},-\infty <t<\infty \right\} $ is an \textit{i.i.d.} sequence with values
in a measurable space $S$; $h\left( \cdot \right) :S^{\mathbb{N}}\rightarrow 
\mathbb{R}
$ is a non random measurable function; $\left\vert m_{t}\right\vert _{\nu
}<\infty $; and $\left\vert m_{t}-\widetilde{m}_{t,\ell }\right\vert _{\nu
}\leq c_{0}\ell ^{-a}$, for some $c_{0}>0$ and $a>0$, where $\widetilde{m}%
_{t,\ell }=h\left( \eta _{t},...,\eta _{t-\ell +1},\widetilde{\eta }_{t-\ell
,t,\ell },\right. $ $\left. \widetilde{\eta }_{t-\ell -1,t,\ell }...\right) $%
, with $\left\{ \widetilde{\eta }_{s,t,\ell },-\infty <s,\ell ,t<\infty
\right\} $ \textit{i.i.d.} copies of $\eta _{0}$, independent of $\left\{
\eta _{t},-\infty <t<\infty \right\} $.
\end{definition}

Since the seminal works by \citet{wu2005nonlinear} and %
\citet{berkes2011split}, decomposable Bernoulli shifts have proven a
convenient way to model dependent time series, mainly due to their
generality and to the fact that it is much easier to verify whether a
sequence forms a decomposable Bernoulli shift than e.g. verifying mixing
conditions. Virtually all the most common DGPs in econometrics and
statistics can be shown to satisfy Definition \ref{bernoulli}: %
\citet{liu2009strong}, \textit{inter alia}, provide various theoretical
results, and numerous examples including ARMA models, ARCH/GARCH sequences,
and other nonlinear time series models (such as e.g. random coefficient
autoregressive models and threshold models). Moreover, it is easy to verify
that if $\left\{ m_{t},-\infty <t<\infty \right\} $ is an $L_{\nu }$%
-decomposable Bernoulli shift, then $\left\{ m_{t}^{\kappa },-\infty
<t<\infty \right\} $ is an $L_{\nu /\kappa }$-decomposable Bernoulli shift.
Hence, the weak dependence of Definition \ref{bernoulli} is sufficiently
flexible to allow to study changes in the mean, and also in higher order
moments, of $m_{t}$.

\medskip

We are now ready to present our assumptions.

\begin{assumption}
\label{b-shifts}(i) $\left\{ \epsilon _{t},-\infty <t<\infty \right\} $
forms an $L_{4}$-decomposable Bernoulli shift with $E\left( \epsilon
_{t}\right) =0$, $\left\vert \epsilon _{t}\right\vert _{2}>0$, and $a>2$;
(ii) for $2\leq j\leq d$, $\left\{ x_{j,t},-\infty <t<\infty \right\} $
forms an $L_{4}$-decomposable Bernoulli shift with $a>2$.
\end{assumption}

\begin{assumption}
\label{exogeneity}(i) $E\left( \epsilon _{t}\right) =0$; (ii) $E\left(
x_{j,t}\epsilon _{t}\right) =0$, for all $2\leq j\leq d$; (iii) $E\left( 
\mathbf{x}_{t}\mathbf{x}_{t}^{\prime }\right) =\mathbf{C}$, where $\mathbf{C}
$ is such that $0<s_{\min }\left( \mathbf{C}\right) \leq s_{\max }\left( 
\mathbf{C}\right) <\infty $.
\end{assumption}

Assumption \ref{b-shifts} requires that both the error term $\epsilon _{t}$
and the regressors $\mathbf{x}_{t}$ admit at least four moments. In essence,
this assumption entails the validity of various limit theorems, and in
particular of a uniform estimate of the convergence rate (also known as
\textquotedblleft strong approximation\textquotedblright ) of the invariance
principle (see Lemma \ref{sip}). In principle, it would be possible to
reduce the moment existence requirement, upon assuming independence between $%
\epsilon _{t}$ and $\mathbf{x}_{t}$. As far as serial dependence is
concerned, the requirement that $a>2$ is quite mild; indeed the DGPs
typically used in statistics and econometrics have an \textit{exponential}
rate of decay for $\left\vert m_{t}-\widetilde{m}_{t,\ell }\right\vert _{\nu
}$. As far as Assumption \ref{exogeneity} is concerned,\ parts \textit{(i)}
and \textit{(ii)} are standard; part \textit{(iii)} rules out collinearity,
by stating that the smallest eigenvalue of $\mathbf{C}$ is bounded away from
zero.

\section{Sequential detection of changepoints\label{SecLRM}}

Our statistics are based on comparing the regression coefficients estimated
at each point in time over the monitoring horizon against a benchmark
estimate obtained during a training period $1\leq t\leq m$ in which no break
was observed.

\begin{assumption}
\label{contamination}$\beta _{t}=\beta _{0}$, for all $1\leq t\leq m$.
\end{assumption}

After $m$, the monitoring procedure starts, for the null hypothesis of (\ref%
{null}). Let $\widehat{\beta }_{m}$ be the LS\ estimator using data in the
training sample, viz.%
\begin{equation}
\widehat{\beta }_{m}=\left( \sum_{t=1}^{m}\mathbf{x}_{t}\mathbf{x}%
_{t}^{\prime }\right) ^{-1}\left( \sum_{t=1}^{m}\mathbf{x}_{t}y_{t}\right) ,
\label{ols}
\end{equation}%
and define the LS residuals 
\begin{equation}
\widehat{\epsilon }_{t}=y_{t}-\mathbf{x}_{t}^{\prime }\widehat{\beta }_{m},%
\text{ for }t\geq m+1.  \label{residuals}
\end{equation}%
Heuristically, if no changepoint is present, $\widehat{\beta }_{m}$ is a
valid estimator of $\beta _{0}$ throughout the monitoring period; hence, it
can be expected that $\widehat{\epsilon }_{t}$ will fluctuate across zero
for all $t\geq m+1$. Conversely, in the presence of a break, this creates a
bias term in $\widehat{\epsilon }_{t}$. Hence, a natural way of testing for
changepoints is to use, as detector, a CUSUM-type statistic defined as the
partial sums process of the residuals over the monitoring horizon, i.e.%
\begin{equation}
Q\left( m,k\right) =\sum_{t=m+1}^{m+k}\widehat{\epsilon }_{t},\text{ \ \ for 
}k\geq 1.  \label{qmk}
\end{equation}

We assume that the monitoring does not go on forever, but it is carried out
over a horizon of length, say, $T_{m}$, after which it is terminated.

\begin{assumption}
\label{horizon}$\lim_{m\rightarrow \infty }T_{m}=\infty $.
\end{assumption}

Note that, in Assumption \ref{horizon}, the monitoring horizon can be
\textquotedblleft long\textquotedblright , i.e. we allow for $T_{m}=\Omega
\left( m^{\lambda }\right) $ with $\lambda \geq 1$, but it can also be
\textquotedblleft short\textquotedblright , i.e. $T_{m}=o\left( m\right) $;
the latter case is seldom considered in the literature. Under such a
closed-ended monitoring scheme, our null hypothesis of interest in (\ref%
{null}) becomes 
\begin{equation}
H_{0}:\beta _{t}=\beta _{0}\text{ for all }m+1\leq t\leq m+T_{m}.
\label{monitor-null}
\end{equation}%
In practice, $Q\left( m;k\right) $ is compared against a function of $k$,
and, as soon as it exceeds it, a changepoint is detected. Such a function is
called the \textit{boundary function}; its specification is important in
order to ensure the procedure-wise control of Type I Errors under $H_{0}$
and, on the other hand, to ensure power and a timely detection in the
presence of a break. \citet{horvath2004monitoring} and %
\citet{horvath2007sequential} suggest using, in (\ref{decision}), a family
of boundary function indexed by a user-chosen parameter $0\leq \eta \leq 1/2$%
, defined as%
\begin{equation}
g_{\eta }(m,k)=\sigma m^{1/2}\left( 1+\frac{k}{m}\right) \left( \frac{k}{m+k}%
\right) ^{\eta },  \label{boundary}
\end{equation}%
where 
\begin{equation}
\sigma ^{2}=\lim_{m\rightarrow \infty }E\left(
m^{-1/2}\sum_{t=1}^{m}\epsilon _{t}\right) ^{2},  \label{lrunv}
\end{equation}%
is the long-run variance of $\left\{ \epsilon _{t},-\infty <t<\infty
\right\} $. Hence, a changepoint is marked according to the decision rule%
\begin{equation}
\tau _{m}=%
\begin{cases}
\inf \{k\geq 1:|Q\left( m;k\right) |\geq c_{\alpha ,\eta }g_{\eta }\left(
m;k\right) \}, \\ 
T_{m},\ \text{if}\ |Q\left( m;k\right) |\leq c_{\alpha ,\eta }g_{\eta
}\left( m;k\right) \ \text{for\ all}\ 1\leq k\leq T_{m},%
\end{cases}
\label{decision}
\end{equation}%
where $c_{\alpha ,\eta }$ is a critical value for a pre-specified,
user-defined nominal size $\alpha $. In (\ref{boundary}), intuitively, the
term $m^{1/2}\left( 1+k/m\right) $ is a norming sequence which ensures that $%
Q\left( m;k\right) $ is properly normed; conversely, $\left( k/\left(
m+k\right) \right) ^{\eta }$ is a weight function. Heuristically, the larger 
$\eta $, the smaller the boundary function $g_{\eta }(m,k)$, and therefore
the higher the detection ability of our monitoring scheme. %
\citet{aue2004delay} and \citet{aue2009delay} prove that this intuition is
correct, by showing that the detection delay in the presence of a break is
inversely related to $\eta $. Building on this intuition, here we study (\ref%
{boundary}) with $\eta \in (1/2,\infty )$.

\subsection{Asymptotics under the null for R\'{e}nyi statistics\label%
{nulldist}}

In order to study the limiting distribution under the null, note that (\ref%
{decision}) states that the event $\left\{ \tau _{m}<T_{m}\right\} $ is
tantamount to having 
\begin{equation*}
\max_{1\leq k\leq T_{m}}\left\vert Q\left( m;k\right) \right\vert \geq
c_{\alpha ,\eta }g_{\eta }(m;k).
\end{equation*}%
Using the Law of the Iterated Logarithm, it can be shown that $\max_{1\leq
k\leq T_{m}}\left\vert Q\left( m;k\right) \right\vert \geq c_{\alpha ,\eta
}g_{\eta }(m;k)\overset{a.s.}{\rightarrow }\infty $; however, we can derive
a well-defined limit for 
\begin{equation*}
\max_{a_{m}\leq k\leq T_{m}}\frac{\left\vert Q\left( m;k\right) \right\vert 
}{g_{\eta }(m;k)}\geq c_{\alpha ,\eta },
\end{equation*}%
for a user-specified trimming sequence $a_{m}$ such that 
\begin{equation}
\lim_{m\rightarrow \infty }a_{m}=\infty \text{, \ \ and \ \ }%
\lim_{m\rightarrow \infty }\frac{a_{m}}{\min \left\{ m,T_{m}\right\} }=0.
\label{a-m}
\end{equation}%
In (\ref{a-m}), $a_{m}$ needs to diverge as $m\rightarrow \infty $, but this
can happen at an arbitrarily slow rate.\ From an operational viewpoint, this
entails that our monitoring scheme cannot start straight after $m$, but only
after $m+a_{m}$ periods, so that (\ref{decision}) modifies into%
\begin{equation}
\tau _{m}=%
\begin{cases}
\inf \{k\geq a_{m}:|Q\left( m;k\right) |\geq c_{\alpha ,\eta }g_{\eta
}(m;k)\}, \\ 
T_{m},\ \text{if}\ |Q\left( m;k\right) |\leq c_{\alpha ,\eta }g_{\eta
}(m;k)\ \text{for\ all}\ a_{m}\leq k\leq T_{m};%
\end{cases}
\label{delay-trim}
\end{equation}%
of course, $a_{m}$ cannot be longer than $T_{m}$.

Define the norming sequence $r_{m}=a_{m}/\left( a_{m}+m\right) $. The
following theorem provides the limiting distribution of our statistics under
the null hypothesis.

\begin{theorem}
\label{nul-distr}Under Assumptions \ref{b-shifts}-\ref{horizon} and (\ref%
{a-m}), as $m\rightarrow \infty $ with $d=O\left( m^{1/4}\right) $, it holds
that, for all $\eta >1/2$ 
\begin{equation*}
r_{m}^{\eta -1/2}\max_{a_{m}\leq k\leq T_{m}}\frac{\left\vert
Q(m;k)\right\vert }{g_{\eta }(m;k)}\overset{\mathcal{D}}{\rightarrow }%
\sup_{1\leq u<\infty }\frac{\left\vert W(u)\right\vert }{u^{\eta }}.
\end{equation*}
\end{theorem}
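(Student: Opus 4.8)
The plan is to peel off the estimation error, reduce the weighted residual CUSUM to the same weighted functional of the partial sums of $\{\epsilon_t\}$, pass to a Wiener limit on compact $u$-intervals by an invariance principle, and then extend to $u=\infty$ by a truncation estimate in which $\eta>1/2$ enters essentially.

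\emph{Step 1 (residual decomposition).} Since $\widehat{\epsilon}_t=\epsilon_t-\mathbf{x}_t'(\widehat{\beta}_m-\beta_0)$, we have
\begin{equation*}
Q(m;k)=\sum_{t=m+1}^{m+k}\epsilon_t-\Big(\sum_{t=m+1}^{m+k}\mathbf{x}_t\Big)'(\widehat{\beta}_m-\beta_0).
\end{equation*}
The first task is to show that the estimation-error term is negligible after normalisation,
\begin{equation*}
r_m^{\eta-1/2}\max_{a_m\le k\le T_m}\frac{\big|\big(\sum_{t=m+1}^{m+k}\mathbf{x}_t\big)'(\widehat{\beta}_m-\beta_0)\big|}{g_\eta(m;k)}\overset{\mathcal{P}}{\to}0.
\end{equation*}
For this I would split the inner product into the intercept contribution $k(\widehat{\beta}_{m,1}-\beta_{0,1})$, with $|\widehat{\beta}_{m,1}-\beta_{0,1}|=O_{\mathcal{P}}(m^{-1/2})$, and the remaining $d-1$ coordinates, which by Cauchy--Schwarz are $O_{\mathcal{P}}((dk)^{1/2})\cdot O_{\mathcal{P}}((d/m)^{1/2})=O_{\mathcal{P}}(d(k/m)^{1/2})$, the factor $(dk)^{1/2}$ coming from an $L_4$ maximal inequality for the mean-zero decomposable Bernoulli shifts and $(d/m)^{1/2}$ from $\|\widehat{\beta}_m-\beta_0\|$ (Assumptions~\ref{b-shifts}--\ref{exogeneity} and $s_{\min}(\mathbf{C})>0$). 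Hence the numerator is $O_{\mathcal{P}}(km^{-1/2}+d(k/m)^{1/2})$ uniformly in $k$; dividing by $g_\eta(m;k)$ and multiplying by $r_m^{\eta-1/2}$ gives a bound that vanishes under (\ref{a-m}) and $d=O(m^{1/4})$ (this is where the dimension restriction is used).

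\emph{Step 2 (Wiener limit on compacts).} Fix $U>1$. By stationarity, $\{\sum_{t=m+1}^{m+j}\epsilon_t\}_{j\ge0}$ has, for every $m$, the same law as $\{\sum_{t=1}^{j}\epsilon_t\}_{j\ge0}$, so the functional CLT for $\{\epsilon_t\}$ (a consequence of Lemma~\ref{sip} under Assumption~\ref{b-shifts}) yields $a_m^{-1/2}\sum_{t=m+1}^{m+\lfloor a_m u\rfloor}\epsilon_t\overset{\mathcal{D}}{\to}\sigma W(u)$ in $D[0,U]$. Substituting $k=\lfloor a_m u\rfloor$ into $g_\eta$, and using $a_m/m\to0$ so that $(1+k/m)^{1-\eta}\to1$ and $r_m=(a_m/m)(1+o(1))$ uniformly on $u\in[1,U]$, the powers of $m$ and $a_m$ cancel exactly --- this is the purpose of the normalisation $r_m^{\eta-1/2}$ --- leaving
\begin{equation*}
r_m^{\eta-1/2}\,\frac{\big|\sum_{t=m+1}^{m+\lfloor a_m u\rfloor}\epsilon_t\big|}{g_\eta(m;\lfloor a_m u\rfloor)}=\frac{1}{u^\eta}\cdot\frac{\big|\sum_{t=m+1}^{m+\lfloor a_m u\rfloor}\epsilon_t\big|}{\sigma a_m^{1/2}}\,(1+o(1))\overset{\mathcal{D}}{\to}\frac{|W(u)|}{u^\eta}
\end{equation*}
in $D[1,U]$. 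Applying the continuous mapping theorem to $f\mapsto\sup_{1\le u\le U}|f(u)|u^{-\eta}$, combined with Step~1 via Slutsky and a routine discretisation (the grid $\{k/a_m\}$ has mesh $a_m^{-1}\to0$ and $u\mapsto u^{-\eta}$ is uniformly continuous on $[1,U]$), gives
\begin{equation*}
r_m^{\eta-1/2}\max_{a_m\le k\le a_m U}\frac{|Q(m;k)|}{g_\eta(m;k)}\overset{\mathcal{D}}{\to}\sup_{1\le u\le U}\frac{|W(u)|}{u^\eta}.
\end{equation*}

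\emph{Step 3 (truncation and conclusion).} The hard part is the uniform tail bound
\begin{equation*}
\lim_{U\to\infty}\limsup_{m\to\infty}\Pr\Big(r_m^{\eta-1/2}\max_{a_m U\le k\le T_m}\frac{|Q(m;k)|}{g_\eta(m;k)}>\varepsilon\Big)=0\quad\text{for all }\varepsilon>0.
\end{equation*}
I would split $[a_m U,T_m]$ into $[a_m U,m]$ and $[m,T_m]$ (the second empty if $T_m\le m$) and decompose each dyadically into blocks $[2^j\ell,2^{j+1}\ell]$; on a block $g_\eta$ is of constant order, while an $L_4$ maximal inequality for the partial sums of $\{\epsilon_t\}$ (and Step~1 once more for the estimation error) bounds the block-maximum of $|Q(m;k)|$ by $c\sqrt{j}$ times its typical order, with probability $1-O(2^{-2j})$. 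Propagating the exponents, block $j$ contributes $O_{\mathcal{P}}(\sqrt{j}\,2^{-j(\eta-1/2)}U^{-(\eta-1/2)})$ on $[a_m U,m]$ and $O_{\mathcal{P}}((a_m/m)^{\eta-1/2}\sqrt{j}\,2^{-j/2})$ on $[m,T_m]$; since $\eta>1/2$ both series are summable in $j$, so the two ranges contribute $O_{\mathcal{P}}(U^{-(\eta-1/2)})$ and $O_{\mathcal{P}}((a_m/m)^{\eta-1/2})$, which vanish as $U\to\infty$ and as $m\to\infty$ respectively. (Tracking the transition of $g_\eta$ near $k\asymp m$ and allowing an arbitrarily long horizon $T_m$ is the delicate bookkeeping, and it is precisely the heavy weighting $\eta>1/2$ that makes these geometric sums converge.) Finally, since $\eta>1/2$ the law of the iterated logarithm gives $|W(u)|u^{-\eta}\to0$ a.s. as $u\to\infty$, so $\sup_{1\le u<\infty}|W(u)|u^{-\eta}$ is a.s. finite and $\sup_{1\le u\le U}|W(u)|u^{-\eta}\uparrow\sup_{1\le u<\infty}|W(u)|u^{-\eta}$ a.s. as $U\to\infty$; together with the compact-interval limit of Step~2, the standard converging-together argument delivers the claimed weak convergence.
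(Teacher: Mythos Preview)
Your overall architecture --- discard the estimation error, establish the weak limit on compacts via an FCLT, then control the tail by a dyadic maximal-inequality argument --- is a legitimate alternative to the paper's route. The paper instead keeps the training-sample contribution, passes to two independent Wiener processes via a strong approximation (Lemma~\ref{sip}), uses the exact identity $W_1(t)-tW_2(1)\overset{\mathcal D}{=}(1+t)W(t/(1+t))$, and then a change of variable plus Wiener rescaling turns the whole expression into $\max_{1\le u\le L_m}|W(u)|/u^{\eta}$ with $L_m\to\infty$, so the limit drops out by almost-sure continuity without any separate truncation step. Your Step~3 replaces that elegance with honest bookkeeping, and the sketch you give is correct in outline.

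However, Step~1 as written has a genuine gap. The non-intercept regressors $x_{j,t}$, $j\ge 2$, are \emph{not} assumed to have mean zero (only $\epsilon_t$ and $x_{j,t}\epsilon_t$ are centred), so your claim that $\big\|\sum_{t=m+1}^{m+k}(x_{j,t})_{j\ge 2}\big\|=O_{\mathcal P}((dk)^{1/2})$ via a maximal inequality for ``mean-zero decomposable Bernoulli shifts'' is false: that sum has leading order $k\,(E x_{2,0},\dots,Ex_{d,0})'$. The right decomposition is
\[
\Big(\sum_{t=m+1}^{m+k}\mathbf{x}_t\Big)'(\widehat\beta_m-\beta_0)=k\,\mathbf{c}_1'(\widehat\beta_m-\beta_0)+\Big(\sum_{t=m+1}^{m+k}(\mathbf{x}_t-\mathbf{c}_1)\Big)'(\widehat\beta_m-\beta_0),
\]
with $\mathbf{c}_1=(1,Ex_{2,0},\dots,Ex_{d,0})'$. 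The second piece is then handled by your Cauchy--Schwarz/maximal-inequality bound. For the first piece you need $\mathbf{c}_1'(\widehat\beta_m-\beta_0)=O_{\mathcal P}(m^{-1/2})$ with \emph{no} $d$-factor; a crude Cauchy--Schwarz gives only $O_{\mathcal P}(d\,m^{-1/2})$, which after your normalisation leaves $d\,r_m^{\eta-1/2}$ and need not vanish when $\eta$ is close to $1/2$ and $d\asymp m^{1/4}$. The missing observation --- and this is exactly what the paper exploits in Lemma~\ref{approx-1} --- is that $\mathbf{c}_1$ is the first column of $\mathbf{C}$, so $\mathbf{c}_1'\mathbf{C}^{-1}=e_1'$ and hence $\mathbf{c}_1'(\widehat\beta_m-\beta_0)=m^{-1}\sum_{t=1}^m\epsilon_t+o_{\mathcal P}(m^{-1/2})=O_{\mathcal P}(m^{-1/2})$. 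Once you insert this, your Step~1 conclusion (that the entire estimation error is negligible) is correct, and in fact slightly sharper than the paper's Lemma~\ref{approx-1}, which retains the $(k/m)\sum_{t=1}^m\epsilon_t$ term and only absorbs it later via the distributional identity.
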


Theorem \ref{nul-distr} provides the weak limit of the ratio between the
detector $Q(m;k)$ and the boundary function $g_{\eta }\left( m;k\right) $,
thus offering a way of computing asymptotic critical values. We are not
aware of any closed form expression for the quantiles of $\sup_{1\leq
u<\infty }u^{-\eta }|W(u)|$, which therefore must be computed by simulation.
However, when $\eta \leq 1$, exploiting the scale transformation of the
Wiener process we obtain 
\begin{equation*}
\sup_{1\leq u<\infty }\frac{|W(u)|}{u^{\eta }}\overset{\mathcal{D}}{=}%
\sup_{1\leq u<\infty }\frac{|W(u^{-1})|}{\left( u\right) ^{\eta -1}}\overset{%
\mathcal{D}}{=}\sup_{0\leq s\leq 1}\frac{|W(s)|}{s^{1-\eta }};
\end{equation*}%
in such a case, the quantiles of $\sup_{1\leq u<\infty }u^{-\eta }|W(u)|$
are the same as those in Table 1 in \citet{horvath2004monitoring}, computed
for the case $1-\eta $.

As far as $a_{m}$ is concerned, it would be desirable to have $a_{m}$ as
small as possible, so as to ensure a fast detection of breaks occurring
close to the beginning of the monitoring horizon. Also, Theorem \ref%
{nul-distr} requires the restriction $d=O\left( m^{1/4}\right) $. This can
be read in two ways: on the one hand, given $m$, it poses a limit on the
dimension of $\mathbf{x}_{t}$ the size of (\ref{regression}), indicating
that some dimension reduction must be carried out prior to monitoring (\ref%
{regression}). On the other hand, the size of the training sample $m$ can be
viewed as a tuning parameter whose choice depends, inter alia, on $d$.

\medskip

We conclude by noting that, in order to make our statistics feasible, an
estimator of the long-run variance $\sigma ^{2}$ is needed. Based on
Assumption \ref{b-shifts}, we propose a standard,
weighted-sum-of-covariances estimator, computed using the full training
sample. Seeing as no breaks occur between $1\leq t\leq m$, such an estimator
can be expected to be consistent under both the null and the alternative.
Let $\widehat{\gamma }_{j}=m^{-1}\sum_{t=j+1}^{m}\widehat{\epsilon }_{t}%
\widehat{\epsilon }_{t-j}$; we propose the following Bartlett-type estimator%
\begin{equation}
\widehat{\sigma }_{m}^{2}=\widehat{\gamma }_{0}+2\sum_{j=1}^{H}\left( 1-%
\frac{j}{H+1}\right) \widehat{\gamma }_{j},  \label{sig-hat}
\end{equation}%
where $H$ is a bandwidth parameter. Other estimators are possible, and we
refer to \citet{andrews1991heteroskedasticity} and \citet{casini2023theory}\
for useful references.

\begin{lemma}
\label{lrv}Under Assumptions \ref{b-shifts}-\ref{contamination}, it holds
that 
\begin{equation*}
\widehat{\sigma }_{m}^{2}-\sigma ^{2}=O_{P}\left( \frac{1}{H}\right)
+O_{P}\left( \frac{dH}{m^{1/2}}\right) .
\end{equation*}
\end{lemma}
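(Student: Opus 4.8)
The plan is to follow the classical consistency argument for kernel estimators of long-run variances, keeping explicit track of the dependence on both the bandwidth $H$ and the dimension $d$. The first step is the residual decomposition: since $\beta_t=\beta_0$ for $1\le t\le m$ by Assumption \ref{contamination}, we have $\widehat\epsilon_t=\epsilon_t-\mathbf x_t'\widehat\delta_m$ for $1\le t\le m$, with $\widehat\delta_m:=\widehat\beta_m-\beta_0$. A routine argument then yields $\|\widehat\delta_m\|=O_P((d/m)^{1/2})$: writing $\widehat\delta_m=(m^{-1}\sum_{t=1}^m\mathbf x_t\mathbf x_t')^{-1}(m^{-1}\sum_{t=1}^m\mathbf x_t\epsilon_t)$, the inverse has operator norm $O_P(1)$ because $m^{-1}\sum_{t=1}^m\mathbf x_t\mathbf x_t'-\mathbf C=o_P(1)$ (its squared Frobenius error has expectation $O(d^2/m)$, each entry being the sample mean of an $L_2$-decomposable Bernoulli shift with summable autocovariances) and $s_{\min}(\mathbf C)>0$ by Assumption \ref{exogeneity}(iii); and $E\|m^{-1}\sum_{t=1}^m\mathbf x_t\epsilon_t\|^2=O(d/m)$, since by Assumption \ref{exogeneity}(i)--(ii) each coordinate process $\{x_{j,t}\epsilon_t\}$ is centred and, being the product of two $L_4$-decomposable Bernoulli shifts, is $L_2$-decomposable with summable autocovariances, so $\mathrm{Var}(\sum_{t=1}^m x_{j,t}\epsilon_t)=O(m)$. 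I would then split $\widehat\sigma_m^2-\sigma^2=(\widetilde\sigma_m^2-\sigma^2)+\mathcal R_m$, where $\widetilde\sigma_m^2=\widetilde\gamma_0+2\sum_{j=1}^H(1-\tfrac{j}{H+1})\widetilde\gamma_j$ with $\widetilde\gamma_j=m^{-1}\sum_{t=j+1}^m\epsilon_t\epsilon_{t-j}$ is the infeasible Bartlett estimator built from the true errors, and $\mathcal R_m$ collects every term containing $\widehat\delta_m$.

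For the infeasible part I would invoke the textbook bias--variance decomposition. Writing $\gamma_j=E(\epsilon_t\epsilon_{t-j})$, the bias equals $-\tfrac{2}{H+1}\sum_{j=1}^H j\gamma_j-2\sum_{j>H}\gamma_j$ up to boundary corrections of order $O(m^{-1})$; Assumption \ref{b-shifts}(i) with $a>2$ gives $|\gamma_j|\le c_0 j^{-a}$ (approximate $\epsilon_t$ by the coupled variable $\widetilde\epsilon_{t,\lceil j/2\rceil}$, which is independent of $\epsilon_{t-j}$, and combine $|\epsilon_t-\widetilde\epsilon_{t,\lceil j/2\rceil}|_2\le c_0(j/2)^{-a}$ with $E\epsilon_{t-j}=0$), so $\sum_j j|\gamma_j|<\infty$ and the bias is $O(1/H)$. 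For the variance, the standard computation gives $\mathrm{Var}(\widetilde\sigma_m^2)=O(H/m)$: each $\mathrm{Cov}(\widetilde\gamma_j,\widetilde\gamma_k)$ is $O(1/m)$ because the bivariate process $\{(\epsilon_t\epsilon_{t-j},\epsilon_t\epsilon_{t-k})\}$ is $L_2$-decomposable, and these covariances are summable over $k$ for fixed $j$, so the weighted double sum is $O(H/m)$. Hence $\widetilde\sigma_m^2-\sigma^2=O_P(1/H)+O_P((H/m)^{1/2})$, and $(H/m)^{1/2}\le dH/m^{1/2}$.

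For the feasible correction I would expand $\widehat\epsilon_t\widehat\epsilon_{t-j}-\epsilon_t\epsilon_{t-j}=-\epsilon_t\mathbf x_{t-j}'\widehat\delta_m-\epsilon_{t-j}\mathbf x_t'\widehat\delta_m+(\mathbf x_t'\widehat\delta_m)(\mathbf x_{t-j}'\widehat\delta_m)$ and, after symmetrising the Bartlett weights $w_j=1-|j|/(H+1)$, write $\mathcal R_m=-2\widehat\delta_m'\mathbf v_m+\widehat\delta_m'\mathbf M_m\widehat\delta_m$ (up to lower-order boundary terms), with $\mathbf v_m=\sum_{|j|\le H}w_j\,m^{-1}\sum_t\mathbf x_{t-j}\epsilon_t$ and $\mathbf M_m=\sum_{|j|\le H}w_j\,m^{-1}\sum_t\mathbf x_t\mathbf x_{t-j}'$. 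For the linear term, $|\widehat\delta_m'\mathbf v_m|\le\|\widehat\delta_m\|\,\|\mathbf v_m\|$, and a triangle inequality over lags gives $\|\mathbf v_m\|\le\sum_{|j|\le H}w_j\|m^{-1}\sum_t\mathbf x_{t-j}\epsilon_t\|$; since $E\|\mathbf x_{t-j}\epsilon_t\|^2=O(d)$ by the $L_4$ bounds in Assumption \ref{b-shifts}, Minkowski's inequality gives $E\|m^{-1}\sum_t\mathbf x_{t-j}\epsilon_t\|^2=O(d)$ uniformly in $j$, hence $\|\mathbf v_m\|=O_P(Hd^{1/2})$ and $|\widehat\delta_m'\mathbf v_m|=O_P((d/m)^{1/2})\,O_P(Hd^{1/2})=O_P(dH/m^{1/2})$. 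For the quadratic term, $|\widehat\delta_m'\mathbf M_m\widehat\delta_m|\le\|\widehat\delta_m\|^2 s_{\max}(\mathbf M_m)$; since $s_{\max}(E(\mathbf x_t\mathbf x_{t-j}'))\le s_{\max}(\mathbf C)$ (Cauchy--Schwarz and stationarity) and the Frobenius errors $\|m^{-1}\sum_t\mathbf x_t\mathbf x_{t-j}'-E(\mathbf x_t\mathbf x_{t-j}')\|_F$ are $o_P(1)$, we get $s_{\max}(\mathbf M_m)=O_P(H)$, hence $|\widehat\delta_m'\mathbf M_m\widehat\delta_m|=O_P(dH/m)=O_P(dH/m^{1/2})$. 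Collecting the bounds gives the stated rate.

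I expect the feasible correction $\mathcal R_m$ to be the main obstacle. The delicate point is that Assumption \ref{exogeneity} forces only the contemporaneous cross-moment $E(\mathbf x_t\epsilon_t)$ to vanish, so nothing tames the lagged cross-moments $E(\mathbf x_{t-j}\epsilon_t)$ for $j\ne 0$; one must therefore accept the crude per-lag bound of order $d^{1/2}$ and the resulting factor $H$ from summing over the $H$ lags --- which is precisely what produces the $dH/m^{1/2}$ rate --- while keeping careful track of the $d$-dimensional Euclidean and operator norms and handling the fact that $\widehat\delta_m$ is built from the very same training sample (dealt with throughout by Cauchy--Schwarz and operator-norm inequalities, not by independence). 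A subsidiary technical point is establishing the bounds $\mathrm{Cov}(\widetilde\gamma_j,\widetilde\gamma_k)=O(1/m)$ and $|\gamma_j|\le c_0 j^{-a}$ cleanly from Definition \ref{bernoulli}, which rests on the standard coupling inequality for $L_\nu$-decomposable Bernoulli shifts together with the observation (noted in the text) that a power, or a product, of such shifts is again a decomposable Bernoulli shift.
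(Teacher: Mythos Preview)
Your proposal is correct and follows essentially the same route as the paper: the same residual decomposition $\widehat\epsilon_t=\epsilon_t-\mathbf x_t'\widehat\delta_m$ with $\|\widehat\delta_m\|=O_P((d/m)^{1/2})$, the same bias bound $O(1/H)$ via $\sum_j j|\gamma_j|<\infty$, and the same handling of the feasible correction by expanding into linear and quadratic pieces in $\widehat\delta_m$ and bounding each through the crude per-lag $O_P(d^{1/2})$ estimate summed over $H$ lags. The only minor difference is that for the infeasible stochastic part you invoke the sharper textbook bound $\mathrm{Var}(\widetilde\sigma_m^2)=O(H/m)$, whereas the paper uses a coarser Cauchy--Schwarz argument yielding $II_a=O_P(H/m^{1/2})$ directly; both are absorbed in $O_P(dH/m^{1/2})$.
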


Lemma \ref{lrv} indicates that, up to an appropriate choice of the bandwidth 
$H$, $\widehat{\sigma }_{m}^{2}$ is a consistent estimator of $\sigma ^{2}$.
The lemma suggests that a possible choice, designed to balance the
traditional \textquotedblleft bias\textquotedblright\ term of order $%
O_{P}\left( 1/H\right) $ and the \textquotedblleft
variance\textquotedblright\ term of order $O_{P}\left( dH/m^{1/2}\right) $,
is $H=O\left( d^{-1/2}m^{1/4}\right) $.

\subsection{Monitoring schemes based on R\'{e}nyi statistics under
alternatives\label{alternative}}

Under the alternative, we assume that there is a changepoint at $k^{\ast }$:%
\begin{equation}
H_{A}:\left\{ 
\begin{array}{cl}
\beta _{t}=\beta _{0} & 1\leq t<m+k^{\ast }, \\ 
\beta _{t}=\beta _{A} & m+k^{\ast }\leq t\leq m+T_{m}.%
\end{array}%
\right.  \label{alt-changepoint}
\end{equation}%
We allow the break size to depend on the sample size $m$, thus defining $%
\Delta _{m}=\beta _{A}-\beta _{0}$.

\begin{theorem}
\label{power}We assume that Assumptions \ref{b-shifts}-\ref{horizon}, (\ref%
{a-m}) and $\mathbf{c}_{1}^{\prime }\Delta _{m}\neq 0$, are satisfied.
Assume further that either: (i) $k^{\ast }=o\left( a_{m}\right) $ and%
\begin{equation}
\liminf_{m\rightarrow \infty }a_{m}^{1/2}\mathbf{c}_{1}^{\prime }\Delta
_{m}=\infty ;  \label{regime-1}
\end{equation}%
or (ii) $\liminf_{m\rightarrow \infty }a_{m}^{-1}k^{\ast }>0$, $%
\liminf_{m\rightarrow \infty }T_{m}/k^{\ast }>1$\ and 
\begin{equation}
\liminf_{m\rightarrow \infty }r_{m}^{\eta -1/2}m^{1/2}\left( \frac{k^{\ast }%
}{m+k^{\ast }}\right) ^{1-\eta }\mathbf{c}_{1}^{\prime }\Delta _{m}=\infty .
\label{regime-2}
\end{equation}%
Then it holds that, for all $\eta >1/2$, $\lim_{m\rightarrow \infty }P\left(
\tau _{m}<T_{m}|H_{A}\right) =1$.
\end{theorem}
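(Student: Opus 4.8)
The plan is to establish that, under $H_{A}$, the detector crosses the boundary before the horizon elapses with probability tending to one. Since $\{\tau _{m}<T_{m}\}=\{\max_{a_{m}\leq k\leq T_{m}}|Q(m;k)|/g_{\eta }(m;k)\geq c_{\alpha ,\eta }\}$, and since $r_{m}\rightarrow 0$ (hence $r_{m}^{\eta -1/2}\leq 1$ for large $m$, as $a_{m}=o(m)$ by (\ref{a-m}) and $\eta >1/2$), it suffices to prove
\begin{equation*}
r_{m}^{\eta -1/2}\max_{a_{m}\leq k\leq T_{m}}\frac{|Q(m;k)|}{g_{\eta }(m;k)}\overset{\mathcal{P}}{\rightarrow }\infty .
\end{equation*}

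First I would isolate the signal. Under $H_{A}$, because $\widehat{\beta }_{m}$ uses only $1\leq t\leq m$ where $\beta _{t}=\beta _{0}$ (Assumption \ref{contamination}), the residuals obey $\widehat{\epsilon }_{t}=\epsilon _{t}+\mathbf{x}_{t}^{\prime }(\beta _{0}-\widehat{\beta }_{m})+\mathbf{x}_{t}^{\prime }\Delta _{m}\mathbf{1}\{t\geq m+k^{\ast }\}$, so $Q(m;k)=Q^{(0)}(m;k)+S(m;k)$ with $Q^{(0)}(m;k)=\sum_{t=m+1}^{m+k}[\epsilon _{t}+\mathbf{x}_{t}^{\prime }(\beta _{0}-\widehat{\beta }_{m})]$ and $S(m;k)=\mathbf{1}\{k\geq k^{\ast }\}\sum_{t=m+k^{\ast }}^{m+k}\mathbf{x}_{t}^{\prime }\Delta _{m}$. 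The key point is that $Q^{(0)}(m;k)$ is exactly the detector one would form under $H_{0}$, and its law is unaffected by the break: the processes $\{\epsilon _{t}\}$, $\{\mathbf{x}_{t}\}$ are the same, and $\widehat{\beta }_{m}$ is a function of the break-free training data only. Hence Theorem \ref{nul-distr} applies to $Q^{(0)}$, with the restriction $d=O(m^{1/4})$ inherited, giving $r_{m}^{\eta -1/2}\max_{a_{m}\leq k\leq T_{m}}|Q^{(0)}(m;k)|/g_{\eta }(m;k)=O_{P}(1)$. By the triangle inequality, for any integer sequence $k_{m}$ with $a_{m}\leq k_{m}\leq T_{m}$ and $k_{m}\geq k^{\ast }$,
\begin{equation*}
r_{m}^{\eta -1/2}\max_{a_{m}\leq k\leq T_{m}}\frac{|Q(m;k)|}{g_{\eta }(m;k)}\geq r_{m}^{\eta -1/2}\frac{|S(m;k_{m})|}{g_{\eta }(m;k_{m})}-O_{P}(1),
\end{equation*}
so the problem reduces to choosing $k_{m}$ for which the first term on the right diverges.

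Writing $\mathbf{c}_{1}=E\mathbf{x}_{t}$ (the first column of $\mathbf{C}$, since $x_{1,t}\equiv 1$), split $S(m;k_{m})=(k_{m}-k^{\ast }+1)\,\mathbf{c}_{1}^{\prime }\Delta _{m}+\sum_{t=m+k^{\ast }}^{m+k_{m}}(\mathbf{x}_{t}-\mathbf{c}_{1})^{\prime }\Delta _{m}$; the second term is a partial sum of the centred $L_{4}$-decomposable Bernoulli shift $\{(\mathbf{x}_{t}-\mathbf{c}_{1})^{\prime }\Delta _{m}\}$ (Assumption \ref{b-shifts}(ii)), hence $O_{P}((k_{m}-k^{\ast }+1)^{1/2}\Vert \Delta _{m}\Vert )$ by the invariance principle underlying Lemma \ref{sip}, which is $o_{P}((k_{m}-k^{\ast }+1)|\mathbf{c}_{1}^{\prime }\Delta _{m}|)$ once $k_{m}-k^{\ast }\rightarrow \infty $ and the break direction does not degenerate relative to $\mathbf{c}_{1}$ (for the location model $\mathbf{x}_{t}\equiv 1$ this term is identically zero). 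Thus $|S(m;k_{m})|=(k_{m}-k^{\ast }+1)|\mathbf{c}_{1}^{\prime }\Delta _{m}|(1+o_{P}(1))$; also a direct computation gives $k/g_{\eta }(m;k)=\sigma ^{-1}m^{1/2}(k/(m+k))^{1-\eta }$. \textit{Regime (i).} With $k^{\ast }=o(a_{m})$ take $k_{m}=\lceil a_{m}\rceil $ (admissible since $k^{\ast }<\lceil a_{m}\rceil \leq T_{m}$ eventually by (\ref{a-m})); then $k_{m}-k^{\ast }+1=a_{m}(1+o(1))$, $g_{\eta }(m;k_{m})=\sigma m^{1/2}r_{m}^{\eta }(1+o(1))$ as $a_{m}=o(m)$, and $m^{1/2}r_{m}^{1/2}=a_{m}^{1/2}(1+o(1))$, so
\begin{equation*}
r_{m}^{\eta -1/2}\frac{|S(m;k_{m})|}{g_{\eta }(m;k_{m})}=\frac{a_{m}^{1/2}|\mathbf{c}_{1}^{\prime }\Delta _{m}|}{\sigma }(1+o_{P}(1))\rightarrow \infty
\end{equation*}
by (\ref{regime-1}). \textit{Regime (ii).} With $\liminf a_{m}^{-1}k^{\ast }>0$ and $\liminf T_{m}/k^{\ast }>1$, pick $\delta >0$ small and $k_{m}=\max \{\lceil a_{m}\rceil ,\lceil (1+\delta )k^{\ast }\rceil \}$; then $a_{m}\leq k_{m}\leq T_{m}$ eventually and $k_{m}-k^{\ast }+1\asymp k_{m}\asymp k^{\ast }$, so by monotonicity of $x\mapsto x/(m+x)$,
\begin{equation*}
r_{m}^{\eta -1/2}\frac{|S(m;k_{m})|}{g_{\eta }(m;k_{m})}\gtrsim \frac{1}{\sigma }r_{m}^{\eta -1/2}m^{1/2}\left( \frac{k^{\ast }}{m+k^{\ast }}\right) ^{1-\eta }|\mathbf{c}_{1}^{\prime }\Delta _{m}|(1+o_{P}(1))\rightarrow \infty
\end{equation*}
by (\ref{regime-2}). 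In either regime the displayed lower bound diverges, which gives $\lim_{m\rightarrow \infty }P(\tau _{m}<T_{m}\mid H_{A})=1$.

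The main obstacle is not conceptual but the bookkeeping in the last two steps: one has to make rigorous that $Q^{(0)}$ retains the null asymptotics of Theorem \ref{nul-distr} under $H_{A}$ (this rests only on $\widehat{\beta }_{m}$ being built on break-free data, together with the $d=O(m^{1/4})$ constraint), and one has to control the Bernoulli-shift fluctuation of $\sum (\mathbf{x}_{t}-\mathbf{c}_{1})^{\prime }\Delta _{m}$ so that it is negligible against the deterministic drift at the chosen $k_{m}$; in regime (ii) the choice of $k_{m}$ must further be made robust to whether $k^{\ast }$ lies below or above $a_{m}$. The normalisation by $r_{m}^{\eta -1/2}$ is precisely what makes the detectability conditions (\ref{regime-1}) and (\ref{regime-2}) emerge in their stated, clean form.
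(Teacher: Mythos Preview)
Your proposal is correct and follows essentially the same route as the paper: decompose $Q(m;k)$ into a null-type piece plus the signal $S(m;k)$, invoke Theorem~\ref{nul-distr} to bound the former by $O_{P}(1)$, centre $\mathbf{x}_{t}$ by $\mathbf{c}_{1}$ to extract the deterministic drift $(k_{m}-k^{\ast })\mathbf{c}_{1}^{\prime }\Delta _{m}$, and evaluate at a single well-chosen $k_{m}$. The paper takes $\widetilde{k}=k^{\ast }+a_{m}$ in regime~(i) and $\widetilde{k}=\lfloor (1+c)k^{\ast }\rfloor$ in regime~(ii); your choices $k_{m}=\lceil a_{m}\rceil$ and $k_{m}=\max\{\lceil a_{m}\rceil,\lceil(1+\delta)k^{\ast}\rceil\}$ are asymptotically equivalent (since $k^{\ast}=o(a_{m})$ in~(i) and $k^{\ast}\asymp a_{m}$ or larger in~(ii)), and your use of the $\max$ in~(ii) is a slightly cleaner way to guarantee $k_{m}\geq a_{m}$ without tuning the constant.
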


Part \textit{(i)} of Theorem \ref{power} states that our statistics have
power versus a changepoint occurring close to the beginning of the
monitoring horizon. Interestingly, under (\ref{regime-1}), breaks can be
\textquotedblleft small\textquotedblright , and the case $\left\Vert \Delta
_{m}\right\Vert \rightarrow 0$ as $m\rightarrow \infty $\ can be considered;
however, breaks cannot be \textquotedblleft too small\textquotedblright .
Conceptually similar results are derived in \citet{horvath2022changepoint}
for the case of offline changepoint detection. According to part \textit{(ii)%
} of the theorem, tests also have power versus later occurring breaks, i.e.
under $\liminf_{m\rightarrow \infty }a_{m}^{-1}k^{\ast }>0$ and (\ref%
{regime-2}). For example, considering the case of $k^{\ast }=O\left(
m\right) $, condition (\ref{regime-2}) boils down to 
\begin{equation}
\liminf_{m\rightarrow \infty }a_{m}^{\eta -1/2}\left( k^{\ast }\right)
^{1-\eta }\left\vert \mathbf{c}_{1}^{\prime }\Delta _{m}\right\vert =\infty ,
\label{cond-comment-1}
\end{equation}%
which indicates that, whenever $\eta \leq 1$, detection will take place.
However, upon carefully studying the proof of Theorem 2.2 in %
\citet{horvath2004monitoring}, it follows that whenever $\eta <1/2$
changepoints are detected as long as $m^{1/2}\left\vert \mathbf{c}%
_{1}^{\prime }\Delta _{m}\right\vert \rightarrow \infty $, which is more
easily satisfied than (\ref{cond-comment-1}), and allows for smaller
changepoint magnitudes. Hence, whilst ensuring power versus early
changepoints, the use of heavy weights in the boundary function $g_{\eta
}\left( m;k\right) $ reduces the ability to detect changepoints occurring
later on during the monitoring horizon. This can be seen by considering the
case of $\eta >1$ in (\ref{cond-comment-1}), or the case where $%
m^{-1}k^{\ast }\rightarrow \infty $. In the latter case, condition \textit{%
(ii)} is tantamount to%
\begin{equation}
\liminf_{m\rightarrow \infty }a_{m}^{\eta -1/2}m^{1-\eta }\left\vert \mathbf{%
c}_{1}^{\prime }\Delta _{m}\right\vert =\infty ,  \label{cond-comment-2}
\end{equation}%
where again values of $\eta >1$ are detrimental to the power of the
monitoring scheme. Finally, although we have focussed on the case of an
abrupt break, following our proofs it is possible to verify that our
monitoring schemes have power also in the presence of smooth breaks.

\medskip

Theorem \ref{power} indicates that our statistics seem particularly suited
to the detection of early changepoints. We now study this case more
in-depth, considering the asymptotics for the detection delay in the
presence of early occurring changes.

\begin{assumption}
\label{delay-assumption}(i) $k^{\ast }=o\left( a_{m}\right) $; (ii) $%
\left\Vert \mathbf{c}_{1}\right\Vert =\Omega \left( d^{1/2}\right) $; (iii)
as $m\rightarrow \infty $, $a_{m}^{1/2}d^{1/2}\left\Vert \Delta
_{m}\right\Vert \rightarrow \infty $.
\end{assumption}

Assumption \ref{delay-assumption} deals with the location and the size of
the changepoint. According to part \textit{(i)}, the break is assumed to
occur early, before the start of the monitoring procedure. This can be read
in conjunction with similar results derived in \citet{aue2004delay} and %
\citet{aue2009delay}, where the asymptotic distribution of $\tau _{m}$ is
derived under the assumption of an early break.

\begin{theorem}
\label{delay}Under Assumptions \ref{b-shifts}-\ref{delay-assumption} and (%
\ref{a-m}), it holds that $\tau _{m}\overset{\mathcal{P}}{\rightarrow }a_{m}$%
.
\end{theorem}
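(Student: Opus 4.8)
The plan is to show that $\tau_m/a_m \overset{\mathcal{P}}{\rightarrow} 1$ by establishing two one-sided bounds: (a) $P\left(\tau_m \leq (1+\epsilon)a_m\right) \to 1$ for every $\epsilon > 0$, and (b) $P\left(\tau_m \geq (1-\epsilon)a_m\right) \to 1$ for every $\epsilon > 0$. Bound (b) follows almost for free from the null-distribution machinery: for $k \leq (1-\epsilon)a_m$ the trimmed maximum in Theorem \ref{nul-distr} is not yet active (the monitoring in (\ref{delay-trim}) starts at $k = a_m$), and in fact even the untrimmed detector $Q(m;k)$ over $k \leq a_m$ cannot cross the boundary with probability tending to one, because on $\{k^\ast = o(a_m)\}$ the partial-sum process over $[m+1, m+a_m]$ behaves (under the strong approximation of Lemma \ref{sip}) like a Wiener process with a drift of order $k \mathbf{c}_1'\Delta_m$ which, for $k \leq (1-\epsilon) a_m$, stays below $c_{\alpha,\eta} g_\eta(m;k)$. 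So the real content is bound (a): I must show that by time $(1+\epsilon)a_m$ the detector has crossed its boundary.

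The first step is to decompose $\widehat{\epsilon}_t = \epsilon_t - \mathbf{x}_t'(\widehat{\beta}_m - \beta_0)$ for $t < m+k^\ast$ and $\widehat{\epsilon}_t = \epsilon_t + \mathbf{x}_t'\Delta_m - \mathbf{x}_t'(\widehat{\beta}_m - \beta_0)$ for $t \geq m+k^\ast$, and hence write, for $k \geq k^\ast$,
\begin{equation*}
Q(m;k) = \sum_{t=m+1}^{m+k}\epsilon_t + \left(\sum_{t=m+k^\ast}^{m+k}\mathbf{x}_t\right)'\!\Delta_m - \left(\sum_{t=m+1}^{m+k}\mathbf{x}_t\right)'\!(\widehat{\beta}_m - \beta_0).
\end{equation*}
Under Assumption \ref{exogeneity}(iii) and a law of large numbers for the Bernoulli-shift sequence $\{\mathbf{x}_t\}$, $k^{-1}\sum_{t=m+k^\ast}^{m+k}\mathbf{x}_t \overset{\mathcal{P}}{\rightarrow} E(\mathbf{x}_t) =: \mathbf{c}_1$ (using $k^\ast = o(a_m) = o(k)$), so the signal term is $\sim k\,\mathbf{c}_1'\Delta_m$. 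The estimation-error term is $O_P(m^{-1/2} d^{1/2} k \|\Delta_m\|)$ since $\|\widehat{\beta}_m - \beta_0\| = O_P(d^{1/2}m^{-1/2})$ and $\|k^{-1}\sum \mathbf{x}_t\|$ is tight; on $\{k = \Omega(a_m)\}$ with $a_m = o(m)$ this is asymptotically negligible relative to the signal. The noise term $\sum_{t=m+1}^{m+k}\epsilon_t = O_P(k^{1/2})$, again dominated by the signal since $a_m^{1/2}\|\Delta_m\| \to \infty$ by Assumption \ref{delay-assumption}(iii) (combined with $\|\mathbf{c}_1\| = \Omega(d^{1/2})$ from part (ii), which guarantees $|\mathbf{c}_1'\Delta_m|$ is of the full order $d^{1/2}\|\Delta_m\|$ in the relevant direction — this is where I must be careful, using Cauchy–Schwarz the other way or an explicit projection argument to ensure no cancellation). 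Hence $|Q(m; k)| \sim k\,|\mathbf{c}_1'\Delta_m|$ for $k$ of order $a_m$.

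Now compare against $g_\eta(m;k) = \sigma m^{1/2}(1 + k/m)(k/(m+k))^\eta$. For $k = c\,a_m$ with $c \in [1-\epsilon', 1+\epsilon']$ and $a_m = o(m)$, we have $1 + k/m \to 1$ and $(k/(m+k))^\eta \sim (k/m)^\eta = (c\,a_m/m)^\eta$, so $g_\eta(m;k) \sim \sigma\, m^{1/2 - \eta} (c\,a_m)^\eta$. Crossing at some $k \asymp a_m$ thus requires $a_m |\mathbf{c}_1'\Delta_m| \gg m^{1/2-\eta} a_m^\eta$, i.e. $a_m^{1-\eta}|\mathbf{c}_1'\Delta_m| \gg m^{1/2-\eta}$; multiplying both sides by $m^{\eta-1/2}$ this is $(a_m/m)^{\eta - 1/2}\cdot a_m^{1/2}|\mathbf{c}_1'\Delta_m| \to \infty$, which holds since $\eta > 1/2$ gives $(a_m/m)^{\eta-1/2} \to \infty$ (as $a_m = o(m)$) and $a_m^{1/2}|\mathbf{c}_1'\Delta_m| \to \infty$ by Assumption \ref{delay-assumption}. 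So a crossing occurs at some $k^\dagger$ that is $O_P(a_m)$; together with bound (b), which forces $k^\dagger \geq (1-\epsilon)a_m$ with high probability, I obtain $\tau_m/a_m \overset{\mathcal{P}}{\rightarrow} 1$, as claimed. The main obstacle is the uniform control: I need the approximations $k^{-1}\sum \mathbf{x}_t \to \mathbf{c}_1$ and $\sum \epsilon_t = O_P(k^{1/2})$ to hold \emph{uniformly} over the window $k \in [(1-\epsilon)a_m, (1+\epsilon)a_m]$ (and, for bound (b), uniformly over $k \leq a_m$), which I will get from Lemma \ref{sip} (the strong approximation / invariance principle for Bernoulli shifts) applied to both $\{\epsilon_t\}$ and each coordinate of $\{\mathbf{x}_t\}$, upgrading the pointwise $O_P$ statements to maximal inequalities over dyadic blocks; keeping the dependence on $d$ explicit there (so that the $d^{1/2}$ and $d = O(m^{1/4})$ factors enter correctly) is the delicate bookkeeping.
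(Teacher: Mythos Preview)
Your approach is more elaborate than necessary and, as written, delivers a weaker conclusion than the theorem actually states. The key structural simplification you are missing is that, by the very definition of the trimmed stopping rule in (\ref{delay-trim}), $\tau_m \geq a_m$ \emph{deterministically}. So your bound~(b) is not ``almost for free'' --- it is literally for free, and the discussion of the untrimmed detector over $k\leq a_m$ is beside the point. More importantly, this means you do not need a window $[(1-\epsilon)a_m,(1+\epsilon)a_m]$ at all: it suffices to evaluate the detector at the single point $k=a_m$ and show $P\bigl(|Q(m;a_m)|\geq c_{\alpha,\eta}g_\eta(m;a_m)\bigr)\to 1$. This immediately gives $P(\tau_m=a_m)\to 1$, which is what the paper proves and what the statement $\tau_m\overset{\mathcal P}{\to}a_m$ is intended to mean. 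Your two-sided window argument would only yield $\tau_m/a_m\overset{\mathcal P}{\to}1$, a strictly weaker conclusion (e.g.\ $\tau_m=a_m+\sqrt{a_m}$ would satisfy the ratio statement but not the difference statement). By evaluating at $k=a_m$ alone you also eliminate the ``main obstacle'' you flag at the end: there is no need for uniform control over a window, only pointwise bounds at $k=a_m$, which follow directly from Lemmas \ref{max-ineq-1}--\ref{max-inequ-3} and the strong approximation.

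There is also an algebra slip in your divergence calculation. You claim that $(a_m/m)^{\eta-1/2}\to\infty$ because $\eta>1/2$; in fact $a_m/m\to 0$ by (\ref{a-m}), so $(a_m/m)^{\eta-1/2}\to 0$. The correct rewriting of the ratio $|Q(m;a_m)|/g_\eta(m;a_m)$ is $(m/a_m)^{\eta-1/2}\cdot a_m^{1/2}|\mathbf c_1'\Delta_m|$, and it is the factor $(m/a_m)^{\eta-1/2}\to\infty$ that, together with Assumption~\ref{delay-assumption}(iii), forces divergence. (The paper tracks the $r_m^{\eta-1/2}$ normalisation explicitly, after which the drift reduces cleanly to $a_m^{1/2}|\mathbf c_1'\Delta_m|+O_P(1)$.) The ingredients of your decomposition --- signal $\sim k\mathbf c_1'\Delta_m$, noise $O_P(k^{1/2})$, estimation error $O_P(d^{1/2}m^{-1/2}k)$ --- are the right ones and match the paper's; you simply need to deploy them at $k=a_m$ rather than over a window.
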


Theorem \ref{delay} states that, if a break occurs prior to the trimming
point $a_{m}$, then it is detected immediately at the start of the
monitoring procedure. This result offers an important insight on the
relevance and practical use of our statistics. The theory derived in %
\citet{aue2004delay} and \citet{aue2008monitoring} indicates that using $%
\eta \leq 1/2$ ensures detection of changepoints occurring not too soon
after $m$; in particular, using $\eta =1/2$ ensures a short delay - of order 
$O\left( \sqrt{\ln \ln m}\right) $ - in detecting breaks occurring $o\left( 
\sqrt{\ln \ln m}\right) $ periods from $m$. Based on this and on Theorem \ref%
{delay}, in the case of our statistics, the delay is of order $a_{m}$ for
breaks occurring $o\left( a_{m}\right) $ periods from $m$: upon choosing $%
a_{m}=o\left( \sqrt{\ln \ln m}\right) $, this entails that early occurring
breaks are found even more quickly than when using $\eta =1/2$, with the
added advantage that one does not need to use the asymptotic critical values
computed for the case $\eta =1/2$, where convergence is notoriously slow.

\subsection{Extensions to the case of dynamic regressors\label{dynamic}}

For the ease of exposition, in the above we have considered the case of
exogenous regressors only. We now extend our results to the case of a
dynamic model:%
\begin{equation}
y_{t}=\mathbf{z}_{t}^{\prime }\beta _{t}^{Z}+\mathbf{y}_{p,t}^{\prime }\beta
_{t}^{Y}+\epsilon _{t}=\mathbf{x}_{t}^{\prime }\beta _{t}+\epsilon _{t},
\label{dyn-reg}
\end{equation}%
where $\mathbf{y}_{p,t}^{\prime }=\left( y_{t-1},...,y_{t-p}\right) ^{\prime
}$, $\beta _{t}^{Y}=\left( \rho _{1},...,\rho _{p}\right) ^{\prime }$, $%
y_{0} $ is an initial condition, and $\mathbf{z}_{t}$ is a $\left(
d-p\right) $-dimensional vector of exogenous regressors, and $\mathbf{x}%
_{t}=\left( \mathbf{z}_{t}^{\prime },\mathbf{y}_{p,t}^{\prime }\right)
^{\prime }$ is - by construction - of dimension $d$. With this notation, we
are able to define the null hypothesis of interest (\ref{null}), the
alternative hypothesis (\ref{alt-changepoint}), and the monitoring
statistics, exactly in the same way as above.

\begin{assumption}
\label{ass-dyn}(i) $\left\vert y_{0}\right\vert _{4}<\infty $; (ii) the
polynomial equation $1-\sum_{j=1}^{p}\rho _{j}a^{j}=0$, with $a\in \mathbb{C}
$, has all roots outside the unit circle.
\end{assumption}

\begin{theorem}
\label{dynamic-reg}Under e assume that the data are generated by (\ref%
{dyn-reg}) and Assumption \ref{ass-dyn}, the results in Theorems \ref%
{nul-distr}, \ref{power} and \ref{delay}, and Lemma \ref{lrv}, hold under
the same assumptions.
\end{theorem}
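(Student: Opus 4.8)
The plan is to show that, under Assumption~\ref{ass-dyn}, the dynamic model \eqref{dyn-reg} reduces to the static framework of Sections~\ref{model}--\ref{SecLRM} with the \emph{augmented} regressor vector $\mathbf{x}_t=(\mathbf{z}_t^{\prime},y_{t-1},\dots,y_{t-p})^{\prime}$. Concretely, I would verify that this $\mathbf{x}_t$ --- equivalently, its stationary part $\bar{\mathbf{x}}_t$, the two differing only by an asymptotically negligible initial-condition term --- again satisfies Assumptions~\ref{b-shifts}\textit{(ii)} and \ref{exogeneity}. Granting this, the proofs of Theorems~\ref{nul-distr}, \ref{power} and \ref{delay} and of Lemma~\ref{lrv} go through unchanged, since those arguments use the regressor only through: (a)~the strong approximation of Lemma~\ref{sip} for the partial sums of $\{\epsilon_t\}$ and of $\{\mathbf{x}_t\epsilon_t\}$; (b)~the law of large numbers $m^{-1}\sum_{t=1}^{m}\mathbf{x}_t\mathbf{x}_t^{\prime}\overset{\mathcal{P}}{\rightarrow}\mathbf{C}$, with $\mathbf{C}$ non-singular; and (c)~invariance-principle and maximal-inequality control of the partial sums of $\{\mathbf{x}_t\}$ over the monitoring window --- all of which follow from the decomposable-Bernoulli-shift structure of $(\mathbf{x}_t^{\prime},\epsilon_t)^{\prime}$ together with Assumption~\ref{exogeneity}.

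The first step is the causal (moving-average) representation of $y_t$ under $H_0$. By Assumption~\ref{ass-dyn}\textit{(ii)} the lag polynomial $1-\sum_{j=1}^{p}\rho_j L^{j}$ is invertible with coefficients $\{\psi_j\}$ that decay geometrically, so I would write $y_t=\bar{y}_t+\delta_t$, where
\begin{equation*}
\bar{y}_t=\sum_{j=0}^{\infty}\psi_j\bigl(\mathbf{z}_{t-j}^{\prime}\beta_0^{Z}+\epsilon_{t-j}\bigr),
\end{equation*}
and $\delta_t$ solves the homogeneous recursion determined by the pre-sample values, so that $|\delta_t|_4$ decays geometrically in $t$ by Assumption~\ref{ass-dyn}\textit{(i)}. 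As in the static analysis, $(\mathbf{z}_t^{\prime},\epsilon_t)^{\prime}$ is a vector $L_4$-decomposable Bernoulli shift with $a>2$ in a common i.i.d.\ sequence; the key claim to prove is that a one-sided linear filter with geometrically decaying coefficients preserves this property, the exponent $a$ included. I would do so by splitting the coupling bound for $\bar{y}_t-\widetilde{\bar{y}}_{t,\ell}$ at $j=\ell/2$: the terms $j\le\ell/2$ contribute $\sum_{j}|\psi_j|\,c_0(\ell-j)^{-a}\le c_1\ell^{-a}$, while the remaining terms, together with the filter tail $j\ge\ell$, contribute a geometrically small amount, so that $|\bar{y}_t-\widetilde{\bar{y}}_{t,\ell}|_4\le c_2\ell^{-a}$. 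Hence $\bar{\mathbf{x}}_t=(\mathbf{z}_t^{\prime},\bar{y}_{t-1},\dots,\bar{y}_{t-p})^{\prime}$ is an $L_4$-decomposable Bernoulli shift with $a>2$; $E(\bar{y}_{t-i}\epsilon_t)=0$ for $i=1,\dots,p$ and $E(\bar{\mathbf{x}}_t\bar{\mathbf{x}}_t^{\prime})=\mathbf{C}$ with $0<s_{\min}(\mathbf{C})\le s_{\max}(\mathbf{C})<\infty$ (well defined thanks to the stationarity granted by Assumption~\ref{ass-dyn}\textit{(ii)}) are Assumptions~\ref{exogeneity}\textit{(ii)}--\textit{(iii)} read on the augmented vector; and, as already noted in Section~\ref{model}, products such as $\bar{\mathbf{x}}_t\epsilon_t$ and $\bar{x}_{j,t}\bar{x}_{h,t}$ are then $L_2$-decomposable Bernoulli shifts with $a>2$, which is precisely what Lemma~\ref{sip} requires.

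The second step is to discard the initial-condition contributions. Since $\sum_{t\ge1}|\delta_t|_4<\infty$, replacing $\bar{\mathbf{x}}_t$ by $\mathbf{x}_t$ in $\sum_{t=1}^{m}\mathbf{x}_t\mathbf{x}_t^{\prime}$ and in $\sum_{t=1}^{m}\mathbf{x}_ty_t$ perturbs them by $O_P(1)$, hence by $o_P(1)$ after division by $m$; so $\widehat{\beta}_m$ keeps the same probability limit and the same first-order expansion as in the static case. For $t\ge m+1$ the term $\delta_{t-i}$ is already exponentially small in $m$, so $\sum_{t=m+1}^{m+k}|\delta_{t-i}|\to0$ uniformly in $k\le T_m$; this removes the $\delta$-contribution to $Q(m;k)$ in \eqref{qmk}, and the same bound removes it from every $\widehat{\gamma}_j$ in \eqref{sig-hat}. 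Consequently each limiting object coincides with the one obtained for the stationary vector $\bar{\mathbf{x}}_t$, and Theorems~\ref{nul-distr}, \ref{power} and \ref{delay} and Lemma~\ref{lrv} follow, with $d$ --- now comprising $p$ --- still subject to $d=O(m^{1/4})$ wherever that restriction was used.

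The hard part will be the Bernoulli-shift step of the second paragraph: one must check, carefully and uniformly in $\ell$, that AR inversion under Assumption~\ref{ass-dyn}\textit{(ii)} preserves not just finiteness of the fourth moment of $y_t$ but the coupling exponent $a>2$, and that it is the \emph{joint} decomposable structure of $(\mathbf{z}_t^{\prime},\epsilon_t)^{\prime}$ --- and therefore of $(\mathbf{x}_t^{\prime},\epsilon_t)^{\prime}$ and of the products feeding into Lemma~\ref{sip} --- that is inherited, rather than merely the marginal one. By comparison the initial-condition bookkeeping is routine, and the re-invocation of the earlier results is immediate, since their proofs are written entirely in terms of the structural properties established above.
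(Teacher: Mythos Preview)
Your proposal is correct and follows essentially the same route as the paper: write $y_t=\bar y_t+\delta_t$ via the causal inversion granted by Assumption~\ref{ass-dyn}\textit{(ii)}, show that $\bar y_t$ (hence $\bar{\mathbf x}_t$) is an $L_4$-decomposable Bernoulli shift with exponent $a>2$ by a coupling/splitting argument on the moving-average coefficients, and then control the geometrically decaying initial-condition part $\delta_t$. The only organisational difference is that the paper packages the argument as a direct re-verification of Lemmas~\ref{max-ineq-1}--\ref{max-inequ-3} (bounding the $\mathbf x_t-\bar{\mathbf x}_t$ contributions inside each lemma) rather than first checking the assumptions for $\bar{\mathbf x}_t$ and then handling $\delta_t$ globally; note also that Lemma~\ref{sip} concerns only $\{\epsilon_t\}$, while the control of $\sum\mathbf x_t\epsilon_t$ you need is Lemma~\ref{max-ineq-1}.
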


\section{Veto-based changepoint learning\label{veto}}

Our theory, and the theory developed in other contributions (e.g. %
\citealp{aue2004delay} and \citealp{aue2008monitoring}), suggests that
different values of $\eta $ work better for different changepoint locations.
Hence, any sequential monitoring methodology based on a specific $\eta $ can
be expected to offer superior detection timeliness in some cases, and to be
dominated by different choices of $\eta $ in other cases; this issue is also
echoed in a recent contribution by \citet{kirch2022sequential}.

In this section, we consider a combination of several detection rules. Upon
inspecting our proofs and the proofs in \citet{horvath2004monitoring}, it
can be shown that test statistics using $\eta <1/2$ and $\eta >1/2$ are
asymptotically independent of each other,\footnote{%
Intuitively, this is due to the facts that: (1) the limiting distribution
when $\eta <1/2$ is determined by the \textquotedblleft
central\textquotedblright\ part of $\left\{ W\left( u\right) ,u\geq
0\right\} $, whereas, when using $\eta >1/2$, the limit is completely
determined by the \textquotedblleft early\textquotedblright\ part of $%
\left\{ W\left( u\right) ,u\geq 0\right\} $ (see also the comments at the
end of the proof of Theorem \ref{nul-distr}); and (2) that the Wiener
process has independent increments.} although of course two or more
statistics with different $\eta >1/2$ are not independent of each other (the
same holds for statistics with different $\eta <1/2$). Hence, it is in
principle possible to combine different test statistics, although using a
combination of tests, e.g. with the Bonferroni correction, is bound to be
subject to the typical criticisms of being overconservative and, therefore,
liable to not detect (timely, or at all) a changepoint. Hence, in the light
of the considerations above, and in order to ensure the fast changepoint
detection whilst controlling for false positives, we propose a different,
composite sequential detection scheme. We base our methodology on a
\textquotedblleft veto-based\textquotedblright\ approach, where several
statistics (corresponding to different choices of $\eta $) are employed
simultaneously, and the null is rejected as long as at least one statistic
exceeds its critical value. Heuristically, the veto-based decision rule
offers a compromise between the various choices of $\eta $, ensuring that
detection delay is good irrespective of the changepoint location.

Recall the detector $Q\left( m;k\right) $ defined in (\ref{qmk}). For each
choice of $\eta _{j}$, $j=1,2,...,J$, we have a corresponding boundary
function $g_{\eta _{j}}(m;k)$, and a critical value at nominal level $\alpha 
$, $c_{\alpha ,\eta _{j}}$, with rejection occurring at the first value of $%
k\geq 1$ such that $\left\vert Q\left( m;k\right) \right\vert >c_{\alpha
,\eta _{j}}g_{\eta _{j}}(m;k)$. We combine several statistics using $0\leq
\eta _{j}\leq 1$; recall also that, when $\eta _{j}>1/2$, the monitoring
starts after $a_{m}$ periods from the beginning of the $1\leq k\leq T_{m}$
horizon. Hence, the veto-based procedure stops at $\widetilde{\tau }_{m}$
defined as%
\begin{equation}
\widetilde{\tau }_{m}=%
\begin{cases}
\inf \{k\geq 1:|Q\left( m;k\right) |\geq C_{\alpha }\min_{1\leq j\leq
J}c_{\alpha ,\eta _{j}}d_{a_{m},\eta }g_{\eta _{j}}\left( m;k\right) \}, \\ 
T_{m},\ \text{if}\ |Q\left( m;k\right) |\leq C_{\alpha }\min_{1\leq j\leq
J}c_{\alpha ,\eta _{j}}d_{a_{m},\eta }g_{\eta _{j}}\left( m;k\right) \ \text{%
for\ all}\ 1\leq k\leq T_{m},%
\end{cases}
\label{tilde-tau}
\end{equation}%
where%
\begin{equation*}
\widetilde{r}_{m}=\left\{ 
\begin{array}{ll}
1 & \text{if }\eta _{j}<1/2, \\ 
r_{m} & \text{if }\eta _{j}\geq 1/2,%
\end{array}%
\right.
\end{equation*}%
and%
\begin{equation*}
d_{a_{m},\eta _{j}}=\left\{ 
\begin{array}{ll}
1 & \text{if }k\geq a_{m}\text{ \ or \ }\eta _{j}<1/2, \\ 
\infty & \text{if }k<a_{m}\text{ \ and \ }\eta _{j}>1/2.%
\end{array}%
\right.
\end{equation*}%
In (\ref{tilde-tau}), $C_{\alpha }$ is the procedure-wise critical value,
such that $\lim_{m\rightarrow \infty }P\left( \widetilde{\tau }%
_{m}<T_{m}|H_{0}\right) =\alpha $.

In order to implement (\ref{tilde-tau}), we require an asymptotic
approximation of the procedure-wise critical value $C_{\alpha }$. Define $%
\widetilde{\eta }_{j}=\eta _{j}I\left( \eta _{j}\leq 1/2\right) +\left(
1-\eta _{j}\right) I\left( \eta _{j}>1/2\right) $, and modify Assumption \ref%
{horizon} to consider a long monitoring horizon.

\begin{assumption}
\label{horizon copy(1)}(i) $\lim_{m\rightarrow \infty }T_{m}=\infty $; (ii) $%
\liminf_{m\rightarrow \infty }T_{m}/m>0$; (iii) $T_{m}=\Omega \left(
m^{\lambda }\right) $, with $\lambda \geq 1$.
\end{assumption}

Assumption \ref{horizon copy(1)} is required to derive the asymptotics of
monitoring statistics with $\eta <1/2$ (see \citealp{horvath2004monitoring}).

\medskip

The following results characterise the null distribution, and the power
under alternatives, of veto-based statistics.

\begin{theorem}
\label{veto-prop}We assume that Assumptions \ref{b-shifts}-\ref%
{contamination} and \ref{horizon copy(1)}, (\ref{a-m}) and $d=o\left(
m^{1/4}\right) $, are satisfied. Then, under the null, as $m\rightarrow
\infty $\ it holds that%
\begin{equation}
\max_{1\leq k\leq T_{m}}\frac{\left\vert Q\left( m;k\right) \right\vert }{%
\min_{1\leq j\leq J}c_{\alpha ,\eta _{j}}\widetilde{r}_{m}^{1/2-\eta
_{j}}d_{a_{m},\eta }g_{\eta _{j}}(m;k)}\overset{\mathcal{D}}{\rightarrow }%
\sup_{0<u<1}\frac{\left\vert W\left( u\right) \right\vert }{\min_{1\leq
j\leq J}c_{\alpha ,\eta _{j}}u^{\widetilde{\eta }_{j}}}.  \label{null-veto}
\end{equation}
\end{theorem}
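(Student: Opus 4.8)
The plan is to derive the joint weak limit of the (suitably normalised) detector $Q(m;k)$ simultaneously across all the weighting functions $g_{\eta_j}(m;k)$ by invoking a single strong approximation (invariance principle) for the residual partial-sum process, and then reading off the limiting functional of the resulting Wiener process.

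First I would recall — from the proof of Theorem \ref{nul-distr} — the strong approximation (Lemma \ref{sip}) that couples $m^{-1/2}Q(m;k)$ with a Wiener-type process uniformly over $1\le k\le T_m$, up to an error of order $o(m^{-\delta})$ for some $\delta>0$ (this is exactly where the moment condition $a>2$, $L_4$-decomposability, and $d=o(m^{1/4})$ enter). Writing $W_m(\cdot)$ for the approximating Wiener process, one obtains, uniformly in $k$,
\begin{equation*}
\frac{Q(m;k)}{\sigma m^{1/2}} = W\!\left(\frac{k}{m}\right) - \frac{k}{m}\,W(1) + o_P(1),
\end{equation*}
the usual "bridge-type" correction coming from estimating $\beta_0$ by $\widehat\beta_m$ on the training sample; divided by the norming factor $m^{1/2}(1+k/m)$ in $g_{\eta_j}$, and using the time-change $u = k/(m+k)\in(0,1)$, the process $Q(m;k)/[\sigma m^{1/2}(1+k/m)]$ converges (after the standard transformation mapping $[0,\infty)$ onto $[0,1)$ and using that $\{(1+t)W(t/(1+t)) - \cdots\}$ is again a standard Wiener process in the reparametrised variable) to $W(u)$ on $0<u<1$. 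The key point is that \emph{the same} limiting process $W(u)$ governs \emph{every} $\eta_j$ simultaneously: the only $j$-dependence sits in the deterministic weight $u^{\eta_j}$ appearing in $g_{\eta_j}$ and in the deterministic constants $c_{\alpha,\eta_j}$. Hence the continuous-mapping theorem, applied to the functional $h(f) = \sup_{0<u<1} |f(u)| \big/ \big(\min_{1\le j\le J} c_{\alpha,\eta_j}\, u^{\widetilde\eta_j}\big)$, yields \eqref{null-veto} once we check that $h$ is a.s.\ continuous at Wiener sample paths.

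The treatment of the extra factors $\widetilde r_m^{1/2-\eta_j}$ and $d_{a_m,\eta}$ requires a small bookkeeping step. For $\eta_j<1/2$ we have $\widetilde r_m=1$ and $d_{a_m,\eta}=1$, so those $j$'s contribute the "central" part of the Wiener path over $(0,1)$ exactly as in Theorem 2.1 of \citet{horvath2004monitoring} (this is where Assumption \ref{horizon copy(1)}, i.e.\ $T_m=\Omega(m^\lambda)$ with $\lambda\ge1$, is needed, to guarantee the supremum over the full horizon maps onto all of $(0,1)$). For $\eta_j>1/2$ the factor $d_{a_m,\eta}$ restricts the relevant range to $k\ge a_m$, i.e.\ $u\ge r_m\to0$, and the renormalisation $\widetilde r_m^{1/2-\eta_j}=r_m^{1/2-\eta_j}$ is precisely the one that, by Theorem \ref{nul-distr}, sends the "early" part of the path to the nondegenerate limit $\sup_{1\le u<\infty}|W(u)|/u^{\eta_j}$, which by the scaling argument in the remark following Theorem \ref{nul-distr} equals in distribution $\sup_{0<s\le1}|W(s)|/s^{1-\eta_j}=\sup_{0<s\le1}|W(s)|/s^{\widetilde\eta_j}$. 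Combining both regimes over a common Wiener process (legitimate because, as the footnote records, the $\eta_j<1/2$ and $\eta_j>1/2$ statistics are driven by disjoint — hence by independent-increments, asymptotically independent — portions of the path, but here we only need a \emph{single} joint limit, so we simply keep them on the same $W$) and taking the outer maximum gives the right-hand side of \eqref{null-veto}, with the $\min_j$ inside the supremum because $\min_j c_{\alpha,\eta_j} u^{\widetilde\eta_j}$ in the denominator is what defines the first boundary crossing.

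The main obstacle, as in Theorem \ref{nul-distr}, is the behaviour near $u=0$: the normalised detector must be controlled \emph{uniformly} down to $k=a_m$ (equivalently $u=r_m\downarrow0$), where the weight $u^{-\widetilde\eta_j}$ blows up. This is handled exactly as before — the trimming condition \eqref{a-m} ($a_m\to\infty$, $a_m/\min\{m,T_m\}\to0$) ensures that the strong-approximation error, when multiplied by $r_m^{-\widetilde\eta_j}$, still vanishes, and a Law-of-the-Iterated-Logarithm / maximal-inequality argument (Lemma \ref{sip} plus the classical estimate $\limsup_{u\downarrow0}|W(u)|/\sqrt{2u\log\log(1/u)}=1$ a.s.) shows the contribution of the range $u\in(r_m,\varepsilon)$ to the supremum is, with probability tending to one, dominated by a fixed compact-range supremum for $\varepsilon$ small; finiteness and a.s.\ continuity of the limiting functional then follow because $\widetilde\eta_j\le1/2<1$ forces $u^{-\widetilde\eta_j}|W(u)|\to0$ a.s.\ as $u\downarrow0$. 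A secondary technical point is verifying that the $\min_j$ of finitely many continuous positive weight functions is itself bounded away from $0$ on compacts and only vanishes at the endpoints at the expected polynomial rate, so that the continuous-mapping step is valid; this is routine since $J$ is fixed and finite.
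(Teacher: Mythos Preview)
Your proposal is correct and follows essentially the same route as the paper: a single strong approximation (Lemma~\ref{sip}) reducing $Q(m;k)$ to $W_{1}(k)-\tfrac{k}{m}W_{2}(m)$, the distributional identity $W_1(t)-tW_2(1)\stackrel{\mathcal{D}}{=}(1+t)W\!\left(\tfrac{t}{1+t}\right)$, and then the split into $\eta_j<1/2$ versus $\eta_j>1/2$ with the scale / time-inversion transformation for the latter so that both regimes land on $\sup_{0<u<1}|W(u)|/u^{\widetilde\eta_j}$ over the \emph{same} Wiener process. The only difference is one of explicitness: the paper carries out the regression-error bounds (the analogues of Lemma~\ref{approx-1}, displayed as (\ref{veto-a})--(\ref{veto-b})) in full with the veto denominator, whereas you absorb these into the phrase ``this is exactly where $d=o(m^{1/4})$ enters'' --- which is fine at the level of a proof outline, but would need to be unpacked in a complete write-up.
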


\begin{theorem}
\label{veto-power}We assume that the assumptions of Theorem \ref{veto-prop}
hold, and further that either: (i) $k^{\ast }=o\left( a_{m}\right) $ and (%
\ref{regime-1}); or (ii) $\liminf_{m\rightarrow \infty }a_{m}^{-1}k^{\ast
}>0 $ and (\ref{regime-2}) hold; or (iii) that $k^{\ast }=O\left( m\right) $
and%
\begin{equation}
\liminf_{m\rightarrow \infty }\max_{1\leq j\leq J}\frac{\left( k^{\ast
}\right) ^{1-\eta _{j}}}{\widetilde{r}_{m}^{1/2-\eta _{j}}m^{1/2-\eta _{j}}}%
\left\vert \mathbf{c}_{1}^{\prime }\Delta _{m}\right\vert =\infty .
\label{veto-diverge}
\end{equation}%
Then it holds that%
\begin{equation}
\lim_{m\rightarrow \infty }P\left( \widetilde{\tau }_{m}<T_{m}|H_{A}\right)
=1.  \label{pow-veto}
\end{equation}
\end{theorem}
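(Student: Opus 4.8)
The plan is to deduce the power of the composite procedure from the power of a single, suitably chosen constituent statistic, for which consistency is already available --- from Theorem \ref{power} when the selected weight exceeds $1/2$, and from the arguments of \citet{horvath2004monitoring} when it is below $1/2$. Write $G_{j}(m;k)=c_{\alpha ,\eta _{j}}\widetilde{r}_{m}^{1/2-\eta _{j}}d_{a_{m},\eta }g_{\eta _{j}}(m;k)$ for the ($\widetilde{r}_{m}$-scaled) boundary attached to the $j$-th weight, so that the veto rule (\ref{tilde-tau}) fires at the first $k$ with $|Q(m;k)|\geq C_{\alpha }\min_{1\leq j\leq J}G_{j}(m;k)$. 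The key observation is that $\min_{1\leq j\leq J}G_{j}(m;k)\leq G_{j_{0}}(m;k)$ for every $k$ and every fixed $j_{0}$, whence
\[
\{\widetilde{\tau }_{m}<T_{m}\}\ \supseteq\ \Big\{\exists\,1\leq k\leq T_{m}:\ |Q(m;k)|\geq C_{\alpha }G_{j_{0}}(m;k)\Big\}.
\]
Since $C_{\alpha }c_{\alpha ,\eta _{j_{0}}}$ is a fixed constant whereas the proof of Theorem \ref{power} actually establishes the \emph{divergence} of $|Q(m;k)|/G_{j_{0}}(m;k)$ at a well-chosen $k$ (not merely its exceeding a quantile), it suffices to exhibit, in each of the three regimes, one index $j_{0}$ for which that divergence was proved.

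Under $H_{A}$ one has, for $k\geq k^{\ast }$, the decomposition $Q(m;k)=\sum_{t=m+1}^{m+k}\epsilon _{t}-(\widehat{\beta }_{m}-\beta _{0})^{\prime }\sum_{t=m+1}^{m+k}\mathbf{x}_{t}+\sum_{t=m+k^{\ast }}^{m+k}\mathbf{x}_{t}^{\prime }\Delta _{m}$, in which the first two terms are controlled by the strong invariance principle (Lemma \ref{sip}) and by the consistency of $\widehat{\beta }_{m}$, exactly as in Section \ref{SecLRM}, while the drift term equals $(k-k^{\ast }+1)\mathbf{c}_{1}^{\prime }\Delta _{m}$ up to lower-order fluctuations; the crossing then follows by comparing this drift with $G_{j_{0}}(m;k)$ at an appropriate $k$. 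In regime \textit{(i)} the hypotheses $k^{\ast }=o(a_{m})$ and (\ref{regime-1}) are precisely those of part \textit{(i)} of Theorem \ref{power}, so one takes any $j_{0}$ with $\eta _{j_{0}}>1/2$ (and if all weights are $\leq 1/2$, the analogous bound from \citet{horvath2004monitoring} applies, its power requirement $m^{1/2}|\mathbf{c}_{1}^{\prime }\Delta _{m}|\rightarrow \infty $ being implied by (\ref{regime-1}) since $a_{m}=o(m)$). In regime \textit{(ii)}, $\liminf_{m}a_{m}^{-1}k^{\ast }>0$ and (\ref{regime-2}) are the hypotheses of part \textit{(ii)} of Theorem \ref{power} with $\eta =\eta _{j_{0}}$, and the conclusion is immediate.

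The genuinely new case is regime \textit{(iii)}. Here (\ref{veto-diverge}) only guarantees that the \emph{maximum} over $j$ of the listed quantities diverges; since $J$ is finite, along any subsequence one index $j_{0}$ realizes that maximum infinitely often, and we restrict to that sub-subsequence. For this $j_{0}$ one checks that (\ref{veto-diverge}) is exactly the relevant power condition: if $\eta _{j_{0}}<1/2$ then $\widetilde{r}_{m}=1$ and $d_{a_{m},\eta }\equiv 1$, and (\ref{veto-diverge}) forces (using $k^{\ast }=O(m)$) $m^{1/2}|\mathbf{c}_{1}^{\prime }\Delta _{m}|\rightarrow \infty $, the \citet{horvath2004monitoring} condition; if $\eta _{j_{0}}>1/2$ then $\widetilde{r}_{m}=r_{m}\sim a_{m}/m$, and substituting this identity converts (\ref{veto-diverge}) into $a_{m}^{\eta _{j_{0}}-1/2}(k^{\ast })^{1-\eta _{j_{0}}}|\mathbf{c}_{1}^{\prime }\Delta _{m}|\rightarrow \infty $, i.e. (\ref{cond-comment-1})/(\ref{regime-2}), while a further subsequence split according to whether $a_{m}^{-1}k^{\ast }\rightarrow 0$ or $\liminf_{m}a_{m}^{-1}k^{\ast }>0$ returns us to regime \textit{(i)} or \textit{(ii)} and hence to Theorem \ref{power}. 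The main obstacle is precisely this bookkeeping --- simultaneously tracking the regime-dependent scaling $\widetilde{r}_{m}$, the regime-dependent trimming $d_{a_{m},\eta }$, and the position of $k^{\ast }$ relative to $a_{m}$, and verifying that for the maximizing index condition (\ref{veto-diverge}) collapses onto an already-proven power condition; once the domination of the first step is in hand, no new probabilistic estimate is required.
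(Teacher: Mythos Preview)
Your approach is correct and rests on the same core idea as the paper's proof --- the domination $\min_{j}G_{j}(m;k)\leq G_{j_{0}}(m;k)$, equivalently $1/\min_{j}=\max_{j}$ --- so that crossing the composite boundary is implied by crossing any single one. The packaging differs, however. The paper does not invoke Theorem \ref{power} as a black box: it writes the drift of the composite statistic directly,
\[
\frac{(\widetilde{k}-k^{\ast })\,|\mathbf{c}_{1}^{\prime }\Delta _{m}|}{m^{1/2}\bigl(1+\widetilde{k}/m\bigr)\min_{1\leq j\leq J}\widetilde{r}_{m}^{1/2-\eta _{j}}\bigl(\widetilde{k}/(m+\widetilde{k})\bigr)^{\eta _{j}}},
\]
chooses $\widetilde{k}=k^{\ast }+a_{m}$ in regime \textit{(i)} and $\widetilde{k}=\lfloor (1+c)k^{\ast }\rfloor$ in regimes \textit{(ii)}--\textit{(iii)} together, and observes that the reciprocal of the minimum is the maximum, so the drift is (up to constants) exactly the quantity in (\ref{veto-diverge}). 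This handles \textit{(ii)} and \textit{(iii)} in one stroke, with no subsequence extraction and no case split on whether the maximizing $\eta_{j_{0}}$ lies above or below $1/2$. Your route --- reducing to a single $j_{0}$ and then quoting Theorem \ref{power} or \citet{horvath2004monitoring} --- is more modular and makes the dependence on earlier results explicit, but it forces you into the subsequence and case-analysis bookkeeping in regime \textit{(iii)} that the direct computation avoids. Both arguments are sound; the paper's is shorter because it never leaves the composite statistic.
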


Some comments are in order. Theorem \ref{veto-prop} stipulates that,
asymptotically%
\begin{equation}
P\left( \sup_{0<u<1}\frac{\left\vert W\left( u\right) \right\vert }{%
\min_{1\leq j\leq J}c_{\alpha ,\eta _{j}}u^{\widetilde{\eta }_{j}}}\leq
C_{\alpha }\right) =\alpha ;  \label{asy-veto}
\end{equation}%
hence, $C_{\alpha }$ can be obtained by standard Monte Carlo techniques.

According to Theorem \ref{veto-power}, the veto-based decision rule does
have power versus changepoints; note that the proposition studies only the
case $k^{\ast }=O\left( m\right) $ - i.e. breaks occurring not too late
during the monitoring period - but it could be extended to the case where $%
k^{\ast }$ is \textquotedblleft bigger\textquotedblright\ than $m$ after
elementary, if tedious, algebra. Condition (\ref{veto-diverge}) is a
high-level requirement which can be more easily interpreted by considering
some leading examples. To begin with, we investigate the case $k^{\ast
}=c_{0}m$, corresponding to a break occurring later on during the monitoring
horizon. In such a case, a possible question is how small can the size of
the break, $\left\vert \mathbf{c}_{1}^{\prime }\Delta _{m}\right\vert $, be.
When $\eta _{j}<1/2$, the non-centrality parameter in (\ref{veto-diverge})
is proportional to $m^{1/2}\left\vert \mathbf{c}_{1}^{\prime }\Delta
_{m}\right\vert $, which entails that nontrivial power is found versus
breaks of magnitude at least $O\left( m^{-1/2}\right) $; note that this
holds irrespective of the actual value of $\eta _{j}$. On the other hand,
when $\eta _{j}>1/2$, (\ref{veto-diverge}) is proportional to $a_{m}^{\eta
_{j}-1/2}m^{1-\eta _{j}}\left\vert \mathbf{c}_{1}^{\prime }\Delta
_{m}\right\vert $, suggesting that the larger the choice of $\eta _{j}$, the
larger $\left\vert \mathbf{c}_{1}^{\prime }\Delta _{m}\right\vert $ needs to
be to be detected. In this case, the veto-based rule will be triggered by
the statistics with $\eta _{j}<1/2$, thus ensuring power versus small breaks.

In conclusion, we would like to point out that an alternative approach to
the \textquotedblleft classical\textquotedblright\ CUSUM detectors employed
here, is the so-called Page-CUSUM detectors (see e.g. %
\citealp{fremdt2015page}, \citealp{kirch2018modified}, and %
\citealp{romano2023fast}), defined as 
\begin{equation}
Q^{\dagger }(m;k)=\max_{1\leq \ell \leq k}\left\vert \sum_{i=m+\ell }^{m+k}%
\frac{\left( y_{i}-\widehat{\beta }_{m}y_{i-1}\right) y_{i-1}}{1+y_{i-1}^{2}}%
\right\vert .  \label{page}
\end{equation}%
\citet{HT2023} consider the weighted version of (\ref{page}). Intuitively,
at each point in time $1\leq k\leq T_{m}$, the detector is set equal to the
\textquotedblleft worst case\textquotedblright\ observed up to $k$. This
should make the detector $Q^{\dagger }(m;k)$ more sensitive, and more prone
to using only the observations which, up to time $k$, are affected by a
possible changepoint. Indeed, the simulations in \citet{HT2023} show that
the use of the Page-CUSUM detector does offer some improvement on the
standard CUSUM, but the main gains can be ascribed to the use of heavier and
heavier weighing schemes. Furthermore, the use of Page-CUSUM methodologies
is marred by their computational complexity (see however the solution
proposed in a recent contribution by \citealp{romano2023fast}). In this
respect, the veto-based approach suggested in this section is similar, in
spirit, to (\ref{page}); instead of using the \textquotedblleft worst case
scenario\textquotedblright\ across $k$, we use the (computationally simpler)
\textquotedblleft worst case scenario\textquotedblright\ across several
values of $\eta $.

\section{Simulations\label{Sim}}

We report some Monte Carlo evidence to evaluate the finite sample
performance of our methodology, and to offer guidelines on the choice of
some specifications, chiefly $a_{m}$.\footnote{%
Further Monte Carlo evidence is reported in Section \ref{furtherMC} in the
Supplement.} All our experiments are based on the following design%
\begin{equation}
y_{t}=\mathbf{x}_{t}^{\prime }\beta _{t}+\rho y_{t-1}+\epsilon _{t},
\label{dgp}
\end{equation}%
where we allow for the possible presence of a dynamic term. Under the null,
we generate $\epsilon _{t}$ as \textit{i.i.d.} $N\left( 0,1\right) $ for $%
1\leq t\leq m+T_{m}$, and each coordinate of $\beta _{0}$ as $\beta
_{0,j}\sim 1+\sigma _{\beta }N\left( 0,1\right) $, independently across $%
1\leq j\leq d$. Our results are derived under $\rho =0.5$ and $\sigma
_{\beta }=0.5$, but unreported experiments showed that the value of $\sigma
_{\beta }$ does not alter our results. We allow for serial dependence in the 
$d$-dimensional regressor $\mathbf{x}_{t}$, generating its (demeaned)
coordinates as%
\begin{equation*}
x_{j,t}=\phi x_{j,t-1}+e_{j,t},
\end{equation*}%
with $\phi =0.5$, and $e_{j,t}\sim N\left( 0,1\right) $, \textit{i.i.d.}
across $2\leq j\leq d$ and $1\leq t\leq m+T_{m}$ (recall that $x_{1,t}=1$).
In the case of a static regression - i.e., when $\rho =0$ in (\ref{dgp})\ -
we allow for the regression error $\epsilon _{t}$ to be serially dependent:%
\begin{equation*}
\epsilon _{t}=\theta \epsilon _{t-1}+w_{t},
\end{equation*}%
with $w_{t}\sim N\left( 0,1\right) $ \textit{i.i.d.} across $1\leq t\leq
m+T_{m}$. We have used $\theta =0.5$; again, unreported simulations show
that altering this specification does not have a significant impact on our
result. We estimate the long-run variance using (\ref{sig-hat}) with
bandwidth $H=\left\lfloor m^{2/5}\right\rfloor $. Finally, we set $T_{m}=m$
across all experiments; in the Supplement, we further assess the sensitivity
of our result to the length of the monitoring horizon $T_{m}$.

As far as implementation is concerned, we consider several choices for the
trimming sequence $a_{m}$, studying how these affect the empirical rejection
frequencies and the delays. We also consider several values of $\eta $, in
order to assess the impact of this specification on the performance of our
methodologies. By way of comparison, we also use $\eta \leq 1/2$; in this
case, we have used the critical values in Table 1 in %
\citet{horvath2004monitoring}. All the results reported here are obtained
for $d=2$ (that is, one exogenous regressor and the constant, in addition to
the lagged dependent variable) and a nominal level $\alpha =0.05$. Finally,
all simulations were carried out with $2,500$ replications.

\medskip

\begin{table*}[t]
\caption{{\protect\footnotesize {Empirical rejection frequencies under $H_0$}%
}}
\label{tab:Size1}\centering
{\footnotesize {\ }}
\par
{\scriptsize {\ }}
\par
{\scriptsize 
\begin{tabular}{llccccccccc}
\multicolumn{1}{c}{} & \multicolumn{1}{c}{$a_m$} & \multicolumn{3}{c}{$\ln
\ln m$} & \multicolumn{3}{c}{$\ln m$} & \multicolumn{3}{c}{$\ln^2 m$} \\ 
\multicolumn{2}{c}{} & \multicolumn{3}{c}{} & \multicolumn{3}{c}{} & 
\multicolumn{3}{c}{} \\ 
\cmidrule(lr){3-5}\cmidrule(lr){6-8}\cmidrule(lr){9-11} &  &  &  &  &  &  & 
&  &  &  \\ 
~ & $m$ & 300 & 500 & 1000 & 300 & 500 & 1000 & 300 & 500 & 1000 \\ 
\multicolumn{1}{c}{$\eta$} & \multicolumn{3}{c}{} & \multicolumn{3}{c}{} & 
\multicolumn{3}{c}{} &  \\ 
\multicolumn{2}{c}{} & \multicolumn{3}{c}{} & \multicolumn{3}{c}{} & 
\multicolumn{3}{c}{} \\ 
0.51 & ~ & 0.051 & 0.047 & 0.040 & 0.046 & 0.042 & 0.034 & 0.032 & 0.028 & 
0.029 \\ 
0.55 & ~ & 0.051 & 0.046 & 0.034 & 0.057 & 0.051 & 0.044 & 0.046 & 0.046 & 
0.046 \\ 
0.65 & ~ & 0.039 & 0.035 & 0.028 & 0.052 & 0.050 & 0.043 & 0.062 & 0.056 & 
0.054 \\ 
0.75 & ~ & 0.036 & 0.030 & 0.025 & 0.047 & 0.043 & 0.038 & 0.062 & 0.052 & 
0.052 \\ 
0.85 & ~ & 0.039 & 0.031 & 0.028 & 0.048 & 0.048 & 0.040 & 0.066 & 0.055 & 
0.055 \\ 
1 & ~ & 0.044 & 0.033 & 0.030 & 0.046 & 0.048 & 0.041 & 0.066 & 0.051 & 0.054
\\ 
\multicolumn{11}{c}{} \\ 
\midrule \bottomrule &  &  &  &  &  &  &  &  &  & 
\end{tabular}
}
\par
{\scriptsize \ }
\par
{\scriptsize {\footnotesize 
\begin{tablenotes}
      \tiny
            \item The table contains the empirical rejection frequencies under the null of no changepoint for different sample sizes and different values of $\eta$, for the case of the dynamic regression model (\ref{dgp}), with $d=2$ - i.e., one exogenous regressor and the constant. The specifications of the model are described in the main text.
            
\end{tablenotes}
} }
\end{table*}

\medskip

In Table \ref{tab:Size1}, we report the empirical rejection frequencies
under the null of no break. In the context of sequential monitoring the
notion of \textquotedblleft size control\textquotedblright\ is different,
and the nominal level $\alpha $ represents an upper bound for the
probability of a procedure-wise Type I Error.\footnote{%
As \citet{horvath2007sequential} put it, \textquotedblleft \lbrack t]he goal
is to keep the probability of false rejection below $\alpha $ rather than to
make it close to $\alpha $\textquotedblright .} Our empirical rejection
frequencies broadly stay beneath the $5\%$ upper bound (with some exceptions
for \textquotedblleft large\textquotedblright\ values of $a_{m}$ and small $%
m $), and decline as $m$ increases, indicating that procedure-wise size
control is ensured. The choice of $a_{m}$ is important in determining the
empirical rejection frequencies; as our theory predicts, as $m$ increases
all empirical rejection frequencies fall below the $5\%$ level, but large
values of $a_{m}$ - e.g. corresponding to the choice $a_{m}=\ln ^{2}m$ -
require comparatively higher sample sizes to ensure proper size control. In
Section \ref{furtherMC} in the Supplement, we report more Monte Carlo
evidence, considering also several values of $\eta \leq 1/2$ by way of
comparison, on: empirical rejection frequencies under the null with a larger
number of regressors (Tables \ref{tab:SizeDyn4}-\ref{tab:SizeDyn8}), showing
that size control, while still guaranteed in general, tends to worsen as $d$
increases; and empirical rejection frequencies under the null using a static
regression model, with various $d$ (Tables \ref{tab:SizeSt2}-\ref%
{tab:SizeSt8}), showing that, in this case, the sequential detection
procedure becomes more conservative \textit{ceteris paribus} compared to a
dynamic model.

\medskip

\begin{table*}[t]
\caption{{\protect\footnotesize {Median delays under an early occurring
changepoint}}}
\label{tab:DelayE1}\centering
{\footnotesize {\ }}
\par
{\scriptsize {\ }}
\par
{\scriptsize 
\begin{tabular}{llllllllllllllll}
\multicolumn{16}{c}{Median Delay for Early Changepoint, $d=2$} \\ 
\multicolumn{16}{c}{} \\ 
$a_m$ & ~ & $\eta$ & 0 & 0.15 & 0.25 & 0.35 & 0.45 & 0.49 & 0.5 & 0.51 & 0.55
& 0.65 & 0.75 & 0.85 & 1 \\ 
\multicolumn{16}{c}{} \\ 
\multicolumn{16}{c}{\ ~~~~~~~~$m=300$~~~~~} \\ 
\multicolumn{16}{c}{} \\ 
$\ln \ln m$ & ~ & ~ & 27 & 18 & 13 & 8 & 5 & 4 & 4 & 4 & 3.5 & 3 & 3 & 4 & 1
\\ 
$\ln m$ & ~ & ~ & 28 & 18 & 14 & 10 & 7 & 7 & 7 & 7 & 6 & 6 & 6 & 6 & 6 \\ 
$\ln^2 m$ & ~ & ~ & 30 & 23 & 19 & 16 & 14 & 15 & 16 & 14 & 13 & 12 & 11 & 11
& 11 \\ 
\multicolumn{16}{c}{} \\ \hline
\multicolumn{16}{c}{} \\ 
\multicolumn{16}{c}{\ ~~~~~~~~$m=500$~~~~~} \\ 
\multicolumn{16}{c}{} \\ 
$\ln \ln m$ &  & ~ & 35 & 22 & 15 & 9 & 5 & 4 & 4 & 4 & 3 & 3 & 3 & 3 & 1 \\ 
$\ln m$ & ~ & ~ & 35 & 22 & 16 & 11 & 8 & 8 & 8 & 7 & 7 & 6 & 6 & 6 & 6.5 \\ 
$\ln^2 m$ & ~ & ~ & 38 & 28 & 23 & 19 & 16 & 16 & 17 & 16 & 14 & 13 & 12 & 12
& 12 \\ 
\multicolumn{16}{c}{} \\ \hline
\multicolumn{16}{c}{} \\ 
\multicolumn{16}{c}{\ ~~~~~~~~$m=1000$~~~~~} \\ 
\multicolumn{16}{c}{} \\ 
$\ln \ln m$ & ~ & ~ & 49 & 29 & 19 & 10 & 6 & 4 & 5 & 4 & 3 & 3 & 3 & 4 & 1
\\ 
$\ln m$ & ~ & ~ & 50 & 30 & 20 & 13 & 8 & 8 & 8 & 7 & 6 & 6 & 6 & 6 & 6.5 \\ 
$\ln ^2 m$ & ~ & ~ & 51 & 35 & 28 & 21 & 17 & 17 & 18 & 17 & 15 & 14 & 13 & 
13 & 13 \\ 
\multicolumn{16}{c}{} \\ 
\bottomrule \bottomrule &  &  &  &  &  &  &  &  &  &  &  &  &  &  & 
\end{tabular}
}
\par
{\scriptsize \ }
\par
{\scriptsize {\footnotesize 
\begin{tablenotes}
      \tiny
            \item The table contains the median delay under the alternative of an \textit{early} occurring changepoint, for different sample sizes, different values of $\eta$, and different trimming sequence $a_m$, in the case of the dynamic regression model (\ref{dgp}), with $d=2$ - i.e., one exogenous regressor and the constant. The specifications of the model are described in the main text.
            
\end{tablenotes}
} }
\end{table*}

\medskip

\begin{table*}[b]
\caption{{\protect\footnotesize {Median delays under a late occurring
changepoint}}}
\label{tab:DelayL1}\centering
{\footnotesize {\ }}
\par
{\scriptsize {\ }}
\par
{\scriptsize 
\begin{tabular}{llllllllllllllll}
\multicolumn{16}{c}{} \\ 
$a_m$ & ~ & $\eta$ & 0 & 0.15 & 0.25 & 0.35 & 0.45 & 0.49 & 0.5 & 0.51 & 0.55
& 0.65 & 0.75 & 0.85 & 1 \\ 
\multicolumn{16}{c}{} \\ 
\multicolumn{16}{c}{\ ~~~~~~~~$m=300$~~~~~} \\ 
\multicolumn{16}{c}{} \\ 
$\ln \ln m$ & ~ & ~ & 37 & 31 & 29 & 27 & 26 & 28 & 30 & 29 & 32 & 49 & 88 & 
154 & NA \\ 
$\ln m$ & ~ & ~ & 38 & 32 & 29 & 27 & 27 & 28 & 30 & 29 & 29 & 37 & 51 & 74
& 140 \\ 
$\ln^2 m$ & ~ & ~ & 39 & 34 & 32 & 30 & 31 & 32 & 34 & 32 & 31 & 32 & 35 & 39
& 48 \\ 
\multicolumn{16}{c}{} \\ \hline
\multicolumn{16}{c}{} \\ 
\multicolumn{16}{c}{\ ~~~~~~~~$m=500$~~~~~} \\ 
\multicolumn{16}{c}{} \\ 
$\ln \ln m$ & ~ & ~ & 42 & 34 & 30 & 27 & 25 & 26 & 28 & 27 & 30 & 46 & 84 & 
202 & NA \\ 
$\ln m$ & ~ & ~ & 43 & 35 & 31 & 27 & 26 & 27 & 29 & 27 & 28 & 35 & 46.5 & 66
& 152 \\ 
$\ln^2 m$ & ~ & ~ & 45 & 37 & 34 & 31 & 30 & 31 & 33 & 31 & 29 & 30 & 33 & 37
& 45 \\ 
\multicolumn{16}{c}{} \\ \hline
\multicolumn{16}{c}{} \\ 
\multicolumn{16}{c}{\ ~~~~~~~~$m=1000$~~~~~} \\ 
\multicolumn{16}{c}{} \\ 
$\ln \ln m$ & ~ & ~ & 54 & 40 & 33 & 28 & 24 & 25 & 27 & 25 & 28 & 43 & 80 & 
195 & NA \\ 
$\ln m$ & ~ & ~ & 54 & 40 & 33 & 28 & 25 & 25 & 27 & 26 & 26 & 32 & 44 & 64
& 154 \\ 
$\ln^2 m$ & ~ & ~ & 56 & 43 & 37 & 32 & 29 & 30 & 32 & 30 & 28 & 28 & 31 & 34
& 41 \\ 
\multicolumn{16}{c}{} \\ 
\bottomrule \bottomrule &  &  &  &  &  &  &  &  &  &  &  &  &  &  & 
\end{tabular}
}
\par
{\scriptsize \ }
\par
{\scriptsize {\footnotesize 
\begin{tablenotes}
      \tiny
            \item The table contains the median delay under the alternative of a \textit{late} occurring changepoint, for different sample sizes, different values of $\eta$, and different trimming sequence $a_m$, in the case of the dynamic regression model (\ref{dgp}), with $d=2$ - i.e., one exogenous regressor and the constant. The specifications of the model are described in the main text. 
            \item The symbol \textquotedblleft NA\textquotedblright indicates that no changepoint was detected.
            
\end{tablenotes}
} }
\end{table*}

\medskip

In Tables \ref{tab:DelayE1}-\ref{tab:DelayL1}, we report the detection delay
in the presence of a changepoint, considering two types of alternatives: an
early break, occurring exactly $a_{m}$ periods after the start of the
monitoring horizon, and a late break, occurring at $k^{\ast }=m+a_{m}$. We
compare detection delays across several values of $\eta $, and for different
choices of $a_{m}$. In the case of early occurring changepoints, sequential
monitoring with $\eta >1/2$ ensures a superior performance compared to $\eta
\leq 1/2$, as predicted by Theorem \ref{delay}. This confirms that using
heavy weights ensures a very fast detection of early occurring breaks. This
is essentially true for all values of $\eta $, with the choice $\eta =1$
yielding the best results. Interestingly, the choice $a_{m}=\ln \ln m$
delivers the fastest detection, as predicted by the theory, and despite the
fact that in this case the test is undersized. However, on the other hand,
the results in Table \ref{tab:DelayE1} cannot be directly compared across $%
a_{m}$, since the location of changepoints is different; for example, when $%
a_{m}=\ln ^{2}m$ and therefore also $k^{\ast }=\ln ^{2}m$, this corresponds
to a break occurring early, but not \textquotedblleft very
early\textquotedblright . In this case, detection delays are actually quite
good: Theorem \ref{delay} predicts that breaks are detected after $a_{m}=\ln
^{2}m$ steps, but the results in Table \ref{tab:DelayE1} indicate that such
detection occurs much more quickly (for example, when $m=1000$, it holds
that $\ln ^{2}m\approx 48$, but detection takes place after only $12$
periods for large values of $\eta $). This result should further be read in
conjunction with the fact that, according to Table \ref{tab:DelayE1}, the
choice $a_{m}=\ln ^{2}m$ results in oversizement for small samples. As far
as early break detection is concerned, results are reversed in the presence
of a late occurring break, as Table \ref{tab:DelayL1}\ shows. In such a
case, delays improve as $\eta $ moves towards $1/2$ - both from the left and
the right - but heavily weighted statistics perform substantially worse than
when using $\eta \leq 1/2$; this is especially true when considering the
delays under very early occurring breaks in Tables \ref{tab:DelayE1}, under $%
a_{m}=\ln \ln m$ and even $a_{m}=\ln m$.

In Section \ref{furtherMC} in the Supplement, we report a full-blown of
statistics on detection delays under both alternatives of an early occurring
break and a late one, for various values of $d$, and for both a dynamic
(Tables \ref{tab:DelayE2}-\ref{tab:DelayLF3}) and a static (Tables \ref%
{tab:DelayES1}-\ref{tab:DelayLS3}) regression, essentially confirming the
results discussed above.

\bigskip

The distilled essence of Tables \ref{tab:DelayE1}-\ref{tab:DelayL1} is that
- as far as detection delay is concerned - the \textquotedblleft
optimal\textquotedblright\ weight for the CUSUM process, which ensures the
fastest detection irrespective of the changepoint location, does not exist.
Hence, using an agnostic approach such as the one proposed in Section \ref%
{veto}, based on combining various statistics, may be preferable. In the
next set of experiments, we consider several combinations of weighted CUSUM
statistics; in particular, we use a scheme with one lightly weighted and one
heavily weighted statistics (denoted as $\mathcal{V}_{2}$, and based on
using $\eta =0.2$ and $\eta =0.85$); two lightly weighted and one heavily
weighted statistics (denoted as $\mathcal{V}_{3}$, and based on using $\eta
=0.2$, $\eta =0.3$ and $\eta =0.85$); and two lightly weighted and three
heavily weighted statistics (denoted as $\mathcal{V}_{5}$, and based on
using $\eta =0.2$, $\eta =0.45$, $\eta =0.65$, $\eta =0.85$, and $\eta =0.9$%
).\footnote{%
Results with other combinations of $\eta $ confirm the findings reported
here, and are available upon request.} We use the same DGP and
specifications as above; by way of comparison, in Table \ref{tab:VetoSize}
we also report empirical rejection frequencies under the null, and
descriptive statistics for the detection delay under the alternative, also
for the cases of individual weighted CUSUM statistics based on $\eta =0.25$
and $\eta =0.75$.

\medskip

\begin{table*}[h!]
\caption{{\protect\footnotesize {Empirical rejection frequencies under $H_0$
- veto-based sequential detection}}}
\label{tab:VetoSize}\centering
{\footnotesize {\ }}
\par
{\scriptsize {\ }}
\par
{\scriptsize 
\begin{tabular}{llccccccccc}
\multicolumn{11}{c}{} \\ 
\multicolumn{1}{c}{} & \multicolumn{1}{c}{$a_m$} & \multicolumn{3}{c}{$\ln
\ln m$} & \multicolumn{3}{c}{$\ln m$} & \multicolumn{3}{c}{$\ln^2 m$} \\ 
\multicolumn{2}{c}{} & \multicolumn{3}{c}{} & \multicolumn{3}{c}{} & 
\multicolumn{3}{c}{} \\ 
\cmidrule(lr){3-5}\cmidrule(lr){6-8}\cmidrule(lr){9-11} &  &  &  &  &  &  & 
&  &  &  \\ 
~ & $m$ & 300 & 500 & 1000 & 300 & 500 & 1000 & 300 & 500 & 1000 \\ 
\multicolumn{1}{c}{$\eta$} & \multicolumn{3}{c}{} & \multicolumn{3}{c}{} & 
\multicolumn{3}{c}{} &  \\ 
\multicolumn{2}{c}{} & \multicolumn{3}{c}{} & \multicolumn{3}{c}{} & 
\multicolumn{3}{c}{} \\ 
0.25 & ~ & 0.020 & 0.020 & 0.019 & 0.020 & 0.020 & 0.019 & 0.020 & 0.020 & 
0.019 \\ 
0.75 & ~ & 0.036 & 0.030 & 0.025 & 0.047 & 0.043 & 0.038 & 0.062 & 0.052 & 
0.052 \\ 
$\mathcal{V}_2$ & ~ & 0.052 & 0.047 & 0.040 & 0.061 & 0.058 & 0.052 & 0.070
& 0.062 & 0.060 \\ 
$\mathcal{V}_3$ & ~ & 0.058 & 0.054 & 0.049 & 0.064 & 0.063 & 0.059 & 0.070
& 0.064 & 0.068 \\ 
$\mathcal{V}_5$ & ~ & 0.057 & 0.048 & 0.044 & 0.056 & 0.052 & 0.049 & 0.050
& 0.044 & 0.045 \\ 
\multicolumn{11}{c}{} \\ 
\midrule \bottomrule &  &  &  &  &  &  &  &  &  & 
\end{tabular}
}
\par
{\scriptsize \ }
\par
{\scriptsize {\footnotesize 
\begin{tablenotes}
      \tiny
            \item The table contains the empirical rejection frequencies under the null of no changepoint for different veto-based monitoring schemes, for the case of the dynamic regression model (\ref{dgp}), with $d=2$ - i.e., one exogenous regressor and the constant. The specifications of the model are described in the main text.
            
\end{tablenotes}
} }
\end{table*}

\medskip

\begin{table*}[h!]
\caption{{\protect\footnotesize {Median delays under $H_{A}$ - veto-based
sequential detection}}}
\label{tab:VetoDelays}\centering
{\footnotesize {\ }}
\par
{\scriptsize {\ }}
\par
{\scriptsize 
\begin{tabular}{lllllllllllllll}
\multicolumn{15}{c}{} \\ 
\multicolumn{4}{c}{} & \multicolumn{5}{c}{Early Break} &  & 
\multicolumn{5}{c}{Late Break} \\ 
\cmidrule(lr){4-9}\cmidrule(lr){10-15} &  &  &  &  &  &  &  &  &  &  &  &  & 
&  \\ 
\multicolumn{15}{c}{} \\ 
$a_m$ &  & $\eta$ & ~ & 0.25 & 0.75 & $\mathcal{V}_{2}$ & $\mathcal{V}_{3} $
& $\mathcal{V}_{5}$ & ~ & 0.25 & 0.75 & $\mathcal{V}_2$ & $\mathcal{V}_{3} $
& $\mathcal{V}_{5}$ \\ 
\multicolumn{15}{c}{} \\ 
$\ln \ln m$ & Min & ~ & ~ & 3 & 1 & 1 & 1 & 1 & ~ & 3 & 25 & 5 & 3 & 2 \\ 
~ & Q1 & ~ & ~ & 10 & 1 & 1 & 1 & 1 & ~ & 22 & 66 & 24 & 21 & 20 \\ 
~ & MED & ~ & ~ & 15 & 3 & 4 & 4 & 4 & ~ & 30 & 84 & 32 & 29 & 28 \\ 
~ & ARL & ~ & ~ & 16.31 & 8.05 & 9.74 & 8.25 & 6.15 & ~ & 31.58 & 87.92 & 
33.52 & 29.95 & 29.19 \\ 
~ & Q3 & ~ & ~ & 21 & 10 & 17 & 14 & 9 & ~ & 39 & 105 & 41 & 37 & 36.75 \\ 
~ & Max & ~ & ~ & 70 & 123 & 71 & 68 & 66 & ~ & 94 & 221 & 100 & 93 & 94 \\ 
~ & ~ & ~ & ~ & ~ & ~ & ~ & ~ & ~ & ~ & ~ & ~ & ~ & ~ &  \\ 
$\ln m$ & Min & ~ & ~ & 3 & 1 & 1 & 1 & 1 & ~ & 3 & 8 & 4 & 2 & 1 \\ 
~ & Q1 & ~ & ~ & 12 & 3 & 3 & 3 & 4 & ~ & 23 & 36 & 24 & 21 & 21 \\ 
~ & MED & ~ & ~ & 16 & 6 & 6 & 6 & 7 & ~ & 31 & 46.5 & 33 & 29 & 29 \\ 
~ & ARL & ~ & ~ & 17.92 & 8.09 & 8.81 & 8.77 & 8.48 & ~ & 32.25 & 48.93 & 
34.23 & 30.71 & 30.00 \\ 
~ & Q3 & ~ & ~ & 22 & 10 & 11 & 11 & 11 & ~ & 40 & 60 & 42 & 38 & 38 \\ 
~ & Max & ~ & ~ & 76 & 83 & 78 & 65 & 65 & ~ & 96 & 136 & 98 & 95 & 96 \\ 
~ & ~ & ~ & ~ & ~ & ~ & ~ & ~ & ~ & ~ & ~ & ~ & ~ & ~ &  \\ 
$\ln^2 m$ & Min & ~ & ~ & 3 & 1 & 1 & 1 & 1 & ~ & 1 & 1 & 3 & 2 & 2 \\ 
~ & Q1 & ~ & ~ & 17 & 8 & 7 & 7 & 8 & ~ & 25 & 24 & 27 & 24 & 24 \\ 
~ & MED & ~ & ~ & 23 & 12 & 12 & 12 & 13 & ~ & 34 & 33 & 36 & 32 & 32 \\ 
~ & ARL & ~ & ~ & 24.13 & 13.94 & 13.88 & 13.99 & 14.91 & ~ & 35.39 & 34.87
& 37.20 & 34.09 & 34.25 \\ 
~ & Q3 & ~ & ~ & 30 & 19 & 19 & 19 & 20 & ~ & 44 & 44 & 46 & 42 & 43 \\ 
~ & Max & ~ & ~ & 73 & 60 & 60 & 62 & 60 & ~ & 100 & 103 & 101 & 99 & 100 \\ 
\multicolumn{15}{c}{} \\ 
\bottomrule \bottomrule &  &  &  &  &  &  &  &  &  &  &  &  &  & 
\end{tabular}
}
\par
{\scriptsize \ }
\par
{\scriptsize {\footnotesize 
\begin{tablenotes}
      \tiny
            \item The table contains several descriptive statistics for the delay under alternative veto-based monitoring schemes, in the case of the dynamic regression model (\ref{dgp}), with $d=2$ - i.e., one exogenous regressor and the constant. The specifications of the model are described in the main text. 
            \item The descriptive statistics reported in the table are: the minimum and maximum values ($\min$ and $\max$ respectively), the first and third quartiles (Q1 and Q3 respectively), the median delay (denoted as MED), and the average delay (denoted as ARL).
            
\end{tablenotes}
} }
\end{table*}

\medskip

Table \ref{tab:VetoSize} states that veto-based detection schemes are able
to afford procedure-wise Type I Error control, especially when $a_{m}$ is
\textquotedblleft small\textquotedblright\ - corresponding to $a_{m}=\ln \ln
m$; indeed, results are very similar to the ones obtained using the
individual, heavily weighted CUSUM statistics. As far as power and timely
detection under alternatives are concerned, Table \ref{tab:VetoDelays}
indicates that veto-based statistic guarantee fast detection, both in the
presence of an early break and of a late break. In the former case, the
veto-based procedure delivers essentially the same performance as using a
single, heavily weighted CUSUM statistic, improving on lightly weighted
CUSUM statistics which are less effective; conversely, in the latter case,
the veto-based rules achieve a similar result as lightly weighted CUSUM
statistics, whereas the heavily weighted ones perform poorly in comparison\
- thus confirming the intuition that veto-based detection rules should
combine \textquotedblleft the best of both worlds\textquotedblright . These
results hold, broadly, for all combinations considered, although $\mathcal{V}%
_{3}$ and $\mathcal{V}_{5}$ fare better than $\mathcal{V}_{2}$, suggesting
that combining more than two CUSUM\ statistics would result in better
detection in all cases considered. As noted in Tables \ref{tab:DelayE1}-\ref%
{tab:DelayL1}, when using heavily weighted statistics, the best results as
far as detection delays are concerned are found when using $a_{m}=\ln \ln m$%
. This finding is important, because the choice of $a_{m}=\ln \ln m$ also
corresponds to the best size control under the null. In the Supplement, we
report more Monte Carlo evidence focusing on the case of a dynamic
regression, studying: empirical rejection frequencies under the null with
various values of $d$\ (Tables \ref{tab:VetoApp1} and \ref{tab:VetoApp2}),
showing that, as above, size control worsens as $d$ increases, although
using $a_{m}=\ln \ln m$ still offers the best results in all cases
considered; and detection delays under both an early and a late occurring
break with various values of $d$ (Tables \ref{tab:VetoApp3} and \ref%
{tab:VetoApp4}), showing that these improve as $d$ increase, mainly as a
consequence of the size/power trade-off. 

\section{Conclusions and discussion\label{conclusions}}

In this paper, we investigate the well-known, and highly important, issue of
(fast) online changepoint detection. We focus on a linear regression model,
in the possible presence of weak dependence, which also nests the cases of
changepoint detection in the location of a time series $\left\{
y_{t},-\infty <t<\infty \right\} $; moreover, we use a flexible definition
of dependence which allows to readily extend our methods to test for changes
in the higher order moments of $\left\{ y_{t},-\infty <t<\infty \right\} $.

We make two contributions. First, we introduce a class of statistics, called 
\textit{R\'{e}nyi statistics}\ and based on the heavily weighted CUSUM
process of the regression residuals, which are specifically designed to
detect early occurring changepoints. Second, we propose a composite,
veto-based, test statistic which combines different weighted versions of the
CUSUM process of the residuals, designed to ensure timely detection of
changepoints occurring at any point in time over the monitoring horizon. Our
Monte Carlo evidence shows that our statistics offer good control of the
probability of the procedure-wise Type I Error under the null, also
affording timely detection under the alternative. Whilst we focus, as far as
the empirical illustration is concerned, on a macroeconomic example, we
would like to point out that our methodology can be applied to a wide
variety of contexts in virtually all applied sciences.

Several interesting questions remain open and are worth pursuing. In the
construction of heavily weighted CUSUM statistics, the only tuning parameter
is the sequence $a_{m}$ defined in (\ref{a-m}). Our simulations suggest that
a good choice to ensure both size control and timely detction is $a_{m}=\ln
\ln m$ in all cases considered; however, having an automated, data-driven
selection rule would be desirable. Heuristically, our proofs (and our
simulations) show that the quality of the asymptotic distribution under the
null is determined by two factors: the speed of convergence in the Gaussian
approximation (see Lemma \ref{sip}), and the impact of $a_{m}$ in the
limiting behaviour of $\max_{a_{m}/m\leq u\leq T_{m}/m}\left\vert W\left(
u/\left( 1+u\right) \right) \right\vert /\left( u/\left( 1+u\right) \right)
^{\eta }$ (see the proof of Theorem \ref{nul-distr}). Seeing as $a_{m}$
plays a role only in determining the limit of a heavily weighted Wiener
process, a possible suggestion to the applied user would be to choose the
optimal $a_{m}$ by running a simulation with \textit{i.i.d. }$N\left(
0,1\right) $ data, generated with training and monitoring samples of size $m$
and $T_{m}$ respectively, determining which choice of $a_{m}$ yields the
best size control without having an overly conservative procedure. As a
second important extension, recently the literature has considered
extensions of online change-point detection methods to different data
structures such as e.g. network data. \citet{yu2020note} and %
\citet{dubey2021online} study online changepoint detection in networks, and %
\citet{wang2021optimal} extend the CUSUM to the network case. Our
methodologies could lend themselves to being extended to other types of data
structures. These issues have a high priority on the authors' research
agenda.

{\footnotesize {\ 
\bibliographystyle{chicago}
\bibliography{biblio}
} }

\newpage

\clearpage
\renewcommand*{\thesection}{\Alph{section}}

\setcounter{section}{0} \setcounter{subsection}{-1} %
\setcounter{subsubsection}{-1} \setcounter{equation}{0} \setcounter{lemma}{0}
\setcounter{theorem}{0} \renewcommand{\theassumption}{A.\arabic{assumption}} 
\renewcommand{\thetheorem}{A.\arabic{theorem}} \renewcommand{\thelemma}{A.%
\arabic{lemma}} \renewcommand{\theproposition}{A.\arabic{proposition}} %
\renewcommand{\thecorollary}{A.\arabic{corollary}} \renewcommand{%
\theequation}{A.\arabic{equation}}

\section{Further Monte Carlo evidence\label{furtherMC}}

\subsection{Empirical rejection frequencies under the null\label%
{size_further}}


\begin{table*}[b]
\caption{{\protect\footnotesize {Empirical rejection frequencies under $H_0$%
, dynamic regression with $d=4$}}}
\label{tab:SizeDyn4}\centering
\par
{\scriptsize {\ }}
\par
{\scriptsize 
\begin{tabular}{llccccccccc}
\multicolumn{11}{c}{} \\ 
\multicolumn{1}{c}{} & \multicolumn{1}{c}{$a_m$} & \multicolumn{3}{c}{$\ln
\ln m$} & \multicolumn{3}{c}{$\ln m$} & \multicolumn{3}{c}{$\ln^2 m$} \\ 
\multicolumn{2}{c}{} & \multicolumn{3}{c}{} & \multicolumn{3}{c}{} & 
\multicolumn{3}{c}{} \\ 
\cmidrule(lr){3-5}\cmidrule(lr){6-8}\cmidrule(lr){9-11} &  &  &  &  &  &  & 
&  &  &  \\ 
~ & $m$ & 300 & 500 & 1000 & 300 & 500 & 1000 & 300 & 500 & 1000 \\ 
\multicolumn{1}{c}{$\eta$} & \multicolumn{3}{c}{} & \multicolumn{3}{c}{} & 
\multicolumn{3}{c}{} &  \\ 
\multicolumn{2}{c}{} & \multicolumn{3}{c}{} & \multicolumn{3}{c}{} & 
\multicolumn{3}{c}{} \\ 
0.51 & ~ & 0.059 & 0.047 & 0.038 & 0.052 & 0.040 & 0.038 & 0.038 & 0.033 & 
0.029 \\ 
0.55 & ~ & 0.055 & 0.043 & 0.033 & 0.064 & 0.052 & 0.041 & 0.059 & 0.050 & 
0.050 \\ 
0.65 & ~ & 0.042 & 0.032 & 0.026 & 0.059 & 0.052 & 0.038 & 0.079 & 0.057 & 
0.052 \\ 
0.75 & ~ & 0.036 & 0.031 & 0.023 & 0.052 & 0.046 & 0.034 & 0.078 & 0.059 & 
0.051 \\ 
0.85 & ~ & 0.039 & 0.036 & 0.028 & 0.049 & 0.047 & 0.038 & 0.078 & 0.065 & 
0.056 \\ 
1 & ~ & 0.042 & 0.042 & 0.031 & 0.046 & 0.045 & 0.040 & 0.072 & 0.062 & 0.050
\\ 
\multicolumn{11}{c}{} \\ 
\midrule \bottomrule &  &  &  &  &  &  &  &  &  & 
\end{tabular}
}
\par
{\scriptsize \ }
\par
{\scriptsize {\footnotesize 
\begin{tablenotes}
      \tiny
            \item The table contains the empirical rejection frequencies under the null of no changepoint for different sample sizes and different values of $\eta$, for the case of the dynamic regression model (\ref{dgp}), and $d=4$ - i.e. 3 exogenous regressors and the constant. The other specifications of the model are the same as for Table \ref{tab:Size1}. 
            
\end{tablenotes}
} }
\end{table*}

\medskip

\begin{table*}[b]
\caption{{\protect\footnotesize {Empirical rejection frequencies under $H_0$%
, dynamic regression with $d=8$}}}
\label{tab:SizeDyn8}\centering
\par
{\scriptsize {\ }}
\par
{\scriptsize 
\begin{tabular}{llccccccccc}
\multicolumn{11}{c}{} \\ 
\multicolumn{1}{c}{} & \multicolumn{1}{c}{$a_m$} & \multicolumn{3}{c}{$\ln
\ln m$} & \multicolumn{3}{c}{$\ln m$} & \multicolumn{3}{c}{$\ln^2 m$} \\ 
\multicolumn{2}{c}{} & \multicolumn{3}{c}{} & \multicolumn{3}{c}{} & 
\multicolumn{3}{c}{} \\ 
\cmidrule(lr){3-5}\cmidrule(lr){6-8}\cmidrule(lr){9-11} &  &  &  &  &  &  & 
&  &  &  \\ 
~ & $m$ & 300 & 500 & 1000 & 300 & 500 & 1000 & 300 & 500 & 1000 \\ 
\multicolumn{1}{c}{$\eta$} & \multicolumn{3}{c}{} & \multicolumn{3}{c}{} & 
\multicolumn{3}{c}{} &  \\ 
\multicolumn{2}{c}{} & \multicolumn{3}{c}{} & \multicolumn{3}{c}{} & 
\multicolumn{3}{c}{} \\ 
0.51 & ~ & 0.069 & 0.062 & 0.055 & 0.063 & 0.059 & 0.051 & 0.042 & 0.040 & 
0.043 \\ 
0.55 & ~ & 0.065 & 0.060 & 0.045 & 0.076 & 0.071 & 0.058 & 0.067 & 0.058 & 
0.058 \\ 
0.65 & ~ & 0.046 & 0.038 & 0.027 & 0.070 & 0.067 & 0.048 & 0.086 & 0.075 & 
0.068 \\ 
0.75 & ~ & 0.040 & 0.034 & 0.028 & 0.062 & 0.056 & 0.042 & 0.085 & 0.072 & 
0.066 \\ 
0.85 & ~ & 0.041 & 0.035 & 0.027 & 0.065 & 0.058 & 0.046 & 0.088 & 0.069 & 
0.067 \\ 
1 & ~ & 0.042 & 0.036 & 0.028 & 0.060 & 0.059 & 0.045 & 0.079 & 0.063 & 0.062
\\ 
\multicolumn{11}{c}{} \\ 
\midrule \bottomrule &  &  &  &  &  &  &  &  &  & 
\end{tabular}
}
\par
{\scriptsize \ }
\par
{\scriptsize {\footnotesize 
\begin{tablenotes}
      \tiny
            \item The table contains the empirical rejection frequencies under the null of no changepoint for different sample sizes and different values of $\eta$, for the case of the dynamic regression model (\ref{dgp}), and $d=8$ - i.e. 7 exogenous regressors and the constant. The other specifications of the model are the same as for Table \ref{tab:Size1}. 
            
\end{tablenotes}
} }
\end{table*}

\medskip

\begin{table*}[h!]
\caption{{\protect\footnotesize {Empirical rejection frequencies under $H_0$%
, static regression with $d=2$}}}
\label{tab:SizeSt2}\centering
\par
{\footnotesize {\ }}
\par
{\scriptsize {\ }}
\par
{\scriptsize 
\begin{tabular}{llccccccccc}
\multicolumn{11}{c}{} \\ 
\multicolumn{1}{c}{} & \multicolumn{1}{c}{$a_m$} & \multicolumn{3}{c}{$\ln
\ln m$} & \multicolumn{3}{c}{$\ln m$} & \multicolumn{3}{c}{$\ln^2 m$} \\ 
\multicolumn{2}{c}{} & \multicolumn{3}{c}{} & \multicolumn{3}{c}{} & 
\multicolumn{3}{c}{} \\ 
\cmidrule(lr){3-5}\cmidrule(lr){6-8}\cmidrule(lr){9-11} &  &  &  &  &  &  & 
&  &  &  \\ 
~ & $m$ & 300 & 500 & 1000 & 300 & 500 & 1000 & 300 & 500 & 1000 \\ 
\multicolumn{1}{c}{$\eta$} & \multicolumn{3}{c}{} & \multicolumn{3}{c}{} & 
\multicolumn{3}{c}{} &  \\ 
\multicolumn{2}{c}{} & \multicolumn{3}{c}{} & \multicolumn{3}{c}{} & 
\multicolumn{3}{c}{} \\ 
0.51 & ~ & 0.044 & 0.035 & 0.034 & 0.053 & 0.044 & 0.039 & 0.045 & 0.038 & 
0.034 \\ 
0.55 & ~ & 0.034 & 0.024 & 0.018 & 0.057 & 0.047 & 0.037 & 0.058 & 0.055 & 
0.052 \\ 
0.65 & ~ & 0.006 & 0.004 & 0.002 & 0.043 & 0.039 & 0.031 & 0.072 & 0.068 & 
0.057 \\ 
0.75 & ~ & 0.001 & 0.000 & 0.000 & 0.034 & 0.028 & 0.025 & 0.072 & 0.061 & 
0.054 \\ 
0.85 & ~ & 0.001 & 0.000 & 0.000 & 0.032 & 0.029 & 0.025 & 0.074 & 0.062 & 
0.057 \\ 
1 & ~ & 0.002 & 0.000 & 0.000 & 0.029 & 0.028 & 0.024 & 0.069 & 0.059 & 0.056
\\ 
\multicolumn{11}{c}{} \\ 
\midrule \bottomrule &  &  &  &  &  &  &  &  &  & 
\end{tabular}
}
\par
{\scriptsize \ }
\par
{\scriptsize {\footnotesize \ } }
\par
{\scriptsize {\footnotesize 
\begin{tablenotes}
      \tiny
            \item The table contains the empirical rejection frequencies under the null of no changepoint for different sample sizes and different values of $\eta$, for the case of the static regression model (\ref{dgp}), and $d=2$ - i.e. one exogenous regressor and the constant. The other specifications of the model are the same as for Table \ref{tab:Size1}. 
            
\end{tablenotes}
} }
\end{table*}

\medskip

\begin{table*}[h!]
\caption{{\protect\footnotesize {Empirical rejection frequencies under $H_0$%
, static regression with $d=4$}}}
\label{tab:SizeSt4}\centering
\par
{\footnotesize {\ }}
\par
{\scriptsize {\ }}
\par
{\scriptsize 
\begin{tabular}{llccccccccc}
\multicolumn{11}{c}{} \\ 
\multicolumn{1}{c}{} & \multicolumn{1}{c}{$a_m$} & \multicolumn{3}{c}{$\ln
\ln m$} & \multicolumn{3}{c}{$\ln m$} & \multicolumn{3}{c}{$\ln^2 m$} \\ 
\multicolumn{2}{c}{} & \multicolumn{3}{c}{} & \multicolumn{3}{c}{} & 
\multicolumn{3}{c}{} \\ 
\cmidrule(lr){3-5}\cmidrule(lr){6-8}\cmidrule(lr){9-11} &  &  &  &  &  &  & 
&  &  &  \\ 
~ & $m$ & 300 & 500 & 1000 & 300 & 500 & 1000 & 300 & 500 & 1000 \\ 
\multicolumn{1}{c}{$\eta$} & \multicolumn{3}{c}{} & \multicolumn{3}{c}{} & 
\multicolumn{3}{c}{} &  \\ 
\multicolumn{2}{c}{} & \multicolumn{3}{c}{} & \multicolumn{3}{c}{} & 
\multicolumn{3}{c}{} \\ 
0.51 & ~ & 0.051 & 0.040 & 0.034 & 0.057 & 0.045 & 0.042 & 0.051 & 0.040 & 
0.039 \\ 
0.55 & ~ & 0.038 & 0.027 & 0.019 & 0.064 & 0.048 & 0.042 & 0.073 & 0.057 & 
0.059 \\ 
0.65 & ~ & 0.009 & 0.006 & 0.005 & 0.050 & 0.034 & 0.026 & 0.089 & 0.071 & 
0.059 \\ 
0.75 & ~ & 0.002 & 0.001 & 0.001 & 0.035 & 0.030 & 0.021 & 0.083 & 0.068 & 
0.055 \\ 
0.85 & ~ & 0.002 & 0.001 & 0.001 & 0.034 & 0.030 & 0.022 & 0.083 & 0.072 & 
0.058 \\ 
1 & ~ & 0.002 & 0.000 & 0.001 & 0.031 & 0.026 & 0.021 & 0.078 & 0.070 & 0.053
\\ 
\multicolumn{11}{c}{} \\ 
\midrule \bottomrule &  &  &  &  &  &  &  &  &  & 
\end{tabular}
}
\par
{\scriptsize \ }
\par
{\scriptsize {\footnotesize \ } }
\par
{\scriptsize {\footnotesize 
\begin{tablenotes}
      \tiny
            \item The table contains the empirical rejection frequencies under the null of no changepoint for different sample sizes and different values of $\eta$, for the case of the static regression model (\ref{dgp}), and $d=4$ - i.e. 3 exogenous regressors and the constant. The other specifications of the model are the same as for Table \ref{tab:Size1}. 
            
\end{tablenotes}
} }
\end{table*}

\medskip

\begin{table*}[h!]
\caption{{\protect\footnotesize {Empirical rejection frequencies under $H_0$%
, static regression with $d=8$}}}
\label{tab:SizeSt8}\centering
\par
{\footnotesize {\ }}
\par
{\scriptsize {\ 
\begin{tabular}{llccccccccc}
\multicolumn{11}{c}{} \\ 
\multicolumn{1}{c}{} & \multicolumn{1}{c}{$a_m$} & \multicolumn{3}{c}{$\ln
\ln m$} & \multicolumn{3}{c}{$\ln m$} & \multicolumn{3}{c}{$\ln^2 m$} \\ 
\multicolumn{2}{c}{} & \multicolumn{3}{c}{} & \multicolumn{3}{c}{} & 
\multicolumn{3}{c}{} \\ 
\cmidrule(lr){3-5}\cmidrule(lr){6-8}\cmidrule(lr){9-11} &  &  &  &  &  &  & 
&  &  &  \\ 
~ & $m$ & 300 & 500 & 1000 & 300 & 500 & 1000 & 300 & 500 & 1000 \\ 
\multicolumn{1}{c}{$\eta$} & \multicolumn{3}{c}{} & \multicolumn{3}{c}{} & 
\multicolumn{3}{c}{} &  \\ 
\multicolumn{2}{c}{} & \multicolumn{3}{c}{} & \multicolumn{3}{c}{} & 
\multicolumn{3}{c}{} \\ 
0.51 & ~ & 0.063 & 0.054 & 0.055 & 0.070 & 0.060 & 0.062 & 0.058 & 0.051 & 
0.051 \\ 
0.55 & ~ & 0.050 & 0.037 & 0.031 & 0.078 & 0.069 & 0.058 & 0.081 & 0.076 & 
0.069 \\ 
0.65 & ~ & 0.012 & 0.007 & 0.002 & 0.059 & 0.049 & 0.032 & 0.100 & 0.084 & 
0.073 \\ 
0.75 & ~ & 0.004 & 0.003 & 0.000 & 0.044 & 0.040 & 0.028 & 0.100 & 0.079 & 
0.070 \\ 
0.85 & ~ & 0.004 & 0.001 & 0.000 & 0.044 & 0.041 & 0.028 & 0.100 & 0.078 & 
0.070 \\ 
1 & ~ & 0.004 & 0.002 & 0.000 & 0.043 & 0.040 & 0.027 & 0.091 & 0.072 & 0.065
\\ 
\multicolumn{11}{c}{} \\ 
\midrule \bottomrule &  &  &  &  &  &  &  &  &  & 
\end{tabular}
}}
\par
{\scriptsize \ }
\par
{\scriptsize {\footnotesize \ } }
\par
{\scriptsize {\footnotesize 
\begin{tablenotes}
      \tiny
            \item The table contains the empirical rejection frequencies under the null of no changepoint for different sample sizes and different values of $\eta$, for the case of the static regression model (\ref{dgp}), and $d=8$ - i.e. 7 exogenous regressors and the constant. The other specifications of the model are the same as for Table \ref{tab:Size1}. 
            
\end{tablenotes}
} }
\end{table*}

\clearpage

\newpage

\subsection{Detection delays\label{delays_further}}

\begin{table*}[b]
\caption{{\protect\footnotesize {Median delays under an early occurring
changepoint - dynamic regression}}}
\label{tab:DelayE2}\centering
\par
{\scriptsize {\ }}
\par
{\scriptsize 
\begin{tabular}{llllllllllllllll}
\multicolumn{16}{c}{} \\ 
$a_m$ & ~ & $\eta$ & 0 & 0.15 & 0.25 & 0.35 & 0.45 & 0.49 & 0.5 & 0.51 & 0.55
& 0.65 & 0.75 & 0.85 & 1 \\ 
\multicolumn{16}{c}{} \\ 
\multicolumn{16}{c}{\ ~~~~~~~~$m=300$~~~~~} \\ 
\multicolumn{16}{c}{} \\ 
$\ln \ln m$ & ~ & ~ & 19 & 12 & 8 & 5 & 3 & 2 & 2 & 2 & 2 & 2 & 2 & 1 & 1 \\ 
$\ln m$ & ~ & ~ & 20 & 13 & 9 & 7 & 5 & 4.5 & 5 & 4 & 4 & 4 & 3 & 3 & 3 \\ 
$\ln^ m$ & ~ & ~ & 21 & 16 & 13 & 11 & 10 & 10 & 11 & 10 & 9 & 8 & 8 & 7 & 7
\\ 
\multicolumn{16}{c}{} \\ \hline
\multicolumn{16}{c}{} \\ 
\multicolumn{16}{c}{\ ~~~~~~~~$m=500$~~~~~} \\ 
\multicolumn{16}{c}{} \\ 
$\ln \ln m$ & ~ & ~ & 25 & 15 & 10 & 6 & 3 & 2 & 3 & 2 & 2 & 2 & 2 & 1 & 1
\\ 
$\ln m$ & ~ & ~ & 25 & 15 & 11 & 7 & 5 & 5 & 5 & 5 & 4 & 4 & 4 & 4 & 4 \\ 
$\ln^2 m$ & ~ & ~ & 26 & 19 & 15 & 13 & 11 & 11 & 12 & 11 & 10 & 8 & 8 & 8 & 
8 \\ 
\multicolumn{16}{c}{} \\ \hline
\multicolumn{16}{c}{} \\ 
\multicolumn{16}{c}{\ ~~~~~~~~$m=1000$~~~~~} \\ 
\multicolumn{16}{c}{} \\ 
$\ln \ln m$ & ~ & ~ & 35 & 20 & 12 & 7 & 3 & 3 & 3 & 2 & 2 & 2 & 2 & 2 & 1
\\ 
$\ln m$ & ~ & ~ & 35 & 20 & 13 & 8 & 6 & 5 & 5 & 5 & 4 & 4 & 4 & 4 & 4 \\ 
$\ln^2 m$ & ~ & ~ & 36 & 25 & 19 & 15 & 12 & 12 & 13 & 12 & 11 & 10 & 9 & 9
& 9 \\ 
\multicolumn{16}{c}{} \\ 
\bottomrule \bottomrule &  &  &  &  &  &  &  &  &  &  &  &  &  &  & 
\end{tabular}
}
\par
{\scriptsize \ }
\par
{\scriptsize {\footnotesize 
\begin{tablenotes}
      \tiny
            \item The table contains the median delay under the alternative of an \textit{early} occurring changepoint, for different sample sizes, different values of $\eta$, and different trimming sequence $a_m$, in the case of the dynamic regression model (\ref{dgp}), with $d=4$ - i.e., 3 exogenous regressors and the constant. The specifications of the model are described in the main text.
            
\end{tablenotes}
} }
\end{table*}

\medskip

\begin{table*}[h!]
\caption{{\protect\footnotesize {Median delays under an early occurring
changepoint - dynamic regression}}}
\label{tab:DelayE3}\centering
\par
{\scriptsize {\ }}
\par
{\scriptsize 
\begin{tabular}{llllllllllllllll}
\multicolumn{16}{c}{} \\ 
$a_m$ & ~ & $\eta$ & 0 & 0.15 & 0.25 & 0.35 & 0.45 & 0.49 & 0.5 & 0.51 & 0.55
& 0.65 & 0.75 & 0.85 & 1 \\ 
\multicolumn{16}{c}{} \\ 
\multicolumn{16}{c}{\ ~~~~~~~~$m=300$~~~~~} \\ 
\multicolumn{16}{c}{} \\ 
$\ln \ln m$ & ~ & ~ & 14 & 8 & 5 & 3 & 2 & 2 & 2 & 1 & 1 & 1 & 1 & 1 & 1 \\ 
$\ln m$ & ~ & ~ & 14 & 9 & 6 & 5 & 3 & 3 & 3 & 3 & 3 & 2 & 2 & 2 & 2 \\ 
$\ln^2 m$ & ~ & ~ & 16 & 12 & 10 & 8 & 7 & 7 & 8 & 7 & 7 & 6 & 6 & 5 & 5 \\ 
\multicolumn{16}{c}{} \\ \hline
\multicolumn{16}{c}{} \\ 
\multicolumn{16}{c}{\ ~~~~~~~~$m=500$~~~~~} \\ 
\multicolumn{16}{c}{} \\ 
$\ln \ln m$ & ~ & ~ & 18 & 10 & 6 & 3 & 2 & 2 & 2 & 1 & 1 & 1 & 1 & 1 & 1 \\ 
$\ln m$ & ~ & ~ & 18 & 11 & 8 & 5 & 4 & 4 & 4 & 3 & 3 & 3 & 3 & 3 & 3 \\ 
$\ln^2 m$ & ~ & ~ & 19 & 14 & 11 & 9 & 8 & 8 & 8 & 8 & 7 & 6 & 6 & 6 & 6 \\ 
\multicolumn{16}{c}{} \\ \hline
\multicolumn{16}{c}{} \\ 
\multicolumn{16}{c}{\ ~~~~~~~~$m=1000$~~~~~} \\ 
\multicolumn{16}{c}{} \\ 
$\ln \ln m$ & ~ & ~ & 26 & 14 & 8 & 4 & 2 & 2 & 2 & 1 & 1 & 1 & 1 & 1 & 1 \\ 
$\ln m$ & ~ & ~ & 26 & 15 & 10 & 6 & 4 & 3 & 4 & 3 & 3 & 3 & 3 & 3 & 3 \\ 
$\ln^2 m$ & ~ & ~ & 28 & 18 & 14 & 11 & 9 & 9 & 10 & 9 & 8 & 7 & 7 & 6 & 6
\\ 
\multicolumn{16}{c}{} \\ 
\bottomrule \bottomrule &  &  &  &  &  &  &  &  &  &  &  &  &  &  & 
\end{tabular}
}
\par
{\scriptsize {\footnotesize 
\begin{tablenotes}
      \tiny
            \item The table contains the median delay under the alternative of an \textit{early} occurring changepoint, for different sample sizes, different values of $\eta$, and different trimming sequence $a_m$, in the case of the dynamic regression model (\ref{dgp}), with $d=8$ - i.e., 7 exogenous regressors and the constant. The specifications of the model are described in the main text.
            
\end{tablenotes}
} }
\end{table*}

\medskip

\begin{table*}[h!]
\caption{{\protect\footnotesize {Median delays under a late occurring
changepoint - dynamic regression}}}
\label{tab:DelayL2}\centering
\par
{\scriptsize {\ 
\begin{tabular}{llllllllllllllll}
\multicolumn{16}{c}{} \\ 
$a_m$ & ~ & $\eta$ & 0 & 0.15 & 0.25 & 0.35 & 0.45 & 0.49 & 0.5 & 0.51 & 0.55
& 0.65 & 0.75 & 0.85 & 1 \\ 
\multicolumn{16}{c}{} \\ 
\multicolumn{16}{c}{\ ~~~~~~~~$m=300$~~~~~} \\ 
\multicolumn{16}{c}{} \\ 
$\ln \ln m$ & ~ & ~ & 25 & 21 & 19 & 18 & 18 & 19 & 20 & 19 & 21 & 31 & 53 & 
103 & 163.5 \\ 
$\ln m$ & ~ & ~ & 26 & 22 & 20 & 18 & 18 & 19 & 20 & 19 & 20 & 24 & 32 & 45
& 88 \\ 
$\ln^2 m$ & ~ & ~ & 28 & 24 & 22 & 21 & 21 & 22 & 23 & 22 & 21 & 21 & 23 & 26
& 31 \\ 
\multicolumn{16}{c}{} \\ \hline
\multicolumn{16}{c}{} \\ 
\multicolumn{16}{c}{\ ~~~~~~~~$m=500$~~~~~} \\ 
\multicolumn{16}{c}{} \\ 
$\ln \ln m$ & ~ & ~ & 29 & 23 & 21 & 18 & 17 & 18 & 19 & 18 & 20 & 30 & 52 & 
103 & 305.5 \\ 
$\ln m$ & ~ & ~ & 30 & 24 & 21 & 18 & 17 & 18 & 19 & 18 & 19 & 23 & 30 & 41
& 77 \\ 
$\ln^2 m$ & ~ & ~ & 31 & 26 & 23 & 21 & 20 & 21 & 23 & 21 & 20 & 20 & 22 & 24
& 29 \\ 
\multicolumn{16}{c}{} \\ \hline
\multicolumn{16}{c}{} \\ 
\multicolumn{16}{c}{\ ~~~~~~~~$m=1000$~~~~~} \\ 
\multicolumn{16}{c}{} \\ 
$\ln \ln m$ & ~ & ~ & 38 & 28 & 23 & 19 & 17 & 17 & 18.5 & 18 & 19 & 29 & 50
& 99 & 599 \\ 
$\ln m $ & ~ & ~ & 38 & 28 & 24 & 20 & 17 & 18 & 19 & 18 & 18 & 22 & 29 & 41
& 77 \\ 
$\ln^2 m$ & ~ & ~ & 39 & 30 & 26 & 22 & 20 & 21 & 22 & 21 & 20 & 20 & 21 & 23
& 27 \\ 
\multicolumn{16}{c}{} \\ 
\bottomrule \bottomrule &  &  &  &  &  &  &  &  &  &  &  &  &  &  & 
\end{tabular}
}}
\par
{\scriptsize \ }
\par
{\scriptsize {\footnotesize 
\begin{tablenotes}
      \tiny
            \item The table contains the median delay under the alternative of a \textit{late} occurring changepoint, for different sample sizes, different values of $\eta$, and different trimming sequence $a_m$, in the case of the dynamic regression model (\ref{dgp}), with $d=4$ - i.e., 3 exogenous regressors and the constant. The specifications of the model are described in the main text.
            
\end{tablenotes}
} }
\end{table*}

\medskip

\begin{table*}[h!]
\caption{{\protect\footnotesize {Median delays under a late occurring
changepoint - dynamic regression}}}
\label{tab:DelayL3}\centering
{\footnotesize {\ }}
\par
{\scriptsize {\ 
\begin{tabular}{llllllllllllllll}
\multicolumn{16}{c}{} \\ 
$a_m$ & ~ & $\eta$ & 0 & 0.15 & 0.25 & 0.35 & 0.45 & 0.49 & 0.5 & 0.51 & 0.55
& 0.65 & 0.75 & 0.85 & 1 \\ 
\multicolumn{16}{c}{} \\ 
\multicolumn{16}{c}{\ ~~~~~~~~$m=300$~~~~~} \\ 
\multicolumn{16}{c}{} \\ 
$\ln \ln m$ & ~ & ~ & 18 & 16 & 14 & 13 & 13 & 13 & 14 & 14 & 15 & 22 & 36 & 
64 & 155 \\ 
$\ln m$ & ~ & ~ & 18 & 16 & 14 & 13 & 13 & 14 & 14 & 14 & 14 & 17 & 22 & 30.5
& 54 \\ 
$\ln^2 m$ & ~ & ~ & 20 & 17 & 16 & 15 & 15 & 16 & 17 & 16 & 15 & 15 & 17 & 18
& 22 \\ 
\multicolumn{16}{c}{} \\ \hline
\multicolumn{16}{c}{} \\ 
\multicolumn{16}{c}{\ ~~~~~~~~$m=500$~~~~~} \\ 
\multicolumn{16}{c}{} \\ 
$\ln \ln m$ & ~ & ~ & 22 & 18 & 16 & 14 & 13 & 13 & 14 & 14 & 15 & 22 & 35 & 
64 & 248 \\ 
$\ln m$ & ~ & ~ & 22 & 18 & 16 & 14 & 13 & 14 & 15 & 14 & 14 & 17 & 21 & 29
& 49 \\ 
$\ln^2 m$ & ~ & ~ & 23 & 19 & 17 & 16 & 15 & 16 & 17 & 16 & 15 & 15 & 16 & 18
& 21 \\ 
\multicolumn{16}{c}{} \\ \hline
\multicolumn{16}{c}{} \\ 
\multicolumn{16}{c}{\ ~~~~~~~~$m=1000$~~~~~} \\ 
\multicolumn{16}{c}{} \\ 
$\ln \ln m$ & ~ & ~ & 29 & 21 & 18 & 14 & 13 & 13 & 14 & 13 & 14 & 22 & 35 & 
64 & 346 \\ 
$\ln m$ & ~ & ~ & 29 & 21 & 18 & 15 & 13 & 13 & 14 & 13 & 13 & 16 & 21 & 29
& 50 \\ 
$\ln^2 m$ & ~ & ~ & 30 & 23 & 20 & 17 & 16 & 16 & 17 & 16 & 15 & 15 & 16 & 17
& 20 \\ 
\multicolumn{16}{c}{} \\ 
\bottomrule \bottomrule &  &  &  &  &  &  &  &  &  &  &  &  &  &  & 
\end{tabular}
} }
\par
{\scriptsize {\footnotesize 
\begin{tablenotes}
      \tiny
            \item The table contains the median delay under the alternative of a \textit{late} occurring changepoint, for different sample sizes, different values of $\eta$, and different trimming sequence $a_m$, in the case of the dynamic regression model (\ref{dgp}), with $d=8$ - i.e., 7 exogenous regressors and the constant. The specifications of the model are described in the main text.
            
\end{tablenotes}
} }
\end{table*}

\medskip

\begin{table*}[h!]
\caption{{\protect\footnotesize {Delays under an early occurring changepoint
- dynamic regression, full descriptive statistics}}}
\label{tab:DelayEF1}\centering
{\footnotesize {\ }}
\par
{\scriptsize {\ }}
\par
{\scriptsize 
\begin{tabular}{llllllllllllllll}
\multicolumn{16}{c}{Summary Statistics for Delays, Early Break, $m=500$, $%
d=2 $} \\ 
\multicolumn{16}{c}{} \\ 
$a_m$ & ~ & $\eta$ & 0 & 0.15 & 0.25 & 0.35 & 0.45 & 0.49 & 0.5 & 0.51 & 0.55
& 0.65 & 0.75 & 0.85 & 1 \\ 
\multicolumn{16}{c}{} \\ 
$\ln \ln m$ & Min & ~ & 11 & 6 & 3 & 1 & 1 & 1 & 1 & 1 & 1 & 1 & 1 & 1 & 1
\\ 
~ & Q1 & ~ & 29 & 17 & 10 & 6 & 3 & 2 & 2 & 2 & 2 & 1 & 1 & 1 & 1 \\ 
~ & MED & ~ & 35 & 22 & 15 & 9 & 5 & 4 & 4 & 4 & 3 & 3 & 3 & 3 & 1 \\ 
~ & ARL & ~ & 36.25 & 23.19 & 16.31 & 10.72 & 6.97 & 6.15 & 6.42 & 5.84 & 
5.49 & 5.99 & 8.05 & 15.75 & 4.03 \\ 
~ & Q3 & ~ & 42 & 28 & 21 & 14 & 9 & 8 & 9 & 8 & 7 & 8 & 10 & 16 & 3 \\ 
~ & Max & ~ & 103 & 83 & 70 & 66 & 65 & 65 & 66 & 66 & 66 & 84 & 123 & 430 & 
437 \\ 
\multicolumn{16}{c}{} \\ \hline
\multicolumn{16}{c}{} \\ 
$\ln m$ & Min & ~ & 13 & 5 & 3 & 2 & 1 & 1 & 1 & 1 & 1 & 1 & 1 & 1 & 1 \\ 
~ & Q1 & ~ & 29 & 17 & 12 & 8 & 5 & 5 & 5 & 4 & 4 & 3 & 3 & 3 & 3 \\ 
~ & MED & ~ & 35 & 22 & 16 & 11 & 8 & 8 & 8 & 7 & 7 & 6 & 6 & 6 & 6.5 \\ 
~ & ARL & ~ & 36.61 & 24.18 & 17.92 & 12.89 & 9.82 & 9.32 & 9.76 & 8.93 & 
8.23 & 7.84 & 8.09 & 8.69 & 11.49 \\ 
~ & Q3 & ~ & 43 & 29 & 22 & 16 & 13 & 12 & 13 & 11 & 11 & 10 & 10 & 11 & 13
\\ 
~ & Max & ~ & 99 & 83 & 76 & 65 & 63 & 63 & 64 & 63 & 63 & 65 & 83 & 104 & 
310 \\ 
\multicolumn{16}{c}{} \\ \hline
\multicolumn{16}{c}{} \\ 
$\ln^2 m$ & Min & ~ & 11 & 5 & 3 & 2 & 1 & 1 & 1 & 1 & 1 & 1 & 1 & 1 & 1 \\ 
~ & Q1 & ~ & 31 & 21 & 17 & 13 & 11 & 11 & 12 & 11 & 9 & 8 & 8 & 7 & 7 \\ 
~ & MED & ~ & 38 & 28 & 23 & 19 & 16 & 16 & 17 & 16 & 14 & 13 & 12 & 12 & 12
\\ 
~ & ARL & ~ & 39.32 & 29.26 & 24.13 & 19.95 & 17.51 & 17.65 & 18.75 & 17.20
& 15.61 & 14.23 & 13.94 & 13.70 & 13.92 \\ 
~ & Q3 & ~ & 46 & 35 & 30 & 25 & 22 & 23 & 24 & 22 & 20 & 19 & 19 & 19 & 19
\\ 
~ & Max & ~ & 90 & 80 & 73 & 63 & 60 & 62 & 67 & 62 & 60 & 59 & 60 & 60 & 70
\\ 
\multicolumn{16}{c}{} \\ 
\bottomrule \bottomrule &  &  &  &  &  &  &  &  &  &  &  &  &  &  & 
\end{tabular}
}
\par
{\scriptsize \ }
\par
{\scriptsize {\footnotesize 
\begin{tablenotes}
      \tiny
            \item The table contains several descriptive statistics for the delay under the alternative of an \textit{early} occurring changepoint, for $m=500$ and different values of $\eta$ and different trimming sequence $a_m$, in the case of the dynamic regression model (\ref{dgp}), with $d=2$ - i.e., one exogenous regressor and the constant. The specifications of the model are described in the main text. 
            \item The descriptive statistics reported in the table are: the minimum and maximum values ($\min$ and $\max$ respectively), the first and third quartiles (Q1 and Q3 respectively), the average delay (denoted as ARL), and the median delay (denoted as MED).
            
\end{tablenotes}
} }
\end{table*}

\medskip

\begin{table*}[h!]
\caption{{\protect\footnotesize {Delays under an early occurring changepoint
- dynamic regression, full descriptive statistics}}}
\label{tab:DelayEF2}\centering
{\footnotesize {\ }}
\par
{\scriptsize {\ }}
\par
{\scriptsize 
\begin{tabular}{llllllllllllllll}
\multicolumn{16}{c}{} \\ 
$a_m$ & ~ & $\eta$ & 0 & 0.15 & 0.25 & 0.35 & 0.45 & 0.49 & 0.5 & 0.51 & 0.55
& 0.65 & 0.75 & 0.85 & 1 \\ 
\multicolumn{16}{c}{} \\ 
$\ln \ln m$ & Min & ~ & 8 & 3 & 2 & 1 & 1 & 1 & 1 & 1 & 1 & 1 & 1 & 1 & 1 \\ 
~ & Q1 & ~ & 19 & 10 & 6 & 3 & 2 & 1 & 1 & 1 & 1 & 1 & 1 & 1 & 1 \\ 
~ & MED & ~ & 25 & 15 & 10 & 6 & 3 & 2 & 3 & 2 & 2 & 2 & 2 & 1 & 1 \\ 
~ & ARL & ~ & 26.32 & 16.35 & 11.47 & 7.41 & 4.77 & 4.13 & 4.35 & 3.93 & 3.69
& 3.79 & 4.31 & 5.53 & 12.08 \\ 
~ & Q3 & ~ & 32 & 21 & 15 & 10 & 6 & 5 & 6 & 5 & 4 & 4 & 4 & 5 & 5 \\ 
~ & Max & ~ & 81 & 53 & 51 & 45 & 42 & 42 & 44 & 42 & 44 & 47 & 70 & 130 & 
481 \\ 
\multicolumn{16}{c}{} \\ \hline
\multicolumn{16}{c}{} \\ 
$\ln m$ & Min & ~ & 8 & 4 & 2 & 1 & 1 & 1 & 1 & 1 & 1 & 1 & 1 & 1 & 1 \\ 
~ & Q1 & ~ & 19 & 11 & 7 & 5 & 3 & 3 & 3 & 3 & 3 & 2 & 2 & 2 & 2 \\ 
~ & MED & ~ & 25 & 15 & 11 & 7 & 5 & 5 & 5 & 5 & 4 & 4 & 4 & 4 & 4 \\ 
~ & ARL & ~ & 26.36 & 17.03 & 12.61 & 9.20 & 7.03 & 6.60 & 6.93 & 6.38 & 5.89
& 5.55 & 5.59 & 5.74 & 6.47 \\ 
~ & Q3 & ~ & 32 & 21 & 16 & 12 & 9 & 9 & 9 & 8 & 8 & 7 & 7 & 7 & 8 \\ 
~ & Max & ~ & 80 & 55 & 47 & 41 & 38 & 38 & 39 & 38 & 37 & 34 & 41 & 43 & 109
\\ 
\multicolumn{16}{c}{} \\ \hline
\multicolumn{16}{c}{} \\ 
$\ln^2 m$ & Min & ~ & 8 & 4 & 1 & 1 & 1 & 1 & 1 & 1 & 1 & 1 & 1 & 1 & 1 \\ 
~ & Q1 & ~ & 21 & 14 & 11 & 8.25 & 7 & 7 & 8 & 7 & 6 & 5 & 5 & 5 & 5 \\ 
~ & MED & ~ & 26 & 19 & 15 & 13 & 11 & 11 & 12 & 11 & 10 & 8 & 8 & 8 & 8 \\ 
~ & ARL & ~ & 28.19 & 20.86 & 17.19 & 14.19 & 12.46 & 12.49 & 13.22 & 12.17
& 11.14 & 10.14 & 9.83 & 9.59 & 9.61 \\ 
~ & Q3 & ~ & 34 & 26 & 21 & 18 & 16 & 16 & 17 & 16 & 15 & 13 & 13 & 13 & 13
\\ 
~ & Max & ~ & 82 & 73 & 69 & 58 & 58 & 58 & 64 & 58 & 51 & 50 & 50 & 51 & 66
\\ 
\multicolumn{16}{c}{} \\ 
\bottomrule \bottomrule &  &  &  &  &  &  &  &  &  &  &  &  &  &  & 
\end{tabular}
}
\par
{\scriptsize \ }
\par
{\scriptsize {\footnotesize 
\begin{tablenotes}
      \tiny
            \item The table contains several descriptive statistics for the delay under the alternative of an \textit{early} occurring changepoint, for $m=500$ and different values of $\eta$ and different trimming sequence $a_m$, in the case of the dynamic regression model (\ref{dgp}), with $d=4$ - i.e., 3 exogenous regressors and the constant. The specifications of the model are described in the main text; see the notes to Table \ref{tab:DelayEF1} for an explanation of the descriptive statistics. 
                        
\end{tablenotes}
} }
\end{table*}

\medskip

\begin{table*}[h!]
\caption{{\protect\footnotesize {Delays under an early occurring changepoint
- dynamic regression, full descriptive statistics}}}
\label{tab:DelayEF3}\centering
{\footnotesize {\ }}
\par
{\scriptsize {\ }}
\par
{\scriptsize 
\begin{tabular}{llllllllllllllll}
\multicolumn{16}{c}{} \\ 
$a_m$ & ~ & $\eta$ & 0 & 0.15 & 0.25 & 0.35 & 0.45 & 0.49 & 0.5 & 0.51 & 0.55
& 0.65 & 0.75 & 0.85 & 1 \\ 
\multicolumn{16}{c}{} \\ 
$\ln \ln m$ & Min & ~ & 5 & 2 & 1 & 1 & 1 & 1 & 1 & 1 & 1 & 1 & 1 & 1 & 1 \\ 
~ & Q1 & ~ & 14 & 7 & 4 & 2 & 1 & 1 & 1 & 1 & 1 & 1 & 1 & 1 & 1 \\ 
~ & MED & ~ & 18 & 10 & 6 & 3 & 2 & 2 & 2 & 1 & 1 & 1 & 1 & 1 & 1 \\ 
~ & ARL & ~ & 20.06 & 12.17 & 8.34 & 5.29 & 3.32 & 2.91 & 2.99 & 2.75 & 2.56
& 2.53 & 2.74 & 3.12 & 6.01 \\ 
~ & Q3 & ~ & 24 & 15 & 11 & 7 & 4 & 3 & 4 & 3 & 3 & 3 & 3 & 3 & 3 \\ 
~ & Max & ~ & 80 & 78 & 50 & 45 & 31 & 31 & 31 & 31 & 31 & 50 & 52 & 58 & 415
\\ 
\multicolumn{16}{c}{} \\ \hline
\multicolumn{16}{c}{} \\ 
$\ln m$ & Min & ~ & 5 & 3 & 1 & 1 & 1 & 1 & 1 & 1 & 1 & 1 & 1 & 1 & 1 \\ 
~ & Q1 & ~ & 14 & 8 & 5 & 3 & 2 & 2 & 2 & 2 & 2 & 2 & 1 & 1 & 1 \\ 
~ & MED & ~ & 18 & 11 & 8 & 5 & 4 & 4 & 4 & 3 & 3 & 3 & 3 & 3 & 3 \\ 
~ & ARL & ~ & 20.27 & 13.02 & 9.64 & 6.91 & 5.27 & 4.93 & 5.16 & 4.78 & 4.35
& 4.01 & 3.99 & 4.03 & 4.28 \\ 
~ & Q3 & ~ & 25 & 16 & 12 & 9 & 7 & 6 & 6 & 6 & 5 & 5 & 5 & 5 & 5 \\ 
~ & Max & ~ & 79 & 74 & 71 & 61 & 61 & 61 & 61 & 61 & 41 & 42 & 47 & 73 & 100
\\ 
\multicolumn{16}{c}{} \\ \hline
\multicolumn{16}{c}{} \\ 
$\ln^2 m$ & Min & ~ & 3 & 2 & 1 & 1 & 1 & 1 & 1 & 1 & 1 & 1 & 1 & 1 & 1 \\ 
~ & Q1 & ~ & 14 & 10 & 8 & 6 & 5 & 5 & 5 & 5 & 4 & 4 & 3 & 3 & 3 \\ 
~ & MED & ~ & 19 & 14 & 11 & 9 & 8 & 8 & 8 & 8 & 7 & 6 & 6 & 6 & 6 \\ 
~ & ARL & ~ & 21.38 & 15.96 & 13.29 & 11.07 & 9.86 & 9.87 & 10.30 & 9.57 & 
8.65 & 7.86 & 7.60 & 7.36 & 7.31 \\ 
~ & Q3 & ~ & 26 & 20 & 17 & 14 & 13 & 13 & 13 & 12 & 11 & 10 & 10 & 10 & 10
\\ 
~ & Max & ~ & 73 & 68 & 66 & 62 & 62 & 62 & 62 & 62 & 62 & 57 & 62 & 62 & 63
\\ 
\multicolumn{16}{c}{} \\ 
\bottomrule \bottomrule &  &  &  &  &  &  &  &  &  &  &  &  &  &  & 
\end{tabular}
}
\par
{\scriptsize \ }
\par
{\scriptsize {\footnotesize 
\begin{tablenotes}
      \tiny
            \item The table contains several descriptive statistics for the delay under the alternative of an \textit{early} occurring changepoint, for $m=500$ and different values of $\eta$ and different trimming sequence $a_m$, in the case of the dynamic regression model (\ref{dgp}), with $d=8$ - i.e., 7 exogenous regressors and the constant. The specifications of the model are described in the main text; see the notes to Table \ref{tab:DelayEF1} for an explanation of the descriptive statistics.             
\end{tablenotes}
} }
\end{table*}

\medskip

\begin{table*}[h!]
\caption{{\protect\footnotesize {Delays under a late occurring changepoint -
dynamic regression, full descriptive statistics}}}
\label{tab:DelayLF1}\centering
{\footnotesize {\ }}
\par
{\scriptsize {\ 
\begin{tabular}{llllllllllllllll}
\multicolumn{16}{c}{} \\ 
$a_m$ & ~ & $\eta$ & 0 & 0.15 & 0.25 & 0.35 & 0.45 & 0.49 & 0.5 & 0.51 & 0.55
& 0.65 & 0.75 & 0.85 & 1 \\ 
\multicolumn{16}{c}{} \\ 
$\ln \ln m$ & Min & ~ & 11 & 6 & 3 & 2 & 1 & 1 & 2 & 2 & 3 & 9 & 25 & 53 & NA
\\ 
~ & Q1 & ~ & 34 & 26 & 22 & 19 & 18 & 19 & 20 & 19 & 22 & 36 & 66 & 156 & NA
\\ 
~ & MED & ~ & 42 & 34 & 30 & 27 & 25 & 26 & 28 & 27 & 30 & 46 & 84 & 202 & NA
\\ 
~ & ARL & ~ & 43.81 & 35.52 & 31.58 & 28.20 & 26.80 & 27.82 & 29.66 & 28.76
& 31.59 & 48.18 & 87.92 & 211.47 & NA \\ 
~ & Q3 & ~ & 52 & 43 & 39 & 35 & 34 & 35 & 37 & 36 & 39 & 59 & 105 & 259 & NA
\\ 
~ & Max & ~ & 110 & 101 & 94 & 84 & 83 & 93 & 96 & 94 & 101 & 128 & 221 & 400
& NA \\ 
\multicolumn{16}{c}{} \\ \hline
\multicolumn{16}{c}{} \\ 
$\ln m$ & Min & ~ & 7 & 5 & 3 & 1 & 1 & 1 & 1 & 1 & 1 & 4 & 8 & 14 & 36 \\ 
~ & Q1 & ~ & 34 & 26 & 23 & 20 & 18 & 19 & 21 & 20 & 20 & 25 & 36 & 51 & 115
\\ 
~ & MED & ~ & 43 & 35 & 31 & 27 & 26 & 27 & 29 & 27 & 28 & 35 & 46.5 & 66 & 
152 \\ 
~ & ARL & ~ & 44.27 & 36.15 & 32.25 & 28.83 & 27.45 & 28.57 & 30.56 & 29.01
& 29.56 & 36.24 & 48.93 & 69.85 & 164.58 \\ 
~ & Q3 & ~ & 53 & 44 & 40 & 36 & 35 & 36 & 38 & 36 & 37 & 45 & 60 & 85 & 
202.5 \\ 
~ & Max & ~ & 106 & 99 & 96 & 91 & 91 & 95 & 98 & 95 & 96 & 106 & 136 & 194
& 395 \\ 
\multicolumn{16}{c}{} \\ \hline
\multicolumn{16}{c}{} \\ 
$\ln^2 m$ & Min & ~ & 8 & 2 & 1 & 1 & 1 & 1 & 2 & 1 & 1 & 1 & 1 & 3 & 2 \\ 
~ & Q1 & ~ & 35 & 28 & 25 & 22 & 21 & 22 & 24 & 22 & 21 & 22 & 24 & 27 & 33
\\ 
~ & MED & ~ & 45 & 37 & 34 & 31 & 30 & 31 & 33 & 31 & 29 & 30 & 33 & 37 & 45
\\ 
~ & ARL & ~ & 46.40 & 39.02 & 35.39 & 32.33 & 31.38 & 32.90 & 35.11 & 32.86
& 31.28 & 31.92 & 34.87 & 38.58 & 47.22 \\ 
~ & Q3 & ~ & 56 & 48 & 44 & 41 & 40 & 41 & 44 & 41 & 40 & 40 & 44 & 48 & 58
\\ 
~ & Max & ~ & 111 & 102 & 100 & 98 & 97 & 99 & 102 & 99 & 98 & 99 & 103 & 111
& 135 \\ 
\multicolumn{16}{c}{} \\ 
\bottomrule \bottomrule &  &  &  &  &  &  &  &  &  &  &  &  &  &  & 
\end{tabular}
}}
\par
{\scriptsize \ }
\par
{\scriptsize {\footnotesize 
\begin{tablenotes}
      \tiny
            \item The table contains several descriptive statistics for the delay under the alternative of a \textit{late} occurring changepoint, for $m=500$ and different values of $\eta$ and different trimming sequence $a_m$, in the case of the dynamic regression model (\ref{dgp}), with $d=2$ - i.e., one exogenous regressor and the constant. The specifications of the model are described in the main text; see the notes to Table \ref{tab:DelayEF1} for an explanation of the descriptive statistics.             

\end{tablenotes}
} }
\end{table*}

\medskip

\begin{table*}[h!]
\caption{{\protect\footnotesize {Delays under a late occurring changepoint -
dynamic regression, full descriptive statistics}}}
\label{tab:DelayLF2}\centering
{\footnotesize {\ }}
\par
{\scriptsize {\ }}
\par
{\scriptsize 
\begin{tabular}{llllllllllllllll}
\multicolumn{16}{c}{} \\ 
$a_m$ & ~ & $\eta$ & 0 & 0.15 & 0.25 & 0.35 & 0.45 & 0.49 & 0.5 & 0.51 & 0.55
& 0.65 & 0.75 & 0.85 & 1 \\ 
\multicolumn{16}{c}{} \\ 
$\ln \ln m$ & Min & ~ & 3 & 2 & 1 & 1 & 1 & 1 & 1 & 1 & 1 & 2 & 10 & 26 & 156
\\ 
~ & Q1 & ~ & 22 & 17 & 15 & 13 & 12 & 12 & 13 & 13 & 14 & 22 & 40 & 79 & 259
\\ 
~ & MED & ~ & 29 & 23 & 21 & 18 & 17 & 18 & 19 & 18 & 20 & 30 & 52 & 103 & 
305.5 \\ 
~ & ARL & ~ & 31.12 & 25.24 & 22.41 & 19.97 & 18.99 & 19.62 & 20.83 & 20.22
& 22.06 & 32.17 & 54.31 & 108.51 & 304.61 \\ 
~ & Q3 & ~ & 38 & 31 & 28 & 25 & 24 & 25 & 26 & 25 & 28 & 40 & 66 & 132 & 356
\\ 
~ & Max & ~ & 91 & 82 & 76 & 73 & 73 & 73 & 75 & 74 & 81 & 100 & 174 & 310 & 
400 \\ 
\multicolumn{16}{c}{} \\ \hline
\multicolumn{16}{c}{} \\ 
$\ln m$ & Min & ~ & 4 & 1 & 1 & 1 & 1 & 1 & 1 & 1 & 1 & 1 & 3 & 11 & 20 \\ 
~ & Q1 & ~ & 23 & 17 & 15 & 13 & 12 & 12 & 13 & 13 & 13 & 16 & 22 & 31 & 59
\\ 
~ & MED & ~ & 30 & 24 & 21 & 18 & 17 & 18 & 19 & 18 & 19 & 23 & 30 & 41 & 77
\\ 
~ & ARL & ~ & 31.41 & 25.48 & 22.63 & 20.15 & 19.20 & 19.82 & 21.18 & 20.09
& 20.43 & 24.57 & 32.30 & 43.90 & 82.97 \\ 
~ & Q3 & ~ & 39 & 32 & 28 & 26 & 24 & 25 & 27 & 26 & 26 & 31 & 41 & 54 & 102
\\ 
~ & Max & ~ & 87 & 76 & 74 & 68 & 67 & 72 & 74 & 73 & 73 & 86 & 98 & 142 & 
258 \\ 
\multicolumn{16}{c}{} \\ \hline
\multicolumn{16}{c}{} \\ 
$\ln^2 m$ & Min & ~ & 3 & 2 & 1 & 1 & 1 & 1 & 2 & 1 & 1 & 1 & 1 & 2 & 3 \\ 
~ & Q1 & ~ & 24 & 19 & 16 & 14 & 14 & 14.25 & 16 & 14 & 14 & 14 & 15 & 17 & 
21 \\ 
~ & MED & ~ & 31 & 26 & 23 & 21 & 20 & 21 & 23 & 21 & 20 & 20 & 22 & 24 & 29
\\ 
~ & ARL & ~ & 32.89 & 27.51 & 24.88 & 22.61 & 21.95 & 22.90 & 24.48 & 22.77
& 21.64 & 21.92 & 23.78 & 25.98 & 31.28 \\ 
~ & Q3 & ~ & 40 & 34 & 31 & 29 & 28 & 29 & 31 & 29 & 28 & 28 & 30 & 33 & 39
\\ 
~ & Max & ~ & 92 & 84 & 83 & 77 & 76 & 82 & 83 & 82 & 77 & 77 & 84 & 88 & 97
\\ 
\multicolumn{16}{c}{} \\ 
\bottomrule \bottomrule &  &  &  &  &  &  &  &  &  &  &  &  &  &  & 
\end{tabular}
}
\par
{\scriptsize {\footnotesize 
\begin{tablenotes}
      \tiny
            \item The table contains several descriptive statistics for the delay under the alternative of a \textit{late} occurring changepoint, for $m=500$ and different values of $\eta$ and different trimming sequence $a_m$, in the case of the dynamic regression model (\ref{dgp}), with $d=4$ - i.e., 3 exogenous regressors and the constant. The specifications of the model are described in the main text; see the notes to Table \ref{tab:DelayEF1} for an explanation of the descriptive statistics.             
\end{tablenotes}
} }
\end{table*}

\medskip

\begin{table*}[h!]
\caption{{\protect\footnotesize {Delays under a late occurring changepoint -
dynamic regression, full descriptive statistics}}}
\label{tab:DelayLF3}\centering
{\footnotesize {\ }}
\par
{\scriptsize {\ }}
\par
{\scriptsize 
\begin{tabular}{llllllllllllllll}
\multicolumn{16}{c}{} \\ 
$a_m$ & ~ & $\eta$ & 0 & 0.15 & 0.25 & 0.35 & 0.45 & 0.49 & 0.5 & 0.51 & 0.55
& 0.65 & 0.75 & 0.85 & 1 \\ 
\multicolumn{16}{c}{} \\ 
$\ln \ln m$ & Min & ~ & 4 & 1 & 1 & 1 & 1 & 1 & 1 & 1 & 1 & 3 & 7 & 16 & 65
\\ 
~ & Q1 & ~ & 16 & 12 & 11 & 9 & 8 & 9 & 9 & 9 & 10 & 15 & 26 & 49 & 186 \\ 
~ & MED & ~ & 22 & 18 & 16 & 14 & 13 & 13 & 14 & 14 & 15 & 22 & 35 & 64 & 248
\\ 
~ & ARL & ~ & 24.25 & 19.80 & 17.46 & 15.63 & 14.85 & 15.37 & 16.23 & 15.78
& 17.18 & 24.39 & 38.73 & 68.92 & 250.56 \\ 
~ & Q3 & ~ & 30 & 25 & 22 & 20 & 19 & 20 & 21 & 20 & 22 & 30 & 48 & 84 & 314
\\ 
~ & Max & ~ & 81 & 79 & 69 & 68 & 58 & 68 & 69 & 69 & 69 & 95 & 129 & 220 & 
400 \\ 
\multicolumn{16}{c}{} \\ \hline
\multicolumn{16}{c}{} \\ 
$\ln m$ & Min & ~ & 3 & 2 & 1 & 1 & 1 & 1 & 1 & 1 & 1 & 1 & 2 & 5 & 10 \\ 
~ & Q1 & ~ & 16 & 12 & 11 & 9 & 8 & 9 & 10 & 9 & 9 & 11 & 15 & 21 & 37 \\ 
~ & MED & ~ & 22 & 18 & 16 & 14 & 13 & 14 & 15 & 14 & 14 & 17 & 21 & 29 & 49
\\ 
~ & ARL & ~ & 24.48 & 20.02 & 17.86 & 16.06 & 15.26 & 15.77 & 16.70 & 15.98
& 16.19 & 19.19 & 24.33 & 32.08 & 54.21 \\ 
~ & Q3 & ~ & 30 & 25 & 23 & 21 & 20 & 21 & 22 & 21 & 21 & 24 & 31 & 40 & 67
\\ 
~ & Max & ~ & 90 & 78 & 65 & 64 & 64 & 64 & 65 & 64 & 65 & 80 & 93 & 124 & 
198 \\ 
\multicolumn{16}{c}{} \\ \hline
\multicolumn{16}{c}{} \\ 
$\ln^2 m$ & Min & ~ & 3 & 2 & 1 & 1 & 1 & 1 & 1 & 1 & 1 & 1 & 1 & 1 & 2 \\ 
~ & Q1 & ~ & 17 & 13 & 12 & 10 & 10 & 11 & 11 & 10 & 10 & 10 & 11 & 12 & 14
\\ 
~ & MED & ~ & 23 & 19 & 17 & 16 & 15 & 16 & 17 & 16 & 15 & 15 & 16 & 18 & 21
\\ 
~ & ARL & ~ & 25.58 & 21.35 & 19.42 & 17.81 & 17.23 & 17.95 & 18.97 & 17.82
& 17.01 & 17.22 & 18.50 & 20.04 & 23.54 \\ 
~ & Q3 & ~ & 32 & 27 & 24 & 23 & 22 & 23 & 24 & 23 & 22 & 22 & 23 & 25 & 30
\\ 
~ & Max & ~ & 111 & 90 & 89 & 89 & 89 & 89 & 90 & 89 & 89 & 89 & 90 & 111 & 
114 \\ 
\multicolumn{16}{c}{} \\ 
\bottomrule \bottomrule &  &  &  &  &  &  &  &  &  &  &  &  &  &  & 
\end{tabular}
}
\par
{\scriptsize \ }
\par
{\scriptsize {\footnotesize 
\begin{tablenotes}
      \tiny
            \item The table contains several descriptive statistics for the delay under the alternative of a \textit{late} occurring changepoint, for $m=500$ and different values of $\eta$ and different trimming sequence $a_m$, in the case of the dynamic regression model (\ref{dgp}), with $d=8$ - i.e., 7 exogenous regressors and the constant. The specifications of the model are described in the main text; see the notes to Table \ref{tab:DelayEF1} for an explanation of the descriptive statistics. 
                        
\end{tablenotes}
} }
\end{table*}

\medskip

\begin{table*}[h!]
\caption{{\protect\footnotesize {Delays under an early occurring changepoint
- static regression, full descriptive statistics}}}
\label{tab:DelayES1}\centering
{\footnotesize {\ }}
\par
{\footnotesize {\ }}
\par
{\scriptsize {\ }}
\par
{\scriptsize 
\begin{tabular}{llllllllllllllll}
\multicolumn{16}{c}{} \\ 
$a_m$ & ~ & $\eta$ & 0 & 0.15 & 0.25 & 0.35 & 0.45 & 0.49 & 0.5 & 0.51 & 0.55
& 0.65 & 0.75 & 0.85 & 1 \\ 
\multicolumn{16}{c}{} \\ 
$\ln \ln m$ & Min & ~ & 25 & 9 & 6 & 3 & 1 & 1 & 1 & 1 & 1 & 1 & 1 & 1 & 1
\\ 
~ & Q1 & ~ & 57 & 37 & 25 & 16 & 9 & 7 & 8 & 7 & 6 & 7 & 12 & 2 & 1 \\ 
~ & MED & ~ & 71 & 47 & 35 & 23 & 15 & 14 & 15 & 14 & 13 & 18 & 45.5 & 8 & 1
\\ 
~ & ARL & ~ & 72.90 & 50.66 & 38.21 & 27.42 & 19.58 & 18.40 & 19.73 & 18.08
& 18.35 & 27.62 & 80.74 & 62.63 & 1.48 \\ 
~ & Q3 & ~ & 85 & 62 & 48 & 35 & 26 & 24.25 & 26 & 24 & 24 & 38 & 117.75 & 
57.75 & 1 \\ 
~ & Max & ~ & 178 & 134 & 130 & 126 & 125 & 126 & 128 & 127 & 130 & 219 & 499
& 499 & 9 \\ 
\multicolumn{16}{c}{} \\ \hline
\multicolumn{16}{c}{} \\ 
$\ln m$ & Min & ~ & 25 & 14 & 5 & 2 & 1 & 1 & 1 & 1 & 1 & 1 & 1 & 1 & 1 \\ 
~ & Q1 & ~ & 58 & 38 & 27 & 18 & 12 & 12 & 12 & 11 & 10 & 9 & 9 & 10 & 7 \\ 
~ & MED & ~ & 71 & 49 & 37 & 27 & 20 & 18 & 20 & 18 & 16 & 16 & 19 & 23 & 18
\\ 
~ & ARL & ~ & 73.82 & 52.21 & 40.55 & 30.09 & 23.63 & 22.88 & 24.70 & 22.38
& 20.90 & 22.11 & 27.91 & 44.15 & 49.91 \\ 
~ & Q3 & ~ & 87 & 63 & 51 & 38 & 31 & 30 & 33 & 30 & 28 & 30 & 37 & 53 & 51
\\ 
~ & Max & ~ & 191 & 163 & 128 & 125 & 124 & 125 & 127 & 125 & 126 & 137 & 276
& 478 & 491 \\ 
\multicolumn{16}{c}{} \\ \hline
\multicolumn{16}{c}{} \\ 
$\ln^2 m$ & Min & ~ & 25 & 8 & 4 & 1 & 1 & 1 & 2 & 1 & 1 & 1 & 1 & 1 & 1 \\ 
~ & Q1 & ~ & 61 & 44 & 35 & 28 & 23 & 24 & 26 & 23 & 20 & 17 & 17 & 16 & 16
\\ 
~ & MED & ~ & 76 & 57 & 48 & 40 & 35 & 36 & 39 & 35 & 32 & 28 & 28 & 28 & 31
\\ 
~ & ARL & ~ & 78.96 & 60.40 & 51.00 & 42.74 & 38.24 & 39.11 & 42.07 & 38.19
& 34.59 & 31.95 & 32.45 & 33.13 & 37.38 \\ 
~ & Q3 & ~ & 94 & 74 & 64 & 55 & 50 & 51 & 55 & 50 & 46 & 43 & 44 & 45 & 50
\\ 
~ & Max & ~ & 208 & 180 & 178 & 165 & 164 & 177 & 179 & 177 & 165 & 177 & 180
& 208 & 257 \\ 
\multicolumn{16}{c}{} \\ 
\bottomrule \bottomrule &  &  &  &  &  &  &  &  &  &  &  &  &  &  & 
\end{tabular}
}
\par
{\scriptsize \ }
\par
{\scriptsize {\footnotesize \ } }
\par
{\scriptsize {\footnotesize 
\begin{tablenotes}
      \tiny
            \item The table contains several descriptive statistics for the delay under the alternative of an \textit{early} occurring changepoint, for $m=500$ and different values of $\eta$ and different trimming sequence $a_m$, in the case of the static regression model (\ref{dgp}), with $d=2$ - i.e., one exogenous regressor and the constant. The specifications of the model are described in the main text; see the notes to Table \ref{tab:DelayEF1} for an explanation of the descriptive statistics. 
            
\end{tablenotes}
} }
\end{table*}

\medskip

\begin{table*}[h!]
\caption{{\protect\footnotesize {Delays under an early occurring changepoint
- static regression, full descriptive statistics}}}
\label{tab:DelayES2}\centering
{\footnotesize {\ }}
\par
{\footnotesize {\ }}
\par
{\scriptsize {\ }}
\par
{\scriptsize 
\begin{tabular}{llllllllllllllll}
\multicolumn{16}{c}{} \\ 
$a_m$ & ~ & $\eta$ & 0 & 0.15 & 0.25 & 0.35 & 0.45 & 0.49 & 0.5 & 0.51 & 0.55
& 0.65 & 0.75 & 0.85 & 1 \\ 
\multicolumn{16}{c}{} \\ 
$\ln \ln m$ & Min & ~ & 18 & 8 & 4 & 2 & 1 & 1 & 1 & 1 & 1 & 1 & 1 & 1 & 1
\\ 
~ & Q1 & ~ & 38 & 23 & 15 & 9 & 4.75 & 4 & 4 & 3 & 3 & 3 & 3 & 2 & 1 \\ 
~ & MED & ~ & 48 & 31 & 22 & 14 & 9 & 7 & 8 & 7 & 6 & 7 & 10 & 14 & 1 \\ 
~ & ARL & ~ & 50.60 & 33.56 & 24.89 & 17.06 & 11.83 & 10.73 & 11.41 & 10.43
& 10.19 & 12.59 & 21.79 & 61.36 & 2.61 \\ 
~ & Q3 & ~ & 60 & 41 & 32 & 22 & 16 & 14.25 & 15 & 14 & 14 & 17 & 29 & 73.75
& 3 \\ 
~ & Max & ~ & 143 & 119 & 111 & 89 & 87 & 88 & 89 & 88 & 89 & 153 & 281 & 500
& 49 \\ 
\multicolumn{16}{c}{} \\ \hline
\multicolumn{16}{c}{} \\ 
$\ln m$ & Min & ~ & 16 & 9 & 4 & 2 & 1 & 1 & 1 & 1 & 1 & 1 & 1 & 1 & 1 \\ 
~ & Q1 & ~ & 39 & 24 & 16 & 10 & 7 & 7 & 7 & 6 & 5 & 5 & 5 & 5 & 5 \\ 
~ & MED & ~ & 49 & 32 & 24 & 17 & 12 & 11 & 12 & 11 & 10 & 9 & 10 & 10 & 12
\\ 
~ & ARL & ~ & 50.90 & 34.71 & 26.30 & 19.42 & 14.90 & 14.30 & 15.13 & 13.88
& 12.93 & 12.90 & 14.31 & 16.74 & 33.51 \\ 
~ & Q3 & ~ & 60 & 43 & 33 & 26 & 20 & 19 & 20 & 19 & 17 & 17 & 19 & 21 & 34
\\ 
~ & Max & ~ & 139 & 116 & 107 & 86 & 84 & 84 & 86 & 85 & 85 & 107 & 148 & 189
& 476 \\ 
\multicolumn{16}{c}{} \\ \hline
\multicolumn{16}{c}{} \\ 
$\ln^2 m$ & Min & ~ & 15 & 4 & 2 & 1 & 1 & 1 & 1 & 1 & 1 & 1 & 1 & 1 & 1 \\ 
~ & Q1 & ~ & 41 & 29 & 22 & 18 & 15 & 15 & 16 & 14 & 12 & 11 & 10 & 9 & 9 \\ 
~ & MED & ~ & 52 & 38 & 31 & 25 & 22 & 22 & 24 & 22 & 19 & 17 & 17 & 16 & 17
\\ 
~ & ARL & ~ & 53.94 & 40.48 & 33.74 & 28.18 & 24.96 & 25.24 & 26.98 & 24.56
& 22.11 & 20.20 & 19.82 & 19.69 & 20.56 \\ 
~ & Q3 & ~ & 65 & 50 & 42 & 36 & 32 & 33 & 35 & 32 & 29 & 27 & 27 & 26.25 & 
28 \\ 
~ & Max & ~ & 148 & 129 & 125 & 116 & 115 & 116 & 124 & 116 & 93 & 93 & 94 & 
107 & 113 \\ 
\multicolumn{16}{c}{} \\ 
\bottomrule \bottomrule &  &  &  &  &  &  &  &  &  &  &  &  &  &  & 
\end{tabular}
}
\par
{\scriptsize {\footnotesize \ } }
\par
{\scriptsize {\footnotesize 
\begin{tablenotes}
      \tiny
            \item The table contains several descriptive statistics for the delay under the alternative of an \textit{early} occurring changepoint, for $m=500$ and different values of $\eta$ and different trimming sequence $a_m$, in the case of the static regression model (\ref{dgp}), with $d=4$ - i.e., 3 exogenous regressors and the constant. The specifications of the model are described in the main text; see the notes to Table \ref{tab:DelayEF1} for an explanation of the descriptive statistics. 
            
\end{tablenotes}
} }
\end{table*}

\medskip

\begin{table*}[h!]
\caption{{\protect\footnotesize {Delays under an early occurring changepoint
- static regression, full descriptive statistics}}}
\label{tab:DelayES3}\centering
{\footnotesize {\ }}
\par
{\footnotesize {\ }}
\par
{\scriptsize {\ 
\begin{tabular}{llllllllllllllll}
\multicolumn{16}{c}{} \\ 
$a_m$ & ~ & $\eta$ & 0 & 0.15 & 0.25 & 0.35 & 0.45 & 0.49 & 0.5 & 0.51 & 0.55
& 0.65 & 0.75 & 0.85 & 1 \\ 
\multicolumn{16}{c}{} \\ 
$\ln \ln m$ & Min & ~ & 11 & 4 & 2 & 1 & 1 & 1 & 1 & 1 & 1 & 1 & 1 & 1 & 1
\\ 
~ & Q1 & ~ & 27 & 15 & 9 & 5 & 2 & 2 & 2 & 2 & 1 & 1 & 1 & 1 & 1 \\ 
~ & MED & ~ & 35 & 21 & 14 & 9 & 5 & 4 & 4 & 4 & 3 & 3 & 3 & 3 & 1 \\ 
~ & ARL & ~ & 37.27 & 23.90 & 17.12 & 11.47 & 7.72 & 6.94 & 7.29 & 6.60 & 
6.29 & 6.88 & 9.13 & 18.17 & 5.04 \\ 
~ & Q3 & ~ & 45 & 30 & 22 & 15 & 10 & 9 & 9 & 8 & 8 & 8 & 11 & 16 & 3 \\ 
~ & Max & ~ & 108 & 96 & 93 & 87 & 86 & 86 & 87 & 86 & 88 & 98 & 140 & 433 & 
207 \\ 
\multicolumn{16}{c}{} \\ \hline
\multicolumn{16}{c}{} \\ 
$\ln m$ & Min & ~ & 9 & 5 & 2 & 1 & 1 & 1 & 1 & 1 & 1 & 1 & 1 & 1 & 1 \\ 
~ & Q1 & ~ & 28 & 16 & 11 & 7 & 5 & 4 & 4 & 4 & 3 & 3 & 3 & 3 & 3 \\ 
~ & MED & ~ & 35 & 23 & 16 & 11 & 8 & 7 & 8 & 7 & 6 & 6 & 6 & 6 & 6 \\ 
~ & ARL & ~ & 37.68 & 25.15 & 18.67 & 13.77 & 10.51 & 9.92 & 10.42 & 9.56 & 
8.85 & 8.45 & 8.87 & 9.47 & 12.52 \\ 
~ & Q3 & ~ & 45 & 31.25 & 24 & 18 & 14 & 13 & 14 & 12 & 11 & 11 & 11 & 12 & 
14 \\ 
~ & Max & ~ & 112 & 99 & 92 & 85 & 84 & 81 & 81 & 81 & 81 & 83 & 100 & 114 & 
318 \\ 
\multicolumn{16}{c}{} \\ \hline
\multicolumn{16}{c}{} \\ 
$\ln^2 m$ & Min & ~ & 6 & 3 & 2 & 1 & 1 & 1 & 1 & 1 & 1 & 1 & 1 & 1 & 1 \\ 
~ & Q1 & ~ & 29 & 20 & 16 & 12 & 10 & 10 & 11 & 10 & 8 & 7 & 7 & 6 & 6 \\ 
~ & MED & ~ & 38 & 27 & 22 & 18 & 16 & 16 & 17 & 15 & 14 & 12 & 12 & 11 & 11
\\ 
~ & ARL & ~ & 40.14 & 29.81 & 24.58 & 20.54 & 18.27 & 18.38 & 19.48 & 17.90
& 16.24 & 14.86 & 14.47 & 14.21 & 14.52 \\ 
~ & Q3 & ~ & 48 & 36 & 31 & 26 & 23 & 24 & 25 & 23 & 21 & 20 & 19 & 19 & 19
\\ 
~ & Max & ~ & 119 & 106 & 102 & 101 & 87 & 88 & 102 & 88 & 85 & 85 & 88 & 90
& 107 \\ 
\multicolumn{16}{c}{} \\ 
\bottomrule \bottomrule &  &  &  &  &  &  &  &  &  &  &  &  &  &  & 
\end{tabular}
}}
\par
{\scriptsize \ }
\par
{\scriptsize {\footnotesize \ } }
\par
{\scriptsize {\footnotesize 
\begin{tablenotes}
      \tiny
            \item The table contains several descriptive statistics for the delay under the alternative of an \textit{early} occurring changepoint, for $m=500$ and different values of $\eta$ and different trimming sequence $a_m$, in the case of the static regression model (\ref{dgp}), with $d=8$ - i.e., 7 exogenous regressors and the constant. The specifications of the model are described in the main text; see the notes to Table \ref{tab:DelayEF1} for an explanation of the descriptive statistics. 
            
\end{tablenotes}
} }
\end{table*}

\medskip

\begin{table*}[h!]
\caption{{\protect\footnotesize {Delays under a late occurring changepoint -
static regression, full descriptive statistics}}}
\label{tab:DelayLS1}\centering
{\footnotesize {\ }}
\par
{\footnotesize {\ }}
\par
{\scriptsize {\ }}
\par
{\scriptsize 
\begin{tabular}{llllllllllllllll}
\multicolumn{16}{c}{} \\ 
$a_m$ & ~ & $\eta$ & 0 & 0.15 & 0.25 & 0.35 & 0.45 & 0.49 & 0.5 & 0.51 & 0.55
& 0.65 & 0.75 & 0.85 & 1 \\ 
\multicolumn{16}{c}{} \\ 
$\ln \ln m$ & Min & ~ & 16 & 10 & 3 & 3 & 1 & 2 & 3 & 2 & 5 & 15 & 59 & 239
& NA \\ 
~ & Q1 & ~ & 66 & 52 & 45 & 39 & 37 & 39 & 43 & 40 & 47 & 83 & 196 & 285 & NA
\\ 
~ & MED & ~ & 85 & 70 & 62 & 55 & 53 & 56 & 60 & 58 & 66 & 114 & 254 & 302 & 
NA \\ 
~ & ARL & ~ & 88.03 & 72.81 & 65.32 & 58.66 & 56.29 & 59.22 & 64.19 & 61.90
& 70.04 & 121.92 & 255.88 & 301.71 & NA \\ 
~ & Q3 & ~ & 106 & 90 & 81 & 74 & 72 & 75 & 81 & 78 & 88 & 151 & 320 & 319 & 
NA \\ 
~ & Max & ~ & 211 & 196 & 193 & 188 & 188 & 194 & 197 & 196 & 211 & 360 & 400
& 363 & NA \\ 
\multicolumn{16}{c}{} \\ \hline
\multicolumn{16}{c}{} \\ 
$\ln m$ & Min & ~ & 13 & 5 & 4 & 1 & 1 & 2 & 1 & 1 & 1 & 4 & 12 & 30 & 155
\\ 
~ & Q1 & ~ & 66 & 52 & 46 & 40 & 37 & 40 & 44 & 41 & 42 & 55 & 87 & 149 & 
245.25 \\ 
~ & MED & ~ & 85 & 70 & 63 & 57 & 54 & 57 & 62 & 58 & 60 & 78 & 119 & 203 & 
316.5 \\ 
~ & ARL & ~ & 88.84 & 73.70 & 66.37 & 59.99 & 57.73 & 60.88 & 65.72 & 62.20
& 64.16 & 83.63 & 129.16 & 212.21 & 293.50 \\ 
~ & Q3 & ~ & 107 & 91 & 83 & 76 & 74 & 78 & 83 & 79 & 81.25 & 105 & 162 & 272
& 342 \\ 
~ & Max & ~ & 210 & 193 & 190 & 188 & 189 & 192 & 204 & 192 & 204 & 263 & 391
& 394 & 394 \\ 
\multicolumn{16}{c}{} \\ \hline
\multicolumn{16}{c}{} \\ 
$\ln^2 m$ & Min & ~ & 17 & 2 & 4 & 4 & 2 & 3 & 3 & 3 & 1 & 1 & 4 & 5 & 2 \\ 
~ & Q1 & ~ & 70 & 56 & 50 & 44 & 43 & 46 & 50 & 46 & 43 & 45 & 51 & 59.5 & 83
\\ 
~ & MED & ~ & 90 & 76 & 68 & 63 & 62 & 65 & 70 & 65 & 62 & 64.5 & 73 & 85 & 
117 \\ 
~ & ARL & ~ & 93.81 & 79.34 & 72.61 & 66.79 & 65.53 & 69.46 & 75.12 & 69.62
& 66.31 & 69.23 & 78.60 & 91.78 & 128.41 \\ 
~ & Q3 & ~ & 114 & 98 & 91 & 85 & 84 & 88 & 95 & 89 & 85 & 88 & 99 & 116 & 
164.75 \\ 
~ & Max & ~ & 231 & 204 & 196 & 185 & 187 & 198 & 220 & 200 & 197 & 219 & 264
& 308 & 362 \\ 
\multicolumn{16}{c}{} \\ 
\bottomrule \bottomrule &  &  &  &  &  &  &  &  &  &  &  &  &  &  & 
\end{tabular}
}
\par
{\scriptsize \ }
\par
{\scriptsize {\footnotesize \ } }
\par
{\scriptsize {\footnotesize 
\begin{tablenotes}
      \tiny
            \item The table contains several descriptive statistics for the delay under the alternative of a \textit{late} occurring changepoint, for $m=500$ and different values of $\eta$ and different trimming sequence $a_m$, in the case of the static regression model (\ref{dgp}), with $d=2$ - i.e., one exogenous regressor and the constant. The specifications of the model are described in the main text; see the notes to Table \ref{tab:DelayEF1} for an explanation of the descriptive statistics. 
            
\end{tablenotes}
} }
\end{table*}

\medskip

\begin{table*}[h!]
\caption{{\protect\footnotesize {Delays under a late occurring changepoint -
static regression, full descriptive statistics}}}
\label{tab:DelayLS2}\centering
{\footnotesize {\ }}
\par
{\footnotesize {\ }}
\par
{\scriptsize {\ 
\begin{tabular}{llllllllllllllll}
\multicolumn{16}{c}{} \\ 
$a_m$ & ~ & $\eta$ & 0 & 0.15 & 0.25 & 0.35 & 0.45 & 0.49 & 0.5 & 0.51 & 0.55
& 0.65 & 0.75 & 0.85 & 1 \\ 
\multicolumn{16}{c}{} \\ 
$\ln \ln m$ & Min & ~ & 9 & 3 & 1 & 2 & 1 & 1 & 1 & 1 & 1 & 5 & 27 & 82 & NA
\\ 
~ & Q1 & ~ & 44 & 35 & 30 & 25 & 23 & 25 & 27 & 26 & 30 & 50 & 106 & 235.5 & 
NA \\ 
~ & MED & ~ & 58 & 47 & 41 & 36 & 34 & 36 & 39 & 37 & 41 & 67 & 140 & 298 & 
NA \\ 
~ & ARL & ~ & 60.28 & 49.24 & 43.80 & 39.18 & 37.24 & 38.89 & 41.77 & 40.45
& 44.94 & 71.96 & 150.46 & 288.41 & NA \\ 
~ & Q3 & ~ & 74 & 61 & 56 & 51 & 48 & 51 & 54 & 52.75 & 58 & 89 & 185 & 348
& NA \\ 
~ & Max & ~ & 171 & 164 & 161 & 156 & 156 & 161 & 164 & 163 & 170 & 246 & 398
& 400 & NA \\ 
\multicolumn{16}{c}{} \\ \hline
\multicolumn{16}{c}{} \\ 
$\ln m$ & Min & ~ & 8 & 4 & 1 & 1 & 1 & 1 & 1 & 1 & 1 & 2 & 3 & 14 & 46 \\ 
~ & Q1 & ~ & 45 & 35 & 30 & 26 & 24 & 25 & 27 & 26 & 26 & 34 & 50 & 78 & 189
\\ 
~ & MED & ~ & 59 & 47 & 42 & 37 & 35 & 37 & 40 & 37 & 38 & 48 & 69 & 108 & 
251 \\ 
~ & ARL & ~ & 60.64 & 49.67 & 44.28 & 39.74 & 38.10 & 39.77 & 42.59 & 40.47
& 41.48 & 51.95 & 73.68 & 116.02 & 251.34 \\ 
~ & Q3 & ~ & 74 & 62 & 56 & 51 & 50 & 52 & 55 & 52 & 54 & 66 & 92 & 145 & 320
\\ 
~ & Max & ~ & 177 & 157 & 152 & 150 & 150 & 151 & 157 & 155 & 156 & 197 & 250
& 360 & 395 \\ 
\multicolumn{16}{c}{} \\ \hline
\multicolumn{16}{c}{} \\ 
$\ln^2 m$ & Min & ~ & 4 & 5 & 1 & 1 & 1 & 1 & 2 & 1 & 1 & 1 & 1 & 3 & 6 \\ 
~ & Q1 & ~ & 47 & 37 & 32 & 28 & 27 & 29 & 32 & 29 & 27 & 27 & 31 & 35 & 47
\\ 
~ & MED & ~ & 61 & 51 & 46 & 41 & 40 & 42 & 46 & 42 & 40 & 41 & 45 & 52 & 66
\\ 
~ & ARL & ~ & 63.82 & 53.57 & 48.52 & 44.08 & 43.17 & 45.29 & 48.78 & 45.29
& 43.03 & 44.13 & 48.98 & 55.40 & 71.99 \\ 
~ & Q3 & ~ & 78 & 67 & 61 & 56 & 55 & 58 & 62 & 58 & 55 & 57 & 62 & 70 & 91
\\ 
~ & Max & ~ & 187 & 167 & 161 & 154 & 155 & 164 & 175 & 165 & 161 & 167 & 189
& 203 & 244 \\ 
\multicolumn{16}{c}{} \\ 
\bottomrule \bottomrule &  &  &  &  &  &  &  &  &  &  &  &  &  &  & 
\end{tabular}
}}
\par
{\scriptsize \ }
\par
{\scriptsize {\footnotesize \ } }
\par
{\scriptsize {\footnotesize 
\begin{tablenotes}
      \tiny
            \item The table contains several descriptive statistics for the delay under the alternative of a \textit{late} occurring changepoint, for $m=500$ and different values of $\eta$ and different trimming sequence $a_m$, in the case of the static regression model (\ref{dgp}), with $d=4$ - i.e., 3 exogenous regressors and the constant. The specifications of the model are described in the main text; see the notes to Table \ref{tab:DelayEF1} for an explanation of the descriptive statistics. 
            
\end{tablenotes}
} }
\end{table*}

\medskip

\begin{table*}[h]
\caption{{\protect\footnotesize {Delays under a late occurring changepoint -
static regression, full descriptive statistics}}}
\label{tab:DelayLS3}\centering
{\footnotesize {\ }}
\par
{\footnotesize {\ }}
\par
{\scriptsize {\ }}
\par
{\scriptsize 
\begin{tabular}{llllllllllllllll}
\multicolumn{16}{c}{} \\ 
$a_m$ & ~ & $\eta$ & 0 & 0.15 & 0.25 & 0.35 & 0.45 & 0.49 & 0.5 & 0.51 & 0.55
& 0.65 & 0.75 & 0.85 & 1 \\ 
\multicolumn{16}{c}{} \\ 
$\ln \ln m$ & Min & ~ & 7 & 2 & 2 & 2 & 1 & 1 & 2 & 1 & 1 & 4 & 19 & 38 & NA
\\ 
~ & Q1 & ~ & 32 & 24 & 21 & 18 & 16 & 18 & 19 & 18 & 20 & 33 & 63 & 147 & NA
\\ 
~ & MED & ~ & 43 & 34 & 30 & 27 & 25 & 26 & 28 & 27 & 30 & 46 & 86 & 202 & NA
\\ 
~ & ARL & ~ & 45.23 & 36.93 & 32.93 & 29.49 & 28.03 & 29.11 & 31.06 & 29.95
& 33.05 & 50.07 & 92.14 & 208.36 & NA \\ 
~ & Q3 & ~ & 56 & 46 & 42 & 38 & 36 & 37.75 & 40 & 39 & 43 & 63 & 113 & 260
& NA \\ 
~ & Max & ~ & 137 & 129 & 109 & 106 & 106 & 107 & 129 & 109 & 130 & 190 & 324
& 400 & NA \\ 
\multicolumn{16}{c}{} \\ \hline
\multicolumn{16}{c}{} \\ 
$\ln^2 m$ & Min & ~ & 5 & 2 & 1 & 1 & 1 & 1 & 1 & 1 & 1 & 1 & 4 & 9 & 27 \\ 
~ & Q1 & ~ & 32 & 25 & 21 & 18 & 17 & 18 & 19 & 18 & 18 & 24 & 33 & 49 & 109
\\ 
~ & MED & ~ & 43 & 35 & 31 & 27 & 26 & 27 & 29 & 27 & 28 & 34 & 46 & 67 & 151
\\ 
~ & ARL & ~ & 45.71 & 37.55 & 33.50 & 30.04 & 28.57 & 29.79 & 32.00 & 30.29
& 30.87 & 37.95 & 51.01 & 73.49 & 165.30 \\ 
~ & Q3 & ~ & 56 & 48 & 43 & 39 & 37 & 39 & 41 & 39 & 40 & 49 & 64 & 91 & 209
\\ 
~ & Max & ~ & 141 & 132 & 127 & 124 & 123 & 125 & 131 & 127 & 131 & 143 & 205
& 274 & 394 \\ 
\multicolumn{16}{c}{} \\ \hline
\multicolumn{16}{c}{} \\ 
$\ln^2 m$ & Min & ~ & 4 & 2 & 2 & 1 & 1 & 1 & 1 & 1 & 1 & 1 & 2 & 1 & 3 \\ 
~ & Q1 & ~ & 33 & 26 & 23 & 21 & 20 & 21 & 23 & 21 & 20 & 20 & 22 & 25 & 31
\\ 
~ & MED & ~ & 45 & 37 & 33 & 30 & 29 & 30.5 & 33 & 30 & 29 & 29 & 32 & 36 & 
45 \\ 
~ & ARL & ~ & 47.87 & 40.02 & 36.50 & 33.26 & 32.43 & 33.96 & 36.31 & 33.88
& 32.35 & 33.01 & 36.03 & 39.93 & 49.44 \\ 
~ & Q3 & ~ & 59 & 51 & 47 & 43 & 42 & 44 & 47 & 44 & 42 & 43 & 47 & 51 & 62
\\ 
~ & Max & ~ & 156 & 151 & 147 & 133 & 133 & 147 & 151 & 147 & 147 & 149 & 155
& 170 & 223 \\ 
\multicolumn{16}{c}{} \\ 
\bottomrule \bottomrule &  &  &  &  &  &  &  &  &  &  &  &  &  &  & 
\end{tabular}
}
\par
{\scriptsize \ }
\par
{\scriptsize {\footnotesize \ } }
\par
{\scriptsize {\footnotesize 
\begin{tablenotes}
      \tiny
            \item The table contains several descriptive statistics for the delay under the alternative of a \textit{late} occurring changepoint, for $m=500$ and different values of $\eta$ and different trimming sequence $a_m$, in the case of the static regression model (\ref{dgp}), with $d=8$ - i.e., 7 exogenous regressors and the constant. The specifications of the model are described in the main text; see the notes to Table \ref{tab:DelayEF1} for an explanation of the descriptive statistics. 
            
\end{tablenotes}
} }
\end{table*}

\clearpage\newpage

\subsection{Veto-based changepoint detection: empirical rejection
frequencies \label{veto-erf-null}}

\medskip

\begin{table*}[h!]
\caption{{\protect\footnotesize {Empirical rejection frequencies under $H_0$
- veto-based changepoint detection with $m=500$ and $d=4$}}}
\label{tab:VetoApp1}\centering
\par
{\scriptsize {\ }}
\par
{\scriptsize 
\begin{tabular}{llccccccccc}
\multicolumn{11}{c}{} \\ 
\multicolumn{1}{c}{} & \multicolumn{1}{c}{$a_m$} & \multicolumn{3}{c}{$\ln
\ln m$} & \multicolumn{3}{c}{$\ln m$} & \multicolumn{3}{c}{$\ln^2 m$} \\ 
\multicolumn{2}{c}{} & \multicolumn{3}{c}{} & \multicolumn{3}{c}{} & 
\multicolumn{3}{c}{} \\ 
\cmidrule(lr){3-5}\cmidrule(lr){6-8}\cmidrule(lr){9-11} &  &  &  &  &  &  & 
&  &  &  \\ 
~ & $m$ & 300 & 500 & 1000 & 300 & 500 & 1000 & 300 & 500 & 1000 \\ 
\multicolumn{1}{c}{$\eta$} & \multicolumn{3}{c}{} & \multicolumn{3}{c}{} & 
\multicolumn{3}{c}{} &  \\ 
\multicolumn{2}{c}{} & \multicolumn{3}{c}{} & \multicolumn{3}{c}{} & 
\multicolumn{3}{c}{} \\ 
0.25 & ~ & 0.027 & 0.022 & 0.017 & 0.027 & 0.022 & 0.017 & 0.027 & 0.022 & 
0.017 \\ 
0.75 & ~ & 0.036 & 0.031 & 0.023 & 0.052 & 0.046 & 0.034 & 0.078 & 0.059 & 
0.051 \\ 
$\mathcal{V}_2$ & ~ & 0.060 & 0.050 & 0.040 & 0.068 & 0.060 & 0.049 & 0.085
& 0.069 & 0.062 \\ 
$\mathcal{V}_3$ & ~ & 0.068 & 0.057 & 0.046 & 0.075 & 0.066 & 0.054 & 0.084
& 0.070 & 0.065 \\ 
$\mathcal{V}_5$ & ~ & 0.064 & 0.051 & 0.042 & 0.062 & 0.057 & 0.043 & 0.057
& 0.048 & 0.044 \\ 
\multicolumn{11}{c}{} \\ 
\midrule \bottomrule &  &  &  &  &  &  &  &  &  & 
\end{tabular}
}
\par
{\scriptsize {\footnotesize 
\begin{tablenotes}
      \tiny
            \item The table contains the empirical rejection frequencies under the null of no changepoint for different sample sizes and different monitoring schemes, for the case of the dynamic regression model (\ref{dgp}), and $d=4$ - i.e. 3 exogenous regressor and the constant. The other specifications of the model are the same as for Table \ref{tab:Size1}. 
            
\end{tablenotes}
} }
\end{table*}

\begin{table*}[h]
\caption{{\protect\footnotesize {Empirical rejection frequencies under $%
H_{0} $ - veto-based changepoint detection with $m=500$ and $d=8$}}}
\label{tab:VetoApp2}\centering
\par
{\scriptsize {\ }}
\par
{\scriptsize 
\begin{tabular}{llccccccccc}
\multicolumn{11}{c}{} \\ 
\multicolumn{1}{c}{} & \multicolumn{1}{c}{$a_m$} & \multicolumn{3}{c}{$\ln
\ln m$} & \multicolumn{3}{c}{$\ln m$} & \multicolumn{3}{c}{$\ln^2 m$} \\ 
\multicolumn{2}{c}{} & \multicolumn{3}{c}{} & \multicolumn{3}{c}{} & 
\multicolumn{3}{c}{} \\ 
\cmidrule(lr){3-5}\cmidrule(lr){6-8}\cmidrule(lr){9-11} &  &  &  &  &  &  & 
&  &  &  \\ 
~ & $m$ & 300 & 500 & 1000 & 300 & 500 & 1000 & 300 & 500 & 1000 \\ 
\multicolumn{1}{c}{$\eta$} & \multicolumn{3}{c}{} & \multicolumn{3}{c}{} & 
\multicolumn{3}{c}{} &  \\ 
\multicolumn{2}{c}{} & \multicolumn{3}{c}{} & \multicolumn{3}{c}{} & 
\multicolumn{3}{c}{} \\ 
0.25 & ~ & 0.031 & 0.028 & 0.024 & 0.031 & 0.028 & 0.024 & 0.031 & 0.028 & 
0.024 \\ 
0.75 & ~ & 0.040 & 0.034 & 0.028 & 0.062 & 0.056 & 0.042 & 0.085 & 0.072 & 
0.066 \\ 
$\mathcal{V}_2$ & ~ & 0.061 & 0.052 & 0.045 & 0.080 & 0.075 & 0.062 & 0.092
& 0.080 & 0.076 \\ 
$\mathcal{V}_3$ & ~ & 0.073 & 0.064 & 0.052 & 0.091 & 0.084 & 0.068 & 0.095
& 0.085 & 0.078 \\ 
$\mathcal{V}_5$ & ~ & 0.066 & 0.062 & 0.052 & 0.070 & 0.070 & 0.062 & 0.062
& 0.058 & 0.056 \\ 
\multicolumn{11}{c}{} \\ 
\midrule \bottomrule &  &  &  &  &  &  &  &  &  & 
\end{tabular}
}
\par
{\scriptsize \ }
\par
{\scriptsize {\footnotesize 
\begin{tablenotes}
      \tiny
            \item The table contains the empirical rejection frequencies under the null of no changepoint for different sample sizes and different monitoring schemes, for the case of the dynamic regression model (\ref{dgp}), and $d=8$ - i.e. 7 exogenous regressor and the constant. The other specifications of the model are the same as for Table \ref{tab:Size1}. 
            
\end{tablenotes}
} }
\end{table*}

\clearpage\newpage

\subsection{Veto-based changepoint detection: detection delays\label%
{veto-erf-alt}}

\begin{table*}[h]
\caption{{\protect\footnotesize {Delays using veto-based changepoint
detection - dynamic regression with $m=500$ and $d=4$, full descriptive
statistics}}}
\label{tab:VetoApp3}\centering
\par
{\scriptsize {\ }}
\par
{\scriptsize 
\begin{tabular}{lllllllllllllll}
\multicolumn{15}{c}{} \\ 
\multicolumn{4}{c}{} & \multicolumn{5}{c}{Early Break} &  & 
\multicolumn{5}{c}{Late Break} \\ 
\cmidrule(lr){4-9}\cmidrule(lr){10-15} &  &  &  &  &  &  &  &  &  &  &  &  & 
&  \\ 
\multicolumn{15}{c}{} \\ 
$a_m$ &  & $\eta$ & ~ & 0.25 & 0.75 & $\mathcal{V}_{2}$ & $\mathcal{V}_{3} $
& $\mathcal{V}_{5}$ & ~ & 0.25 & 0.75 & $\mathcal{V}_2$ & $\mathcal{V}_{3} $
& $\mathcal{V}_{5}$ \\ 
\multicolumn{15}{c}{} \\ 
$\ln \ln m$ & Min & ~ & ~ & 2 & 1 & 1 & 1 & 1 & ~ & 1 & 10 & 1 & 1 & 1 \\ 
~ & Q1 & ~ & ~ & 6 & 1 & 1 & 1 & 1 & ~ & 15 & 40 & 16 & 14 & 13 \\ 
~ & MED & ~ & ~ & 10 & 2 & 1 & 1.5 & 2 & ~ & 21 & 52 & 22 & 20 & 19 \\ 
~ & ARL & ~ & ~ & 11.47 & 4.31 & 4.98 & 4.59 & 3.90 & ~ & 22.41 & 54.31 & 
23.87 & 21.33 & 20.59 \\ 
~ & Q3 & ~ & ~ & 15 & 4 & 5 & 5 & 5 & ~ & 28 & 66 & 30 & 27 & 26 \\ 
~ & Max & ~ & ~ & 51 & 70 & 52 & 46 & 44 & ~ & 76 & 174 & 78 & 74 & 74 \\ 
~ & ~ & ~ & ~ & ~ & ~ & ~ & ~ & ~ & ~ & ~ & ~ & ~ & ~ & ~ \\ 
$\ln m$ & Min & ~ & ~ & 2 & 1 & 1 & 1 & 1 & ~ & 1 & 3 & 2 & 1 & 1 \\ 
~ & Q1 & ~ & ~ & 7 & 2 & 2 & 2 & 2 & ~ & 15 & 22 & 16 & 14 & 13 \\ 
~ & MED & ~ & ~ & 11 & 4 & 4 & 4 & 4 & ~ & 21 & 30 & 22 & 20 & 19 \\ 
~ & ARL & ~ & ~ & 12.61 & 5.59 & 5.79 & 5.87 & 5.93 & ~ & 22.63 & 32.30 & 
24.08 & 21.47 & 20.82 \\ 
~ & Q3 & ~ & ~ & 16 & 7 & 7 & 7 & 8 & ~ & 28 & 41 & 30 & 27 & 26 \\ 
~ & Max & ~ & ~ & 47 & 41 & 44 & 42 & 39 & ~ & 74 & 98 & 76 & 73 & 73 \\ 
~ & ~ & ~ & ~ & ~ & ~ & ~ & ~ & ~ & ~ & ~ & ~ & ~ & ~ &  \\ 
$\ln^2 m $ & Min & ~ & ~ & 1 & 1 & 1 & 1 & 1 & ~ & 1 & 1 & 2 & 2 & 2 \\ 
~ & Q1 & ~ & ~ & 11 & 5 & 5 & 5 & 5 & ~ & 16 & 15 & 17 & 16 & 15 \\ 
~ & MED & ~ & ~ & 15 & 8 & 8 & 8 & 9 & ~ & 23 & 22 & 24 & 22 & 22 \\ 
~ & ARL & ~ & ~ & 17.19 & 9.83 & 9.71 & 9.79 & 10.36 & ~ & 24.88 & 23.78 & 
25.79 & 23.82 & 23.73 \\ 
~ & Q3 & ~ & ~ & 21 & 13 & 13 & 13 & 14 & ~ & 31 & 30 & 32 & 30 & 30 \\ 
~ & Max & ~ & ~ & 69 & 50 & 53 & 53 & 53 & ~ & 83 & 84 & 83 & 82 & 83 \\ 
\multicolumn{15}{c}{} \\ 
\bottomrule \bottomrule &  &  &  &  &  &  &  &  &  &  &  &  &  & 
\end{tabular}
}
\par
{\scriptsize \ }
\par
{\scriptsize {\footnotesize 
\begin{tablenotes}
      \tiny
            \item The table contains several descriptive statistics for the delay under the alternative, for $m=500$ and different monitoring schemes, in the case of the dynamic regression model (\ref{dgp}), with $d=4$ - i.e., 3 exogenous regressors and the constant. The specifications of the model are described in the main text; see the notes to Table \ref{tab:DelayEF1} for an explanation of the descriptive statistics. 
            
\end{tablenotes}
} }
\end{table*}

\begin{table*}[h!]
\caption{{\protect\footnotesize {Delays using veto-based changepoint
detection - dynamic regression with $m=500$ and $d=8$, full descriptive
statistics}}}
\label{tab:VetoApp4}\centering
\par
{\scriptsize {\ }}
\par
{\scriptsize 
\begin{tabular}{lllllllllllllll}
\multicolumn{15}{c}{} \\ 
\multicolumn{4}{c}{} & \multicolumn{5}{c}{Early Break} &  & 
\multicolumn{5}{c}{Late Break} \\ 
\cmidrule(lr){4-9}\cmidrule(lr){10-15} &  &  &  &  &  &  &  &  &  &  &  &  & 
&  \\ 
\multicolumn{15}{c}{} \\ 
$a_m$ &  & $\eta$ & ~ & 0.25 & 0.75 & $\mathcal{V}_{2}$ & $\mathcal{V}_{3} $
& $\mathcal{V}_{5}$ & ~ & 0.25 & 0.75 & $\mathcal{V}_2$ & $\mathcal{V}_{3} $
& $\mathcal{V}_{5}$ \\ 
\multicolumn{15}{c}{} \\ 
$\ln \ln m$ & Min & ~ & ~ & 1 & 1 & 1 & 1 & 1 & ~ & 1 & 7 & 1 & 1 & 1 \\ 
~ & Q1 & ~ & ~ & 4 & 1 & 1 & 1 & 1 & ~ & 11 & 26 & 11 & 10 & 9 \\ 
~ & MED & ~ & ~ & 6 & 1 & 1 & 1 & 1 & ~ & 16 & 35 & 17 & 15 & 14 \\ 
~ & ARL & ~ & ~ & 8.34 & 2.74 & 3.03 & 2.93 & 2.65 & ~ & 17.46 & 38.73 & 
18.70 & 16.65 & 16.07 \\ 
~ & Q3 & ~ & ~ & 11 & 3 & 3 & 3 & 3 & ~ & 22 & 48 & 24 & 21 & 21 \\ 
~ & Max & ~ & ~ & 50 & 52 & 50 & 49 & 31 & ~ & 69 & 129 & 79 & 69 & 69 \\ 
~ & ~ & ~ & ~ & ~ & ~ & ~ & ~ & ~ & ~ & ~ & ~ & ~ & ~ & ~ \\ 
$\ln m$ & Min & ~ & ~ & 1 & 1 & 1 & 1 & 1 & ~ & 1 & 2 & 1 & 1 & 1 \\ 
~ & Q1 & ~ & ~ & 5 & 1 & 1 & 1 & 2 & ~ & 11 & 15 & 12 & 10 & 9 \\ 
~ & MED & ~ & ~ & 8 & 3 & 3 & 3 & 3 & ~ & 16 & 21 & 17 & 15 & 14 \\ 
~ & ARL & ~ & ~ & 9.64 & 3.99 & 4.07 & 4.10 & 4.26 & ~ & 17.86 & 24.33 & 
18.99 & 16.97 & 16.49 \\ 
~ & Q3 & ~ & ~ & 12 & 5 & 5 & 5 & 5 & ~ & 23 & 31 & 24 & 22 & 21 \\ 
~ & Max & ~ & ~ & 71 & 47 & 70 & 62 & 61 & ~ & 65 & 93 & 65 & 65 & 65 \\ 
~ & ~ & ~ & ~ & ~ & ~ & ~ & ~ & ~ & ~ & ~ & ~ & ~ & ~ &  \\ 
$\ln^2 m$ & Min & ~ & ~ & 1 & 1 & 1 & 1 & 1 & ~ & 1 & 1 & 1 & 1 & 1 \\ 
~ & Q1 & ~ & ~ & 8 & 3 & 3 & 3 & 4 & ~ & 12 & 11 & 12 & 11 & 11 \\ 
~ & MED & ~ & ~ & 11 & 6 & 6 & 6 & 6 & ~ & 17 & 16 & 18 & 16 & 16 \\ 
~ & ARL & ~ & ~ & 13.29 & 7.60 & 7.44 & 7.50 & 7.97 & ~ & 19.42 & 18.50 & 
20.06 & 18.68 & 18.45 \\ 
~ & Q3 & ~ & ~ & 17 & 10 & 10 & 10 & 10 & ~ & 24 & 23 & 26 & 24 & 23 \\ 
~ & Max & ~ & ~ & 66 & 62 & 62 & 62 & 62 & ~ & 89 & 90 & 90 & 89 & 89 \\ 
\multicolumn{15}{c}{} \\ 
\bottomrule \bottomrule &  &  &  &  &  &  &  &  &  &  &  &  &  & 
\end{tabular}
}
\par
{\scriptsize \ }
\par
{\scriptsize {\footnotesize 
\begin{tablenotes}
      \tiny
            \item The table contains several descriptive statistics for the delay under the alternative, for $m=500$ and different monitoring schemes, in the case of the dynamic regression model (\ref{dgp}), with $d=8$ - i.e., 7 exogenous regressors and the constant. The specifications of the model are described in the main text; see the notes to Table \ref{tab:DelayEF1} for an explanation of the descriptive statistics.
            
\end{tablenotes}
} }
\end{table*}

\clearpage
\newpage

\renewcommand*{\thesection}{\Alph{section}}

\setcounter{equation}{0} \setcounter{lemma}{0} \setcounter{theorem}{0} %
\renewcommand{\theassumption}{B.\arabic{assumption}} 
\renewcommand{\thetheorem}{B.\arabic{theorem}} \renewcommand{\thelemma}{B.%
\arabic{lemma}} \renewcommand{\theproposition}{B.\arabic{proposition}} %
\renewcommand{\thecorollary}{B.\arabic{corollary}} \renewcommand{%
\theequation}{B.\arabic{equation}}

\section{Preliminary lemmas\label{lemmas}}

Henceforth, we use the following notation: $\left\lceil \varrho \right\rceil 
$ is the ceiling of a real number $\varrho $. We also use the convention
that, if a summation involves a non-integer index, this is rounded up - e.g. 
$\sum_{i=1}^{\varrho }=\sum_{i=1}^{\left\lceil \varrho \right\rceil }$ for
any $\varrho >1$.

\begin{lemma}
\label{sip}We assume that Assumption \ref{b-shifts}\textit{(i)} holds. Then,
for every $m$, two independent standard Wiener processes $\left\{
W_{1,m}\left( k\right) ,k\geq 1\right\} $ and $\left\{ W_{2,m}\left(
k\right) ,k\geq 1\right\} $ whose distribution does not depend on $m$\ can
be defined on a suitably larger probability space such that, for some $%
0<\zeta _{1}<1/2$, it holds that 
\begin{equation}
\max_{1\leq k\leq T_{m}}\frac{1}{k^{\zeta _{1}}}\left\vert
\sum_{t=m+1}^{m+k}\epsilon _{t}-\sigma W_{1,m}(k)\right\vert =O_{P}(1),
\label{SIP1}
\end{equation}%
and 
\begin{equation}
\frac{1}{m^{\zeta _{2}}}\left\vert \sum_{t=1}^{m}\epsilon _{t}-\sigma
W_{2,m}(m)\right\vert =O_{P}(1),  \label{SIP2}
\end{equation}%
where $\sigma ^{2}$ is defined in (\ref{lrunv}).
\end{lemma}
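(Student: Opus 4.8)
The plan is to deduce both displays from an off-the-shelf strong approximation (Koml\'os--Major--Tusn\'ady type) theorem for decomposable Bernoulli shifts; the only genuine work is (a) checking its hypotheses from Assumption \ref{b-shifts}\textit{(i)}, (b) decoupling the monitoring-period partial sums from the training period so that the two Wiener processes can be taken independent, and (c) transferring the rate back to the original errors without destroying the local normalisation by $k^{\zeta_1}$.

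First I would record the dependence input. By Definition \ref{bernoulli}, $\{\epsilon_t\}$ is an $L_4$-decomposable Bernoulli shift with $|\epsilon_t-\widetilde\epsilon_{t,\ell}|_4\le c_0\ell^{-a}$ and $a>2$; in particular $\sum_{\ell\ge1}|\epsilon_t-\widetilde\epsilon_{t,\ell}|_4<\infty$, which is more than enough to invoke the strong approximation results for weakly dependent sequences (of Wu, Liu--Lin, or Berkes--Liu--Wu type): for any one-sided $L_4$-decomposable Bernoulli shift $\{\xi_j\}_{j\ge1}$ with this decay there exist, on a larger probability space, a standard Wiener process $W$ that is measurable with respect to $\{\xi_j\}_{j\ge1}$ together with independent extra randomness, such that $\sum_{j=1}^n\xi_j-\sigma W(n)=o(n^{1/4}\log n)$ almost surely, with $\sigma^2$ the long-run variance (\ref{lrunv}). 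Choosing $1/4<\zeta_1<1/2$ and $1/4<\zeta_2<1/2$, this yields $\sup_{n\ge1}n^{-\zeta_1}|\sum_{j=1}^n\xi_j-\sigma W(n)|<\infty$ a.s., hence $O_P(1)$, with the law of this supremum depending only on the law of $\{\xi_j\}$.

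Next comes the decoupling. Fix $m$; using the i.i.d.\ innovations $\{\eta_s\}$ and an independent i.i.d.\ copy $\{\widetilde\eta^{(1)}_s\}_{s\le m}$, set $\epsilon_t^{*}=h(\eta_t,\dots,\eta_{m+1},\widetilde\eta^{(1)}_m,\widetilde\eta^{(1)}_{m-1},\dots)$ for $t>m$. Then $\{\epsilon_t^{*}\}_{t>m}$ has the same law as $\{\epsilon_j\}_{j\ge1}$, is again an $L_4$-decomposable Bernoulli shift with the same $\ell^{-a}$ decay, is independent of $\sigma(\eta_s:s\le m)$, and $|\epsilon_{m+j}-\epsilon_{m+j}^{*}|_4\le c_0 j^{-a}$ by Definition \ref{bernoulli}. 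Applying the strong approximation to $\{\epsilon_{m+j}^{*}\}_{j\ge1}$ produces $W_{1,m}$, a standard Wiener process measurable with respect to $\{\epsilon_t^{*}\}_{t>m}$ and independent extra randomness, with $\sup_{1\le k\le T_m}k^{-\zeta_1}|\sum_{t=m+1}^{m+k}\epsilon_t^{*}-\sigma W_{1,m}(k)|=O_P(1)$ uniformly in $m$ (its law being $m$-free by the preceding remark). Since $a>2$, $\sum_{j\ge1}|\epsilon_{m+j}-\epsilon_{m+j}^{*}|_4<\infty$, so $E\sum_{j\ge1}|\epsilon_{m+j}-\epsilon_{m+j}^{*}|<\infty$ and hence $\sup_{k\ge1}|\sum_{t=m+1}^{m+k}(\epsilon_t-\epsilon_t^{*})|\le\sum_{j\ge1}|\epsilon_{m+j}-\epsilon_{m+j}^{*}|<\infty$ a.s., which is $O_P(1)$ uniformly in $m$ by shift-invariance; adding this to the previous bound (and using $k^{\zeta_1}\ge1$) gives (\ref{SIP1}). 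For (\ref{SIP2}) no decoupling is needed: the reversed process $\epsilon_j^{\leftarrow}:=\epsilon_{m+1-j}$, $j\ge1$, is an $L_4$-decomposable Bernoulli shift in the i.i.d.\ innovations $\eta^{\leftarrow}_j:=\eta_{m+1-j}$ with the same $\ell^{-a}$ decay, so the strong approximation yields a standard Wiener process $W_{2,m}$, measurable with respect to $\sigma(\epsilon_t:t\le m)\subseteq\sigma(\eta_s:s\le m)$ and independent extra randomness (chosen independent of everything used for $W_{1,m}$), with $\sum_{j=1}^m\epsilon_j^{\leftarrow}-\sigma W_{2,m}(m)=o(m^{1/4}\log m)$ a.s.; since $\sum_{j=1}^m\epsilon_j^{\leftarrow}=\sum_{t=1}^m\epsilon_t$, this is (\ref{SIP2}). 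Finally, $W_{1,m}$ is built from $(\eta_s)_{s>m}$, $(\widetilde\eta^{(1)}_s)_{s\le m}$ and independent extra randomness, all independent of $\sigma(\eta_s:s\le m)$ and hence of $W_{2,m}$, so $W_{1,m}$ and $W_{2,m}$ are independent standard Wiener processes whose joint law does not depend on $m$.

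The delicate point is not the strong approximation engine, which is standard, but keeping the local rate $k^{\zeta_1}$ through the passage from $\epsilon_t^{*}$ back to $\epsilon_t$ while simultaneously forcing the two approximating processes to be independent: this is precisely what the summability afforded by $a>2$ buys, since it makes the coupling error a genuine $O_P(1)$ term (rather than $O_P(k^{\zeta_1})$) and confines the cross-sample dependence to a region that is eliminated by the coupling. A secondary point worth checking carefully is that the coupled forward process $\{\epsilon_t^{*}\}_{t>m}$ and the reversed process $\{\epsilon_j^{\leftarrow}\}$ inherit both the Bernoulli-shift representation and the $\ell^{-a}$ decomposability rate of $\{\epsilon_t\}$, which is immediate from Definition \ref{bernoulli} because the underlying functional $h$ is unchanged and the innovations remain i.i.d.
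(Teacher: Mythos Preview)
Your approach is considerably more explicit than the paper's, which dispatches the lemma in one line by citing Theorem B.1 of \citet{aue2014dependent} and remarking that the blocking construction there automatically delivers the independence of $W_{1,m}$ and $W_{2,m}$. Your decoupling argument for (\ref{SIP1}) --- replacing the pre-$m$ innovations by an independent copy to form $\epsilon_t^{*}$, applying the strong approximation to $\{\epsilon_t^{*}\}$, and transferring back via the summable coupling error $\sum_{j\ge1}|\epsilon_{m+j}-\epsilon_{m+j}^{*}|$ --- is correct and is essentially a hands-on version of what the blocking argument does.

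There is, however, a gap in your treatment of (\ref{SIP2}). The reversed process $\epsilon_j^{\leftarrow}=\epsilon_{m+1-j}$ is \emph{not} a one-sided $L_4$-decomposable Bernoulli shift in the reversed innovations $\eta_j^{\leftarrow}=\eta_{m+1-j}$ in the sense of Definition~\ref{bernoulli}: unpacking the definition gives $\epsilon_j^{\leftarrow}=h(\eta_j^{\leftarrow},\eta_{j+1}^{\leftarrow},\eta_{j+2}^{\leftarrow},\ldots)$, which depends on the \emph{future} of $\{\eta_j^{\leftarrow}\}$, not its past. Causal Bernoulli shifts do not in general reverse into causal Bernoulli shifts (think of an AR(1)), so the off-the-shelf strong approximation theorems you invoke for one-sided shifts do not apply to $\{\epsilon_j^{\leftarrow}\}$ as stated. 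The repair is straightforward and does not require reversal: since (\ref{SIP2}) is a single-time-point bound, apply the forward strong approximation directly to $\{\epsilon_t\}_{t\ge1}$ to get a Wiener process $W$ with $\sum_{t=1}^m\epsilon_t-\sigma W(m)=o(m^{1/4}\log m)$ a.s., and set $W_{2,m}:=W$. Independence with $W_{1,m}$ then follows because in the blocking/martingale-embedding constructions underlying these KMT-type results, $\{W(k):k\le m\}$ is built from $\{\epsilon_t:t\le m\}\subset\sigma(\eta_s:s\le m)$ together with independent auxiliary randomness, which is independent of the ingredients $(\eta_s)_{s>m}$, $(\widetilde\eta_s^{(1)})_{s\le m}$ and auxiliaries used for $W_{1,m}$. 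That is precisely the content of the paper's one-line appeal to the blocking argument.
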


\begin{proof}
The desired result follows immediately from applying Theorem B.1 in %
\citet{aue2014dependent}; note that the proof of the theorem is based on the
blocking argument, whence the independence of $\left\{ W_{1,m}\left(
k\right) ,k\geq 1\right\} $ and $\left\{ W_{2,m}\left( k\right) ,k\geq
1\right\} $.
\end{proof}

\begin{lemma}
\label{max-ineq-1}We assume that Assumptions \ref{b-shifts} and \ref%
{exogeneity} hold. Then it holds that

\begin{equation*}
\left\Vert \frac{1}{m}\sum_{t=1}^{m}\mathbf{x}_{t}\epsilon _{t}\right\Vert
=O_{P}\left( \left( \frac{d}{m}\right) ^{1/2}\right) .
\end{equation*}
\end{lemma}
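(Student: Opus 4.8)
The plan is to control $m^{-1}\sum_{t=1}^{m}\mathbf{x}_{t}\epsilon_{t}$ through its mean-square norm, treating the $d$ coordinates separately and then summing. Writing $v_{j,t}=x_{j,t}\epsilon_{t}$, I would first record that $\left\Vert m^{-1}\sum_{t}\mathbf{x}_{t}\epsilon_{t}\right\Vert^{2}=\sum_{j=1}^{d}\left(m^{-1}\sum_{t}v_{j,t}\right)^{2}$, and that Assumption \ref{exogeneity}\textit{(i)}--\textit{(ii)} (together with $x_{1,t}=1$) gives $E\left(v_{j,t}\right)=0$ for every $1\le j\le d$. Hence it suffices to show $E\left(m^{-1}\sum_{t}v_{j,t}\right)^{2}\le c_{0}/m$ with $c_{0}$ \emph{uniform} in $j$; Markov's inequality then yields the $O_{P}\left(\left(d/m\right)^{1/2}\right)$ rate.

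The next step is to verify that each $\left\{ v_{j,t}\right\}$ is an $L_{2}$-decomposable Bernoulli shift with the same decay exponent $a>2$ as in Assumption \ref{b-shifts}. Coupling $x_{j,t}$ and $\epsilon_{t}$ on a common auxiliary sequence as in Definition \ref{bernoulli}, setting $\widetilde{v}_{j,t,\ell}=\widetilde{x}_{j,t,\ell}\widetilde{\epsilon}_{t,\ell}$, and splitting $v_{j,t}-\widetilde{v}_{j,t,\ell}=x_{j,t}\left(\epsilon_{t}-\widetilde{\epsilon}_{t,\ell}\right)+\left(x_{j,t}-\widetilde{x}_{j,t,\ell}\right)\widetilde{\epsilon}_{t,\ell}$, a Cauchy--Schwarz bound gives $\left\vert v_{j,t}-\widetilde{v}_{j,t,\ell}\right\vert_{2}\le\left\vert x_{j,t}\right\vert_{4}\left\vert \epsilon_{t}-\widetilde{\epsilon}_{t,\ell}\right\vert_{4}+\left\vert x_{j,t}-\widetilde{x}_{j,t,\ell}\right\vert_{4}\left\vert \epsilon_{t}\right\vert_{4}\le c_{0}\ell^{-a}$, using $\widetilde{\epsilon}_{t,\ell}\overset{\mathcal{D}}{=}\epsilon_{t}$ and the $L_{4}$-decomposability of $\left\{ x_{j,t}\right\}$ and $\left\{ \epsilon_{t}\right\}$; this is exactly the mechanism behind the remark in the text that products of decomposable Bernoulli shifts remain decomposable in the appropriate lower norm.

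Then I would turn this into summable autocovariances. For $s<t$, taking $\ell=t-s$ makes $\widetilde{v}_{j,t,\ell}$ a function of $\eta_{s+1},\dots,\eta_{t}$ and of auxiliary variables independent of $v_{j,s}$, so $v_{j,s}$ and $\widetilde{v}_{j,t,\ell}$ are independent; since $E\widetilde{v}_{j,t,\ell}=Ev_{j,t}=0$, Cauchy--Schwarz gives $\left\vert E\left(v_{j,s}v_{j,t}\right)\right\vert=\left\vert E\left(v_{j,s}\left(v_{j,t}-\widetilde{v}_{j,t,\ell}\right)\right)\right\vert\le\left\vert v_{j,s}\right\vert_{2}c_{0}\left(t-s\right)^{-a}$. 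Combining this with the diagonal bound $E\left(v_{j,t}^{2}\right)\le\left\vert x_{j,t}\right\vert_{4}^{2}\left\vert \epsilon_{t}\right\vert_{4}^{2}$, with stationarity of the shifts, and with $a>2>1$ (so $\sum_{h\in\mathbb{Z}}\left(\left\vert h\right\vert\vee1\right)^{-a}<\infty$), I get $E\left(m^{-1}\sum_{t}v_{j,t}\right)^{2}=m^{-2}\sum_{s,t}E\left(v_{j,s}v_{j,t}\right)\le c_{1}m^{-1}$; summing over $j=1,\dots,d$ (the case $j=1$ being trivial, as $v_{1,t}=\epsilon_{t}$, which is itself $L_{4}$-decomposable with $a>2$) gives $E\left\Vert m^{-1}\sum_{t}\mathbf{x}_{t}\epsilon_{t}\right\Vert^{2}\le c_{1}d/m$, and the claim follows.

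The only genuinely delicate point — really bookkeeping rather than difficulty — is that all constants (the $c_{0}$ in the decomposability estimate, the fourth moments $\left\vert x_{j,t}\right\vert_{4}$) must be taken uniform in the coordinate index $j$, since $d$ is allowed to grow with $m$; I would handle this by invoking stationarity of decomposable Bernoulli shifts (so $\left\vert x_{j,t}\right\vert_{4}$ is independent of $t$) and reading Assumption \ref{b-shifts}\textit{(ii)} as imposing its moment and decay bounds uniformly over $2\le j\le d$, which is the natural reading given that $d\rightarrow\infty$.
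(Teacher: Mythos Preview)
Your proposal is correct and follows essentially the same route as the paper: decompose $\left\Vert m^{-1}\sum_{t}\mathbf{x}_{t}\epsilon_{t}\right\Vert^{2}$ coordinate-wise, verify via coupling and Cauchy--Schwarz that each $x_{j,t}\epsilon_{t}$ is a zero-mean $L_{2}$-decomposable Bernoulli shift with decay $a>2$, obtain an $O(m)$ bound on the second moment of each coordinate sum, and conclude via Markov. The only cosmetic difference is that the paper invokes Proposition~4 of \citet{berkes2011split} for the $\left\vert\sum_{t}x_{j,t}\epsilon_{t}\right\vert_{2}^{2}\le c_{j}m$ step, whereas you derive it directly through summable autocovariances; your explicit remark on uniformity of the constants in $j$ is a useful clarification that the paper leaves implicit.
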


\begin{proof}
We begin by estimating $E\left\Vert \sum_{t=1}^{m}\mathbf{x}_{t}\epsilon
_{t}\right\Vert ^{2}$, noting that%
\begin{equation}
E\left\Vert \sum_{t=1}^{m}\mathbf{x}_{t}\epsilon _{t}\right\Vert
^{2}=\sum_{j=1}^{d}\left\vert \sum_{t=1}^{m}x_{j,t}\epsilon _{t}\right\vert
_{2}^{2}.  \label{desired-result-1}
\end{equation}%
We now show that Assumptions \ref{b-shifts} and \ref{exogeneity} entail that
the sequence $z_{j,t}=x_{j,t}\epsilon _{t}$ is a zero mean, $L_{2}$%
-decomposable Bernoulli shift. Indeed, Assumption \ref{exogeneity}
immediately yields $E\left( z_{j,t}\right) =0$; further, consider the
coupling constructions%
\begin{align*}
&\widetilde{x}_{j,t,t} =h_{j}^{x}\left( \eta _{j,t}^{x},...,\eta _{j,1}^{x},%
\widetilde{\eta }_{j,0,t,t}^{x},\widetilde{\eta }_{j,-1,t,t}^{x},...\right) ,
\\
&\widetilde{\epsilon }_{t,t} =h^{\epsilon }\left( \eta _{t}^{\epsilon
},...,\eta _{1}^{\epsilon },\widetilde{\eta }_{0,t,t}^{\epsilon },\widetilde{%
\eta }_{-1,t,t}^{\epsilon },...\right) ,
\end{align*}%
where $\left\{ \widetilde{\eta }_{j,0,t,t}^{x},\widetilde{\eta }%
_{j,-1,t,t}^{x},...\right\} $ are independent copies of $\left\{ \eta
_{j,0}^{x},\eta _{j,-1}^{x},...\right\} $, and similarly $\left\{ \widetilde{%
\eta }_{j,0,t,t}^{\epsilon },\widetilde{\eta }_{j,-1,t,t}^{\epsilon
},...\right\} $. It holds that%
\begin{align*}
x_{j,t}\epsilon _{t} =&\left( x_{j,t}\pm \widetilde{x}_{j,t,t}\right) \left(
\epsilon _{t}\pm \widetilde{\epsilon }_{t,t}\right) \\
=&\widetilde{z}_{j,t,t}+\widetilde{x}_{j,t,t}\left( \epsilon _{t}-\widetilde{%
\epsilon }_{t,t}\right) +\widetilde{\epsilon }_{t,t}\left( x_{j,t}-%
\widetilde{x}_{j,t,t}\right) +\left( x_{j,t}-\widetilde{x}_{j,t,t}\right)
\left( \epsilon _{t}-\widetilde{\epsilon }_{t,t}\right) ,
\end{align*}%
where 
\begin{equation*}
\widetilde{z}_{j,t,t}=\widetilde{x}_{j,t,t}\widetilde{\epsilon }_{t,t},
\end{equation*}%
is the coupled version of $z_{j,t}$. Hence it follows that%
\begin{align*}
\left\vert z_{j,t}-\widetilde{z}_{j,t,t}\right\vert _{2} \leq &\left\vert 
\widetilde{x}_{j,t,t}\left( \epsilon _{t}-\widetilde{\epsilon }_{t,t}\right)
\right\vert _{2}+\left\vert \widetilde{\epsilon }_{t,t}\left( x_{j,t}-%
\widetilde{x}_{j,t,t}\right) \right\vert _{2}+\left\vert \left( x_{j,t}-%
\widetilde{x}_{j,t,t}\right) \left( \epsilon _{t}-\widetilde{\epsilon }%
_{t,t}\right) \right\vert _{2} \\
\leq &\left\vert \widetilde{x}_{j,t,t}\right\vert _{4}\left\vert \epsilon
_{t}-\widetilde{\epsilon }_{t,t}\right\vert _{4}+\left\vert x_{j,t}-%
\widetilde{x}_{j,t,t}\right\vert _{4}\left\vert \widetilde{\epsilon }%
_{t,t}\right\vert _{4}+\left\vert \epsilon _{t}-\widetilde{\epsilon }%
_{t,t}\right\vert _{4}\left\vert x_{j,t}-\widetilde{x}_{j,t,t}\right\vert
_{4}
\end{align*}%
having used Minkowski's inequality in the first line, and the
Cauchy-Schwartz inequality in the second one. Assumption \ref{b-shifts} now
immediately yields that there exists a constant $0<c_{0}<\infty $ such that 
\begin{equation*}
\left\vert z_{j,t}-\widetilde{z}_{j,t,t}\right\vert _{2}\leq c_{0}t^{-a},
\end{equation*}%
for $a>2$. Hence, we can directly apply Proposition 4 in %
\citet{berkes2011split}, obtaining that, for all $1\leq j\leq d$, there
exist constants $c_{j}<\infty $ such that 
\begin{equation}
\left\vert \sum_{t=1}^{m}x_{j,t}\epsilon _{t}\right\vert _{2}^{2}\leq c_{j}m.
\label{berkes}
\end{equation}%
Putting (\ref{berkes}) in (\ref{desired-result-1}), we finally obtain that $%
E\left\Vert \sum_{t=1}^{m}\mathbf{x}_{t}\epsilon _{t}\right\Vert ^{2}\leq
c_{0}dm$, whence the desired result follows from Markov inequality.
\end{proof}

\begin{lemma}
\label{max-ineq-2}We assume that Assumptions \ref{b-shifts} and \ref%
{exogeneity} hold. Then it holds that

\begin{equation*}
\max_{1\leq k\leq T_{m}}\frac{1}{k^{\zeta _{3}}}\left\Vert
\sum_{t=m+1}^{m+k}(\mathbf{x}_{t}-\mathbf{c}_{1})\right\Vert =O_{P}(d^{1/2}),
\end{equation*}%
for all $\zeta _{3}>1/2,$ where $\mathbf{c}%
_{1}=(1,E(x_{2,0}),...,E(x_{d,0}))^{\prime }$.
\end{lemma}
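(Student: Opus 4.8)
The plan is to mimic the strategy of Lemma \ref{max-ineq-1} component-by-component, but now we need a \emph{maximal} inequality over $1\le k\le T_m$ rather than a bound at a single endpoint, which is why the weight $k^{-\zeta_3}$ with $\zeta_3>1/2$ appears. First I would reduce to a coordinatewise statement: writing $\mathbf{c}_1=(1,E(x_{2,0}),\dots,E(x_{d,0}))^\prime$, the first coordinate of $\sum_{t=m+1}^{m+k}(\mathbf{x}_t-\mathbf{c}_1)$ is identically $0$ since $x_{1,t}=1$, and for $2\le j\le d$ the centered sequence $v_{j,t}=x_{j,t}-E(x_{j,0})$ is, by Assumption \ref{b-shifts}(ii), a zero-mean $L_4$-decomposable Bernoulli shift (subtracting the constant mean does not affect the coupling bound $|v_{j,t}-\widetilde v_{j,t,\ell}|_4\le c_0\ell^{-a}$ with $a>2$). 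So it suffices to control $\max_{1\le k\le T_m} k^{-\zeta_3}\big|\sum_{t=m+1}^{m+k}v_{j,t}\big|$ for each $j$ and then sum the $L_2$-bounds over the $d-1$ coordinates.

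The core of the argument is a Kolmogorov/Móricz-type maximal inequality for the partial sums of a weakly dependent sequence. Using Proposition 4 in \citet{berkes2011split} (exactly as in Lemma \ref{max-ineq-1}) we get, for each fixed block of length $k$, a bound $\big|\sum_{t=m+1}^{m+k}v_{j,t}\big|_2^2\le c_j k$, and more generally $\big|\sum_{t=m+1}^{m+\ell_1+\ell_2}v_{j,t}-\sum_{t=m+1}^{m+\ell_1}v_{j,t}\big|_2^2\le c_j\ell_2$; this is precisely the "Rademacher--Menshov" hypothesis needed to invoke a maximal inequality of Móricz/Serfling type, giving $E\max_{1\le k\le n}\big|\sum_{t=m+1}^{m+k}v_{j,t}\big|^2\le c_j n(\log n)^2$. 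I would then perform the standard dyadic decomposition: split $[1,T_m]$ into blocks $[2^{i},2^{i+1})$, apply the maximal inequality on each block to get a contribution of order $2^{i}(\log 2^i)^2$, and divide by the weight $2^{i\zeta_3}$ on that block; since $\zeta_3>1/2$ strictly, $\sum_i 2^{i(1-2\zeta_3)} i^2<\infty$, so summing the $L_2$-norms across dyadic blocks converges, yielding $E\max_{1\le k\le T_m} k^{-2\zeta_3}\big|\sum_{t=m+1}^{m+k}v_{j,t}\big|^2\le c_j$ uniformly in $m$ (the $\log$ factors are absorbed because the geometric gain from $\zeta_3>1/2$ dominates). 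Summing over $2\le j\le d$ gives $E\max_{1\le k\le T_m} k^{-2\zeta_3}\big\|\sum_{t=m+1}^{m+k}(\mathbf{x}_t-\mathbf{c}_1)\big\|^2\le c_0 d$, and the conclusion $O_P(d^{1/2})$ follows from Markov's inequality.

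The main obstacle I anticipate is making the maximal inequality rigorous under the decomposable-Bernoulli-shift dependence: Proposition 4 of \citet{berkes2011split} delivers the second-moment bounds on arbitrary consecutive sums, but one must verify that these are uniform enough (i.e., the constant $c_j$ does not grow with $m$ or with the starting index) and that the Móricz-type maximal inequality applies with only the $L_2$ increment condition — here I would either cite an off-the-shelf maximal inequality for sequences satisfying such an increment bound, or carry out the bisection argument by hand (bounding $E\max_{1\le k\le 2^r}S_k^2$ via the recursion $a_{r}\le 2a_{r-1}+c\,2^{r-1}$ sharpened with the $(\log)^2$ factor). A secondary, more mechanical point is checking that subtracting the mean preserves the $L_4$-Bernoulli-shift structure and that the decay rate $a>2$ is inherited, but this is immediate. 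Everything else — the dyadic summation, the role of $\zeta_3>1/2$, and the final Markov step — is routine bookkeeping.
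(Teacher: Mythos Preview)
Your proposal is correct and follows essentially the same route as the paper: coordinatewise reduction to zero-mean $L_4$-decomposable Bernoulli shifts, a M\'oricz-type maximal inequality built on the Berkes--H\"ormann--Schauer moment bound, geometric blocking of $[1,T_m]$, and summability from $\zeta_3>1/2$. The only cosmetic differences are that the paper applies Markov's inequality first and then decomposes the tail probability over $e$-adic blocks, and invokes M\'oricz's theorem in $L_p$ for $2<p\le 4$ (exploiting the $L_4$ assumption) to avoid the $(\log n)^2$ factor you carry in $L_2$; either choice closes the argument since $\sum_i i^2\,2^{i(1-2\zeta_3)}<\infty$.
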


\begin{proof}
It holds that%
\begin{align*}
&P\left( \max_{1\leq k\leq T_{m}}\frac{1}{k^{\zeta _{3}}}\left\Vert
\sum_{t=m+1}^{m+k}(\mathbf{x}_{t}-\mathbf{c}_{1})\right\Vert \geq
xd^{1/2}\right) \\
\leq &P\left( \max_{0\leq \ell \leq \left\lceil \ln T_{m}\right\rceil
}\max_{\exp \left( \ell \right) \leq k\leq \exp \left( \ell +1\right) }\frac{%
1}{k^{\zeta _{3}}}\left\Vert \sum_{t=m+1}^{m+k}(\mathbf{x}_{t}-\mathbf{c}%
_{1})\right\Vert \geq xd^{1/2}\right) \\
\leq &\sum_{\ell =0}^{\left\lceil \ln T_{m}\right\rceil }P\left( \max_{\exp
\left( \ell \right) \leq k\leq \exp \left( \ell +1\right) }\frac{1}{k^{\zeta
_{3}}}\left\Vert \sum_{t=m+1}^{m+k}(\mathbf{x}_{t}-\mathbf{c}%
_{1})\right\Vert \geq xd^{1/2}\right) \\
\leq &\sum_{\ell =0}^{\left\lceil \ln T_{m}\right\rceil }P\left( \max_{\exp
\left( \ell \right) \leq k\leq \exp \left( \ell +1\right) }\left\Vert
\sum_{t=m+1}^{m+k}(\mathbf{x}_{t}-\mathbf{c}_{1})\right\Vert \geq
xd^{1/2}\exp \left( \zeta _{3}\ell \right) \right) \\
\leq &c_{0}x^{-p}d^{-p/2}\sum_{\ell =0}^{\left\lceil \ln T_{m}\right\rceil
}\exp \left( -p\zeta _{3}\ell \right) E\left( \max_{\exp \left( \ell \right)
\leq k\leq \exp \left( \ell +1\right) }\left\Vert \sum_{t=m+1}^{m+k}(\mathbf{%
x}_{t}-\mathbf{c}_{1})\right\Vert ^{p}\right)
\end{align*}%
for any $2\leq p\leq 4$. Further, we have%
\begin{align*}
&E\left( \max_{\exp \left( \ell \right) \leq k\leq \exp \left( \ell
+1\right) }\left\Vert \sum_{t=m+1}^{m+k}(\mathbf{x}_{t}-\mathbf{c}%
_{1})\right\Vert ^{p}\right) \\
\leq &d^{p/2-1}\sum_{j=2}^{d}E\left( \max_{\exp \left( \ell \right) \leq
k\leq \exp \left( \ell +1\right) }\left\vert
\sum_{t=m+1}^{m+k}(x_{j,t}-Ex_{j,0})\right\vert ^{p}\right) \\
\leq &d^{p/2-1}\sum_{j=2}^{d}E\left( \max_{\exp \left( \ell \right) \leq
k\leq \exp \left( \ell +1\right) }\left\vert
\sum_{t=m+1}^{m+k}(x_{j,t}-Ex_{j,0})\right\vert ^{p}\right) .
\end{align*}%
Recall that $x_{j,t}-Ex_{j,0}$ is a zero mean, $L_{4}$-decomposable
Bernoulli shift which satisfies the assumptions of Proposition 4 in %
\citet{berkes2011split}; using Theorem 1 in \citet{moricz1976moment}, it
follows that%
\begin{equation*}
E\left( \max_{1\leq k\leq \exp \left( \ell \right) }\left\vert
\sum_{t=m+1}^{m+k}(x_{j,t}-Ex_{j,0})\right\vert ^{p}\right) \leq c_{j}\exp
\left( \frac{p}{2}\ell \right) .
\end{equation*}%
Hence, putting all together it follows that%
\begin{equation*}
P\left( \max_{1\leq k\leq T_{m}}\frac{1}{k^{\zeta _{3}}}\left\Vert
\sum_{t=m+1}^{m+k}(\mathbf{x}_{t}-\mathbf{c}_{1})\right\Vert \geq
xd^{1/2}\right) \leq c_{0}x^{-p}\sum_{\ell =0}^{\ln T_{m}}\exp \left( p\ell
\left( -\zeta _{3}+\frac{1}{2}\right) \right) \leq c_{1},
\end{equation*}%
whenever $\zeta _{3}>1/2$, which proves the desired result.
\end{proof}

\begin{lemma}
\label{max-inequ-3}We assume that Assumptions \ref{b-shifts} and \ref%
{exogeneity} hold. Then it holds that 
\begin{equation*}
\left\Vert \frac{1}{m}\sum_{t=1}^{m}\left( \mathbf{x}_{t}\mathbf{x}%
_{t}^{\prime }-\mathbf{C}\right) \right\Vert _{F}=O_{P}\left( \frac{d}{\sqrt{%
m}}\right)
\end{equation*}
\end{lemma}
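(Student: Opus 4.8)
The plan is to mirror, almost verbatim, the argument used for Lemma \ref{max-ineq-1}, with the sequence $z_{j,t}=x_{j,t}\epsilon_{t}$ replaced by the centred products $v_{j,h,t}=x_{j,t}x_{h,t}-E\left( x_{j,0}x_{h,0}\right) $. First I would observe that, since the squared Frobenius norm of a matrix is the sum of the squares of its entries, and using Assumption \ref{exogeneity}\textit{(iii)} to identify $\mathbf{C}_{j,h}=E\left( x_{j,0}x_{h,0}\right) $ by stationarity,
\begin{equation*}
E\left\Vert \frac{1}{m}\sum_{t=1}^{m}\left( \mathbf{x}_{t}\mathbf{x}_{t}^{\prime }-\mathbf{C}\right) \right\Vert _{F}^{2}=\frac{1}{m^{2}}\sum_{j=1}^{d}\sum_{h=1}^{d}\left\vert \sum_{t=1}^{m}v_{j,h,t}\right\vert _{2}^{2}.
\end{equation*}
Hence it suffices to bound each term $\left\vert \sum_{t=1}^{m}v_{j,h,t}\right\vert _{2}^{2}$ by $c_{j,h}m$, with the constants $c_{j,h}$ summing (over the $d^{2}$ pairs) to $O\left( d^{2}\right) $.

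Second, I would show that, for each pair $\left( j,h\right) $, the sequence $\left\{ v_{j,h,t},-\infty <t<\infty \right\} $ is a zero mean $L_{2}$-decomposable Bernoulli shift. The case $j=1$ or $h=1$ is immediate, since $x_{1,t}\equiv 1$ implies $v_{1,h,t}=x_{h,t}-E\left( x_{h,0}\right) $, which is an $L_{4}$- (hence $L_{2}$-) decomposable Bernoulli shift by Assumption \ref{b-shifts}\textit{(ii)}. For $2\leq j,h\leq d$ I would reuse the coupling construction of Lemma \ref{max-ineq-1}, writing
\begin{equation*}
x_{j,t}x_{h,t}=\widetilde{x}_{j,t,t}\widetilde{x}_{h,t,t}+\widetilde{x}_{j,t,t}\left( x_{h,t}-\widetilde{x}_{h,t,t}\right) +\widetilde{x}_{h,t,t}\left( x_{j,t}-\widetilde{x}_{j,t,t}\right) +\left( x_{j,t}-\widetilde{x}_{j,t,t}\right) \left( x_{h,t}-\widetilde{x}_{h,t,t}\right) ,
\end{equation*}
and then, using Minkowski's inequality followed by the Cauchy-Schwartz inequality together with Assumption \ref{b-shifts}\textit{(ii)} (which gives $\left\vert x_{j,t}-\widetilde{x}_{j,t,t}\right\vert _{4}\leq c_{0}t^{-a}$ with $a>2$ and uniformly bounded fourth moments $\left\vert x_{j,t}\right\vert _{4}$), I would obtain $\left\vert v_{j,h,t}-\left( \widetilde{x}_{j,t,t}\widetilde{x}_{h,t,t}-E\left( x_{j,0}x_{h,0}\right) \right) \right\vert _{2}\leq c_{0}t^{-a}$, which is precisely the decomposability estimate of Definition \ref{bernoulli} (with the truncation lag in the role of $t$), while $\left\vert v_{j,h,t}\right\vert _{2}<\infty $ by Cauchy-Schwartz. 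Moreover $E\left( v_{j,h,t}\right) =0$ by stationarity.

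Third, I would invoke Proposition 4 in \citet{berkes2011split} exactly as in Lemma \ref{max-ineq-1}, concluding that there exist constants $c_{j,h}<\infty $ such that $\left\vert \sum_{t=1}^{m}v_{j,h,t}\right\vert _{2}^{2}\leq c_{j,h}m$; since the moment and decay bounds in Assumption \ref{b-shifts} are uniform over $1\leq j,h\leq d$, one has $\sum_{j=1}^{d}\sum_{h=1}^{d}c_{j,h}\leq c_{0}d^{2}$. Substituting back gives $E\left\Vert m^{-1}\sum_{t=1}^{m}\left( \mathbf{x}_{t}\mathbf{x}_{t}^{\prime }-\mathbf{C}\right) \right\Vert _{F}^{2}\leq c_{0}d^{2}/m$, and the claimed rate $O_{P}\left( d/\sqrt{m}\right) $ follows from Markov's inequality. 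The step I expect to be the main obstacle is the second one: verifying carefully that the product of two $L_{4}$-decomposable Bernoulli shifts is an $L_{2}$-decomposable Bernoulli shift whose coupling rate still satisfies the hypotheses of Proposition 4 in \citet{berkes2011split}, and — crucially, because $d$ is allowed to grow — ensuring that all the constants obtained are uniform in $\left( j,h\right) $, so that summing over the $d^{2}$ entries contributes only a factor $d^{2}$.
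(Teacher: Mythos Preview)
Your proposal is correct and follows essentially the same route as the paper's proof: both establish that each product $x_{j,t}x_{h,t}$ is an $L_{2}$-decomposable Bernoulli shift via the same add-and-subtract coupling trick combined with Minkowski and Cauchy--Schwartz, then apply Proposition~4 in \citet{berkes2011split} entrywise, sum over the $d^{2}$ entries of the Frobenius norm, and finish with Markov's inequality. You are, if anything, slightly more careful than the paper on two points: you treat the constant regressor $x_{1,t}=1$ explicitly, and you flag the need for the constants $c_{j,h}$ to be uniform in $(j,h)$ so that the double sum scales as $d^{2}$ --- the paper simply writes ``$c_{hj}$ finite'' and leaves the uniformity implicit.
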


\begin{proof}
We begin by showing that, for all $2\leq h,j\leq d$, $x_{h,t}x_{j,t}$ is an $%
L_{2}$-decomposable Bernoulli shift with $a>2$. Indeed, consider the coupling%
\begin{equation*}
\widetilde{x}_{j,t,t}=g_{j}^{x}\left( \eta _{j,t}^{x},...,\eta _{j,1}^{x},%
\widetilde{\eta }_{j,0,t,t}^{x},\widetilde{\eta }_{j,-1,t,t}^{x},...\right) ,
\end{equation*}%
where $\left\{ \widetilde{\eta }_{j,0,t,t}^{x},\widetilde{\eta }%
_{j,-1,t,t}^{x},...\right\} $ are independent copies of $\left\{ \eta
_{j,0}^{x},\eta _{j,-1}^{x},...\right\} $, and similarly $\left\{ \widetilde{%
\eta }_{j,0,t,t}^{\epsilon },\widetilde{\eta }_{j,-1,t,t}^{\epsilon
},...\right\} $, $2\leq j\leq d$. It holds that%
\begin{align*}
x_{h,t}x_{j,t} =&\left( x_{h,t}\pm \widetilde{x}_{h,t,t}\right) \left(
x_{j,t}\pm \widetilde{x}_{j,t,t}\right) \\
=&\widetilde{z}_{j,h,t,t}+\widetilde{x}_{j,t,t}\left( x_{h,t}-\widetilde{x}%
_{h,t,t}\right) +\widetilde{x}_{h,t,t}\left( x_{j,t}-\widetilde{x}%
_{j,t,t}\right) +\left( x_{j,t}-\widetilde{x}_{j,t,t}\right) \left( x_{h,t}-%
\widetilde{x}_{h,t,t}\right) ,
\end{align*}%
where $\widetilde{z}_{j,h,t,t}=\widetilde{x}_{j,t,t}\widetilde{x}_{h,t,t}$
is the coupled version of $z_{j,h,t}=x_{h,t}x_{j,t}$. Hence it follows that%
\begin{align*}
\left\vert z_{j,t}-\widetilde{z}_{j,h,t,t}\right\vert _{2} \leq &\left\vert 
\widetilde{x}_{j,t,t}\left( x_{h,t}-\widetilde{x}_{h,t,t}\right) \right\vert
_{2}+\left\vert \widetilde{x}_{h,t,t}\left( x_{j,t}-\widetilde{x}%
_{j,t,t}\right) \right\vert _{2}+\left\vert \left( x_{j,t}-\widetilde{x}%
_{j,t,t}\right) \left( x_{h,t}-\widetilde{x}_{h,t,t}\right) \right\vert _{2}
\\
\leq &\left\vert \widetilde{x}_{j,t,t}\right\vert _{4}\left\vert x_{h,t}-%
\widetilde{x}_{h,t,t}\right\vert _{4}+\left\vert x_{j,t}-\widetilde{x}%
_{j,t,t}\right\vert _{4}\left\vert \widetilde{x}_{h,t,t}\right\vert
_{4}+\left\vert x_{j,t}-\widetilde{x}_{j,t,t}\right\vert _{4}\left\vert
x_{h,t}-\widetilde{x}_{h,t,t}\right\vert _{4}
\end{align*}%
having used Minkowski's inequality in the first line, and the
Cauchy-Schwartz inequality in the second one. Assumption \ref{b-shifts}%
\textit{(ii)} now immediately yields that there exists a constant $%
0<c_{0}<\infty $ such that 
\begin{equation*}
\left\vert z_{j,h,t}-\widetilde{z}_{j,h,t,t}\right\vert _{2}\leq c_{0}t^{-a}.
\end{equation*}%
We now note that%
\begin{equation*}
\left\Vert \sum_{t=1}^{m}\left( \mathbf{x}_{t}\mathbf{x}_{t}^{\prime }-%
\mathbf{C}\right) \right\Vert _{F}^{2}=\sum_{j,h=1}^{d}\left\vert
\sum_{t=1}^{m}\left( x_{h,t}x_{j,t}-\mathbf{C}_{h,j}\right) \right\vert ^{2}.
\end{equation*}%
Using Proposition 4 in \citet{berkes2011split}, it follows that 
\begin{equation*}
\left\vert \sum_{t=1}^{m}\left( x_{h,t}x_{j,t}-\mathbf{C}_{h,j}\right)
\right\vert _{2}^{2}\leq c_{hj}m,
\end{equation*}%
where the constants $c_{hj}$ are finite. Hence the desired result
immediately follows from Markov inequality.
\end{proof}

\begin{lemma}
\label{approx-1}We assume that Assumptions \ref{b-shifts}-\ref{contamination}%
, and $d=O\left( m^{1/4}\right) $, hold. Then it holds that 
\begin{align*}
&r_{m}^{\eta -1/2}\max_{a_{m}\leq k\leq T_{m}}\frac{\left\vert \displaystyle%
\sum_{t=m+1}^{m+k}\widehat{\epsilon }_{t}\right\vert }{m^{1/2}\left( 1+%
\displaystyle\frac{k}{m}\right) \displaystyle\left( \frac{k}{k+m}\right)
^{\eta }} \\
=&r_{m}^{\eta -1/2}\max_{a_{m}\leq k\leq T_{m}}\frac{\left\vert \displaystyle%
\sum_{t=m+1}^{m+k}\epsilon _{t}-\displaystyle\frac{k}{m}\displaystyle%
\sum_{t=1}^{m}\epsilon _{t}\right\vert }{m^{1/2}\left( 1+\displaystyle\frac{k%
}{m}\right) \displaystyle\left( \frac{k}{k+m}\right) ^{\eta }}+o_{P}(1).
\end{align*}
\end{lemma}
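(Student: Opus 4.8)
The strategy is to reduce the statement to a uniform control of the least–squares estimation error. Under Assumption \ref{contamination}, for $t\geq m+1$ the residuals satisfy $\widehat{\epsilon}_{t}=\epsilon_{t}-\mathbf{x}_{t}^{\prime }(\widehat{\beta}_{m}-\beta_{0})$, with $\widehat{\beta}_{m}-\beta_{0}=(\frac{1}{m}\sum_{t=1}^{m}\mathbf{x}_{t}\mathbf{x}_{t}^{\prime })^{-1}\frac{1}{m}\sum_{t=1}^{m}\mathbf{x}_{t}\epsilon_{t}$, so that $\sum_{t=m+1}^{m+k}\widehat{\epsilon}_{t}=\sum_{t=m+1}^{m+k}\epsilon_{t}-(\sum_{t=m+1}^{m+k}\mathbf{x}_{t}^{\prime })(\widehat{\beta}_{m}-\beta_{0})$. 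Writing $D(k):=\frac{k}{m}\sum_{t=1}^{m}\epsilon_{t}-(\sum_{t=m+1}^{m+k}\mathbf{x}_{t}^{\prime })(\widehat{\beta}_{m}-\beta_{0})$, the lemma is therefore equivalent to
\[
r_{m}^{\eta -1/2}\max_{a_{m}\leq k\leq T_{m}}\frac{\left\vert D(k)\right\vert }{m^{1/2}\left( 1+k/m\right) \left( k/(k+m)\right) ^{\eta }}=o_{P}(1).
\]

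The crucial algebraic fact is that, because $x_{1,t}\equiv 1$, one has $\mathbf{c}_{1}=E(\mathbf{x}_{t})=\mathbf{C}\mathbf{e}_{1}$; since $\mathbf{C}$ is symmetric and invertible (Assumption \ref{exogeneity}\textit{(iii)}) this gives $\mathbf{c}_{1}^{\prime }\mathbf{C}^{-1}=\mathbf{e}_{1}^{\prime }$, hence $\mathbf{c}_{1}^{\prime }\mathbf{C}^{-1}\mathbf{x}_{t}=x_{1,t}=1$ and therefore $k\,\mathbf{c}_{1}^{\prime }\mathbf{C}^{-1}(\frac{1}{m}\sum_{t=1}^{m}\mathbf{x}_{t}\epsilon_{t})=\frac{k}{m}\sum_{t=1}^{m}\epsilon_{t}$. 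Splitting $\sum_{t=m+1}^{m+k}\mathbf{x}_{t}^{\prime }=k\,\mathbf{c}_{1}^{\prime }+\sum_{t=m+1}^{m+k}(\mathbf{x}_{t}-\mathbf{c}_{1})^{\prime }$ and inserting this identity, I obtain $D(k)=D_{1}(k)-D_{2}(k)$ with
\begin{align*}
D_{1}(k)&=k\,\mathbf{c}_{1}^{\prime }\left[\mathbf{C}^{-1}-\left(\frac{1}{m}\sum_{t=1}^{m}\mathbf{x}_{t}\mathbf{x}_{t}^{\prime }\right)^{-1}\right]\frac{1}{m}\sum_{t=1}^{m}\mathbf{x}_{t}\epsilon_{t},\\
D_{2}(k)&=\left(\sum_{t=m+1}^{m+k}(\mathbf{x}_{t}-\mathbf{c}_{1})^{\prime }\right)(\widehat{\beta}_{m}-\beta_{0}).
\end{align*}
For $D_{1}$ I will use $\Vert \mathbf{c}_{1}\Vert =O(d^{1/2})$ (each $E(x_{j,0})$ is bounded by the uniform $L_{4}$-bound in Assumption \ref{b-shifts}), the bound $\Vert (\frac{1}{m}\sum\mathbf{x}_{t}\mathbf{x}_{t}^{\prime })^{-1}-\mathbf{C}^{-1}\Vert =O_{P}(d/m^{1/2})$ obtained from Lemma \ref{max-inequ-3} together with $s_{\min }(\mathbf{C})>0$ and $d/m^{1/2}\to 0$, and Lemma \ref{max-ineq-1}, which gives $\Vert \frac{1}{m}\sum\mathbf{x}_{t}\epsilon_{t}\Vert =O_{P}((d/m)^{1/2})$; multiplying, $\sup_{k}k^{-1}\vert D_{1}(k)\vert =O_{P}(d^{2}/m)$. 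For $D_{2}$ I will use Lemma \ref{max-ineq-2}, which for any fixed $\zeta _{3}>1/2$ gives $\max_{1\leq k\leq T_{m}}k^{-\zeta _{3}}\Vert \sum_{t=m+1}^{m+k}(\mathbf{x}_{t}-\mathbf{c}_{1})\Vert =O_{P}(d^{1/2})$, together with $\Vert \widehat{\beta}_{m}-\beta_{0}\Vert =O_{P}((d/m)^{1/2})$, yielding $\sup_{k}k^{-\zeta _{3}}\vert D_{2}(k)\vert =O_{P}(d/m^{1/2})$.

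It then remains to insert these rates into the normalized maximum, using $\frac{k/m}{(1+k/m)(k/(k+m))^{\eta }}=(k/(k+m))^{1-\eta }$ and $\frac{k^{\zeta _{3}}}{m^{1/2}(1+k/m)(k/(k+m))^{\eta }}=m^{1/2}k^{\zeta _{3}-\eta }(k+m)^{\eta -1}$. The $D_{1}$-term contributes $O_{P}\big(r_{m}^{\eta -1/2}(d^{2}/m^{1/2})\max_{a_{m}\leq k\leq T_{m}}(k/(k+m))^{1-\eta }\big)$; since $k\mapsto k/(k+m)$ is increasing, this maximum is at most $1$ for $\eta \leq 1$ and equals $r_{m}^{1-\eta }$ for $\eta >1$, so the contribution is $O_{P}\big(r_{m}^{(\eta -1/2)\wedge 1/2}(d^{2}/m^{1/2})\big)=o_{P}(1)$, because $d^{2}/m^{1/2}=O(1)$ under $d=O(m^{1/4})$ while $r_{m}=a_{m}/(a_{m}+m)\to 0$ by (\ref{a-m}). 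The $D_{2}$-term contributes $O_{P}\big(r_{m}^{\eta -1/2}d\max_{a_{m}\leq k\leq T_{m}}k^{\zeta _{3}-\eta }(k+m)^{\eta -1}\big)$; choosing $\zeta _{3}\in (1/2,\min \{\eta ,3/4\})$, differentiating $\log\!\big(k^{\zeta _{3}-\eta }(k+m)^{\eta -1}\big)$ shows this map has no critical point in $(0,\infty)$ and is decreasing on $[a_{m},T_{m}]$, so the maximum is attained at $k=a_{m}$ and is of order $a_{m}^{\zeta _{3}-\eta }m^{\eta -1}$; since $r_{m}\asymp a_{m}/m$ and $a_{m}\leq m$, the contribution is $O_{P}(d\,a_{m}^{\zeta _{3}-1/2}m^{-1/2})=O_{P}(m^{\zeta _{3}-3/4})=o_{P}(1)$, using $d=O(m^{1/4})$ and $\zeta _{3}<3/4$. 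Combining the two terms proves the claim.

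The main obstacle is this last step: one must control the interaction between the heavy weight $(k/(k+m))^{\eta }$ in the denominator, which inflates the small-$k$ summands, and the trimming factor $r_{m}^{\eta -1/2}$, uniformly over the possibly very long window $[a_{m},T_{m}]$. It is precisely here that the freedom in Lemma \ref{max-ineq-2} to take $\zeta _{3}$ arbitrarily close to $1/2$, together with the dimension restriction $d=O(m^{1/4})$, are both indispensable, and the monotonicity of the two $k$-maximands has to be verified with some care.
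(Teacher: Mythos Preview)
Your proof is correct and follows essentially the same route as the paper: both arguments reduce the claim to controlling the weighted maximum of the difference $D(k)$, use the key identity $\mathbf{c}_{1}^{\prime }\mathbf{C}^{-1}=\mathbf{e}_{1}^{\prime }$ (so that $k\,\mathbf{c}_{1}^{\prime }\mathbf{C}^{-1}\frac{1}{m}\sum_{t=1}^{m}\mathbf{x}_{t}\epsilon _{t}=\frac{k}{m}\sum_{t=1}^{m}\epsilon _{t}$), and bound the two remainder pieces via Lemmas \ref{max-ineq-1}--\ref{max-inequ-3} together with (\ref{frob-inv}). The only cosmetic differences are that the paper first replaces $(\frac{1}{m}\sum \mathbf{x}_{t}\mathbf{x}_{t}^{\prime })^{-1}$ by $\mathbf{C}^{-1}$ and then centers $\sum \mathbf{x}_{t}$, whereas you center first and then replace the inverse only on the $k\mathbf{c}_{1}^{\prime }$ piece (which lets you bound $D_{2}$ directly through $\Vert \widehat{\beta }_{m}-\beta _{0}\Vert =O_{P}((d/m)^{1/2})$, saving a harmless factor $d^{1/2}$), and the paper bounds the $k$-maximum by factoring into $(k/(k+m))^{\zeta _{3}-\eta }$ and $(m+k)^{\zeta _{3}-1}$ separately while you argue monotonicity of the product directly.
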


Note that

\begin{align*}
\sum_{t=m+1}^{m+k}\widehat{\epsilon }_{t}& =\sum_{t=m+1}^{m+k}\left( y_{t}-%
\mathbf{x}_{t}^{\prime }\widehat{\beta }_{m}\right)
=\sum_{t=m+1}^{m+k}\left( \epsilon _{t}-\mathbf{x}_{t}^{\prime }\left( 
\widehat{\beta }_{m}-\beta _{0}\right) \right) \\
& =\sum_{t=m+1}^{m+k}\epsilon _{t}-\sum_{t=m+1}^{m+k}\mathbf{x}_{t}^{\prime
}\left( \sum_{t=1}^{m}\mathbf{x}_{t}\mathbf{x}_{t}^{\prime }\right)
^{-1}\left( \sum_{t=1}^{m}\mathbf{x}_{t}\epsilon _{t}\right) .
\end{align*}%
We begin by showing that%
\begin{equation*}
r_{m}^{\eta -1/2}\max_{a_{m}\leq k\leq T_{m}}\frac{\left\vert \displaystyle%
\sum_{t=m+1}^{m+k}\mathbf{x}_{t}^{\prime }\left( \displaystyle\left( \frac{1%
}{m}\sum_{t=1}^{m}\mathbf{x}_{t}\mathbf{x}_{t}^{\prime }\right) ^{-1}-%
\mathbf{C}^{-1}\right) \displaystyle\left( \frac{1}{m}\sum_{t=1}^{m}\mathbf{x%
}_{t}\epsilon _{t}\right) \right\vert }{m^{1/2}\left( 1+\displaystyle\frac{k%
}{m}\right) \displaystyle\left( \frac{k}{k+m}\right) ^{\eta }}=o_{P}(1).
\end{equation*}%
Indeed%
\begin{align*}
& r_{m}^{\eta -1/2}\max_{a_{m}\leq k\leq T_{m}}\frac{\left\vert \displaystyle%
\sum_{t=m+1}^{m+k}\mathbf{x}_{t}^{\prime }\left( \displaystyle\left( \frac{1%
}{m}\sum_{t=1}^{m}\mathbf{x}_{t}\mathbf{x}_{t}^{\prime }\right) ^{-1}-%
\mathbf{C}^{-1}\right) \displaystyle\left( \frac{1}{m}\sum_{t=1}^{m}\mathbf{x%
}_{t}\epsilon _{t}\right) \right\vert }{m^{1/2}\left( 1+\displaystyle\frac{k%
}{m}\right) \displaystyle\left( \frac{k}{k+m}\right) ^{\eta }} \\
\leq & r_{m}^{\eta -1/2}\max_{a_{m}\leq k\leq T_{m}}\frac{\left\Vert %
\displaystyle\sum_{t=m+1}^{m+k}\mathbf{x}_{t}\right\Vert \left\Vert %
\displaystyle\left( \frac{1}{m}\sum_{t=1}^{m}\mathbf{x}_{t}\mathbf{x}%
_{t}^{\prime }\right) ^{-1}-\mathbf{C}^{-1}\right\Vert _{F}\displaystyle%
\left\Vert \frac{1}{m}\sum_{t=1}^{m}\mathbf{x}_{t}\epsilon _{t}\right\Vert }{%
m^{1/2}\left( 1+\displaystyle\frac{k}{m}\right) \displaystyle\left( \frac{k}{%
k+m}\right) ^{\eta }}.
\end{align*}%
By stationarity implied by Assumption \ref{b-shifts}, and standard
arguments, it follows immediately that $\left\Vert \sum_{t=m+1}^{m+k}\mathbf{%
x}_{t}\right\Vert =O_{P}\left( d^{1/2}k\right) $. Further, note that, using
Taylor's expansion%
\begin{align*}
& \left( \frac{1}{m}\sum_{t=1}^{m}\mathbf{x}_{t}\mathbf{x}_{t}^{\prime
}\right) ^{-1}-\mathbf{C}^{-1} \\
=& \mathbf{C}^{-1}-\left( \frac{1}{m}\sum_{t=1}^{m}\mathbf{x}_{t}\mathbf{x}%
_{t}^{\prime }\pm \mathbf{C}\right) ^{-1}=\mathbf{C}^{-1}-\left( \left( 
\frac{1}{m}\sum_{t=1}^{m}\mathbf{x}_{t}\mathbf{x}_{t}^{\prime }\mathbf{C}%
^{-1}\pm \mathbf{I}_{d}\right) \mathbf{C}\right) ^{-1} \\
=& \mathbf{C}^{-1}-\mathbf{C}^{-1}\left( \mathbf{I}_{d}+\left( \frac{1}{m}%
\sum_{t=1}^{m}\mathbf{x}_{t}\mathbf{x}_{t}^{\prime }\mathbf{C}^{-1}-\mathbf{I%
}_{d}\right) \right) ^{-1} \\
=& \mathbf{C}^{-1}-\mathbf{C}^{-1}\left( \mathbf{I}_{d}-\left( \frac{1}{m}%
\sum_{t=1}^{m}\mathbf{x}_{t}\mathbf{x}_{t}^{\prime }-\mathbf{C}\right) 
\mathbf{C}^{-1}\right) +o\left( \left\Vert \mathbf{C}^{-1}\left( \frac{1}{m}%
\sum_{t=1}^{m}\mathbf{x}_{t}\mathbf{x}_{t}^{\prime }-\mathbf{C}\right) 
\mathbf{C}^{-1}\right\Vert _{F}\right) \\
=& \mathbf{C}^{-1}\left( \frac{1}{m}\sum_{t=1}^{m}\mathbf{x}_{t}\mathbf{x}%
_{t}^{\prime }-\mathbf{C}\right) \mathbf{C}^{-1}+o\left( \left\Vert \mathbf{C%
}^{-1}\left( \frac{1}{m}\sum_{t=1}^{m}\mathbf{x}_{t}\mathbf{x}_{t}^{\prime }-%
\mathbf{C}\right) \mathbf{C}^{-1}\right\Vert _{F}\right) .
\end{align*}%
Hence it holds that%
\begin{align*}
& \left\Vert \left( \frac{1}{m}\sum_{t=1}^{m}\mathbf{x}_{t}\mathbf{x}%
_{t}^{\prime }\right) ^{-1}-\mathbf{C}^{-1}\right\Vert _{F}-\left\Vert 
\mathbf{C}^{-1}\left( \frac{1}{m}\sum_{t=1}^{m}\mathbf{x}_{t}\mathbf{x}%
_{t}^{\prime }-\mathbf{C}\right) \mathbf{C}^{-1}\right\Vert _{F} \\
=& o\left( \left\Vert \mathbf{C}^{-1}\left( \frac{1}{m}\sum_{t=1}^{m}\mathbf{%
x}_{t}\mathbf{x}_{t}^{\prime }-\mathbf{C}\right) \mathbf{C}^{-1}\right\Vert
_{F}\right) ,
\end{align*}%
and%
\begin{equation*}
\left\Vert \mathbf{C}^{-1}\left( \frac{1}{m}\sum_{t=1}^{m}\mathbf{x}_{t}%
\mathbf{x}_{t}^{\prime }-\mathbf{C}\right) \mathbf{C}^{-1}\right\Vert
_{F}\leq s_{\max }^{2}\left( \mathbf{C}^{-1}\right) \left\Vert \frac{1}{m}%
\sum_{t=1}^{m}\mathbf{x}_{t}\mathbf{x}_{t}^{\prime }-\mathbf{C}\right\Vert
_{F}=O_{P}\left( \frac{d}{m^{1/2}}\right) ,
\end{equation*}%
using Lemma \ref{max-inequ-3} and seeing as $s_{\max }\left( \mathbf{C}%
^{-1}\right) =s_{\min }^{-1}\left( \mathbf{C}\right) $ and, by Assumption %
\ref{exogeneity}\textit{(iii)}, $s_{\min }\left( \mathbf{C}\right) >0$. By a
similar logic%
\begin{equation*}
\left\Vert \mathbf{C}^{-1}\left( \frac{1}{m}\sum_{t=1}^{m}\mathbf{x}_{t}%
\mathbf{x}_{t}^{\prime }-\mathbf{C}\right) \mathbf{C}^{-1}\right\Vert
_{F}\leq s_{\max }^{2}\left( \mathbf{C}^{-1}\right) \left\Vert \frac{1}{m}%
\sum_{t=1}^{m}\mathbf{x}_{t}\mathbf{x}_{t}^{\prime }-\mathbf{C}\right\Vert
_{F}=O_{P}\left( dm^{-1/2}\right) .
\end{equation*}%
Thus it finally follows that 
\begin{equation}
\left\Vert \left( \frac{1}{m}\sum_{t=1}^{m}\mathbf{x}_{t}\mathbf{x}%
_{t}^{\prime }\right) ^{-1}-\mathbf{C}^{-1}\right\Vert _{F}=O_{P}\left( 
\frac{d}{m^{1/2}}\right) .  \label{frob-inv}
\end{equation}%
Finally, using Lemma \ref{max-ineq-1} and (\ref{frob-inv})%
\begin{align*}
& r_{m}^{\eta -1/2}\max_{a_{m}\leq k\leq T_{m}}\frac{\displaystyle\left\Vert
\sum_{t=m+1}^{m+k}\mathbf{x}_{t}\right\Vert \left\Vert \left( \displaystyle%
\left( \frac{1}{m}\sum_{t=1}^{m}\mathbf{x}_{t}\mathbf{x}_{t}^{\prime
}\right) ^{-1}-\mathbf{C}^{-1}\right) \right\Vert _{F}\displaystyle%
\left\Vert \frac{1}{m}\sum_{t=1}^{m}\mathbf{x}_{t}\epsilon _{t}\right\Vert }{%
m^{1/2}\left( 1+\displaystyle\frac{k}{m}\right) \displaystyle\left( \frac{k}{%
k+m}\right) ^{\eta }} \\
=& O_{P}\left( 1\right) r_{m}^{\eta -1/2}\max_{a_{m}\leq k\leq T_{m}}\frac{%
\left( d^{1/2}k\right) \left( dm^{-1/2}\right) \left( d^{1/2}m^{-1/2}\right) 
}{m^{1/2}\left( 1+\displaystyle\frac{k}{m}\right) \displaystyle\left( \frac{k%
}{k+m}\right) ^{\eta }} \\
=& O_{P}\left( 1\right) r_{m}^{\eta -1/2}d^{2}m^{-1/2}\max_{a_{m}\leq k\leq
T_{m}}\left( \frac{k}{k+m}\right) ^{1-\eta }.
\end{align*}%
We note that, depending on the values of $\eta $, there are two bounds. If $%
\eta \leq 1$, then $\max_{a_{m}\leq k\leq T_{m}}\left( k/\left( m+k\right)
\right) ^{1-\eta }$ is bounded; conversely, if $\eta >1$, the maximum is
attained at $\left( a_{m}/\left( m+a_{m}\right) \right) ^{1-\eta }$. Hence,
putting all together, it finally holds that%
\begin{align*}
& r_{m}^{\eta -1/2}\max_{a_{m}\leq k\leq T_{m}}\frac{\left\vert
\sum_{t=m+1}^{m+k}\mathbf{x}_{t}^{\prime }\displaystyle\left( \frac{1}{m}%
\sum_{t=1}^{m}\mathbf{x}_{t}\mathbf{x}_{t}^{\prime }-\mathbf{C}\right) ^{-1}%
\displaystyle\left( \frac{1}{m}\sum_{t=1}^{m}\mathbf{x}_{t}\epsilon
_{t}\right) \right\vert }{m^{1/2}\left( 1+\displaystyle\frac{k}{m}\right) %
\displaystyle\left( \frac{k}{k+m}\right) ^{\eta }} \\
=& O_{P}\left( \frac{d^{2}}{m^{\min \left\{ 1,\eta \right\} }}a_{m}^{\max
\left\{ 1/2,\eta -1/2\right\} }\right) =o_{P}\left( 1\right) ,
\end{align*}%
on account of the fact that $d=O\left( m^{1/4}\right) $. Therefore it
follows that%
\begin{align*}
& r_{m}^{\eta -1/2}\max_{a_{m}\leq k\leq T_{m}}\frac{\left\vert \displaystyle%
\sum_{t=m+1}^{m+k}\mathbf{x}_{t}^{\prime }\left( \sum_{t=1}^{m}\mathbf{x}_{t}%
\mathbf{x}_{t}^{\prime }\right) ^{-1}\left( \sum_{t=1}^{m}\mathbf{x}%
_{t}\epsilon _{t}\right) \right\vert }{m^{1/2}\left( 1+\displaystyle\frac{k}{%
m}\right) \displaystyle\left( \frac{k}{k+m}\right) ^{\eta }} \\
=& r_{m}^{\eta -1/2}\max_{a_{m}\leq k\leq T_{m}}\frac{\left\vert %
\displaystyle\sum_{t=m+1}^{m+k}\mathbf{x}_{t}^{\prime }\mathbf{C}^{-1}%
\displaystyle\left( \frac{1}{m}\sum_{t=1}^{m}\mathbf{x}_{t}\epsilon
_{t}\right) \right\vert }{m^{1/2}\left( 1+\displaystyle\frac{k}{m}\right) %
\displaystyle\left( \frac{k}{k+m}\right) ^{\eta }}+o_{P}\left( 1\right) .
\end{align*}%
We now show that 
\begin{equation*}
r_{m}^{\eta -1/2}\max_{a_{m}\leq k\leq T_{m}}\frac{\left\vert \displaystyle%
\sum_{t=m+1}^{m+k}\left( \mathbf{x}_{t}-\mathbf{c}_{1}\right) ^{\prime }%
\mathbf{C}^{-1}\displaystyle\left( \frac{1}{m}\sum_{t=1}^{m}\mathbf{x}%
_{t}\epsilon _{t}\right) \right\vert }{m^{1/2}\left( 1+\displaystyle\frac{k}{%
m}\right) \displaystyle\left( \frac{k}{k+m}\right) ^{\eta }}=o_{P}\left(
1\right) .
\end{equation*}%
Indeed

\begin{align*}
r_{m}^{\eta -1/2}& \max_{a_{m}\leq k\leq T_{m}}\frac{\left\vert \displaystyle%
\sum_{t=m+1}^{m+k}\left( \mathbf{x}_{t}-\mathbf{c}_{1}\right) ^{\prime }%
\left[ \mathbf{C}^{-1}\displaystyle\left( \frac{1}{m}\sum_{t=1}^{m}\mathbf{x}%
_{t}\epsilon _{t}\right) \right] \right\vert }{m^{1/2}\left( 1+\displaystyle%
\frac{k}{m}\right) \displaystyle\left( \frac{k}{k+m}\right) ^{\eta }} \\
& \leq r_{m}^{\eta -1/2}\max_{a_{m}\leq k\leq T_{m}}\frac{\displaystyle%
\left\Vert \sum_{t=m+1}^{m+k}\left( \mathbf{x}_{t}-\mathbf{c}_{1}\right)
\right\Vert \left\Vert \mathbf{C}^{-1}\right\Vert _{F}\displaystyle%
\left\Vert \frac{1}{m}\sum_{t=1}^{m}\mathbf{x}_{t}\epsilon _{t}\right\Vert }{%
m^{1/2}\left( 1+\displaystyle\frac{k}{m}\right) \displaystyle\left( \frac{k}{%
k+m}\right) ^{\eta }}.
\end{align*}%
Using Lemmas \ref{max-ineq-1}-\ref{max-inequ-3}, this entails that 
\begin{align*}
r_{m}^{\eta -1/2}& \max_{a_{m}\leq k\leq T_{m}}\frac{\left\vert
\sum_{t=m+1}^{m+k}\left( \mathbf{x}_{t}-\mathbf{c}_{1}\right) ^{\prime }%
\left[ \mathbf{C}^{-1}\displaystyle\left( \frac{1}{m}\sum_{t=1}^{m}\mathbf{x}%
_{t}\epsilon _{t}\right) \right] \right\vert }{m^{1/2}\left( 1+\displaystyle%
\frac{k}{m}\right) \displaystyle\left( \frac{k}{k+m}\right) ^{\eta }} \\
& =O_{P}(1)r_{m}^{\eta -1/2}d\left\Vert \mathbf{C}^{-1}\right\Vert
_{F}\max_{a_{m}\leq k\leq T_{m}}\frac{k^{\zeta _{3}}}{\left( 1+\displaystyle%
\frac{k}{m}\right) \displaystyle\left( \frac{k}{k+m}\right) ^{\eta }} \\
& =O_{P}(1)r_{m}^{\eta -1/2}d\left\Vert \mathbf{C}^{-1}\right\Vert
_{F}\max_{a_{m}\leq k\leq T_{m}}\left( \frac{k}{k+m}\right) ^{\zeta
_{3}-\eta }\max_{a_{m}\leq k\leq T_{m}}\frac{1}{\left( m+k\right) ^{1-\zeta
_{3}}}.
\end{align*}%
By virtue of Lemma \ref{max-ineq-2}, we can choose $1/2<\zeta _{3}<\min
\{1,\eta \}$ in the above, so that 
\begin{align*}
& r_{m}^{\eta -1/2}\max_{a_{m}\leq k\leq T_{m}}\frac{\left\vert \displaystyle%
\sum_{t=m+1}^{m+k}\left( \mathbf{x}_{t}-\mathbf{c}_{1}\right) ^{\prime }%
\left[ \mathbf{C}^{-1}\displaystyle\left( \frac{1}{m}\sum_{t=1}^{m}\mathbf{x}%
_{t}\epsilon _{t}\right) \right] \right\vert }{m^{1/2}\left( 1+\displaystyle%
\frac{k}{m}\right) \displaystyle\left( \frac{k}{k+m}\right) ^{\eta }} \\
& =O_{P}(1)r_{m}^{\eta -1/2}dm^{\zeta _{3}-1}\left\Vert \mathbf{C}%
^{-1}\right\Vert _{F}\left( \frac{a_{m}}{m}\right) ^{\zeta _{3}-\eta } \\
& =O_{P}(1)\left\Vert \mathbf{C}^{-1}\right\Vert _{F}dm^{-1/2}a_{m}^{\zeta
_{3}-1/2}.
\end{align*}%
In order to bound $\left\Vert \mathbf{C}^{-1}\right\Vert _{F}$, we note that%
\begin{equation*}
\left\Vert \mathbf{C}^{-1}\right\Vert _{F}\leq d^{1/2}\left\Vert \mathbf{C}%
^{-1}\right\Vert _{2}=d^{1/2}\frac{1}{s_{\min }\left( \mathbf{C}\right) },
\end{equation*}%
where $\left\Vert \mathbf{C}^{-1}\right\Vert _{2}$ is the $L_{2}$-norm of a
matrix. Again by Assumption \ref{exogeneity}\textit{(iii)}, $s_{\min }\left( 
\mathbf{C}\right) >0$, which entails that there exists a finite constant $%
c_{0}$ such that 
\begin{equation}
\left\Vert \mathbf{C}^{-1}\right\Vert _{F}\leq c_{0}d^{1/2}.  \label{c-frob}
\end{equation}%
Hence it holds that%
\begin{equation*}
r_{m}^{\eta -1/2}\max_{a_{m}\leq k\leq T_{m}}\frac{\left\vert \displaystyle%
\sum_{t=m+1}^{m+k}\left( \mathbf{x}_{t}-\mathbf{c}_{1}\right) ^{\prime }%
\left[ \mathbf{C}^{-1}\displaystyle\left( \frac{1}{m}\sum_{t=1}^{m}\mathbf{x}%
_{t}\epsilon _{t}\right) \right] \right\vert }{m^{1/2}\left( 1+\displaystyle%
\frac{k}{m}\right) \displaystyle\left( \frac{k}{k+m}\right) ^{\eta }}%
=O_{P}\left( \frac{d^{3/2}}{m^{1/2}}a_{m}^{\zeta _{3}-1/2}\right) ;
\end{equation*}%
under the constraint $d=O\left( m^{1/4}\right) $, this finally entails that 
\begin{align*}
&r_{m}^{\eta -1/2}\max_{a_{m}\leq k\leq T_{m}}\frac{\left\vert \displaystyle%
\sum_{t=m+1}^{m+k}\mathbf{x}_{t}^{\prime }\left[ \mathbf{C}^{-1}\displaystyle%
\left( \frac{1}{m}\sum_{t=1}^{m}\mathbf{x}_{t}\epsilon _{t}\right) \right]
\right\vert }{m^{1/2}\left( 1+\displaystyle\frac{k}{m}\right) \displaystyle%
\left( \frac{k}{k+m}\right) ^{\eta }} \\
=&r_{m}^{\eta -1/2}\max_{a_{m}\leq k\leq T_{m}}\frac{\left\vert \displaystyle%
\frac{k}{m}\mathbf{c}_{1}^{\prime }\mathbf{C}^{-1}\displaystyle\left(
\sum_{t=1}^{m}\mathbf{x}_{t}\epsilon _{t}\right) \right\vert }{m^{1/2}\left(
1+\displaystyle\frac{k}{m}\right) \displaystyle\left( \frac{k}{k+m}\right)
^{\eta }}+o_{P}\left( 1\right) .
\end{align*}%
By construction, $\mathbf{c}_{1}^{\prime }\mathbf{C}^{-1}=\left(
1,0,...,0\right) ^{\prime }$, and therefore it finally follows that%
\begin{equation*}
r_{m}^{\eta -1/2}\max_{a_{m}\leq k\leq T_{m}}\frac{\left\vert \displaystyle%
\sum_{t=m+1}^{m+k}\mathbf{x}_{t}^{\prime }\left( \displaystyle\sum_{t=1}^{m}%
\mathbf{x}_{t}\mathbf{x}_{t}^{\prime }\right) ^{-1}\left( \displaystyle%
\sum_{t=1}^{m}\mathbf{x}_{t}\epsilon _{t}\right) -\displaystyle\frac{k}{m}%
\displaystyle\sum_{t=1}^{m}\epsilon _{t}\right\vert }{m^{1/2}\left( 1+%
\displaystyle\frac{k}{m}\right) \displaystyle\left( \frac{k}{k+m}\right)
^{\eta }}=o_{P}\left( 1\right) .
\end{equation*}%
The desired result now follows from putting all together.


\begin{lemma}
\label{approx-2}We assume that Assumptions \ref{b-shifts}\textit{(i)} and %
\ref{horizon} hold. Then, on a suitably enlarged probability space, it is
possible to define two independent standard Wiener processes $\left\{
W_{1,m}\left( k\right) ,k\geq 1\right\} $ and $\left\{ W_{2,m}\left(
k\right) ,k\geq 1\right\} $ such that%
\begin{align*}
&r_{m}^{\eta -1/2}\max_{a_{m}\leq k\leq T_{m}}\frac{1}{\sigma }\frac{%
\left\vert \displaystyle\sum_{t=m+1}^{m+k}\epsilon _{t}-\displaystyle\frac{k%
}{m}\displaystyle\sum_{t=1}^{m}\epsilon _{t}\right\vert }{m^{1/2}\left( 1+%
\displaystyle\frac{k}{m}\right) \displaystyle\left( \frac{k}{k+m}\right)
^{\eta }} \\
=&r_{m}^{\eta -1/2}\max_{a_{m}\leq k\leq T_{m}}\frac{\left\vert
W_{1,m}\left( k\right) -\displaystyle\frac{k}{m}W_{2,m}\left( m\right)
\right\vert }{m^{1/2}\left( 1+\displaystyle\frac{k}{m}\right) \displaystyle%
\left( \frac{k}{k+m}\right) ^{\eta }}+o_{P}\left( 1\right) .
\end{align*}
\end{lemma}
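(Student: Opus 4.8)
The plan is to deduce the statement directly from the strong approximation of Lemma \ref{sip}. First I would invoke Lemma \ref{sip} to obtain, on a suitably enlarged probability space, the two independent standard Wiener processes $\left\{ W_{1,m}\left( k\right) ,k\geq 1\right\} $ and $\left\{ W_{2,m}\left( k\right) ,k\geq 1\right\} $ together with the bounds (\ref{SIP1})--(\ref{SIP2}). Setting
\begin{equation*}
R_{1,m}(k)=\frac{1}{\sigma }\sum_{t=m+1}^{m+k}\epsilon _{t}-W_{1,m}(k),\qquad R_{2,m}=\frac{1}{\sigma }\sum_{t=1}^{m}\epsilon _{t}-W_{2,m}(m),
\end{equation*}
these bounds state that $\max_{1\leq k\leq T_{m}}k^{-\zeta _{1}}\left\vert R_{1,m}(k)\right\vert =O_{P}(1)$ and $m^{-\zeta _{2}}\left\vert R_{2,m}\right\vert =O_{P}(1)$ for some $0<\zeta _{1},\zeta _{2}<1/2$. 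Since
\begin{equation*}
\frac{1}{\sigma }\left( \sum_{t=m+1}^{m+k}\epsilon _{t}-\frac{k}{m}\sum_{t=1}^{m}\epsilon _{t}\right) =\left( W_{1,m}(k)-\frac{k}{m}W_{2,m}(m)\right) +R_{1,m}(k)-\frac{k}{m}R_{2,m},
\end{equation*}
the triangle inequality reduces the claim to showing that the two weighted remainders
\begin{equation*}
r_{m}^{\eta -1/2}\max_{a_{m}\leq k\leq T_{m}}\frac{\left\vert R_{1,m}(k)\right\vert }{m^{1/2}\left( 1+\frac{k}{m}\right) \left( \frac{k}{k+m}\right) ^{\eta }}\quad \text{and}\quad r_{m}^{\eta -1/2}\max_{a_{m}\leq k\leq T_{m}}\frac{\frac{k}{m}\left\vert R_{2,m}\right\vert }{m^{1/2}\left( 1+\frac{k}{m}\right) \left( \frac{k}{k+m}\right) ^{\eta }}
\end{equation*}
are both $o_{P}(1)$.

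The rest is a deterministic computation. I would rewrite the denominator as $m^{-1/2}(m+k)^{1-\eta }k^{\eta }$, factor out the $O_{P}(1)$ strong-approximation rates, and bound the resulting deterministic maxima. For the first remainder, $\left\vert R_{1,m}(k)\right\vert \leq O_{P}(1)k^{\zeta _{1}}$ gives $O_{P}(1)\,r_{m}^{\eta -1/2}m^{1/2}\max_{a_{m}\leq k\leq T_{m}}k^{\zeta _{1}-\eta }(m+k)^{\eta -1}$; splitting the range at $k=m$ and using $(m+k)^{\eta -1}\asymp m^{\eta -1}$ on $\{k\leq m\}$ and $(m+k)^{\eta -1}\asymp k^{\eta -1}$ on $\{k>m\}$, together with $\zeta _{1}-\eta <0$ (so that $k\mapsto k^{\zeta _{1}-\eta }$ is decreasing and maximised at $k=a_{m}$), reduces matters to bounding $r_{m}^{\eta -1/2}m^{\eta -1/2}a_{m}^{\zeta _{1}-\eta }$ and $r_{m}^{\eta -1/2}m^{\zeta _{1}-1/2}$. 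Since $r_{m}=a_{m}/(a_{m}+m)\asymp a_{m}/m$ by (\ref{a-m}), the first equals, up to a constant, $a_{m}^{\zeta _{1}-1/2}=o(1)$, while the second is $o(1)$ because $r_{m}<1$, $\eta >1/2$ and $\zeta _{1}<1/2$. For the second remainder, $\left\vert R_{2,m}\right\vert \leq O_{P}(1)m^{\zeta _{2}}$ yields $O_{P}(1)\,r_{m}^{\eta -1/2}m^{\zeta _{2}-1/2}\max_{a_{m}\leq k\leq T_{m}}\left( \frac{k}{m+k}\right) ^{1-\eta }$, which I would again split according to whether $\eta \leq 1$ (the maximum is $\leq 1$, attained near $k=T_{m}$) or $\eta >1$ (the maximum is $r_{m}^{1-\eta }$, attained at $k=a_{m}$); in the first case the bound is $\leq m^{\zeta _{2}-1/2}=o(1)$, in the second it becomes $r_{m}^{1/2}m^{\zeta _{2}-1/2}\asymp a_{m}^{1/2}m^{\zeta _{2}-1}=o(1)$.

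There is no deep obstacle — it is essentially a bookkeeping exercise — but the part requiring the most attention is the regime $\eta >1$: there the weight $\left( k/(k+m)\right) ^{\eta }$ makes the boundary function smallest at the very start of the monitoring window, so the relevant maxima are attained at $k=a_{m}$ rather than in the bulk, and the powers of $r_{m}=a_{m}/(a_{m}+m)$ must be matched precisely against the $O_{P}(1)$ strong-approximation rate (this is exactly why $\zeta _{1},\zeta _{2}<1/2$ is needed) in order for the remainders to vanish. The only ingredients used are Assumptions \ref{b-shifts}\textit{(i)} and \ref{horizon} and the conditions (\ref{a-m}) on $a_{m}$ — divergence of $a_{m}$ together with $a_{m}=o\left( \min \left\{ m,T_{m}\right\} \right) $; no restriction on $d$ is required here, since no regressors appear.
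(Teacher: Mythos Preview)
Your proposal is correct and follows essentially the same approach as the paper: invoke Lemma~\ref{sip}, decompose via the triangle inequality into the two remainders $R_{1,m}(k)$ and $\frac{k}{m}R_{2,m}$, and bound each separately. The only cosmetic difference is that for the first remainder the paper uses the single inequality $m^{1/2}(m+k)^{\eta-1}\leq (m+k)^{\eta-1/2}$ to arrive at $O_{P}(a_{m}^{\zeta_{1}-1/2})$ directly, whereas you split the range at $k=m$; both routes give the same bound.
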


Henceforth, we set $\sigma =1$ for simplicity and without loss of
generality. Standard arguments entail that 
\begin{align*}
r_{m}^{\eta -1/2}\max_{a_{m}\leq k\leq T_{m}}& \frac{\left\vert \left( %
\displaystyle\sum_{t=m+1}^{m+k}\epsilon _{t}-\displaystyle\frac{k}{m}%
\displaystyle\sum_{t=1}^{m}\epsilon _{t}\right) -\left( W_{1,m}(k)-%
\displaystyle\frac{k}{m}W_{2,m}(m)\right) \right\vert }{m^{1/2}\left( 1+%
\displaystyle\frac{k}{m}\right) \displaystyle\left( \frac{k}{m+k}\right)
^{\eta }} \\
& =r_{m}^{\eta -1/2}\max_{a_{m}\leq k\leq T_{m}}\frac{\left\vert \left( %
\displaystyle\sum_{t=m+1}^{m+k}\epsilon _{t}-W_{1,m}(k)\right) -\displaystyle%
\frac{k}{m}\left( \displaystyle\sum_{t=1}^{m}\epsilon _{t}-W_{2,m}(m)\right)
\right\vert }{m^{1/2}\left( 1+\displaystyle\frac{k}{m}\right) \displaystyle%
\left( \frac{k}{m+k}\right) ^{\eta }} \\
& \leq r_{m}^{\eta -1/2}\max_{a_{m}\leq k\leq T_{m}}\frac{\left\vert %
\displaystyle\sum_{t=m+1}^{m+k}\epsilon _{t}-W_{1,m}(k)\right\vert }{%
m^{1/2}\left( 1+\displaystyle\frac{k}{m}\right) \displaystyle\left( \frac{k}{%
m+k}\right) ^{\eta }} \\
& +r_{m}^{\eta -1/2}\max_{a_{m}\leq k\leq T_{m}}\frac{\left\vert %
\displaystyle\frac{k}{m}\left( \displaystyle\sum_{t=1}^{m}\epsilon
_{t}-W_{2,m}(m)\right) \right\vert }{m^{1/2}\left( 1+\displaystyle\frac{k}{m}%
\right) \displaystyle\left( \frac{k}{m+k}\right) ^{\eta }}.
\end{align*}%
Using (\ref{SIP1}) in Lemma \ref{sip}, it follows that 
\begin{align*}
r_{m}^{\eta -1/2}\max_{a_{m}\leq k\leq T_{m}}& \frac{\left\vert \displaystyle%
\sum_{t=m+1}^{m+k}\epsilon _{t}-W_{1,m}(k)\right\vert }{m^{1/2}\left( 1+%
\displaystyle\frac{k}{m}\right) \displaystyle\left( \frac{k}{m+k}\right)
^{\eta }} \\
& =r_{m}^{\eta -1/2}\max_{a_{m}\leq k\leq T_{m}}k^{\zeta _{1}}\frac{%
k^{-\zeta _{1}}\left\vert \displaystyle\sum_{t=m+1}^{m+k}\epsilon
_{t}-W_{1,m}(k)\right\vert }{m^{1/2}\left( 1+\displaystyle\frac{k}{m}\right) %
\displaystyle\left( \frac{k}{m+k}\right) ^{\eta }} \\
& =O_{P}\left( 1\right) r_{m}^{\eta -1/2}\max_{a_{m}\leq k\leq T_{m}}\frac{%
k^{\zeta _{1}}}{m^{1/2}\left( 1+\displaystyle\frac{k}{m}\right) \displaystyle%
\left( \frac{k}{m+k}\right) ^{\eta }} \\
& =O_{P}\left( 1\right) r_{m}^{\eta -1/2}\max_{a_{m}\leq k\leq
T_{m}}k^{\zeta _{1}-1/2}\left( \frac{m+k}{k}\right) ^{\eta -1/2} \\
& =O_{P}\left( a_{m}^{\zeta _{1}-1/2}\right) =o_{P}\left( 1\right) .
\end{align*}%
Similarly, using (\ref{SIP2}) in Lemma \ref{sip}

\begin{align*}
& r_{m}^{\eta -1/2}\max_{a_{m}\leq k\leq T_{m}}\frac{k}{m}\frac{\left\vert
\left( \displaystyle\sum_{t=1}^{m}\epsilon _{t}-W_{2,m}(m)\right)
\right\vert }{m^{1/2}\left( 1+\displaystyle\frac{k}{m}\right) \displaystyle%
\left( \frac{k}{m+k}\right) ^{\eta }} \\
& =r_{m}^{\eta -1/2}\max_{a_{m}\leq k\leq T_{m}}\frac{k}{m}\frac{m^{\zeta
_{2}}}{m^{\zeta _{2}}}\frac{\left\vert \left( \displaystyle%
\sum_{t=1}^{m}\epsilon _{t}-W_{2,m}(m)\right) \right\vert }{m^{1/2}\left( 1+%
\displaystyle\frac{k}{m}\right) \displaystyle\left( \frac{k}{m+k}\right)
^{\eta }} \\
& =O_{P}\left( 1\right) r_{m}^{\eta -1/2}\max_{a_{m}\leq k\leq T_{m}}\frac{%
km^{\zeta _{2}}}{m^{3/2}\left( 1+\displaystyle\frac{k}{m}\right) %
\displaystyle\left( \frac{k}{m+k}\right) ^{\eta }}.
\end{align*}%
Hence, after some algebra, it can be shown that%
\begin{equation*}
r_{m}^{\eta -1/2}\max_{a_{m}\leq k\leq T_{m}}\frac{k}{m}\frac{\left\vert
\left( \displaystyle\sum_{t=1}^{m}\epsilon _{t}-W_{2,m}(m)\right)
\right\vert }{m^{1/2}\left( 1+\displaystyle\frac{k}{m}\right) \displaystyle%
\left( \frac{k}{m+k}\right) ^{\eta }}=O_{P}\left( r_{m}^{\min \left\{
1/2,\eta -1/2\right\} }m^{\zeta _{2}-1/2}\right) =o_{P}\left( 1\right) .
\end{equation*}%
The desired result now follows.

\clearpage
\newpage

\renewcommand*{\thesection}{\Alph{section}}

\setcounter{equation}{0} \setcounter{lemma}{0} \setcounter{theorem}{0} %
\renewcommand{\theassumption}{B.\arabic{assumption}} 
\renewcommand{\thetheorem}{B.\arabic{theorem}} \renewcommand{\thelemma}{B.%
\arabic{lemma}} \renewcommand{\theproposition}{B.\arabic{proposition}} %
\renewcommand{\thecorollary}{B.\arabic{corollary}} \renewcommand{%
\theequation}{B.\arabic{equation}}

\section{Proofs\label{proofs}}

\begin{proof}[Proof of Theorem \protect\ref{nul-distr}]
Lemmas \ref{sip}-\ref{approx-2}, put together, entail that%
\begin{align*}
& r_{m}^{\eta -1/2}\max_{a_{m}\leq k\leq T_{m}}\frac{1}{\sigma }\frac{%
\left\vert \displaystyle\sum_{t=m+1}^{m+k}\widehat{\epsilon }_{t}\right\vert 
}{m^{1/2}\left( 1+\displaystyle\frac{k}{m}\right) \displaystyle\left( \frac{k%
}{k+m}\right) ^{\eta }} \\
=& r_{m}^{\eta -1/2}\max_{a_{m}\leq k\leq T_{m}}\frac{\left\vert
W_{1,m}\left( k\right) -\displaystyle\frac{k}{m}W_{2,m}\left( m\right)
\right\vert }{m^{1/2}\left( 1+\displaystyle\frac{k}{m}\right) \displaystyle%
\left( \frac{k}{k+m}\right) ^{\eta }}+o_{P}\left( 1\right) .
\end{align*}%
As in the other proofs, we will set $\sigma =1$ for simplicity and without
loss of generality. We begin by noting that the distribution of $%
W_{1,m}\left( \cdot \right) $ and $W_{2,m}\left( \cdot \right) $ does not
depend on $m$, so that%
\begin{align*}
& r_{m}^{\eta -1/2}\max_{a_{m}\leq k\leq T_{m}}\frac{\left\vert
W_{1,m}\left( k\right) -\displaystyle\frac{k}{m}W_{2,m}\left( m\right)
\right\vert }{m^{1/2}\left( 1+\displaystyle\frac{k}{m}\right) \displaystyle%
\left( \frac{k}{k+m}\right) ^{\eta }} \\
& \overset{\mathcal{D}}{=}r_{m}^{\eta -1/2}\max_{a_{m}\leq k\leq T_{m}}\frac{%
\left\vert W_{1}\left( k\right) -\displaystyle\frac{k}{m}W_{2}\left(
m\right) \right\vert }{m^{1/2}\left( 1+\displaystyle\frac{k}{m}\right) %
\displaystyle\left( \frac{k}{k+m}\right) ^{\eta }} \\
& \overset{\mathcal{D}}{=}r_{m}^{\eta -1/2}\max_{a_{m}\leq k\leq T_{m}}\frac{%
\left\vert W_{1}\displaystyle\left( \frac{k}{m}\right) -\displaystyle\frac{k%
}{m}W_{2}\left( 1\right) \right\vert }{\left( 1+\displaystyle\frac{k}{m}%
\right) \displaystyle\left( \frac{k}{k+m}\right) ^{\eta }}.
\end{align*}%
By standard arguments, it is easy to see that%
\begin{equation}
W_{1}\left( t\right) -tW_{2}\left( 1\right) \overset{\mathcal{D}}{=}\left(
1+t\right) W\left( \frac{t}{1+t}\right) ,  \label{process}
\end{equation}%
for all $t$, where $W\left( \cdot \right) $ is a standard Wiener.\footnote{%
Indeed, this can be verified readily by noting that both processes $%
W_{1}\left( s\right) $ and $W_{2}\left( s\right) $ have mean zero, are
Gaussian, and have covariance kernel given by%
\begin{align*}
& E\left( W_{1}\left( t\right) -tW_{2}\left( 1\right) \right) \left(
W_{1}\left( s\right) -sW_{2}\left( 1\right) \right) \\
=& E\left( W_{1}\left( t\right) W_{1}\left( s\right) \right) -stE\left(
W_{2}\left( 1\right) \right) ^{2} \\
=& s\wedge t-st,
\end{align*}%
and 
\begin{align*}
& E\left( \left( 1+t\right) W\left( \frac{t}{1+t}\right) \left( 1+s\right)
W\left( \frac{s}{1+s}\right) \right) \\
=& \left( 1+t\right) \left( 1+s\right) \left( \frac{t}{1+t}\wedge \frac{s}{%
1+s}\right) ,
\end{align*}%
respectively - it is now not hard to see that the two covariance kernels are
the same.} Hence we will study%
\begin{align*}
& r_{m}^{\eta -1/2}\max_{a_{m}\leq k\leq T_{m}}\frac{\left\vert \left( 1+%
\displaystyle\frac{k}{m}\right) W\displaystyle\left( \frac{\frac{k}{m}}{1+%
\frac{k}{m}}\right) \right\vert }{\left( 1+\displaystyle\frac{k}{m}\right) %
\displaystyle\left( \frac{k}{k+m}\right) ^{\eta }} \\
& \overset{\mathcal{D}}{=}r_{m}^{\eta -1/2}\max_{a_{m}/m\leq t\leq T_{m}/m}%
\frac{\left\vert W\displaystyle\left( \frac{t}{1+t}\right) \right\vert }{%
\displaystyle\left( \frac{t}{1+t}\right) ^{\eta }} \\
& \overset{\mathcal{D}}{=}r_{m}^{\eta -1/2}\max_{a_{m}/\left( m+a_{m}\right)
\leq s\leq T_{m}/\left( m+T_{m}\right) }\frac{\left\vert W\left( s\right)
\right\vert }{s^{\eta }}.
\end{align*}

Define%
\begin{equation*}
u=s\frac{a_{m}+m}{a_{m}};
\end{equation*}%
it follows that%
\begin{align*}
& r_{m}^{\eta -1/2}\max_{a_{m}/\left( m+a_{m}\right) \leq s\leq T_{m}/\left(
m+T_{m}\right) }\frac{\left\vert W\left( s\right) \right\vert }{s^{\eta }} \\
& \overset{\mathcal{D}}{=}r_{m}^{\eta -1/2}\max_{1\leq u\leq T_{m}\left(
m+a_{m}\right) /\left( a_{m}\left( m+T_{m}\right) \right) }\frac{\left\vert W%
\displaystyle\left( u\frac{a_{m}}{a_{m}+m}\right) \right\vert }{\displaystyle%
\left( u\frac{a_{m}}{a_{m}+m}\right) ^{\eta }} \\
& \overset{\mathcal{D}}{=}r_{m}^{\eta -1/2}\left( \frac{a_{m}}{a_{m}+m}%
\right) ^{1/2-\eta }\max_{1\leq u\leq T_{m}\left( m+a_{m}\right) /\left(
a_{m}\left( m+T_{m}\right) \right) }\frac{\left\vert W\left( u\right)
\right\vert }{u^{\eta }} \\
& \overset{\mathcal{D}}{=}\max_{1\leq u\leq T_{m}\left( m+a_{m}\right)
/\left( a_{m}\left( m+T_{m}\right) \right) }\frac{\left\vert W\left(
u\right) \right\vert }{u^{\eta }},
\end{align*}%
having used the scale transformation of the Wiener process in the third
passage, and the definition of $r_{m}$ in the last one. Recalling (\ref{a-m}%
), by elementary arguments it holds that%
\begin{equation*}
\lim_{m\rightarrow \infty }\frac{T_{m}\left( m+a_{m}\right) }{a_{m}\left(
m+T_{m}\right) }\rightarrow \infty ,
\end{equation*}%
whence, as $m\rightarrow \infty $, it follows that, by continuity 
\begin{equation*}
\max_{1\leq u\leq T_{m}\left( m+a_{m}\right) /\left( a_{m}\left(
m+T_{m}\right) \right) }\frac{\left\vert W\left( u\right) \right\vert }{%
u^{\eta }}\overset{a.s.}{\rightarrow }\sup_{1\leq u<\infty }\frac{\left\vert
W(u)\right\vert }{u^{\eta }}.
\end{equation*}

As a final remark, we reconsider the proof, showing that the limiting
distribution is determined only by the observations very close to $a_{m}$.
To show this, we divide into two intervals the domain where the maximum is
taken defining%
\begin{align}
& \Theta _{m,1}=r_{m}^{\eta -1/2}\max_{a_{m}/\left( m+a_{m}\right) \leq
s<\left( a_{m}/\left( m+a_{m}\right) \right) ^{1-\epsilon }}\frac{\left\vert
W\left( s\right) \right\vert }{s^{\eta }},  \label{theta-1} \\
& \Theta _{m,2}=r_{m}^{\eta -1/2}\max_{\left( a_{m}/\left( m+a_{m}\right)
\right) ^{1-\epsilon }\leq s<T_{m}/\left( m+T_{m}\right) }\frac{\left\vert
W\left( s\right) \right\vert }{s^{\eta }}.  \label{theta-2}
\end{align}%
for some $\epsilon >0$ such that $\left( a_{m}/\left( m+a_{m}\right) \right)
^{1-\epsilon }=o\left( T_{m}/\left( m+T_{m}\right) \right) $. We begin by
studying $\Theta _{m,2}$. Repeatedly using the scale transformation for
Wiener process, we have%
\begin{align*}
& \Theta _{m,2}\overset{\mathcal{D}}{=}r_{m}^{\eta -1/2}\max_{\left(
a_{m}/\left( m+a_{m}\right) \right) ^{1-\epsilon }\leq s\leq T_{m}/\left(
m+T_{m}\right) }\frac{s^{1/2}}{s^{1/2}}\frac{\left\vert W\left( s\right)
\right\vert }{s^{\eta }} \\
& \leq r_{m}^{\eta -1/2}\max_{\left( a_{m}/\left( m+a_{m}\right) \right)
^{1-\epsilon }\leq s\leq T_{m}/\left( m+T_{m}\right) }\frac{\left\vert
W\left( s\right) \right\vert }{s^{1/2}}\max_{\left( a_{m}/\left(
m+a_{m}\right) \right) ^{1-\epsilon }\leq s\leq T_{m}/\left( m+T_{m}\right)
}s^{1/2-\eta } \\
& \leq \left( \left( \frac{a_{m}}{a_{m}+m}\right) ^{1/2}\right) ^{\eta
-1/2}\max_{\left( a_{m}/\left( m+a_{m}\right) \right) ^{1-\epsilon }\leq
s\leq T_{m}/\left( m+T_{m}\right) }\frac{\left\vert W\left( s\right)
\right\vert }{s^{1/2}}.
\end{align*}

We note that%
\begin{equation*}
\max_{\left( a_{m}/\left( m+a_{m}\right) \right) ^{1-\epsilon }\leq s\leq
T_{m}/\left( m+T_{m}\right) }\frac{\left\vert W\left( s\right) \right\vert }{%
s^{1/2}}\overset{\mathcal{D}}{=}\max_{1\leq u\leq \left( T_{m}\left(
m+a_{m}\right) ^{1-\epsilon }\right) /\left( \left( m+T_{m}\right)
a_{m}^{1-\epsilon }\right) }\frac{\left\vert W\left( u\right) \right\vert }{%
u^{1/2}},
\end{equation*}%
so that the Law of the Iterated Logarithm entails that%
\begin{equation*}
\max_{\left( a_{m}/\left( m+a_{m}\right) \right) ^{1-\epsilon }\leq s\leq
T_{m}/\left( m+T_{m}\right) }\frac{\left\vert W\left( s\right) \right\vert }{%
s^{1/2}}=O_{P}\left( \sqrt{\ln \ln \frac{T_{m}\left( m+a_{m}\right)
^{1-\epsilon }}{\left( m+T_{m}\right) a_{m}^{1-\epsilon }}}\right)
=O_{P}\left( \sqrt{\ln \ln \left( \frac{m}{a_{m}}\right) ^{1-\epsilon }}%
\right) .
\end{equation*}%
Hence, recalling (\ref{a-m})%
\begin{equation*}
\Theta _{m,2}=O_{P}\left( \left( \left( \frac{a_{m}}{a_{m}+m}\right)
^{1/2}\right) ^{\eta -1/2}\sqrt{\ln \ln \left( \frac{m}{a_{m}}\right)
^{1-\epsilon }}\right) =o_{P}\left( 1\right) .
\end{equation*}%
Using this result in conjunction with Lemmas \ref{sip}-\ref{approx-2} it
follows that, as $m\rightarrow \infty $%
\begin{equation*}
P\left( r_{m}^{\eta -1/2}\max_{a_{m}\leq k\leq T_{m}}\frac{1}{\sigma }\frac{%
\left\vert \displaystyle\sum_{t=m+1}^{m+k}\widehat{\epsilon }_{t}\right\vert 
}{m^{1/2}\left( 1+\displaystyle\frac{k}{m}\right) \displaystyle\left( \frac{k%
}{k+m}\right) ^{\eta }}=\Theta _{m,1}\right) =1.
\end{equation*}%
We now conclude our remarks by studying $\Theta _{m,1}$ in (\ref{theta-1}).
Let 
\begin{equation*}
u=s\frac{a_{m}+m}{a_{m}}.
\end{equation*}%
Then it holds that%
\begin{align*}
\Theta _{m,1}=& r_{m}^{\eta -1/2}\max_{a_{m}/\left( m+a_{m}\right) \leq
s\leq \left( a_{m}/\left( m+a_{m}\right) \right) ^{1-\epsilon }}\frac{%
\left\vert W\left( s\right) \right\vert }{s^{\eta }} \\
\overset{\mathcal{D}}{=}& r_{m}^{\eta -1/2}\max_{1\leq u\leq \left( \left(
m+a_{m}\right) /a_{m}\right) ^{\epsilon }}\frac{\left\vert W\displaystyle%
\left( u\frac{a_{m}}{a_{m}+m}\right) \right\vert }{\displaystyle\left( u%
\frac{a_{m}}{a_{m}+m}\right) ^{\eta }} \\
\overset{\mathcal{D}}{=}& r_{m}^{\eta -1/2}\left( \frac{a_{m}}{a_{m}+m}%
\right) ^{1/2-\eta }\max_{1\leq u\leq \left( \left( m+a_{m}\right)
/a_{m}\right) ^{\epsilon }}\frac{\left\vert W\left( u\right) \right\vert }{%
u^{\eta }} \\
\overset{\mathcal{D}}{=} & \max_{1\leq u\leq \left( \left( m+a_{m}\right)
/a_{m}\right) ^{\epsilon }}\frac{\left\vert W\left( u\right) \right\vert }{%
u^{\eta }},
\end{align*}%
and%
\begin{equation*}
\Theta _{m,1}\overset{\mathcal{D}}{=}\max_{1\leq u\leq \left( \left(
m+a_{m}\right) /a_{m}\right) ^{\epsilon }}\frac{\left\vert W\left( u\right)
\right\vert }{u^{\eta }}\overset{a.s.}{\rightarrow }\sup_{1\leq u<\infty }%
\frac{\left\vert W(u)\right\vert }{u^{\eta }}.
\end{equation*}
\end{proof}

\begin{proof}[Proof of Lemma \protect\ref{lrv}]
The proof of the lemma is standard, and we only report its main passages,
mainly to highlight the impact of $d$. Let $\gamma _{j}=E\left( \epsilon
_{0}\epsilon _{j}\right) $. By Assumption \ref{b-shifts}, it is easy to see
that%
\begin{equation*}
\sigma ^{2}=\lim_{m\rightarrow \infty }\sum_{t,s=1}^{m}E\left( \epsilon
_{t}\epsilon _{s}\right) =\gamma _{0}+2\sum_{j=1}^{\infty }\gamma _{j}.
\end{equation*}%
We begin by showing that the $\gamma _{j}$s satisfy%
\begin{equation}
\sum_{j=1}^{\infty }j\left\vert \gamma _{j}\right\vert <\infty .
\label{summable}
\end{equation}%
Indeed, let $\epsilon _{j}=h\left( \eta _{j},\eta _{j-1},...\right) $, and
define $\epsilon _{j,j}=h\left( \eta _{j},\eta _{j-1},...,\eta _{1},%
\widetilde{\eta }_{0,j,j},\widetilde{\eta }_{-1,j,j},...\right) $ where, as
per Definition \ref{bernoulli}, $\left\{ \widetilde{\eta }_{t,j,j},-\infty
<t,j<\infty \right\} $ \textit{i.i.d.} copies of $\eta _{0}$ independent of $%
\left\{ \eta _{t},-\infty <t<\infty \right\} $. By construction, $\epsilon
_{0}$ and $\epsilon _{j,j}$ are independent and have mean zero, so that $%
E\left( \epsilon _{0}\epsilon _{j,j}\right) =0$ and 
\begin{gather*}
\sum_{j=1}^{\infty }j\left\vert E\left( \epsilon _{0}\epsilon _{j}\right)
\right\vert =\sum_{j=1}^{\infty }j\left\vert E\left( \epsilon _{0}\left(
\epsilon _{j}-\epsilon _{j,j}\right) \right) \right\vert \leq
\sum_{j=1}^{\infty }j\left\vert \epsilon _{0}\right\vert _{2}\left\vert
\epsilon _{j}-\epsilon _{j,j}\right\vert _{2} \\
\leq c_{0}\sum_{j=1}^{\infty }j\left\vert \epsilon _{j}-\epsilon
_{j,j}\right\vert _{2}\leq c_{0}\sum_{j=1}^{\infty }j\left\vert \epsilon
_{j}-\epsilon _{j,j}\right\vert _{4}\leq c_{0}\sum_{j=1}^{\infty
}j^{1-a}\leq c_{1},
\end{gather*}%
having used: the Cauchy-Schwartz inequality in the third passage; the fact
that, by Assumption \ref{b-shifts}\textit{(i)} $\left\vert \epsilon
_{0}\right\vert _{2}<\infty $ in the fourth passage; the $L_{p}$-norm
inequality in the fifth passage; and Assumption \ref{b-shifts}\textit{(i)}
in the last one, recalling that $a>2$. We now write%
\begin{align*}
\widehat{\sigma }_{m}^{2}-\sigma ^{2} =&\left( \widehat{\gamma }_{0}-\gamma
_{0}\right) +2\sum_{j=1}^{H}\left( 1-\frac{j}{H+1}\right) \left( \widehat{%
\gamma }_{j}-\gamma _{j}\right) +2\sum_{j=1}^{H}\frac{j}{H+1}\gamma
_{j}-2\sum_{j=H+1}^{\infty }\gamma _{j} \\
=&I+II+III+IV.
\end{align*}%
By (\ref{summable}), it follows that%
\begin{equation*}
\frac{1}{2}III\leq \frac{1}{H+1}\sum_{j=1}^{\infty }j\left\vert \gamma
_{j}\right\vert \leq c_{0}\frac{1}{H+1};
\end{equation*}%
similarly%
\begin{equation*}
\frac{1}{2}IV\leq \sum_{j=H+1}^{\infty }\frac{j}{H+1}\left\vert \gamma
_{j}\right\vert \leq \frac{1}{H+1}\sum_{j=1}^{\infty }j\left\vert \gamma
_{j}\right\vert \leq c_{0}\frac{1}{H+1},
\end{equation*}%
so that $III+IV=O\left( H^{-1}\right) $. We now turn to $I$ and $II$\ noting
that%
\begin{align*}
&\widehat{\gamma }_{j}-\gamma _{j}=\frac{1}{m}\sum_{t=j+1}^{m}\widehat{%
\epsilon }_{t}\widehat{\epsilon }_{t-j}-E\left( \epsilon _{0}\epsilon
_{j}\right) \\
=&\frac{1}{m}\sum_{t=j+1}^{m}\left( \epsilon _{t}\epsilon _{t-j}-E\left(
\epsilon _{0}\epsilon _{j}\right) \right) +\left( \beta -\widehat{\beta }%
_{m}\right) ^{\prime }\frac{1}{m}\sum_{t=j+1}^{m}\mathbf{x}_{t}\epsilon
_{t-j} \\
&+\left( \beta -\widehat{\beta }_{m}\right) ^{\prime }\frac{1}{m}%
\sum_{t=j+1}^{m}\mathbf{x}_{t-j}\epsilon _{t}+\left( \widehat{\beta }%
_{m}-\beta \right) ^{\prime }\left( \frac{1}{m}\sum_{t=j+1}^{m}\mathbf{x}_{t}%
\mathbf{x}_{t-j}^{\prime }\right) \left( \widehat{\beta }_{m}-\beta \right) ,
\end{align*}%
whence%
\begin{align*}
&\sum_{j=1}^{H}\left( 1-\frac{j}{H+1}\right) \left( \widehat{\gamma }%
_{j}-\gamma _{j}\right) \\
=&\sum_{j=1}^{H}\left( 1-\frac{j}{H+1}\right) \left( \frac{1}{m}%
\sum_{t=j+1}^{m}\left( \epsilon _{t}\epsilon _{t-j}-E\left( \epsilon
_{0}\epsilon _{j}\right) \right) \right) \\
&+\left( \beta -\widehat{\beta }_{m}\right) ^{\prime }\left[
\sum_{j=1}^{H}\left( 1-\frac{j}{H+1}\right) \left( \frac{1}{m}%
\sum_{t=j+1}^{m}\mathbf{x}_{t}\epsilon _{t-j}\right) \right] \\
&+\left( \beta -\widehat{\beta }_{m}\right) ^{\prime }\left[
\sum_{j=1}^{H}\left( 1-\frac{j}{H+1}\right) \left( \frac{1}{m}%
\sum_{t=j+1}^{m}\mathbf{x}_{t-j}\epsilon _{t}\right) \right] \\
&+\left( \widehat{\beta }_{m}-\beta \right) ^{\prime }\left[
\sum_{j=1}^{H}\left( 1-\frac{j}{H+1}\right) \left( \frac{1}{m}%
\sum_{t=j+1}^{m}\mathbf{x}_{t}\mathbf{x}_{t-j}^{\prime }\right) \right]
\left( \widehat{\beta }_{m}-\beta \right) \\
=&II_{a}+II_{b}+II_{c}+II_{d}.
\end{align*}%
It holds that%
\begin{align*}
&E\left( \sum_{j=1}^{H}\left( 1-\frac{j}{H+1}\right) \left( \frac{1}{m}%
\sum_{t=j+1}^{m}\left( \epsilon _{t}\epsilon _{t-j}-E\left( \epsilon
_{0}\epsilon _{j}\right) \right) \right) \right) ^{2} \\
\leq &c_{0}H\sum_{j=1}^{H}\left( 1-\frac{j}{H+1}\right) ^{2}E\left( \frac{1}{%
m}\sum_{t=j+1}^{m}\left( \epsilon _{t}\epsilon _{t-j}-E\left( \epsilon
_{0}\epsilon _{j}\right) \right) \right) ^{2} \\
\leq &c_{0}H\sum_{j=1}^{H}E\left( \frac{1}{m}\sum_{t=j+1}^{m}\left( \epsilon
_{t}\epsilon _{t-j}-E\left( \epsilon _{0}\epsilon _{j}\right) \right)
\right) ^{2}.
\end{align*}%
By Assumption \ref{b-shifts}\textit{(i)} and the same logic as in the proof
of Lemma \ref{max-ineq-1}, it follows that $\epsilon _{t}\epsilon _{t-j}$ is
an $L_{2}$-decomposable Bernoulli shift for all $j$; hence, using
Proposition 4 in \citet{berkes2011split}, it follows immediately that%
\begin{equation*}
H\sum_{j=1}^{H}E\left( \frac{1}{m}\sum_{t=j+1}^{m}\left( \epsilon
_{t}\epsilon _{t-j}-E\left( \epsilon _{0}\epsilon _{j}\right) \right)
\right) ^{2}\leq c_{0}\frac{H^{2}}{m},
\end{equation*}%
which entails that $II_{a}=O_{P}\left( Hm^{-1/2}\right) $. Standard
passages, using Assumption \ref{exogeneity}\textit{(iii)}, entail that%
\begin{equation}
\left\Vert \widehat{\beta }_{m}-\beta \right\Vert =O_{P}\left( \sqrt{\frac{d%
}{m}}\right) ;  \label{b-ols}
\end{equation}%
further%
\begin{align*}
&E\sum_{j=1}^{H}\left( 1-\frac{j}{H+1}\right) \left( \frac{1}{m}%
\sum_{t=j+1}^{m}\mathbf{x}_{t}\epsilon _{t-j}\right) \\
\leq &E\sum_{j=1}^{H}\left\Vert \frac{1}{m}\sum_{t=j+1}^{m}\mathbf{x}%
_{t}\epsilon _{t-j}\right\Vert \leq c_{0}\frac{1}{m}\sum_{j=1}^{H}%
\sum_{t=j+1}^{m}\left( E\left\Vert \mathbf{x}_{t}\right\Vert ^{2}\right)
^{1/2}\left( E\left\vert \epsilon _{t-j}\right\vert ^{2}\right) ^{1/2}\leq
c_{0}H;
\end{align*}%
this bound is not the sharpest possible (one could assume $E\left( \mathbf{x}%
_{t}\epsilon _{t-j}\right) =0$ and then use the Markov inequality), but it
suffices for our purposes. Using (\ref{b-ols}), this entails that both $%
II_{b}$ and $II_{c}$ are $O_{P}\left( d^{1/2}Hm^{-1/2}\right) $. Finally%
\begin{align*}
&\left\vert \left( \widehat{\beta }_{m}-\beta \right) ^{\prime }\left[
\sum_{j=1}^{H}\left( 1-\frac{j}{H+1}\right) \left( \frac{1}{m}%
\sum_{t=j+1}^{m}\mathbf{x}_{t}\mathbf{x}_{t-j}^{\prime }\right) \right]
\left( \widehat{\beta }_{m}-\beta \right) \right\vert \\
\leq &\left\Vert \widehat{\beta }_{m}-\beta \right\Vert ^{2}\frac{1}{m}%
\sum_{j=1}^{H}\sum_{t=j+1}^{m}\left\vert \mathbf{x}_{t}\mathbf{x}%
_{t-j}^{\prime }\right\vert ,
\end{align*}%
and since $\left\vert \mathbf{x}_{t}\mathbf{x}_{t-j}^{\prime }\right\vert
\leq \left\Vert \mathbf{x}_{t}\right\Vert \left\Vert \mathbf{x}%
_{t-j}\right\Vert $, using the Cauchy-Schwartz inequality and the
stationarity of $\mathbf{x}_{t}$%
\begin{equation*}
E\frac{1}{m}\sum_{j=1}^{H}\sum_{t=j+1}^{m}\left\vert \mathbf{x}_{t}\mathbf{x}%
_{t-j}^{\prime }\right\vert \leq \frac{1}{m}\sum_{j=1}^{H}\sum_{t=j+1}^{m}%
\left( E\left\Vert \mathbf{x}_{t}\right\Vert ^{2}\right) ^{1/2}\left(
E\left\Vert \mathbf{x}_{t-j}\right\Vert ^{2}\right) ^{1/2}\leq c_{0}H.
\end{equation*}%
Thus, again using (\ref{b-ols}), it follows that $II_{d}=O_{P}\left(
Hdm^{-1}\right) $. The final result now follows from putting all together.
\end{proof}

\begin{proof}[Proof of Theorem \protect\ref{power}]
The proof is related to that of Theorem 2.2 in \citet{horvath2004monitoring}%
, and we only report the main passages for the sake of a concise discussion.
Standard algebra entails that%
\begin{equation*}
\sum_{t=m+1}^{m+\widetilde{k}}\widehat{\epsilon }_{t}=\sum_{t=m+1}^{m+%
\widetilde{k}}\epsilon _{t}-\left( \widehat{\beta }_{m}-\beta _{0}\right)
^{\prime }\sum_{t=m+1}^{m+\widetilde{k}}\mathbf{x}_{t}+\Delta _{m}^{\prime
}\sum_{t=m+1}^{m+\widetilde{k}}\mathbf{x}_{t}I\left( t>m+k^{\ast }\right) .
\end{equation*}%
The proof of Theorem \ref{nul-distr} immediately implies that%
\begin{equation*}
r_{m}^{\eta -1/2}\frac{\left\vert \displaystyle\sum_{t=m+1}^{m+\widetilde{k}%
}\epsilon _{t}-\left( \widehat{\beta }_{m}-\beta _{0}\right) ^{\prime }%
\displaystyle\sum_{t=m+1}^{m+\widetilde{k}}\mathbf{x}_{t}\right\vert }{%
g_{\eta }\left( m;\widetilde{k}\right) }=O_{P}\left( 1\right) .
\end{equation*}%
Also, using Lemma \ref{max-ineq-2}, it is not hard to see that%
\begin{align*}
& \Delta _{m}^{\prime }\sum_{t=m+1}^{m+\widetilde{k}}\mathbf{x}_{t}I\left(
t>k^{\ast }\right) \\
=& \left( \widetilde{k}-k^{\ast }\right) \mathbf{c}_{1}^{\prime }\Delta _{m}+%
\mathbf{c}_{1}^{\prime }\Delta _{m}O_{P}\left( \left( \widetilde{k}-k^{\ast
}\right) ^{1/2}\right) =\mathbf{c}_{1}^{\prime }\Delta _{m}\left(
a_{m}+O_{P}\left( a_{m}^{1/2}\right) \right) .
\end{align*}%
Recall that, by assumption, it holds that $\left\vert \mathbf{c}_{1}^{\prime
}\Delta _{m}\right\vert >0$. Under condition \textit{(i)}, let $\widetilde{k}%
=k^{\ast }+a_{m}$. It follows that%
\begin{align*}
& r_{m}^{\eta -1/2}\frac{\left\vert \Delta _{m}^{\prime }\displaystyle%
\sum_{t=m+k^{\ast }+1}^{m+\widetilde{k}}\mathbf{x}_{t}\right\vert }{%
m^{1/2}\left( 1+\displaystyle\frac{\widetilde{k}}{m}\right) \displaystyle%
\left( \frac{\widetilde{k}}{m+\widetilde{k}}\right) ^{\eta }} \\
=& r_{m}^{\eta -1/2}\frac{a_{m}}{m^{1/2}\left( 1+\displaystyle\frac{%
\widetilde{k}}{m}\right) \displaystyle\left( \frac{\widetilde{k}}{m+%
\widetilde{k}}\right) ^{\eta }}\mathbf{c}_{1}^{\prime }\Delta
_{m}+O_{P}\left( \mathbf{c}_{1}^{\prime }\Delta _{m}\right) \\
=& a_{m}^{1/2}\mathbf{c}_{1}^{\prime }\Delta _{m}+O_{P}\left( \mathbf{c}%
_{1}^{\prime }\Delta _{m}\right) +o\left( a_{m}^{1/2}\mathbf{c}_{1}^{\prime
}\Delta _{m}\right) ,
\end{align*}%
whence the theorem follows immediately if (\ref{regime-1}) holds. Under
condition \textit{(ii)}, let $\widetilde{k}=\left\lfloor \left( 1+c\right)
k^{\ast }\right\rfloor \leq T_{m}$ for some $c>0$; we have, using the mean
value theorem%
\begin{align*}
& r_{m}^{\eta -1/2}\frac{\left\vert \Delta _{m}^{\prime }\displaystyle%
\sum_{t=m+k^{\ast }+1}^{m+\widetilde{k}}\mathbf{x}_{t}\right\vert }{%
m^{1/2}\left( 1+\displaystyle\frac{\widetilde{k}}{m}\right) \displaystyle%
\left( \frac{\widetilde{k}}{m+\widetilde{k}}\right) ^{\eta }} \\
=& c_{0}r_{m}^{\eta -1/2}\frac{k^{\ast }+O_{P}\left( \left( k^{\ast }\right)
^{1/2}\right) }{m^{1/2}\left( 1+\displaystyle\frac{k^{\ast }}{m}\right) %
\displaystyle\left( \frac{k^{\ast }}{m+k^{\ast }}\right) ^{\eta }}\mathbf{c}%
_{1}^{\prime }\Delta _{m} \\
=& c_{0}r_{m}^{\eta -1/2}m^{1/2}\left( \frac{k^{\ast }}{m+k^{\ast }}\right)
^{1-\eta }\mathbf{c}_{1}^{\prime }\Delta _{m},
\end{align*}%
which again immediately yields the desired result as long as (\ref{regime-2}%
) holds.
\end{proof}

\begin{proof}[Proof of Theorem \protect\ref{delay}]
We will study the limiting behaviour of%
\begin{equation*}
r_{m}^{\eta -1/2}\max_{a_{m}\leq k\leq a_{m}}\frac{\left\vert Q\left(
m;k\right) \right\vert }{g_{\eta }\left( m;k\right) }=r_{m}^{\eta -1/2}\frac{%
\left\vert \displaystyle\sum_{t=m+1}^{m+a_{m}}\widehat{\epsilon }%
_{t}\right\vert }{m^{1/2}\left( 1+\displaystyle\frac{a_{m}}{m}\right) %
\displaystyle\left( \frac{a_{m}}{m+a_{m}}\right) ^{\eta }},
\end{equation*}%
under the alternative hypothesis of (\ref{alt-changepoint}). By standard
algebra, it holds that%
\begin{align*}
&r_{m}^{\eta -1/2}\frac{\left\vert \displaystyle\sum_{t=m+1}^{m+a_{m}}%
\widehat{\epsilon }_{t}\right\vert }{m^{1/2}\left( 1+\displaystyle\frac{a_{m}%
}{m}\right) \displaystyle\left( \frac{a_{m}}{m+a_{m}}\right) ^{\eta }} \\
=&r_{m}^{\eta -1/2}\frac{\left\vert \displaystyle\sum_{t=m+1}^{m+a_{m}}%
\epsilon _{t}-\displaystyle\sum_{t=m+1}^{m+a_{m}}\mathbf{x}_{t}^{\prime
}\left( \displaystyle\sum_{t=1}^{m}\mathbf{x}_{t}\mathbf{x}_{t}^{\prime
}\right) ^{-1}\left( \displaystyle\sum_{t=1}^{m}\mathbf{x}_{t}\epsilon
_{t}\right) +\displaystyle\sum_{t=m+k^{\ast }+1}^{m+a_{m}}\mathbf{x}%
_{t}^{\prime }\Delta _{m}\right\vert }{m^{1/2}\left( 1+\displaystyle\frac{%
a_{m}}{m}\right) \displaystyle\left( \frac{a_{m}}{m+a_{m}}\right) ^{\eta }}.
\end{align*}%
We begin by noting that, following the passages above, it is not hard to see
that%
\begin{equation}
\left\Vert \sum_{t=m+1}^{m+a_{m}}\mathbf{x}_{t}\right\Vert =O_{P}\left(
d^{1/2}a_{m}\right) .  \label{lln}
\end{equation}%
Using Lemmas \ref{max-ineq-1}, (\ref{frob-inv}) and (\ref{lln}), it follows
that%
\begin{align*}
&r_{m}^{\eta -1/2}\frac{\left\vert \displaystyle\sum_{t=m+1}^{m+a_{m}}%
\mathbf{x}_{t}^{\prime }\left( \displaystyle\left( \frac{1}{m}\sum_{t=1}^{m}%
\mathbf{x}_{t}\mathbf{x}_{t}^{\prime }\right) ^{-1}-\mathbf{C}^{-1}\right) %
\displaystyle\left( \frac{1}{m}\sum_{t=1}^{m}\mathbf{x}_{t}\epsilon
_{t}\right) \right\vert }{m^{1/2}\left( 1+\displaystyle\frac{a_{m}}{m}%
\right) \displaystyle\left( \frac{a_{m}}{m+a_{m}}\right) ^{\eta }} \\
\leq &r_{m}^{\eta -1/2}\frac{\left\Vert \displaystyle\sum_{t=m+1}^{m+a_{m}}%
\mathbf{x}_{t}^{\prime }\right\Vert \left\Vert \displaystyle\left( \frac{1}{m%
}\sum_{t=1}^{m}\mathbf{x}_{t}\mathbf{x}_{t}^{\prime }\right) ^{-1}-\mathbf{C}%
^{-1}\right\Vert _{F}\displaystyle\left\Vert \frac{1}{m}\sum_{t=1}^{m}%
\mathbf{x}_{t}\epsilon _{t}\right\Vert }{m^{1/2}\left( 1+\displaystyle\frac{%
a_{m}}{m}\right) \displaystyle\left( \frac{a_{m}}{m+a_{m}}\right) ^{\eta }}
\\
=&O_{P}\left( 1\right) r_{m}^{\eta -1/2}m^{-1/2}\left( \frac{a_{m}}{m}%
\right) ^{-\eta }d^{1/2}a_{m}\frac{d}{m^{1/2}}\left( \frac{d}{m}\right)
^{1/2}=O_{P}\left( d^{2}\frac{a_{m}^{1/2}}{m}\right) =o_{P}\left( 1\right) ,
\end{align*}%
by having $d=O\left( m^{1/4}\right) $. Therefore%
\begin{align*}
&r_{m}^{\eta -1/2}\frac{\left\vert \displaystyle\sum_{t=m+1}^{m+a_{m}}%
\epsilon _{t}-\displaystyle\sum_{t=m+1}^{m+a_{m}}\mathbf{x}_{t}^{\prime
}\left( \displaystyle\sum_{t=1}^{m}\mathbf{x}_{t}\mathbf{x}_{t}^{\prime
}\right) ^{-1}\left( \displaystyle\sum_{t=1}^{m}\mathbf{x}_{t}\epsilon
_{t}\right) +\displaystyle\sum_{t=m+k^{\ast }+1}^{m+a_{m}}\mathbf{x}%
_{t}^{\prime }\Delta _{m}\right\vert }{m^{1/2}\left( 1+\displaystyle\frac{%
a_{m}}{m}\right) \displaystyle\left( \frac{a_{m}}{m+a_{m}}\right) ^{\eta }}
\\
=&r_{m}^{\eta -1/2}\frac{\left\vert \displaystyle\sum_{t=m+1}^{m+a_{m}}%
\epsilon _{t}-\displaystyle\sum_{t=m+1}^{m+a_{m}}\mathbf{x}_{t}^{\prime }%
\mathbf{C}^{-1}\displaystyle\left( \frac{1}{m}\sum_{t=1}^{m}\mathbf{x}%
_{t}\epsilon _{t}\right) +\displaystyle\sum_{t=m+k^{\ast }+1}^{m+a_{m}}%
\mathbf{x}_{t}^{\prime }\Delta _{m}\right\vert }{m^{1/2}\left( 1+%
\displaystyle\frac{a_{m}}{m}\right) \displaystyle\left( \frac{a_{m}}{m+a_{m}}%
\right) ^{\eta }}+o_{P}\left( 1\right) .
\end{align*}%
Similarly, we note that%
\begin{equation*}
E\left\Vert \sum_{t=m+1}^{m+a_{m}}\left( \mathbf{x}_{t}-\mathbf{c}%
_{1}\right) \right\Vert ^{2}=\sum_{j=2}^{d}E\left(
\sum_{t=m+1}^{m+a_{m}}\left( x_{j,t}-E\left( x_{j,0}\right) \right) \right)
^{2}\leq c_{0}da_{m},
\end{equation*}%
by using the fact that $x_{j,t}-E\left( x_{j,0}\right) $ is a weakly
dependent process in the sense of Definition \ref{bernoulli}, and
Proposition 4 in \citet{berkes2011split}, whence $\left\Vert
\sum_{t=m+1}^{m+a_{m}}\left( \mathbf{x}_{t}-\mathbf{c}_{1}\right)
\right\Vert =O_{P}\left( d^{1/2}a_{m}^{1/2}\right) $. Therefore, it is easy
to see that%
\begin{align*}
&r_{m}^{\eta -1/2}\frac{\left\vert \displaystyle\sum_{t=m+1}^{m+a_{m}}\left( 
\mathbf{x}_{t}-\mathbf{c}_{1}\right) ^{\prime }\mathbf{C}^{-1}\displaystyle%
\left( \frac{1}{m}\sum_{t=1}^{m}\mathbf{x}_{t}\epsilon _{t}\right)
\right\vert }{m^{1/2}\left( 1+\displaystyle\frac{a_{m}}{m}\right) %
\displaystyle\left( \frac{a_{m}}{m+a_{m}}\right) ^{\eta }} \\
\leq &r_{m}^{\eta -1/2}\frac{\left\Vert \displaystyle\sum_{t=m+1}^{m+a_{m}}%
\left( \mathbf{x}_{t}-\mathbf{c}_{1}\right) \right\Vert \left\Vert \mathbf{C}%
^{-1}\right\Vert _{F}\displaystyle\left\Vert \frac{1}{m}\sum_{t=1}^{m}%
\mathbf{x}_{t}\epsilon _{t}\right\Vert }{m^{1/2}\left( 1+\displaystyle\frac{%
a_{m}}{m}\right) \displaystyle\left( \frac{a_{m}}{m+a_{m}}\right) ^{\eta }}
\\
=&O_{P}\left( r_{m}^{\eta -1/2}m^{-1/2}\left( \frac{a_{m}}{m}\right) ^{-\eta
}d^{1/2}a_{m}^{1/2}d^{1/2}\left( \frac{d}{m}\right) ^{1/2}\right)
=O_{P}\left( \frac{d^{3/2}}{m^{1/2}}\right) =o_{P}\left( 1\right) ,
\end{align*}%
having used\ Lemmas \ref{max-ineq-1} and \ref{max-ineq-2}, (\ref{c-frob})
and $d=O\left( m^{1/4}\right) $. This entails that%
\begin{align*}
&r_{m}^{\eta -1/2}\frac{\left\vert \displaystyle\sum_{t=m+1}^{m+a_{m}}%
\epsilon _{t}-\displaystyle\sum_{t=m+1}^{m+a_{m}}\mathbf{x}_{t}^{\prime }%
\mathbf{C}^{-1}\displaystyle\left( \frac{1}{m}\sum_{t=1}^{m}\mathbf{x}%
_{t}\epsilon _{t}\right) +\displaystyle\sum_{t=m+k^{\ast }+1}^{m+a_{m}}%
\mathbf{x}_{t}^{\prime }\Delta _{m}\right\vert }{m^{1/2}\left( 1+%
\displaystyle\frac{a_{m}}{m}\right) \displaystyle\left( \frac{a_{m}}{m+a_{m}}%
\right) ^{\eta }} \\
=&r_{m}^{\eta -1/2}\frac{\left\vert \displaystyle\sum_{t=m+1}^{m+a_{m}}%
\epsilon _{t}-a_{m}\displaystyle\left( \frac{1}{m}\sum_{t=1}^{m}\mathbf{c}%
_{1}^{\prime }\mathbf{C}^{-1}\mathbf{x}_{t}\epsilon _{t}\right) +%
\displaystyle\sum_{t=m+k^{\ast }+1}^{m+a_{m}}\mathbf{x}_{t}^{\prime }\Delta
_{m}\right\vert }{m^{1/2}\left( 1+\displaystyle\frac{a_{m}}{m}\right) %
\displaystyle\left( \frac{a_{m}}{m+a_{m}}\right) ^{\eta }}+o_{P}\left(
1\right) \\
=&r_{m}^{\eta -1/2}\frac{\left\vert \displaystyle\sum_{t=m+1}^{m+a_{m}}%
\epsilon _{t}-\displaystyle\frac{a_{m}}{m}\left( \displaystyle%
\sum_{t=1}^{m}\epsilon _{t}\right) +\displaystyle\sum_{t=m+k^{\ast
}+1}^{m+a_{m}}\mathbf{x}_{t}^{\prime }\Delta _{m}\right\vert }{m^{1/2}\left(
1+\displaystyle\frac{a_{m}}{m}\right) \displaystyle\left( \frac{a_{m}}{%
m+a_{m}}\right) ^{\eta }}+o_{P}\left( 1\right) ,
\end{align*}%
on account of the fact that $\mathbf{c}_{1}^{\prime }\mathbf{C}^{-1}\mathbf{x%
}_{t}=1$ for all $t$. Also note that%
\begin{equation*}
r_{m}^{\eta -1/2}\frac{a_{m}}{m}\frac{\left\vert \displaystyle\left( \frac{1%
}{m^{1/2}}\sum_{t=1}^{m}\epsilon _{t}\right) \right\vert }{\left( 1+%
\displaystyle\frac{a_{m}}{m}\right) \displaystyle\left( \frac{a_{m}}{m+a_{m}}%
\right) ^{\eta }}=O_{P}\left( 1\right) r_{m}^{\eta -1/2}\left( \frac{a_{m}}{m%
}\right) ^{1-\eta }=O_{P}\left( \left( \frac{a_{m}}{m}\right) ^{1/2}\right)
=o_{P}\left( 1\right) ,
\end{equation*}%
whence%
\begin{align*}
&r_{m}^{\eta -1/2}\frac{\left\vert \displaystyle\sum_{t=m+1}^{m+a_{m}}%
\epsilon _{t}-\displaystyle\frac{a_{m}}{m}\left( \displaystyle%
\sum_{t=1}^{m}\epsilon _{t}\right) +\displaystyle\sum_{t=m+k^{\ast
}+1}^{m+a_{m}}\mathbf{x}_{t}^{\prime }\Delta _{m}\right\vert }{m^{1/2}\left(
1+\displaystyle\frac{a_{m}}{m}\right) \displaystyle\left( \frac{a_{m}}{%
m+a_{m}}\right) ^{\eta }} \\
=&r_{m}^{\eta -1/2}\frac{\left\vert \displaystyle\sum_{t=m+1}^{m+a_{m}}%
\epsilon _{t}+\displaystyle\sum_{t=m+k^{\ast }+1}^{m+a_{m}}\mathbf{x}%
_{t}^{\prime }\Delta _{m}\right\vert }{m^{1/2}\left( 1+\displaystyle\frac{%
a_{m}}{m}\right) \displaystyle\left( \frac{a_{m}}{m+a_{m}}\right) ^{\eta }}%
+o_{P}\left( 1\right) .
\end{align*}%
Finally, using Lemma \ref{sip}, it holds that%
\begin{align*}
&r_{m}^{\eta -1/2}\frac{\left\vert \displaystyle\sum_{t=m+1}^{m+a_{m}}%
\epsilon _{t}-\sigma W_{1,m}\left( a_{m}\right) \right\vert }{m^{1/2}\left(
1+\displaystyle\frac{a_{m}}{m}\right) \displaystyle\left( \frac{a_{m}}{%
m+a_{m}}\right) ^{\eta }} \\
=&O_{P}\left( 1\right) r_{m}^{\eta -1/2}m^{\zeta _{1}-1/2}\left( \frac{a_{m}%
}{m}\right) ^{-\eta }=o_{P}\left( 1\right) ,
\end{align*}%
so that%
\begin{align*}
&r_{m}^{\eta -1/2}\frac{\left\vert \displaystyle\sum_{t=m+1}^{m+a_{m}}%
\epsilon _{t}+\displaystyle\sum_{t=m+k^{\ast }+1}^{m+a_{m}}\mathbf{x}%
_{t}^{\prime }\Delta _{m}\right\vert }{m^{1/2}\left( 1+\displaystyle\frac{%
a_{m}}{m}\right) \displaystyle\left( \frac{a_{m}}{m+a_{m}}\right) ^{\eta }}
\\
=&r_{m}^{\eta -1/2}\frac{\left\vert \sigma W_{1,m}\left( a_{m}\right) +%
\displaystyle\sum_{t=m+k^{\ast }+1}^{m+a_{m}}\mathbf{x}_{t}^{\prime }\Delta
_{m}\right\vert }{m^{1/2}\left( 1+\displaystyle\frac{a_{m}}{m}\right) %
\displaystyle\left( \frac{a_{m}}{m+a_{m}}\right) ^{\eta }}+o_{P}\left(
1\right) \\
=&a_{m}^{-1/2}\left\vert \sigma W_{1,m}\left( a_{m}\right) +\sum_{t=k^{\ast
}+1}^{a_{m}}\mathbf{x}_{t}^{\prime }\Delta _{m}\right\vert +o_{P}\left(
1\right) ,
\end{align*}%
where the last result follows from having $a_{m}=o\left( m\right) $. Given
that the distribution of $W_{1,m}\left( \cdot \right) $ does not depend on $%
m $, it follows that%
\begin{equation*}
a_{m}^{-1/2}\left\vert \sigma W_{1,m}\left( a_{m}\right) +\sum_{t=m+k^{\ast
}+1}^{m+a_{m}}\mathbf{x}_{t}^{\prime }\Delta _{m}\right\vert \overset{%
\mathcal{D}}{=}\left\vert \sigma W\left( 1\right)
+a_{m}^{-1/2}\sum_{t=m+k^{\ast }+1}^{m+a_{m}}\mathbf{x}_{t}^{\prime }\Delta
_{m}\right\vert ,
\end{equation*}%
where $W\left( \cdot \right) $ is a standard Wiener. The results above
entail that, as $m\rightarrow \infty $%
\begin{equation*}
P\left( r_{m}^{\eta -1/2}\max_{a_{m}\leq k\leq a_{m}}\frac{\left\vert
Q\left( m;a_{m}\right) \right\vert }{g\left( m;a_{m}\right) }\leq x\right)
=P\left( \left\vert \sigma Z+a_{m}^{-1/2}\sum_{t=m+k^{\ast }+1}^{m+a_{m}}%
\mathbf{x}_{t}^{\prime }\Delta _{m}\right\vert \leq x\right) +o_{P}\left(
1\right) ,
\end{equation*}%
for all $-\infty <x<\infty $, with $Z\sim N\left( 0,1\right) $.

We are now ready to prove the main statement of the theorem. By assumption, $%
\left\vert \mathbf{c}_{1}^{\prime }\Delta _{m}\right\vert =c_{0}\left\Vert 
\mathbf{c}_{1}\right\Vert \left\Vert \Delta _{m}\right\Vert $, where $%
c_{0}>0 $ is the cosine of the angle between $\mathbf{c}_{1}$ and $\Delta
_{m}$. Hence, Assumption \ref{delay-assumption}\textit{(ii)} entails%
\begin{equation*}
\left\vert \mathbf{c}_{1}^{\prime }\Delta _{m}\right\vert =c_{0}\left\Vert 
\mathbf{c}_{1}\right\Vert \left\Vert \Delta _{m}\right\Vert =\Omega \left(
d^{1/2}\Delta _{m}\right) ;
\end{equation*}%
henceforth, we can and will assume, without loss of generality, that $%
\mathbf{c}_{1}^{\prime }\Delta _{m}>0$. We note that, using the fact that,
by Assumption \ref{delay-assumption}\textit{(i)}, $a_{m}-k^{\ast }=\Omega
\left( a_{m}\right) $, we have%
\begin{align}
&a_{m}^{-1/2}\sum_{t=m+k^{\ast }+1}^{m+a_{m}}\mathbf{x}_{t}^{\prime }\Delta
_{m}  \label{drift} \\
=&a_{m}^{-1/2}\sum_{t=m+k^{\ast }+1}^{m+a_{m}}\mathbf{c}_{1}^{\prime }\Delta
_{m}+a_{m}^{-1/2}\sum_{t=m+k^{\ast }+1}^{m+a_{m}}\left( \mathbf{x}_{t}-%
\mathbf{c}_{1}\right) ^{\prime }\Delta _{m}  \notag \\
=&a_{m}^{1/2}\Omega \left( d^{1/2}\Delta _{m}\right) +O_{P}\left(
d^{1/2}\Delta _{m}\right) ,  \notag
\end{align}%
where the second term follows from the CLT, viz.%
\begin{equation*}
\left\vert a_{m}^{-1/2}\sum_{t=m+k^{\ast }+1}^{m+a_{m}}\left( \mathbf{x}_{t}-%
\mathbf{c}_{1}\right) ^{\prime }\Delta _{m}\right\vert \leq \left\Vert
a_{m}^{-1/2}\sum_{t=m+k^{\ast }+1}^{m+a_{m}}\left( \mathbf{x}_{t}-\mathbf{c}%
_{1}\right) \right\Vert \left\Vert \Delta _{m}\right\Vert =O_{P}\left(
d^{1/2}\right) \Omega \left( \Delta _{m}\right) .
\end{equation*}%
Equation (\ref{drift}), and the fact that we are assuming $\mathbf{c}%
_{1}^{\prime }\Delta _{m}>0$, imply that%
\begin{equation*}
\lim_{m\rightarrow \infty }P\left( \sigma Z+a_{m}^{-1/2}\sum_{t=m+k^{\ast
}+1}^{m+a_{m}}\mathbf{x}_{t}^{\prime }\Delta _{m}>0\right) =1,
\end{equation*}%
and%
\begin{equation}
a_{m}^{-1/2}\sum_{t=m+k^{\ast }+1}^{m+a_{m}}\mathbf{x}_{t}^{\prime }\Delta
_{m}\overset{\mathcal{P}}{\rightarrow }\infty ,  \label{drift2}
\end{equation}%
as $m\rightarrow \infty $. Therefore, as $m\rightarrow \infty $%
\begin{equation*}
P\left( r_{m}^{\eta -1/2}\frac{\left\vert Q\left( m;a_{m}\right) \right\vert 
}{g_{\eta }\left( m;a_{m}\right) }\leq x\right) =P\left( \sigma
Z+a_{m}^{-1/2}\sum_{t=m+k^{\ast }+1}^{m+a_{m}}\mathbf{x}_{t}^{\prime }\Delta
_{m}\leq x\right) +o_{P}\left( 1\right) .
\end{equation*}%
Given that, by (\ref{drift2}) 
\begin{equation*}
\lim_{m\rightarrow \infty }P\left( \sigma Z+a_{m}^{-1/2}\sum_{t=m+k^{\ast
}+1}^{m+a_{m}}\mathbf{x}_{t}^{\prime }\Delta _{m}\leq c_{\alpha ,\eta
}\right) =\lim_{m\rightarrow \infty }P\left( Z\leq \frac{c_{\alpha ,\eta }}{%
\sigma }-\frac{1}{\sigma }a_{m}^{-1/2}\sum_{t=m+k^{\ast }+1}^{m+a_{m}}%
\mathbf{x}_{t}^{\prime }\Delta _{m}\right) =0,
\end{equation*}%
putting all the above together, we finally have%
\begin{equation*}
\lim_{m\rightarrow \infty }P\left( r_{m}^{\eta -1/2}\frac{\left\vert Q\left(
m;a_{m}\right) \right\vert }{g_{\eta }\left( m;a_{m}\right) }\leq c_{\alpha
,\eta }\right) =0.
\end{equation*}%
We now conclude, by noting that the event $\left\{ \tau _{m}>a_{m}\right\} $
is equivalent to having 
\begin{equation*}
\left\{ r_{m}^{\eta -1/2}\left\vert Q\left( m;a_{m}\right) \right\vert \leq
c_{\alpha ,\eta }g_{\eta }\left( m;a_{m}\right) \right\} .
\end{equation*}%
We have shown that $\lim_{m\rightarrow \infty }P\left( \tau
_{m}>a_{m}\right) =0$, whence $\lim_{m\rightarrow \infty }P\left( \tau
_{m}=a_{m}\right) =1$.
\end{proof}

\begin{proof}[Proof of Theorem \protect\ref{dynamic-reg}]
We prove that Lemmas \ref{max-ineq-1}-\ref{max-inequ-3} hold under (\ref%
{dyn-reg}). Thereafter, the proofs of Theorems \ref{nul-distr}, \ref{power}
and \ref{delay} and Lemma \ref{lrv} can be repeated \textit{verbatim}.
Further, for simplicity, we report the proof for the case $p=1$, writing (%
\ref{dyn-reg}) as%
\begin{equation}
y_{t}=\rho y_{t-1}+\mathbf{z}_{t}^{\prime }\beta ^{Z}+\epsilon _{t}.
\label{dyn-2}
\end{equation}%
Equation (\ref{dyn-2}) can be expressed in recursive form%
\begin{equation}
y_{t}=\rho ^{t}y_{0}+\sum_{j=0}^{t-1}\rho ^{j}\left( \mathbf{z}%
_{t-j}^{\prime }\beta ^{Z}+\epsilon _{t-j}\right) ,  \label{recursive}
\end{equation}%
where $y_{0}$ is an initial condition. Under Assumption \ref{ass-dyn}\textit{%
(ii)}, (\ref{dyn-2}) admits a unique stationary, non-anticipative solution
given by%
\begin{equation}
\overline{y}_{t}=\sum_{j=0}^{\infty }\rho ^{j}\left( \mathbf{z}%
_{t-j}^{\prime }\beta ^{Z}+\epsilon _{t-j}\right) =\sum_{j=0}^{\infty }\rho
^{j}u_{t-j},  \label{stationary}
\end{equation}%
where $u_{t}=\mathbf{z}_{t}^{\prime }\beta ^{Z}+\epsilon _{t}$, for any
(stochastically bounded) initial value $y_{0}$. By Assumption \ref{b-shifts}%
, $x_{j,t}=h_{j}^{x}\left( \eta _{j,t}^{x},...\right) $ for $2\leq j\leq d-p$
and $\epsilon _{t}=h^{\epsilon }\left( \eta _{t}^{\epsilon },...\right) $;
hence, it immediately follows that we can use the representation $%
u_{t}=h^{u}\left( \eta _{t}^{u},...\right) $ and that $u_{t}$ satisfies
Assumption \ref{b-shifts}. Define the $\sigma $-fields $\mathcal{F}_{t-\ell
}^{t-j}=\left\{ \eta _{i}^{u}\right\} _{i=t-\ell }^{t-j}$, with the
convention that $\mathcal{F}_{t-\ell }^{t-j}$ is empty if $j>\ell $, and $%
\mathcal{F}_{t-\min \left\{ \ell ,j\right\} }^{-\infty }=\left\{ \widetilde{%
\eta }_{t-\min \left\{ \ell ,j\right\} }^{u},...,\widetilde{\eta }_{-\infty
}^{u}\right\} $, where $\widetilde{\eta }_{t}^{u}$ is an independent copy of 
$\eta _{t}^{u}$, forming an \textit{i.i.d.} sequence. We now define the
coupling constructions 
\begin{equation*}
\widetilde{u}_{t-j,\ell }=h^{u}\left( \mathcal{F}_{t-\ell }^{t-j},\mathcal{F}%
_{t-\min \left\{ \ell ,j\right\} }^{-\infty }\right) ,
\end{equation*}%
and 
\begin{equation*}
\widetilde{y}_{t,\ell }=\sum_{j=0}^{\infty }\rho ^{j}\widetilde{u}_{t-j,\ell
}.
\end{equation*}%
Using Minkowski's inequality, we have that%
\begin{equation*}
\left\vert y_{t}-\widetilde{y}_{t,\ell }\right\vert _{4}\leq \left\vert
y_{t}-\overline{y}_{t}\right\vert _{4}+\left\vert \overline{y}_{t}-%
\widetilde{y}_{t,\ell }\right\vert _{4}=I+II,
\end{equation*}%
with 
\begin{equation}
\left\vert y_{t}-\overline{y}_{t}\right\vert _{4}\leq \left\vert \rho
^{t}y_{0}\right\vert _{4}+\left\vert \sum_{j=t}^{\infty }\rho
^{j}u_{t-j}\right\vert _{4}\leq c_{0}\left\vert \rho \right\vert
^{t}+\sum_{j=t}^{\infty }\left\vert \rho \right\vert ^{j}\left\vert
u_{0}\right\vert _{4}\leq c_{1}\left\vert \rho \right\vert ^{t},
\label{bs-1}
\end{equation}%
and%
\begin{align}
& \left\vert \overline{y}_{t}-\widetilde{y}_{t,\ell }\right\vert _{4}
\label{bs-2} \\
\leq & \sum_{j=0}^{\infty }\left\vert \rho \right\vert ^{j}\left\vert
u_{t-j}-\widetilde{u}_{t-j,\ell }\right\vert _{4}=\sum_{j=0}^{\ell
}\left\vert \rho \right\vert ^{j}\left\vert u_{t-j}-\widetilde{u}_{t-j,\ell
}\right\vert _{4}+\sum_{j=\ell +1}^{\infty }\left\vert \rho \right\vert
^{j}\left\vert u_{t-j}-\widetilde{u}_{t-j,\ell }\right\vert _{4}  \notag \\
\leq & \sum_{j=0}^{\ell }\left\vert \rho \right\vert ^{j}\left\vert u_{t-j}-%
\widetilde{u}_{t-j,\ell }\right\vert _{4}+\left\vert \rho \right\vert ^{\ell
}\sum_{j=1}^{\infty }\left\vert \rho \right\vert ^{j}\left( \left\vert
u_{t-j}\right\vert _{4}+\left\vert \widetilde{u}_{t-j,\ell }\right\vert
_{4}\right)  \notag \\
\leq & c_{0}\left( \sum_{j=0}^{\ell }\left\vert \rho \right\vert ^{j}\right)
\ell ^{-a}+c_{1}\left\vert \rho \right\vert ^{\ell }\leq c_{2}\ell ^{-a}, 
\notag
\end{align}%
for some $c_{2}<\infty $. Note that equation (\ref{bs-2}) entails that $%
\overline{y}_{t}$ also satisfies Assumption \ref{b-shifts}.

We are now ready to prove that Lemmas \ref{max-ineq-1}-\ref{max-inequ-3}
hold under (\ref{dyn-reg}). Let $\overline{\mathbf{x}}_{t}=\left( \mathbf{z}%
_{t}^{\prime },\overline{\mathbf{y}}_{p,t}^{\prime }\right) ^{\prime }$. We
begin with Lemma \ref{max-inequ-3}, and note that%
\begin{align}
& \frac{1}{m}\sum_{t=1}^{m}\mathbf{x}_{t}\mathbf{x}_{t}^{\prime }-\frac{1}{m}%
\sum_{t=1}^{m}\overline{\mathbf{x}}_{t}\overline{\mathbf{x}}_{t}^{\prime }
\label{la4-1} \\
=& \frac{1}{m}\sum_{t=1}^{m}\overline{\mathbf{x}}_{t}\left( \mathbf{x}_{t}-%
\overline{\mathbf{x}}_{t}\right) ^{\prime }+\frac{1}{m}\sum_{t=1}^{m}\left( 
\mathbf{x}_{t}-\overline{\mathbf{x}}_{t}\right) \overline{\mathbf{x}}%
_{t}^{\prime }+\frac{1}{m}\sum_{t=1}^{m}\left( \mathbf{x}_{t}-\overline{%
\mathbf{x}}_{t}\right) \left( \mathbf{x}_{t}-\overline{\mathbf{x}}%
_{t}\right) ^{\prime }.  \notag
\end{align}%
It holds that%
\begin{align*}
\left\Vert \frac{1}{m}\sum_{t=1}^{m}\overline{\mathbf{x}}_{t}\left( \mathbf{x%
}_{t}-\overline{\mathbf{x}}_{t}\right) ^{\prime }\right\Vert _{F}^{2}=&
\sum_{j,h=1}^{d}\left( \frac{1}{m}\sum_{t=1}^{m}\overline{x}_{j,t}\left(
x_{h,t}-\overline{x}_{h,t}\right) \right) ^{2} \\
\leq & \sum_{j=1}^{d}\left( \frac{1}{m}\sum_{t=1}^{m}\overline{x}%
_{j,t}^{2}\right) \sum_{h=1}^{d}\left( \frac{1}{m}\sum_{t=1}^{m}\left(
x_{h,t}-\overline{x}_{h,t}\right) ^{2}\right) .
\end{align*}%
It is easy to see that%
\begin{equation*}
E\sum_{j=1}^{d}\left( \frac{1}{m}\sum_{t=1}^{m}\overline{x}_{j,t}^{2}\right)
\leq c_{0}d,
\end{equation*}%
and%
\begin{equation*}
E\sum_{h=1}^{d}\left( \frac{1}{m}\sum_{t=1}^{m}\left( x_{h,t}-\overline{x}%
_{h,t}\right) ^{2}\right) \leq c_{0}dm^{-1},
\end{equation*}%
because by (\ref{bs-1}), for all $h$, $\left\vert x_{h,t}-\overline{x}%
_{h,t}\right\vert _{2}\leq \left\vert x_{h,t}-\overline{x}_{h,t}\right\vert
_{4}\leq c_{1}\left\vert \rho \right\vert ^{t}$. Hence 
\begin{equation*}
\left\Vert \frac{1}{m}\sum_{t=1}^{m}\overline{\mathbf{x}}_{t}\left( \mathbf{x%
}_{t}-\overline{\mathbf{x}}_{t}\right) ^{\prime }\right\Vert
_{F}=O_{P}\left( \frac{d}{m^{1/2}}\right) ,
\end{equation*}%
and the same can be shown for the other terms in (\ref{la4-1}); we note that
these bounds are not the sharpest possible (in essence, this is due to the
fact that some coordinates of $\mathbf{x}_{t}$ and $\overline{\mathbf{x}}%
_{t} $ may be the same, thus reducing the dimensionality), but they suffice
for our purposes. Lemma \ref{max-inequ-3} now follows readily from repeating
the original proof, using $\sum_{t=1}^{m}\overline{\mathbf{x}}_{t}\overline{%
\mathbf{x}}_{t}^{\prime }$ instead of $\sum_{t=1}^{m}\mathbf{x}_{t}\mathbf{x}%
_{t}^{\prime }$. We now turn to Lemma \ref{max-ineq-1}. We estimate 
\begin{equation*}
\left\Vert \frac{1}{m}\sum_{t=1}^{m}\left( \mathbf{x}_{t}-\overline{\mathbf{x%
}}_{t}\right) \epsilon _{t}\right\Vert =\sum_{h=1}^{d}\left\vert
\sum_{t=1}^{m}\left( x_{h,t}-\overline{x}_{h,t}\right) \epsilon
_{t}\right\vert ^{2}\leq \sum_{h=1}^{d}\left\vert \sum_{t=1}^{m}\rho
^{t}y_{0}\epsilon _{t}\right\vert ^{2}+\sum_{h=1}^{d}\left\vert
\sum_{t=1}^{m}\left( \epsilon _{t}\sum_{j=t}^{\infty }\rho
^{j}u_{t-j}\right) \right\vert ^{2},
\end{equation*}%
having used (\ref{recursive}) and (\ref{stationary}). Considering the first
term, it holds that%
\begin{equation*}
E\sum_{h=1}^{d}\left\vert \sum_{t=1}^{m}\rho ^{t}y_{0}\epsilon
_{t}\right\vert ^{2}=\sum_{h=1}^{d}\sum_{t,s=1}^{m}\left\vert \rho
\right\vert ^{t+s}E\left( y_{0}^{2}\epsilon _{t}\epsilon _{s}\right) \leq
\sum_{h=1}^{d}\sum_{t,s=1}^{m}\left\vert \rho \right\vert ^{t+s}\left\vert
y_{0}\right\vert _{4}^{2}\left\vert \epsilon _{t}\right\vert _{4}\left\vert
\epsilon _{s}\right\vert _{4}\leq c_{0}d,
\end{equation*}%
having used the Cauchy-Schwartz inequality (twice), Assumption \ref{b-shifts}%
, and Assumption \ref{ass-dyn}. As far as the second term is concerned, note
that%
\begin{align*}
& E\sum_{h=1}^{d}\left\vert \sum_{t=1}^{m}\left( \epsilon
_{t}\sum_{j=t}^{\infty }\rho ^{j}u_{t-j}\right) \right\vert ^{2} \\
=& E\sum_{h=1}^{d}\left\vert \sum_{t=1}^{m}\left( \rho ^{t}\epsilon
_{t}\sum_{h=0}^{\infty }\rho ^{h}u_{h}\right) \right\vert
^{2}=E\sum_{h=1}^{d}\sum_{t,s=1}^{m}\left( \rho ^{t+s}\epsilon _{t}\epsilon
_{s}\sum_{h,i=0}^{\infty }\rho ^{h+i}u_{h}u_{i}\right) \\
=& \sum_{h=1}^{d}\sum_{t,s=1}^{m}\rho ^{t+s}\sum_{h,i=0}^{\infty }\rho
^{h+i}E\left( \epsilon _{t}\epsilon _{s}u_{h}u_{i}\right) \leq c_{0}d,
\end{align*}%
which follows again from Assumptions \ref{b-shifts} and \ref{ass-dyn}\textit{%
(ii)}, and H\"{o}lder's inequality. Hence it follows that%
\begin{equation*}
\left\Vert \frac{1}{m}\sum_{t=1}^{m}\left( \mathbf{x}_{t}-\overline{\mathbf{x%
}}_{t}\right) \epsilon _{t}\right\Vert =O_{P}\left( \frac{d^{1/2}}{m}\right)
,
\end{equation*}%
and now the desired result follows by repeating the proof of Lemma \ref%
{max-ineq-1} using $\overline{\mathbf{x}}_{t}$ instead of $\mathbf{x}_{t}$.
Finally, we consider Lemma \ref{max-ineq-2}, and we estimate%
\begin{equation*}
\max_{1\leq k\leq T_{m}}\frac{1}{k^{\zeta _{3}}}\left\Vert
\sum_{t=m+1}^{m+k}(\mathbf{x}_{t}-\overline{\mathbf{x}}_{t})\right\Vert \leq
c_{0}d^{1/2}\left( \max_{1\leq k\leq T_{m}}\frac{1}{k^{\zeta _{3}}}%
\left\vert \sum_{t=m+1}^{m+k}\rho ^{t}y_{0}\right\vert +\max_{1\leq k\leq
T_{m}}\frac{1}{k^{\zeta _{3}}}\left\vert
\sum_{t=m+1}^{m+k}\sum_{j=t}^{\infty }\rho ^{j}u_{t-j}\right\vert \right) ,
\end{equation*}%
where again we do not derive the sharpest bounds because we do not take into
account the fact that some elements of $\mathbf{x}_{t}$ and $\overline{%
\mathbf{x}}_{t}$ coincide. It is easy to see that%
\begin{equation*}
\max_{1\leq k\leq T_{m}}\frac{1}{k^{\zeta _{3}}}\left\vert
\sum_{t=m+1}^{m+k}\rho ^{t}y_{0}\right\vert =\left\vert y_{0}\right\vert
\max_{1\leq k\leq T_{m}}\frac{1}{k^{\zeta _{3}}}\left\vert
\sum_{t=m+1}^{m+k}\rho ^{t}\right\vert =O_{P}\left( 1\right) ,
\end{equation*}%
for all $\zeta _{3}>0$. Also%
\begin{align*}
& P\left( \max_{1\leq k\leq T_{m}}\frac{1}{k^{\zeta _{3}}}\left\vert
\sum_{t=m+1}^{m+k}\sum_{j=t}^{\infty }\rho ^{j}u_{t-j}\right\vert \geq
x\right) \\
\leq & P\left( \max_{0\leq \ell \leq \left\lceil \ln T_{m}\right\rceil
}\max_{\exp \left( \ell \right) \leq k\leq \exp \left( \ell +1\right) }\frac{%
1}{k^{\zeta _{3}}}\left\vert \sum_{t=m+1}^{m+k}\sum_{j=t}^{\infty }\rho
^{j}u_{t-j}\right\vert \geq x\right) \\
\leq & \sum_{\ell =0}^{\left\lceil \ln T_{m}\right\rceil }P\left( \max_{\exp
\left( \ell \right) \leq k\leq \exp \left( \ell +1\right) }\left\vert
\sum_{t=m+1}^{m+k}\sum_{j=t}^{\infty }\rho ^{j}u_{t-j}\right\vert \geq x\exp
\left( \zeta _{3}\ell \right) \right) \\
\leq & c_{0}x^{-p}\exp \left( -p\zeta _{3}\ell \right) E\max_{\exp \left(
\ell \right) \leq k\leq \exp \left( \ell +1\right) }\left\vert
\sum_{t=m+1}^{m+k}\sum_{j=t}^{\infty }\rho ^{j}u_{t-j}\right\vert ^{p} \\
\leq & x^{-p}\exp \left( -p\zeta _{3}\ell \right) E\max_{1\leq k\leq \exp
\left( \ell +1\right) }\left\vert \sum_{t=m+1}^{m+k}\sum_{j=t}^{\infty }\rho
^{j}u_{t-j}\right\vert ^{p}
\end{align*}%
for all $p\leq 4$. Note now that, by repeated application of Minkowski
inequality and by Assumptions \ref{b-shifts} and \ref{ass-dyn}\textit{(ii)}%
\begin{align*}
& \left\vert \sum_{t=m+1}^{m+k}\sum_{j=t}^{\infty }\rho
^{j}u_{t-j}\right\vert _{p} \\
\leq & \sum_{t=m+1}^{m+k}\left\vert \sum_{j=t}^{\infty }\rho
^{j}u_{t-j}\right\vert _{p}\leq \sum_{t=m+1}^{m+k}\sum_{j=t}^{\infty
}\left\vert \rho \right\vert ^{j}\left\vert u_{t-j}\right\vert _{p} \\
\leq & \sum_{t=m+1}^{m+k}\sum_{j=t}^{\infty }\left\vert \rho \right\vert
^{j}\left\vert u_{t-j}\right\vert _{p}\leq c_{0}\sum_{t=m+1}^{m+k}\left\vert
\rho \right\vert ^{t},
\end{align*}%
so that%
\begin{equation*}
\left\vert \sum_{t=m+1}^{m+k}\sum_{j=t}^{\infty }\rho ^{j}u_{t-j}\right\vert
_{p}^{p}\leq c_{0}\left( \sum_{t=m+1}^{m+k}\left\vert \rho \right\vert
^{t}\right) ^{p};
\end{equation*}%
this is true for all $m$ and $k$, and by construction $\left\vert \rho
\right\vert ^{t}\geq 0$. Hence, we can apply Theorem F in %
\citet{moricz1976moment}, whence%
\begin{equation*}
E\max_{1\leq k\leq \exp \left( \ell +1\right) }\left\vert
\sum_{t=m+1}^{m+k}\sum_{j=t}^{\infty }\rho ^{j}u_{t-j}\right\vert ^{p}\leq
\log _{2}\left( 4\exp \left( \ell +1\right) \right) \left(
\sum_{t=m+1}^{m+k}\left\vert \rho \right\vert ^{t}\right) ^{p}\leq c_{0}\ell
,
\end{equation*}%
which entails that%
\begin{equation*}
\max_{1\leq k\leq T_{m}}\frac{1}{k^{\zeta _{3}}}\left\vert
\sum_{t=m+1}^{m+k}\sum_{j=t}^{\infty }\rho ^{j}u_{t-j}\right\vert
=O_{P}\left( 1\right) ,
\end{equation*}%
for all $\zeta _{3}>0$. Putting all together, for all $\zeta _{3}>0$ it
holds that%
\begin{equation*}
\max_{1\leq k\leq T_{m}}\frac{1}{k^{\zeta _{3}}}\left\Vert
\sum_{t=m+1}^{m+k}(\mathbf{x}_{t}-\overline{\mathbf{x}}_{t})\right\Vert
=O_{P}\left( d^{1/2}\right) ;
\end{equation*}%
again, Lemma \ref{max-ineq-2} now follows from repeating the original proof
with $\overline{\mathbf{x}}_{t}$ instead of $\mathbf{x}_{t}$.
\end{proof}

\begin{proof}[Proof of Proposition \protect\ref{veto-prop}]
Let $\sigma =1$ for simplicity. We begin by showing that, on a suitably
enlarged probability space, there exist two independent standard Wiener
processes $\left\{ W_{1,m}\left( k\right) ,k\geq 1\right\} $ and $\left\{
W_{2,m}\left( k\right) ,k\geq 1\right\} $ such that%
\begin{equation}
\max_{1\leq k\leq T_{m}}\frac{\left\vert \displaystyle\sum_{t=m+1}^{m+k}%
\widehat{\epsilon }_{t}-\left( W_{1,m}\left( k\right) -\displaystyle\frac{k}{%
m}W_{2,m}\left( m\right) \right) \right\vert }{m^{1/2}\left( 1+\displaystyle%
\frac{k}{m}\right) \min_{1\leq j\leq J}c_{\alpha ,\eta _{j}}\widetilde{r}%
_{m}^{1/2-\eta _{j}}d_{a_{m},\eta }\displaystyle\left( \frac{k}{m+k}\right)
^{\eta _{j}}}=o_{P}\left( 1\right) .  \label{veto-1}
\end{equation}%
Some passages are repetitive, but we report them anyway to make the proof
easier to follow. We begin by noting that%
\begin{align}
&\max_{1\leq k\leq T_{m}}\frac{\left\vert \displaystyle\sum_{t=m+1}^{m+k}%
\mathbf{x}_{t}^{\prime }\left( \displaystyle\left( \frac{1}{m}\sum_{t=1}^{m}%
\mathbf{x}_{t}\mathbf{x}_{t}^{\prime }\right) ^{-1}-\mathbf{C}^{-1}\right)
\left( \displaystyle\sum_{t=1}^{m}\mathbf{x}_{t}\epsilon _{t}\right)
\right\vert }{m^{1/2}\left( 1+\displaystyle\frac{k}{m}\right) \min_{1\leq
j\leq J}c_{\alpha ,\eta _{j}}\widetilde{r}_{m}^{1/2-\eta _{j}}d_{a_{m},\eta }%
\displaystyle\left( \frac{k}{m+k}\right) ^{\eta _{j}}}  \label{veto-a} \\
\leq &\max_{1\leq k\leq T_{m}}\frac{\left\Vert \displaystyle%
\sum_{t=m+1}^{m+k}\mathbf{x}_{t}\right\Vert \left\Vert \left( \displaystyle%
\left( \frac{1}{m}\sum_{t=1}^{m}\mathbf{x}_{t}\mathbf{x}_{t}^{\prime
}\right) ^{-1}-\mathbf{C}^{-1}\right) \right\Vert _{F}\displaystyle%
\left\Vert \frac{1}{m}\sum_{t=1}^{m}\mathbf{x}_{t}\epsilon _{t}\right\Vert }{%
m^{1/2}\left( 1+\displaystyle\frac{k}{m}\right) \min_{1\leq j\leq
J}c_{\alpha ,\eta _{j}}\widetilde{r}_{m}^{1/2-\eta _{j}}d_{a_{m},\eta }%
\displaystyle\left( \frac{k}{m+k}\right) ^{\eta _{j}}}  \notag \\
=&O_{P}\left( 1\right) \max_{1\leq k\leq T_{m}}\frac{%
d^{1/2}kdm^{-1/2}d^{1/2}m^{-1/2}}{m^{1/2}\left( 1+\displaystyle\frac{k}{m}%
\right) \min_{1\leq j\leq J}c_{\alpha ,\eta _{j}}\widetilde{r}_{m}^{1/2-\eta
_{j}}d_{a_{m},\eta }\displaystyle\left( \frac{k}{m+k}\right) ^{\eta _{j}}} 
\notag \\
=&O_{P}\left( 1\right) \frac{d^{2}}{m^{1/2}}\max_{1\leq j\leq J}\frac{1}{%
\widetilde{r}_{m}^{1/2-\eta _{j}}c_{\alpha ,\eta _{j}}}\max_{1\leq k\leq
T_{m}}\left( \frac{k}{m+k}\right) ^{1-\eta _{j}}\frac{1}{d_{a_{m},\eta }} 
\notag \\
=&O_{P}\left( 1\right) \frac{d^{2}}{m^{1/2}}=o_{P}\left( 1\right) ,  \notag
\end{align}%
having used: Lemmas \ref{max-ineq-1} and \ref{max-inequ-3} in the third
line, and the fact that $d=o\left( m^{1/4}\right) $ in the last line.
Similarly, it holds that%
\begin{align}
&\max_{1\leq k\leq T_{m}}\frac{\left\vert \displaystyle\sum_{t=m+1}^{m+k}%
\left( \mathbf{x}_{t}-\mathbf{c}_{1}\right) ^{\prime }\mathbf{C}^{-1}%
\displaystyle\left( \frac{1}{m}\sum_{t=1}^{m}\mathbf{x}_{t}\epsilon
_{t}\right) \right\vert }{m^{1/2}\left( 1+\displaystyle\frac{k}{m}\right)
\min_{1\leq j\leq J}c_{\alpha ,\eta _{j}}\widetilde{r}_{m}^{1/2-\eta
_{j}}d_{a_{m},\eta }\displaystyle\left( \frac{k}{m+k}\right) ^{\eta _{j}}}
\label{veto-b} \\
\leq &\max_{1\leq k\leq T_{m}}\frac{\left\Vert \displaystyle%
\sum_{t=m+1}^{m+k}\left( \mathbf{x}_{t}-\mathbf{c}_{1}\right) \right\Vert
\left\Vert \mathbf{C}^{-1}\right\Vert _{F}\displaystyle\left\Vert \frac{1}{m}%
\sum_{t=1}^{m}\mathbf{x}_{t}\epsilon _{t}\right\Vert }{m^{1/2}\left( 1+%
\displaystyle\frac{k}{m}\right) \min_{1\leq j\leq J}c_{\alpha ,\eta _{j}}%
\widetilde{r}_{m}^{1/2-\eta _{j}}d_{a_{m},\eta }\displaystyle\left( \frac{k}{%
m+k}\right) ^{\eta _{j}}}  \notag \\
=&O_{P}\left( 1\right) \max_{1\leq k\leq T_{m}}\frac{k^{\zeta
_{3}}d^{1/2}d^{1/2}d^{1/2}m^{-1/2}}{m^{1/2}\left( 1+\displaystyle\frac{k}{m}%
\right) \min_{1\leq j\leq J}c_{\alpha ,\eta _{j}}\widetilde{r}_{m}^{1/2-\eta
_{j}}d_{a_{m},\eta }\displaystyle\left( \frac{k}{m+k}\right) ^{\eta _{j}}} 
\notag \\
=&O_{P}\left( 1\right) d^{3/2}\max_{1\leq j\leq J}\frac{1}{\widetilde{r}%
_{m}^{1/2-\eta _{j}}c_{\alpha ,\eta _{j}}}\max_{1\leq k\leq T_{m}}\left( 
\frac{k}{m+k}\right) ^{\zeta _{3}-\eta _{j}}\left( \frac{1}{m+k}\right)
^{1-\zeta _{3}}\frac{1}{d_{a_{m},\eta }}  \notag \\
=&O_{P}\left( 1\right) d^{3/2}\left( \frac{1}{m}\right) ^{1-\zeta _{3}}%
\widetilde{r}_{m}^{\zeta _{3}-1/2}=o_{P}\left( 1\right) ,  \notag
\end{align}%
having used: Lemma \ref{max-ineq-2} with $\zeta _{3}=1/2+\varepsilon $ for
all $\eta _{j}<1/2$\ and $1/2<\zeta _{3}<\min \left\{ 1,\eta _{j}\right\} $
for all $\eta _{j}>1/2$, Lemma \ref{max-ineq-1} and (\ref{c-frob}) in the
third line, and the fact that $d=o\left( m^{1/4}\right) $ in the last line.
Combining (\ref{veto-a}) and (\ref{veto-b}), and recalling that $\mathbf{c}%
_{1}^{\prime }\mathbf{C}^{-1}\mathbf{x}_{t}=1$, we get%
\begin{equation*}
\max_{1\leq k\leq T_{m}}\frac{\left\vert \left( \displaystyle%
\sum_{t=m+1}^{m+k}\mathbf{x}_{t}^{\prime }\right) \left( \displaystyle%
\sum_{t=1}^{m}\mathbf{x}_{t}\mathbf{x}_{t}^{\prime }\right) ^{-1}\left( %
\displaystyle\sum_{t=1}^{m}\mathbf{x}_{t}\epsilon _{t}\right) -\displaystyle%
\frac{k}{m}\displaystyle\sum_{t=1}^{m}\epsilon _{t}\right\vert }{%
m^{1/2}\left( 1+\displaystyle\frac{k}{m}\right) \min_{1\leq j\leq
J}c_{\alpha ,\eta _{j}}\widetilde{r}_{m}^{1/2-\eta _{j}}d_{a_{m},\eta }%
\displaystyle\left( \frac{k}{m+k}\right) ^{\eta _{j}}}=o_{P}\left( 1\right) ,
\end{equation*}%
whence%
\begin{align}
&\max_{1\leq k\leq T_{m}}\frac{\left\vert \displaystyle\sum_{t=m+1}^{m+k}%
\widehat{\epsilon }_{t}\right\vert }{m^{1/2}\left( 1+\displaystyle\frac{k}{m}%
\right) \min_{1\leq j\leq J}c_{\alpha ,\eta _{j}}\widetilde{r}_{m}^{1/2-\eta
_{j}}d_{a_{m},\eta }\displaystyle\left( \frac{k}{m+k}\right) ^{\eta _{j}}}
\label{veto-c} \\
=&\max_{1\leq k\leq T_{m}}\frac{\left\vert \displaystyle\sum_{t=m+1}^{m+k}%
\epsilon _{t}-\displaystyle\frac{k}{m}\displaystyle\sum_{t=1}^{m}\epsilon
_{t}\right\vert }{m^{1/2}\left( 1+\displaystyle\frac{k}{m}\right)
\min_{1\leq j\leq J}c_{\alpha ,\eta _{j}}\widetilde{r}_{m}^{1/2-\eta
_{j}}d_{a_{m},\eta }\displaystyle\left( \frac{k}{m+k}\right) ^{\eta _{j}}}%
+o_{P}\left( 1\right) .  \notag
\end{align}%
Also, by standard algebra based on Lemma \ref{sip}%
\begin{align*}
&\max_{1\leq k\leq T_{m}}\frac{\left\vert \displaystyle\sum_{t=m+1}^{m+k}%
\epsilon _{t}-W_{1,m}\left( k\right) \right\vert }{m^{1/2}\left( 1+%
\displaystyle\frac{k}{m}\right) \min_{1\leq j\leq J}c_{\alpha ,\eta _{j}}%
\widetilde{r}_{m}^{1/2-\eta _{j}}d_{a_{m},\eta }\displaystyle\left( \frac{k}{%
m+k}\right) ^{\eta _{j}}} \\
=&O_{P}\left( 1\right) \max_{1\leq j\leq J}\max_{1\leq k\leq T_{m}}\frac{%
k^{\zeta _{1}}}{m^{1/2}\left( 1+\displaystyle\frac{k}{m}\right) \widetilde{r}%
_{m}^{1/2-\eta _{j}}d_{a_{m},\eta }\displaystyle\left( \frac{k}{m+k}\right)
^{\eta _{j}}} \\
=&O_{P}\left( 1\right) \max_{1\leq j\leq J}\frac{1}{\widetilde{r}%
_{m}^{1/2-\eta _{j}}d_{a_{m},\eta }}\max_{1\leq k\leq T_{m}}\frac{k^{\zeta
_{1}-\eta _{j}}}{\left( m+k\right) ^{1/2-\eta _{j}}} \\
=&O_{P}\left( 1\right) a_{m}^{\zeta _{1}-1/2}=o_{P}\left( 1\right) ,
\end{align*}%
and%
\begin{align*}
&\max_{1\leq k\leq T_{m}}\frac{\left\vert \displaystyle\frac{k}{m}%
\displaystyle\sum_{t=1}^{m}\epsilon _{t}-\displaystyle\frac{k}{m}%
W_{2,m}\left( m\right) \right\vert }{m^{1/2}\left( 1+\displaystyle\frac{k}{m}%
\right) \min_{1\leq j\leq J}c_{\alpha ,\eta _{j}}\widetilde{r}_{m}^{1/2-\eta
_{j}}d_{a_{m},\eta }\displaystyle\left( \frac{k}{m+k}\right) ^{\eta _{j}}} \\
=&O_{P}\left( 1\right) \max_{1\leq k\leq T_{m}}\frac{\frac{k}{m}m^{\zeta
_{2}}}{m^{1/2}\left( 1+\displaystyle\frac{k}{m}\right) \min_{1\leq j\leq
J}c_{\alpha ,\eta _{j}}\widetilde{r}_{m}^{1/2-\eta _{j}}d_{a_{m},\eta }%
\displaystyle\left( \frac{k}{m+k}\right) ^{\eta _{j}}} \\
=&m^{\zeta _{2}-1/2}O_{P}\left( 1\right) \max_{1\leq j\leq J}\frac{1}{%
\widetilde{r}_{m}^{1/2-\eta _{j}}d_{a_{m},\eta }}\max_{1\leq k\leq
T_{m}}\left( \frac{k}{m+k}\right) ^{1-\eta _{j}}=m^{\zeta
_{2}-1/2}O_{P}\left( 1\right) =o_{P}\left( 1\right) .
\end{align*}%
Thence (\ref{veto-1}) follows. We now study%
\begin{align*}
&\max_{1\leq k\leq T_{m}}\frac{\left\vert W_{1,m}\left( k\right) -%
\displaystyle\frac{k}{m}W_{2,m}\left( m\right) \right\vert }{m^{1/2}\left( 1+%
\displaystyle\frac{k}{m}\right) \min_{1\leq j\leq J}c_{\alpha ,\eta _{j}}%
\widetilde{r}_{m}^{1/2-\eta _{j}}d_{a_{m},\eta }\displaystyle\left( \frac{k}{%
m+k}\right) ^{\eta _{j}}} \\
\overset{\mathcal{D}}{=}&\max_{1\leq k\leq T_{m}}\frac{\left\vert
W_{1}\left( k\right) -\displaystyle\frac{k}{m}W_{2}\left( m\right)
\right\vert }{m^{1/2}\left( 1+\displaystyle\frac{k}{m}\right) \min_{1\leq
j\leq J}c_{\alpha ,\eta _{j}}\widetilde{r}_{m}^{1/2-\eta _{j}}d_{a_{m},\eta }%
\displaystyle\left( \frac{k}{m+k}\right) ^{\eta _{j}}} \\
\overset{\mathcal{D}}{=}&\max_{1\leq k\leq T_{m}}\frac{\left( 1+\displaystyle%
\frac{k}{m}\right) \left\vert W\displaystyle\left( \frac{k/m}{1+k/m}\right)
\right\vert }{\left( 1+\displaystyle\frac{k}{m}\right) \min_{1\leq j\leq
J}c_{\alpha ,\eta _{j}}\widetilde{r}_{m}^{1/2-\eta _{j}}d_{a_{m},\eta }%
\displaystyle\left( \frac{k}{m+k}\right) ^{\eta _{j}}} \\
\overset{\mathcal{D}}{=}&\max_{1\leq k\leq T_{m}}\frac{\left\vert W%
\displaystyle\left( \frac{k/m}{1+k/m}\right) \right\vert }{\min_{1\leq j\leq
J}c_{\alpha ,\eta _{j}}\widetilde{r}_{m}^{1/2-\eta _{j}}d_{a_{m},\eta }%
\displaystyle\left( \frac{k}{m+k}\right) ^{\eta _{j}}} \\
\overset{\mathcal{D}}{=}&\max_{1/m\leq t\leq T_{m}/m}\frac{\left\vert W%
\displaystyle\left( \frac{t}{1+t}\right) \right\vert }{\min_{1\leq j\leq
J}c_{\alpha ,\eta _{j}}\widetilde{r}_{m}^{1/2-\eta _{j}}d_{a_{m},\eta }%
\displaystyle\left( \frac{t}{1+t}\right) ^{\eta _{j}}},
\end{align*}%
seeing as the distributions of $W_{1,m}\left( k\right) $ and $W_{2,m}\left(
k\right) $ do not depend on $m$ (second line), and by the same token as in (%
\ref{process}), with $W\left( \cdot \right) $ standard Wiener. Therefore,
letting $J^{\ast }$ be the set of indices $j$ for which $\eta _{j}>1/2$ and
repeatedly using the scale transform of Wiener process%
\begin{align*}
&\max_{1/m\leq t\leq T_{m}/m}\frac{\left\vert W\displaystyle\left( \frac{t}{%
1+t}\right) \right\vert }{\min_{1\leq j\leq J}c_{\alpha ,\eta _{j}}%
\widetilde{r}_{m}^{1/2-\eta _{j}}d_{a_{m},\eta }\displaystyle\left( \frac{t}{%
1+t}\right) ^{\eta _{j}}} \\
\overset{\mathcal{D}}{=}&\max \left\{ \max_{j\notin J^{\ast }}\max_{1/m\leq
t\leq T_{m}/m}\frac{\left\vert W\displaystyle\left( \frac{t}{1+t}\right)
\right\vert }{c_{\alpha ,\eta _{j}}\displaystyle\left( \frac{t}{1+t}\right)
^{\eta _{j}}},\max_{j\in J^{\ast }}r_{m}^{\eta _{j}-1/2}\max_{a_{m}/m\leq
t\leq T_{m}/m}\frac{\left\vert W\displaystyle\left( \frac{t}{1+t}\right)
\right\vert }{c_{\alpha ,\eta _{j}}\displaystyle\left( \frac{t}{1+t}\right)
^{\eta _{j}}}\right\} \\
\overset{\mathcal{D}}{=}&\max \left\{ \max_{j\notin J^{\ast }}\max_{1/\left(
m+1\right) \leq u\leq T_{m}/\left( m+T_{m}\right) }\frac{\left\vert W\left(
u\right) \right\vert }{c_{\alpha ,\eta _{j}}u^{\eta _{j}}},\max_{j\in
J^{\ast }}r_{m}^{\eta _{j}-1/2}\max_{a_{m}/\left( m+a_{m}\right) \leq u\leq
T_{m}/\left( m+T_{m}\right) }\frac{\left\vert W\left( u\right) \right\vert }{%
c_{\alpha ,\eta _{j}}u^{\eta _{j}}}\right\} \\
\overset{\mathcal{D}}{=}&\max \left\{ \max_{j\notin J^{\ast }}\max_{1/\left(
m+1\right) \leq u\leq T_{m}/\left( m+T_{m}\right) }\frac{\left\vert W\left(
u\right) \right\vert }{c_{\alpha ,\eta _{j}}u^{\eta _{j}}},\max_{j\in
J^{\ast }}\max_{1\leq s\leq \left( T_{m}\left( m+a_{m}\right) \right)
/\left( a_{m}\left( m+T_{m}\right) \right) }\frac{\left\vert W\left(
s\right) \right\vert }{c_{\alpha ,\eta _{j}}s^{\eta _{j}}}\right\} \\
\overset{\mathcal{D}}{=}&\max \left\{ \max_{j\notin J^{\ast }}\max_{1/\left(
m+1\right) \leq u\leq T_{m}/\left( m+T_{m}\right) }\frac{\left\vert W\left(
u\right) \right\vert }{c_{\alpha ,\eta _{j}}u^{\eta _{j}}},\max_{j\in
J^{\ast }}\max_{\left( a_{m}\left( m+T_{m}\right) \right) /\left(
T_{m}\left( m+a_{m}\right) \right) \leq u\leq 1}\frac{\left\vert W\left(
u\right) \right\vert }{c_{\alpha ,\eta _{j}}u^{1-\eta _{j}}}\right\} \\
\overset{a.s}{\rightarrow }&\max \left\{ \max_{j\notin J^{\ast }}\sup_{0<u<1}%
\frac{\left\vert W\left( u\right) \right\vert }{c_{\alpha ,\eta _{j}}u^{\eta
_{j}}},\max_{j\in J^{\ast }}\sup_{0<u<1}\frac{\left\vert W\left( u\right)
\right\vert }{c_{\alpha ,\eta _{j}}u^{1-\eta _{j}}}\right\} \overset{%
\mathcal{D}}{=}\sup_{0<u<1}\frac{\left\vert W\left( u\right) \right\vert }{%
\min_{1\leq j\leq J}c_{\alpha ,\eta _{j}}u^{\widetilde{\eta }_{j}}},
\end{align*}%
whence the desired result.
\end{proof}

\begin{proof}[Proof of Proposition \protect\ref{veto-power}]
We know from the proof of Theorem \ref{power} that the presence of power
depends only on the non-centrality induced by the break, so - following the
logic in that proof - we need to study%
\begin{equation}
\frac{\left( \widetilde{k}-k^{\ast }\right) \left\vert \mathbf{c}%
_{1}^{\prime }\Delta _{m}\right\vert }{m^{1/2}\left( 1+\displaystyle\frac{%
\widetilde{k}}{m}\right) \min_{1\leq j\leq J}\widetilde{r}_{m}^{1/2-\eta
_{j}}\displaystyle\left( \frac{\widetilde{k}}{m+\widetilde{k}}\right) ^{\eta
_{j}}},  \label{noncentral}
\end{equation}%
where $\widetilde{k}$ is specified later. Under condition \textit{(i)}, let $%
\widetilde{k}=a_{m}+k^{\ast }$; note that whenever $j\in J^{\ast }$, (\ref%
{noncentral}) is of the same order as%
\begin{equation*}
\frac{a_{m}\left\vert \mathbf{c}_{1}^{\prime }\Delta _{m}\right\vert }{%
m^{1/2}\min_{j\in J^{\ast }}\displaystyle\left( \frac{a_{m}}{m}\right)
^{\eta _{j}}}=\frac{a_{m}^{1-\max_{j\in J^{\ast }}\eta _{j}}\left\vert 
\mathbf{c}_{1}^{\prime }\Delta _{m}\right\vert }{m^{1/2-\max_{j\in J^{\ast
}}\eta _{j}}},
\end{equation*}%
and seeing as%
\begin{equation*}
\lim_{m\rightarrow \infty }\frac{a_{m}^{1-\max_{j\in J^{\ast }}\eta
_{j}}\left\vert \mathbf{c}_{1}^{\prime }\Delta _{m}\right\vert }{%
m^{1/2-\max_{j\in J^{\ast }}\eta _{j}}}\times \frac{1}{a_{m}^{1/2}\left\vert 
\mathbf{c}_{1}^{\prime }\Delta _{m}\right\vert }=\infty ,
\end{equation*}%
on account of (\ref{a-m}), it follows that under condition \textit{(i)}%
\begin{equation*}
\frac{\left( \widetilde{k}-k^{\ast }\right) \left\vert \mathbf{c}%
_{1}^{\prime }\Delta _{m}\right\vert }{m^{1/2}\left( 1+\displaystyle\frac{%
\widetilde{k}}{m}\right) \min_{1\leq j\leq J}\widetilde{r}_{m}^{1/2-\eta
_{j}}\displaystyle\left( \frac{\widetilde{k}}{m+\widetilde{k}}\right) ^{\eta
_{j}}}\geq c_{0}a_{m}^{1/2}\left\vert \mathbf{c}_{1}^{\prime }\Delta
_{m}\right\vert ,
\end{equation*}%
for some $c_{0}>0$, whence (\ref{regime-1}) yields the final result. Under
conditions \textit{(ii)} and \textit{(iii)}, let $\widetilde{k}=\left\lfloor
\left( 1+c_{0}\right) k^{\ast }\right\rfloor $; by standard algebra, it is
easy to see that (\ref{noncentral}) is proportional to%
\begin{equation*}
\frac{k^{\ast }\left\vert \mathbf{c}_{1}^{\prime }\Delta _{m}\right\vert }{%
m^{1/2}\left( 1+\displaystyle\frac{k^{\ast }}{m}\right) \min_{1\leq j\leq J}%
\widetilde{r}_{m}^{1/2-\eta _{j}}\displaystyle\left( \frac{k^{\ast }}{%
m+k^{\ast }}\right) ^{\eta _{j}}},
\end{equation*}%
whence (\ref{veto-diverge}) yields the desired result.
\end{proof}

\end{document}